\documentclass[12pt,letterpaper]{article}
\usepackage[paper=letterpaper,margin=.85in,headheight=14pt]{geometry}
\usepackage{microtype}
\usepackage{mathtools,amssymb,amsthm,mathrsfs}
\usepackage{enumitem,booktabs,tabularx}
\usepackage{needspace}
\usepackage[numbers,compress]{natbib}
\usepackage[dvipsnames]{xcolor}
\usepackage{xurl}
\usepackage[hidelinks,pdfencoding=auto,psdextra]{hyperref}
\usepackage{bookmark}

\setlist{itemsep=3pt,topsep=5pt,parsep=0pt}
\allowdisplaybreaks[1]
\numberwithin{equation}{section}

\newtheorem{theorem}{Theorem}[section]
\newtheorem{proposition}[theorem]{Proposition}
\newtheorem{lemma}[theorem]{Lemma}
\theoremstyle{definition}
\newtheorem{definition}[theorem]{Definition}
\theoremstyle{remark}

\newcommand{\R}{\mathbb R}
\newcommand{\Z}{\mathbb Z}
\newcommand{\HH}{\mathbb H}
\newcommand{\Ehat}{\widehat E}

\newcommand{\Uc}{\mathcal U_c}
\newcommand{\Repair}{\mathcal D}
\newcommand{\dd}{\,\mathrm d}

\DeclareMathOperator{\supp}{supp}
\DeclareMathOperator{\TV}{TV}

\hypersetup{pdftitle={Large gaps and BTZ entropy in modular spectra with positive integer degeneracies},
  pdfauthor={Chi-Ming Chang, Reiko Liu, Wen-Jie Ma},
  pdfsubject={Gap families, local repair, recursive construction, global consistency, and BTZ matching of smoothed densities of states}}

\begin{document}
\thispagestyle{empty}
\vspace*{1cm}
\begin{center}
  {\LARGE\bfseries Large gaps and BTZ entropy in modular spectra\\[0.2em]
    with positive integer degeneracies\par}
  \vspace{1cm}
  {\normalsize\bfseries Chi-Ming Chang\textsuperscript{a,b},\quad
    Reiko Liu\textsuperscript{c},\quad
    Wen-Jie Ma\textsuperscript{c,d}\par}
  \vspace{9mm}
  {\normalsize\rmfamily
    \textsuperscript{a}Yau Mathematical Sciences Center (YMSC),\\
    Tsinghua University, Beijing, China\par
    \textsuperscript{b}Beijing Institute of Mathematical Sciences and Applications (BIMSA),\\
    Beijing, China\par
    \textsuperscript{c}Shanghai Institute for Mathematics and Interdisciplinary Sciences (SIMIS),\\
    Shanghai 200433, China\par
    \textsuperscript{d}Fudan Center for Mathematics and Interdisciplinary Study,\\
    Fudan University, Shanghai 200438, China\par}
\end{center}
\vspace{4mm}
\begin{abstract}
We construct modular-invariant torus partition functions with a unique
vacuum, discrete energy levels and positive integer degeneracies by
recursively repairing the Maloney--Witten--Keller modular completion
of the Virasoro vacuum. Exact modular repairs and finite moment
matching discretize successive
spectral bands while preserving earlier levels and controlling
convergence. For $c_L=c_R=c$ and $a=(c-1)/12$, the construction realizes
primary dimension gaps $\Delta_1=(1+\kappa)a$ for sufficiently small
fixed $\kappa>0$, and $\Delta_1=a+\delta$ for any fixed $\delta\ge0$,
at every sufficiently large real $c$. Every nonvacuum primary satisfies
$h,\bar h\ge(c-1)/24$. In the fixed-$\delta$ family, spectra can be
chosen whose densities of states, smoothed with a fixed nonnegative
normalized smooth kernel of compact support, match the correspondingly
smoothed prediction of a single perturbative BTZ saddle through every
fixed finite order in $1/c$. The count includes all spins and Virasoro
descendants. Its logarithm reproduces the Bekenstein--Hawking entropy and its
corrections at every fixed positive $E/c$, where $E=\Delta-c/12$.
This includes $0<E<c/12$, where thermal AdS dominates the canonical
ensemble.
\end{abstract}

\clearpage

\setcounter{tocdepth}{2}
\tableofcontents
\clearpage

\section{Introduction}
\label{sec:introduction}

The relation between BTZ black-hole entropy
\cite{BanadosTeitelboimZanelli1992} and the asymptotic growth of
states in a two-dimensional conformal field theory~\cite{Cardy1986}
is a basic test of AdS$_3$/CFT$_2$~\cite{Strominger1998}.
For pure gravity, a microscopic
account also requires a spectrum with a unique vacuum, discrete energy levels
and positive integer degeneracies, compatible with modular invariance
and the absence of light matter. In this paper we construct genus-one
partition functions with these spectral properties and large primary
gaps. For one family, we show that their smoothed densities of states
reproduce the prediction of a perturbative BTZ saddle through every
fixed finite order in the inverse central charge, including energies
where that saddle does not dominate the canonical ensemble.

A natural starting point is the modular completion of the Virasoro
vacuum. The gravitational sum of Maloney and Witten and the Poincar\'e
construction of Keller and Maloney combine the vacuum contribution
with its modular images~\cite{MaloneyWitten2007,KellerMaloney2014}.
This gives an exactly modular-invariant function, but its nonvacuum
spectrum is continuous and contains a negative scalar contribution
at threshold. It also has negative spectral density in exponentially
narrow near-extremal windows at large odd spin
\cite{BenjaminEtAl2019,AldayBae2020}. These defects obstruct its
interpretation as the partition function of a compact unitary CFT.
Keller and Maloney argued that corrections subleading in the semiclassical
limit could render the spectrum positive and discrete, and discussed
their possible gravitational origin
\cite[Sec.~5]{KellerMaloney2014}. Our purpose is to give a recursive
construction in which discreteness, integer degeneracies, the gap
and the convergence of the completed modular function are controlled
simultaneously.

Several approaches illuminate parts of this problem. Conical-defect
orbifolds can repair negative densities by introducing primaries below
the black-hole threshold~\cite{BenjaminCollierMaloney2020}.
Resumming off-shell Seifert contributions in the near-extremal limit
yields a positive continuous density with a nonperturbatively shifted
spectral edge~\cite{MaxfieldTuriaci2020}. A stringy completion gives a
positive spectrum with primary dimension gap $(c-1)/12$, while retaining
a continuum and some subthreshold chiral weights
\cite{DiUbaldoPerlmutter2023Stringy}.
Discrete integral modular candidates also arise from vector-valued
modular forms~\cite{BenjaminEtAl2018Irrational}, with different gaps
and chiral content. Discreteness and integrality place additional
constraints on modular spectra~\cite{KaidiPerlmutter2020}, and the
independent role of integer degeneracies has been investigated in
the modular bootstrap~\cite{FitzpatrickLi2023,ChiangEtAl2023}.
Recent numerical searches have also produced approximate truncated
spectra with integer degeneracies near $c=1$~\cite{BFLT2026}.
The construction below keeps every nonvacuum primary at or above the shifted
chiral threshold and imposes positive integer multiplicities on the
entire discrete spectrum.

Our method makes a prescribed spectral change in a bounded energy band
by an exact modular repair. It then replaces the continuous measure in
successive spin and energy intervals by finitely many states with
positive integer multiplicities. The replacement discrete measure is
chosen to match sufficiently many moments of the removed continuous
measure. These moment identities control the accompanying modular
correction at higher energies.
The order of these operations protects every previously constructed
level, while the sum of the complete repairs converges at every positive
temperature. Each bounded energy band eventually stabilizes to a finite
atomic spectrum. This provides the individual levels as solutions of
finite moment problems, as well as the estimates needed to pass modular
invariance to the infinite limit.

We take equal left and right central charges $c_L=c_R=c$, write
$\Delta=h+\bar h$, and set
\[
  a=\frac{c-1}{12},\qquad
  \Ehat=\Delta-a,\qquad J=h-\bar h.
\]
Every constructed nonvacuum primary satisfies $\Ehat\ge|J|$, or
$h,\bar h\ge(c-1)/24$. Writing $\Delta_1$ for the lowest
nonvacuum primary dimension, Theorem~\ref{thm:analytic} gives two
families, each available at every sufficiently large real $c$:
\begin{equation}
  \begin{aligned}
    \Delta_1&=(1+\kappa)a,
       &&0<\kappa<\kappa_0\ \text{fixed},\\
    \Delta_1&=a+\delta,
       &&\delta\ge0\ \text{fixed},
  \end{aligned}
  \label{eq:introduction-gap-families}
\end{equation}
where $\kappa_0>0$ is a sufficient constant determined by the
construction estimates. The first family has
$\Delta_1/c\to(1+\kappa)/12>1/12$; the second realizes any fixed
nonnegative shifted gap, including zero. These are primary dimension
gaps. The separate chiral bound allows primaries at large spin to
approach $\Ehat=|J|$. Throughout, we retain the distinction between
$\Ehat$ and the cylinder energy $E=\Delta-c/12=\Ehat-1/12$.

The removal of the odd-spin negative density is compatible with the
modular-ambiguity analysis of Alday and Bae: the correcting seeds
generated by our recursion do not satisfy its growth assumptions
\cite[Secs.~4.1 and 4.3]{AldayBae2020}.
We discuss the role of the infinite recursive limit in
Section~\ref{sec:discussion} and give quantitative estimates in
Appendix~\ref{app:global-near-extremal}.

The second main result concerns the entropy of the actual discrete
spectra. For a fixed nonnegative, compactly supported smooth kernel
$\phi\in C_c^\infty(\R)$, normalized to have integral one, and
states with cylinder energies $E_s$ and multiplicities $d_s$, define
$N_\phi(E)=\sum_s d_s\phi(E_s-E)$ and
$S_\phi(E)=\log N_\phi(E)$. This count includes all spins,
Virasoro descendants and the vacuum module. For the finite shifted-gap
family, Proposition~\ref{prop:btz-entropy} gives
\begin{equation}
  \begin{aligned}
    S_\phi(\lambda c)
      &=2\pi c\sqrt{\frac{\lambda}{3}}-\frac12\log c
         +C_{0,\phi}(\lambda)
         +\sum_{m=1}^{P}\frac{C_{m,\phi}(\lambda)}{c^m}
         +O_P(c^{-P-1})
  \end{aligned}
  \label{eq:introduction-entropy}
\end{equation}
for every fixed $\lambda>0$ and every fixed integer $P\ge0$.
All coefficients agree with the single
perturbative BTZ contribution, including its boundary gravitons,
smoothed with the same kernel
\cite{GiombiMaloneyYin2008,CotlerJensen2019}.
The proof uses the matched moments to compare the atomic spectrum with
the leading vacuum spectral density, then identifies the latter's
thermal transform with the BTZ contribution.

The energy range is significant. Under the sparseness assumptions
of Hartman, Keller and Stoica, the leading canonical free energy fixes
the Cardy entropy above $E=c/12$, while leaving the intermediate
range $0<E<c/12$ undetermined~\cite{HartmanKellerStoica2014}.
Our direct spectral estimates apply throughout this intermediate
range at every fixed positive $E/c$. There the relevant inverse
Laplace saddle lies at $\beta_*>2\pi$, and thermal AdS dominates
the canonical partition function.

Earlier all-order high-energy corrections to Cardy's formula were
derived using a smooth-density approximation
\cite{LoranSheikhJabbariVincon2010}. Our estimates control the
comparison with the actual discrete spectrum at large $c$ and fixed
positive $E/c$. Rigorous averaging and logarithmic corrections have
been studied using Tauberian methods
\cite{MukhametzhanovZhiboedov2019}.
All-order matching after smoothing also has a close precedent in the
large-spin theorem of Pal, Qiao and van Rees
\cite{PalQiaoVanRees2025}. Their limit keeps $c$ fixed and studies
the primary density at an individual large spin; ours takes
$c\to\infty$, includes descendants and sums over spins at fixed
positive $E/c$. Equation~\eqref{eq:introduction-entropy} concerns
the spectra constructed here, with a remainder controlled to every
fixed inverse power of $c$.

A macroscopic primary gap changes the answer near its lower edge.
Section~\ref{sec:entropy-positive-kappa} studies this behavior for the
explicit positive-$\kappa$ construction, using the compact smooth bump
kernel specified there. At energies an extensive distance below the gap,
the smoothed compact count contains only vacuum descendants. Just above the gap,
additional primary levels with large multiplicities and their descendants
produce an interval of exponential excess over the BTZ prediction.
At sufficiently high extensive energies, BTZ matching holds through
every fixed finite order in $1/c$.
Determining how the near-gap excess gives way to BTZ matching at higher
energies remains an open problem.

The construction also leaves freedom in the microscopic spectra.
Different finite shifted-gap spectra at the same central charge share
the expansion \eqref{eq:introduction-entropy}. This is compatible
with statistical descriptions of AdS$_3$ gravity motivated by torus
wormholes and ensembles of CFT data
\cite{CotlerJensen2020Random,ChandraEtAl2022Ensemble}.
Narain ensembles provide concrete examples in which an average over
discrete spectra has a bulk description involving Abelian Chern--Simons
theory
\cite{AfkhamiJeddiEtAl2020Narain,MaloneyWitten2020Narain}.
Our family does not itself specify a probability measure or connected
spectral correlations. It supplies spectral targets for asking whether
an extended gravitational path integral can reproduce individual
members, or select an ensemble on them. These bulk questions, including
the distinction between extra saddles and off-shell contributions,
are discussed in Section~\ref{sec:discussion}.

A natural next step is to determine which members of this family admit
OPE coefficients compatible with unitary correlation functions and
higher-genus sewing. Section~\ref{sec:discussion} outlines a route
through sphere crossing, torus one-point functions and genus-two
consistency, and asks how these conditions could constrain the freedom
in the spectral construction.

The two gap families in Theorem~\ref{thm:analytic} and the all-order
smoothed BTZ entropy expansion in Proposition~\ref{prop:btz-entropy}
have been formalized in Lean~4.
The companion repository\footnote{\url{https://github.com/ReikoAntoneva/modular-bootstrap-lean}.}
provides the formal statements and proofs, their correspondence with
the results of this paper, and instructions for reproducing the
verification.

Section~\ref{sec:setup} fixes conventions and states the main
results. Section~\ref{sec:local-repair} develops exact local
modular repair, and Section~\ref{sec:construction} gives the
integer discretization, the requirements on the full recursion and
one explicit schedule.
Section~\ref{sec:global-consistency} proves convergence and the
gap statements. Section~\ref{sec:entropy} derives the spectral
entropy results, and Section~\ref{sec:discussion} discusses their
physical interpretation and remaining questions.

\section{Setup and overview of results}
\label{sec:setup}

We study torus partition functions obtained by modifying the modular
completion of the Virasoro vacuum. The aim is to replace its signed
continuous spectrum by positive integer primary multiplicities while
preserving exact modular invariance and uncensored support. This section
specifies the spectral problem, states the gap theorem and the BTZ
entropy result for a smoothed density of states, and describes the
recursive construction.

\subsection{Characters, energies, and thresholds}
\label{sec:conventions}

Throughout, the central charges are $c_L=c_R=c>1$, so that the total
central charge is $2c$. For a state of weights $(h,\bar h)$, set
\begin{equation}
  \Delta=h+\bar h,\qquad J=h-\bar h,\qquad
  a=\frac{c-1}{12}.
  \label{eq:weights}
\end{equation}
On a spatial circle of circumference $2\pi$, we distinguish the cylinder
energy from the energy appearing in the primary character numerator:
\begin{equation}
  E:=\Delta-\frac{c}{12},\qquad
  \Ehat:=\Delta-a=E+\frac1{12}.
  \label{eq:energies}
\end{equation}
The hat always denotes this exact shift. The construction uses
$\Ehat$; $E$ is used for the physical cylinder Hamiltonian.

For $q=\exp(2\pi i\tau)$, the characters are
\begin{equation}
  \chi_0(\tau)=\frac{q^{-a/2}(1-q)}{\eta(\tau)},\qquad
  \chi_h(\tau)=\frac{q^{h-a/2}}{\eta(\tau)}\quad(h>0).
  \label{eq:characters}
\end{equation}
In particular, writing $\tau=x+iy$, a nonvacuum primary with both weights
positive contributes
\begin{equation}
  \chi_h(\tau)\overline{\chi_{\bar h}(\tau)}
  =\frac{\exp(-2\pi y\Ehat+2\pi i xJ)}{|\eta(\tau)|^2}.
  \label{eq:primary-character}
\end{equation}
This factorization makes $\Ehat$ the natural variable for the primary
spectral measure.

We call a nonvacuum primary \emph{uncensored} when
\begin{equation}
  h,\bar h\ge\frac{c-1}{24}
  \quad\Longleftrightarrow\quad
  \Ehat\ge |J|.
  \label{eq:uncensored}
\end{equation}
This is the convention used for the primary spectrum in the MWK
construction and subsequent pure-gravity modular-bootstrap discussions
\cite{KellerMaloney2014,BenjaminEtAl2019}. It is an additional spectral
condition, beyond the nonnegative weights required by unitarity.
The vacuum module is kept separately.

\Needspace{12\baselineskip}
The finite-$c$ distinction between the two energies is physical. In the
classical BTZ dictionary, $M\ell=E$ and the horizon condition is
$E\ge |J|$. The scalar uncensored threshold instead occurs at
$\Ehat=0$. Keller and Maloney interpret the shift of the primary threshold
as a one-loop shift of the lightest black-hole mass
\cite{KellerMaloney2014}. We use the following terminology:
\begin{center}
\begin{tabular}{@{}lccc@{}}
\toprule
Scalar configuration & $h=\bar h$ & $E$ & $\Ehat$\\
\midrule
Classical massless BTZ & $c/24$ & $0$ & $1/12$\\
Scalar uncensored threshold & $(c-1)/24$ & $-1/12$ & $0$\\
\bottomrule
\end{tabular}
\end{center}
Both rows have $J=0$. When discussing the MWK interpretation, we also
refer to the second row as the \emph{one-loop-shifted BTZ threshold}.
Our exact support condition is \eqref{eq:uncensored}, or equivalently
$E\ge |J|-1/12$.

\subsection{The genus-one spectral class}
\label{sec:spectral-class}

The data to be constructed consist of primary weights and their
multiplicities. Descendants are then supplied by the characters
\eqref{eq:characters}.

\begin{definition}[Discrete uncensored torus spectrum]
\label{def:class}
Let $\Uc$ be the class of partition functions
\begin{equation}
  Z(\tau,\bar\tau)=|\chi_0(\tau)|^2+
  \sum_\alpha d_\alpha\,
  \chi_{h_\alpha}(\tau)\overline{\chi_{\bar h_\alpha}(\tau)}
  \label{eq:partition-function}
\end{equation}
with the following properties:
\begin{enumerate}[label=(\roman*)]
  \item The vacuum module has multiplicity one. Every other primary has
    $h_\alpha,\bar h_\alpha\ge(c-1)/24$.
  \item The multiplicities satisfy $d_\alpha\in\Z_{>0}$ and the spins
    satisfy $J_\alpha\in\Z$.
  \item There are finitely many primaries below every finite dimension,
    counted with multiplicity, and
    \begin{equation}
      \sum_\alpha d_\alpha\exp(-t\Delta_\alpha)<\infty
      \qquad\text{for every }t>0.
      \label{eq:thermal-convergence}
    \end{equation}
  \item On $\HH$, the partition function is invariant under
    $\mathsf S:\tau\mapsto-1/\tau$ and
    $\mathsf T:\tau\mapsto\tau+1$, acting also on $\bar\tau$ by conjugation.
\end{enumerate}
\end{definition}

This definition states the genus-one spectral properties under study.
It does not require parity invariance or specify OPE coefficients,
correlation functions, or higher-genus amplitudes. In particular, proving
$Z\in\Uc$ does not by itself construct a full CFT. The uncensored
condition is an additional restriction, which general unitary compact
CFTs need not satisfy.

The dimension gap is the infimum over all nonvacuum primaries and all
spins,
\begin{equation}
  \Delta_1(Z):=\inf_\alpha\Delta_\alpha,\qquad
  \Ehat_1(Z)=\Delta_1(Z)-a.
  \label{eq:dimension-gap}
\end{equation}
The constructed spectra are nonempty and locally finite, so their gaps
are attained. The twist of a primary is
$\Delta-|J|=2\min(h,\bar h)$. The uncensored condition bounds this
quantity below by $a$, whereas a lower bound on $\Delta_1$ concerns
the total dimension. A primary near a chiral threshold can have
arbitrarily large dimension if its spin is large.

\Needspace{16\baselineskip}
\subsection{Main results}
\label{sec:analytic-result}

\paragraph{Gap families.}
The construction gives two large-$c$ regimes: a dimension gap exceeding
$c/12$ by an amount of order $c$, and a shifted gap equal to any
fixed nonnegative value. Both regimes are available for
all sufficiently large real central charges, parametrized by
\begin{equation}
  c=12a+1,\qquad a\in\R_{>0}.
  \label{eq:family}
\end{equation}

\begin{theorem}[Gap families]
\label{thm:analytic}
There exist families in $\mathcal U_c$ with the following properties.
\begin{enumerate}[label=(\roman*),ref=(\roman*)]
  \item \label{item:positive-fraction}
  There is a constant $\kappa_0>0$ such that, for every fixed
  $0<\kappa<\kappa_0$, a family $Z_{a,\kappa}\in\mathcal U_c$
  exists for all sufficiently large real $a$, with $b=\kappa a$ and
  \begin{equation}
    \Ehat_1(Z_{a,\kappa})=b,\qquad
    \Delta_1(Z_{a,\kappa})=a+b.
    \label{eq:gap-bounds}
  \end{equation}
  In particular, for each such fixed $\kappa$,
  \begin{equation}
    \lim_{a\to\infty}\frac{\Delta_1(Z_{a,\kappa})}{c}
    =\frac{1+\kappa}{12}>\frac1{12}.
    \label{eq:asymptotic-gap}
  \end{equation}
  \item \label{item:finite-gap}
  For every fixed real $\delta\ge0$, there are a threshold $a_\delta>0$
  and a family $\widetilde Z_{a,\delta}\in\mathcal U_c$ for every
  real $a\ge a_\delta$ such that
  \begin{equation}
    \Ehat_1(\widetilde Z_{a,\delta})=\delta.
    \label{eq:prescribed-gap}
  \end{equation}
  In particular,
  \begin{equation}
    \lim_{a\to\infty}\Ehat_1(\widetilde Z_{a,\delta})=\delta,
    \qquad
    \lim_{a\to\infty}
    \frac{\Ehat_1(\widetilde Z_{a,\delta})}{a}=0.
    \label{eq:finite-shifted-gap}
  \end{equation}
\end{enumerate}
\end{theorem}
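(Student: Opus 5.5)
The plan follows the construction outlined in the introduction: start from the Maloney--Witten--Keller completion $Z_{\rm MWK}$ of $|\chi_0|^2$, which is exactly modular invariant. After stripping $|\eta|^{-2}$, its nonvacuum part is a signed continuous spectral density $\rho_0(\Ehat,J)$ supported on $\Ehat\ge|J|$. I would then build $Z$ as a convergent sum $Z=Z_{\rm MWK}+\sum_{k\ge1}\Repair_k$. Each $\Repair_k$ is an exact modular repair, meaning a modular-invariant function whose spectral effect below a band boundary $\Lambda_k$ is prescribed and whose remaining spectral effect lies above $\Lambda_k$. The steps, in order, are as follows.

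\textbf{Gap band.} First clear the window $0\le\Ehat<\Ehat_1$ (with $\Ehat_1=b=\kappa a$ or $\Ehat_1=\delta$). The first repair subtracts the MWK density there, including the negative threshold scalar. At the target level $\Ehat_1$ it installs finitely many positive integer multiplicities at integer spins with $|J|\le\Ehat_1$. It also adds modular images that push all compensating weight to $\Ehat\ge\Lambda_1>\Ehat_1$. Making the lowest level exactly $\Ehat_1$, and nonempty, gives the equalities \eqref{eq:gap-bounds} and \eqref{eq:prescribed-gap}. The limits \eqref{eq:asymptotic-gap} and \eqref{eq:finite-shifted-gap} are then immediate from $c=12a+1$.

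\textbf{Recursive discretization.} For band $k$, i.e.\ $\Lambda_{k-1}\le\Ehat<\Lambda_k$ split into spin intervals, I would replace the accumulated continuous signed measure by an atomic measure. The atoms have positive integer weights at integer $J$ and $\Ehat\ge|J|$, and they match the first $M_k$ moments of $e^{-2\pi y\Ehat}$ expanded about the band center. Positivity of the total mass in each cell is needed. It should follow from the Cardy-type growth $\rho\sim e^{4\pi\sqrt{a\Ehat}}$ dominating both the negative near-extremal odd-spin windows and the corrections from earlier repairs. Integrality is handled by rounding, with the rounding error absorbed into the next moment problem. The difference between the two measures then has vanishing low moments. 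Its modular image, computed by the $\mathsf S$-kernel, is bounded by $(\text{band width})^{M_k}/M_k!$ times a factor growing with the energy, and it is supported above $\Lambda_k$ after the exact local repair. Since later repairs act only above $\Lambda_k$, each level below $\Lambda_k$ is frozen after step $k$. This yields local finiteness, together with the uncensored and integrality conditions (i)--(ii).

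\textbf{Convergence and modularity; main obstacle.} Choose $\Lambda_k\to\infty$ and $M_k$ large enough that $\sum_k\sup_{y\ge y_0}|\Repair_k|<\infty$ for every $y_0>0$. Uniform convergence on compacta then passes $\mathsf S$- and $\mathsf T$-invariance to the limit and gives \eqref{eq:thermal-convergence}. The hard step is this global estimate. The repair generated at band $k$ must be small uniformly at all temperatures, including near $y\to0$ where the $\mathsf S$-image probes high energies, while each band still has enough positive mass to support the discretization. For the $\kappa$-family, the gap window of size $\kappa a$ removes an amount of MWK weight exponential in $a$, and the compensating repair must not exceed the positive density available just above $b$. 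I expect this to be what forces $\kappa<\kappa_0$. To fix the schedule, I would first try $\Lambda_k=\Lambda_1+k$ with $M_k\propto\Lambda_k$, and verify the tail bound against the Cardy envelope for $a$ large.
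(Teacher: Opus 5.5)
Your architecture matches the paper's: exact local repairs, moment-matched discretization of successive cells, freezing of everything below each repair cutoff, and convergence on compact sets. Your schedule guess (unit-width cells, moment degree linear in the energy) also matches the paper's tail stage. The discretization step, however, fails as stated, for two reasons.

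\emph{Positivity is the wrong condition.} Positivity of each cell's total mass does not give a positive atomic measure matching $M_k$ moments. You need $\Lambda(p)\ge0$ for every polynomial of degree at most $M_k$ that is nonnegative on the cell. The current density is genuinely signed near spin openings. There $\rho^{(1)}_{a,J}$ vanishes at $\Ehat=|J|$, and both the odd-spin MWK windows and the repair errors up to $H_a$ allowed by the invariant survive at every finite stage. High-degree nonnegative polynomials can concentrate near such an edge.

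\emph{Rounding breaks the exterior estimate.} Rounding real quadrature weights destroys the exact cancellation $\int u^r\,\dd\nu=0$, which is the only source of a small exterior response. Without it, the available bound is $C_B\|\nu\|_{\TV}\sqrt{\Ehat}$ with $C_B\le e^{CB}$. This grows with the cutoff, so it is not summable and is not small relative to $H_a(\Ehat)=e^{7\sqrt{a\Ehat}}$ once $B\gg a$. Carrying the rounding error into the next cell just restates an exact integer-weight moment problem one cell later, so the chain never closes.

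The paper handles both points differently. It chooses the cell endpoint so that the cell mass is an integer $M$. It then proves the quantitative reserve $\Lambda(p)\ge\operatorname{Var}_{[0,1]}p$ for nonnegative $p\in\Pi_k$. In initial cells this uses Chebyshev extrapolation from the upper half. In tail cells it uses $\int\xi^2p\,\dd\xi\ge c\,d^2(k+1)^{-6}\int p\,\dd\xi$ together with the invariant $|\rho_n-\rho^{(1)}_{a,J}|\le H_a$. Finally, Kane's design theorem gives exactly $M$ unit atoms matching all $k$ moments, confined to $[h,1]$ so that they lie strictly above the left endpoint.

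You also take the exact local repair as a primitive, but constructing it is the core of the proof. A Poincar\'e seed in the band induces continuum back inside the band and a threshold atom $t(\sigma)\delta_{(0,0)}$. So ``pushing the compensating weight above $\Lambda_1$'' means solving $w+[(Rw)\,\dd\omega]_{\le B}=\nu$ on finite signed measures, atoms included.

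The paper gets this from coercivity of $I+R_B$. Nonnegativity comes from the pre-trace formula for the positive kernel $A_{1/2}+6$; this is why the rank-one term $12g\otimes g$, realized by the modular constant $\tfrac{12}{\sqrt2}m_g(\sigma)$, is added to the completion. Strict positivity, with cost $e^{CB}$, comes from analytic propagation. The threshold is removed separately by an anchor built from a remote scalar seed on $[2B,3B]$, which can be normalized because $G_B(0)=-1$. Your first step needs exactly this: the MWK atom $-6$ and the $\delta=0$ threshold level cannot be set by a seed at the origin, since $\mathcal P_{0,0}(\tau;1/2)=0$.

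The same $e^{CB}$ cost forces the moment degree to grow linearly with the cutoff. At $B=\kappa a$ it also produces the restriction $\kappa<\kappa_0$: the repair cost $e^{C\kappa a}$ must lose to density growth $e^{c\sqrt\kappa\,a}$. Your heuristic about compensating weight points in that direction. Uniform control as $y\to0$ is not the real difficulty, since modular invariance reduces convergence to compact sets. What must be controlled is each repair's exterior density relative to $H_a$ at all higher energies.
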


The constant $\kappa_0$ specifies a sufficient positive range, with no
optimal value claimed. The lower thresholds on $a$ may depend on
$\kappa$ or $\delta$.
In part~\ref{item:finite-gap}, the normalized
limit follows from fixed $\delta$ and $a\to\infty$.
Equivalently, $\Delta_1(\widetilde Z_{a,\delta})=a+\delta$,
so its dimension-gap ratio tends to $1/12$. The target $\delta$ is
fixed before taking $a$ large. No identification
with a fixed-$a$ limit as $\kappa\to0^+$ is assumed.

The cylinder energies retain the finite-$c$ shift:
\begin{equation}
  E_1(Z_{a,\kappa})=b-\frac1{12},\qquad
  E_1(\widetilde Z_{a,\delta})=\delta-\frac1{12}.
  \label{eq:cylinder-gap-bound}
\end{equation}
The first family's primary dimension gap exceeds $c/12$ once $b>1/12$.
In the finite-gap family, $\delta=0$ gives scalar threshold primaries with
$h=\bar h=a/2>0$, whereas $\delta=1/12$
places the first scalar level at the classical massless BTZ dimension
$\Delta=c/12$.
For positive $\delta$ in this family, the scalar threshold
coefficient is zero. The chiral support bound remains $(c-1)/24$.

Local finiteness makes the threshold choice substantive. An exact gap
$\Delta_1=a$ must be attained, and uncensored support then forces
$\Ehat=J=0$ with positive integer multiplicity. Such a gap cannot coexist
with zero threshold coefficient. More generally, $\Ehat_1\ge0$ excludes
a negative asymptotic shifted-gap slope within $\Uc$.
The construction allows freedom in the spins and multiplicities at the
first level, as described in Section~\ref{sec:method-overview}.

\paragraph{BTZ entropy from a smoothed density of states.}
The families in part~\ref{item:finite-gap} can also reproduce
the spectral prediction of a perturbative BTZ saddle.
Choose a nonnegative kernel $\phi\in C_c^\infty(\R)$, with fixed
shape and width, normalized by $\int_{\R}\phi(u)\dd u=1$.
For states of cylinder energy $E_s$ and multiplicity $d_s$, define
\[
  N_\phi(E)=\sum_s d_s\,\phi(E_s-E),
  \qquad
  S_\phi(E)=\log N_\phi(E).
\]
The sum includes every spin, all Virasoro descendants and the vacuum
module. Thus $N_\phi$ is a smoothed density of states and
$S_\phi$ its spectral entropy.

\begin{proposition}[BTZ entropy]
\label{prop:btz-entropy}
The families in part~\ref{item:finite-gap} of Theorem~\ref{thm:analytic}
can be chosen so that, for every fixed $\delta\ge0$, the following
expansion holds at the exact center $E=\lambda c$ for every fixed
integer $P\ge0$:
\begin{equation}
  \begin{aligned}
    S_\phi(\lambda c)
      &=2\pi c\sqrt{\frac{\lambda}{3}}-\frac12\log c
         +C_{0,\phi}(\lambda)\\
      &\quad+\sum_{m=1}^{P}\frac{C_{m,\phi}(\lambda)}{c^m}
         +O_P(c^{-P-1}).
  \end{aligned}
  \label{eq:btz-entropy-leading}
\end{equation}
The estimate is uniform for $\lambda$ in any fixed closed interval
contained in $(0,\infty)$ as $a\to\infty$. On each such interval,
the error constant and the lower threshold on $a$ can be chosen
independently of $\lambda$.
The coefficients are those obtained from the single perturbative
BTZ saddle by using the same kernel and summing over all spins,
as defined in Section~\ref{sec:entropy-smooth}.
They depend only on $\phi$ and $\lambda$.
\end{proposition}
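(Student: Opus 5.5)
The plan is to compare the smoothed atomic count $N_\phi$ with the smoothed count of the leading Maloney--Witten--Keller vacuum density, and to show that the discrepancy is smaller than the main term by a factor $\exp(-\epsilon c)$ at $E=\lambda c$. The main term would then be identified with the smoothed perturbative BTZ contribution, whose large-$c$ expansion gives \eqref{eq:btz-entropy-leading}.

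\textbf{Step 1 (decomposition).} Write the total density of states, including descendants through $1/|\eta|^2$, as a sum of three pieces:
\begin{itemize}
\item the vacuum module itself;
\item the modular $\mathsf S$-image of the vacuum, whose inverse Laplace transform is the Cardy/BTZ density $\rho_{\rm BTZ}$ (a Bessel-type kernel in $\Ehat$, summed over spins and convolved with the descendant partition counts);
\item the remainder, namely the other $\Gamma_\infty\backslash \mathrm{SL}(2,\Z)$ images together with all repairs and discretization corrections produced by the recursion of Section~\ref{sec:construction}.
\end{itemize}
Then
\[
N_\phi(\lambda c)=\int \phi(E-\lambda c)\,\rho_{\rm BTZ}(E)\dd E+R_\phi(\lambda c).
\]

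\textbf{Step 2 (main term).} Take the saddle-point expansion of the smoothed BTZ term. Its Laplace transform is the one-loop-exact BTZ character (with boundary gravitons). Inverse-Laplace at $\beta_*=\pi/\sqrt{3\lambda}$ and convolve with the fixed kernel $\phi$. Because $\phi$ has fixed width, this only modifies the coefficients $C_{m,\phi}(\lambda)$; in effect $\phi$ is replaced by its two-sided Laplace transform at $\beta_*$. The outcome is $2\pi c\sqrt{\lambda/3}-\tfrac12\log c+\dots$, uniform for $\lambda$ in compact sets. This step is standard asymptotics.

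\textbf{Step 3 (error terms).}
\begin{itemize}
\item The vacuum module's own descendants grow only like $\exp(O(\sqrt{c\cdot \lambda c}))$ with a smaller coefficient. More simply, the count $p(n)$ at $n\sim\lambda c$ gives $\exp(2\pi\sqrt{\lambda c/6}\cdots)$, which is subexponential in $c$ and hence negligible.
\item The nontrivial $(c,d)$ images of the vacuum contribute densities of order $\exp(4\pi\sqrt{aE}/|c'|)$, so they are exponentially suppressed relative to the main term by at least a factor $e^{-\epsilon c}$.
\end{itemize}

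\textbf{Step 4 (construction-induced discrepancy; main obstacle).} The hard part is bounding the difference between the discretized atomic spectrum and the continuous MWK density after smoothing with $\phi$. Here I would use the freedom in the construction (``can be chosen'') to take the discretization bands and moment orders adapted to $\phi$, as follows.
\begin{itemize}
\item Choose the spectral band widths much smaller than the width of $\phi$, or equivalently match enough moments $K$ so that Taylor-expanding $\phi$ on each band leaves an error of order $(\text{width})^K\|\phi^{(K)}\|_\infty$ times the total band mass.
\item Invoke the earlier estimates for the higher-energy modular corrections from each local repair, which are controlled by these moment identities, to show that their smoothed contribution at $E=\lambda c$ is bounded by $e^{-\epsilon c}$ times the BTZ main term.
\item Control the negative-density MWK windows and the threshold region $\Ehat\approx|J|$ by the fact that each window has exponentially small mass, using the appendix estimates.
\end{itemize}
Summing over the infinitely many repairs requires the global convergence estimates of Section~\ref{sec:global-consistency}, made uniform in $\lambda$ on a compact interval. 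I expect this uniform absolute summability of smoothed repair contributions at fixed $E/c$ to be the crux.

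\textbf{Step 5 (conclusion).} From Steps 3 and 4, $R_\phi(\lambda c)=N_{\rm BTZ,\phi}(\lambda c)\,O(e^{-\epsilon c})$. Taking logarithms therefore yields the expansion through any fixed order $P$. This holds for every fixed $\delta$, since the gap only changes spectrum at energies $O(1)$, far below $\lambda c$.
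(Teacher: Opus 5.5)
Your outline (compare with the leading vacuum density, identify its transform exactly with $Z_{\mathrm{BTZ}}^{\mathrm{pert}}$, then saddle-expand) matches the paper. However, Step~4, which you rightly call the crux, rests on a mechanism that fails, and you use it to claim more than can be proved.

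A nonzero $\phi\in C_c^\infty(\R)$ is not real-analytic. So at fixed cell width the Taylor bound $(\text{width})^K\|\phi^{(K)}\|_\infty/K!$ does not tend to zero as $K$ grows, and the phrase ``equivalently match enough moments'' is false. Shrinking the widths with $c$ instead changes the schedule in a $c$- and $\phi$-dependent way. You would then have to re-establish the variation inequality of Proposition~\ref{prop:construction-integer-matching}, the endpoint conditions and the budget \eqref{eq:global-tail-error}.

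More basically, moment matching controls the cell error only through the best polynomial approximation of the pulled-back kernel. For a general compactly supported smooth kernel this decays faster than every power of the degree, but not exponentially. It yields $O_P(c^{-P})$, never $e^{-\epsilon c}$, and the paper asserts no exponentially small remainder.

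The argument that works keeps the fixed schedule and handles all $\phi$ at once.
\begin{itemize}
\item On a tail cell of width in $[\tfrac12,1]$, $h_n(z)=\phi(\Ehat_n(z)-Y)$ has uniformly bounded derivatives. Its best approximation in $\Pi_{k_n}$ is therefore $O(k_n^{-P})=O(c^{-P})$, since $k_n\ge A_{\mathrm{tail}}a$. The cell error is at most $(M_n+\|\mu_n^{\mathrm{cell}}\|_{\TV})\,C_Pc^{-P}$.
\item The invariant \eqref{eq:global-invariant} holds when each cell is processed, and protection freezes the cell afterwards. On central rows it makes the cell measure positive, because there $H_a\le Ce^{-\eta c}\rho^{(1)}_{a,J}$. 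The prefactor is then $2M_n$.
\item At fixed $c$ the band through $Y_0+R$, with $Y_0=\lambda c+\tfrac1{12}$, stabilizes after finitely many steps. So your ``uniform absolute summability of smoothed repair contributions'' is never needed.
\end{itemize}

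You must still sum these errors over spins, using the Gaussian profile of width $\sqrt c$, and over descendant levels. Errors at every lower primary energy $Y_0-N$ feed into $N_\phi(\lambda c)$. They stay relative only through $S_a(Y_0-N)\le S_a(Y_0)-\beta_0N$ together with $\sum_N p_2(N)e^{-\beta_0N}=G(\beta_0)$, a cumulative bound for $N>Y_0/2$, and the $e^{O(\sqrt c)}$ bound on the initial packets and all their descendants. Your Step~5 remark that the gap affects only energies $O(1)$ needs exactly this descendant estimate.

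Second, your main term is the single $\mathsf S$-image. Its smoothed inverse Laplace transform at zero angular potential is the continuous-spin count $N_\phi^{\mathbb R}=N_\phi^{\mathrm{BTZ}}$, whereas every constructed primary has integer spin. The difference is carried by the images with lower row $(1,d)$, $d\neq0$. Together with $d=0$ these make up the $m=1$ Kloosterman term that produces $\rho^{(1)}_{a,J}$. Your Step~3 bound $\exp(4\pi\sqrt{aE}/|c'|)$ gives no suppression for them, since $|c'|=1$.

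The paper instead compares with the integer-spin reference $N_\phi^{\mathrm{ref}}$. It then replaces the spin sum by an integral via Poisson summation in $J$, using the derivative bounds \eqref{eq:entropy-poisson-derivative-bound}, at relative cost $O_P(c^{-P})$. Only after that does it apply the exact identity \eqref{eq:entropy-reference-btz}.

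With these two repairs your conclusion holds with $O_P(c^{-P})$ in place of $O(e^{-\epsilon c})$. That is all the expansion to order $P$ requires.
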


In particular,
\begin{equation}
  \lim_{a\to\infty}
    \frac{S_\phi(\lambda c)}{c}
       =2\pi\sqrt{\frac{\lambda}{3}}.
  \label{eq:btz-entropy-rate}
\end{equation}
The result includes $0<E<c/12$, or $c/12<\Delta<c/6$:
the intermediate range whose entropy is not fixed by the general
HKS assumptions~\cite{HartmanKellerStoica2014}.
The construction supplies the additional spectral information here.
The comparison concerns the perturbative contribution of one bulk
saddle at genus one and does not assume its canonical dominance.

Section~\ref{sec:entropy-matching} derives this matching, gives explicit
coefficients, and specifies the freedom in choosing discrete energy levels
without changing these coefficients.

Section~\ref{sec:entropy-positive-kappa} analyzes a specific realization
of the positive-$\kappa$ family in part~\ref{item:positive-fraction}.
It establishes the same all-order BTZ expansion at sufficiently high
energies. With the compact smooth
bump kernel specified there, it also identifies an interval of width
proportional to $c$ above the primary gap in which
the smoothed density exceeds the BTZ prediction exponentially. Thus
the agreement at higher energies coexists with a distinct spectral
regime above the engineered gap.

\subsection{From MWK to a discrete spectrum}
\label{sec:method-overview}

It is convenient to work with the reduced modular function
\begin{equation}
  F(\tau):=\sqrt y\,|\eta(\tau)|^2Z(\tau).
  \label{eq:reduced-function}
\end{equation}
The prefactor is modular invariant. We write $\delta_{(\Ehat_0,J_0)}$
for the unit Dirac measure at $(\Ehat_0,J_0)$. A nonzero
contribution $w\delta_{(\Ehat_0,J_0)}$ to a spectral measure is called
an \emph{atom} of weight $w$; it contributes
$w\sqrt y\exp(-2\pi y\Ehat_0+2\pi ixJ_0)$ to $F$.
In the final nonvacuum primary measure, $w$ is the positive integer
multiplicity at that energy and spin. Signed measures, including atoms
of negative weight, are allowed at intermediate stages of the construction.

\paragraph{The MWK starting point.}
Let $F_{\mathrm{MWK}}$ denote the canonical modular completion of the
Virasoro vacuum \cite{MaloneyWitten2007,KellerMaloney2014}. Its spectral
decomposition has three contributions:
\begin{equation}
  \begin{aligned}
    F_{\mathrm{MWK}}(\tau)
    &=F_{\mathrm{vac}}(\tau)-6\sqrt y\\
    &\quad+\sqrt y\sum_{J\in\Z}e^{2\pi ixJ}
      \int_{\Ehat>|J|}e^{-2\pi y\Ehat}
      \dd\mu_{\mathrm{MWK},J}^{>0}(\Ehat).
  \end{aligned}
  \label{eq:mwk-spectrum}
\end{equation}
Here $F_{\mathrm{vac}}(\tau)=\sqrt y\,e^{2\pi ay}|1-q|^2$ is the
reduced contribution of the Virasoro vacuum module, with multiplicity one.
The term $-6\sqrt y$ is a signed atom of weight $-6$ at
$(\Ehat,J)=(0,0)$, or $h=\bar h=(c-1)/24$. This is the only
nonvacuum atom. We write $\mu_{\mathrm{MWK}}^{>0}$ for the remaining
signed nonvacuum primary measure on positive energies, and
$\mu_{\mathrm{MWK},J}^{>0}$ for its spin-$J$ component. In each
integer-spin sector, this measure has a density with respect
to $\mathrm d\Ehat$ and has support in $\Ehat\ge|J|$. The superscript
$>0$ denotes the restriction to positive energy.

Thus MWK already has exact modular invariance, the required vacuum,
integer spin and uncensored nonvacuum support. Its defects are the
negative threshold atom, the possible negativity of the continuum,
and the absence of a discrete nonvacuum spectrum. In particular, the
continuum becomes negative sufficiently close to $\Ehat=|J|$ at large
odd spin~\cite{BenjaminEtAl2019}. Cancelling the scalar atom therefore
does not establish positivity. Our construction must also replace the
continuum by discrete levels with positive integer multiplicities.

\paragraph{Exact local modular repair.}
For a positive cutoff $B$, let $\nu$ be a finite signed measure
supported in $0<\Ehat\le B$,
$\Ehat\ge|J|$. We construct a modular function $\Repair_B[\nu]$
whose spectral measure in this band is precisely $\nu$. Its full
spectral support lies in the uncensored region $\Ehat\ge|J|$, and
its scalar threshold coefficient vanishes. Above $B$, its spectral
measure admits a density with respect to $\mathrm d\Ehat$ in each
spin sector. The construction also provides quantitative bounds on
this exterior density.

The repair is constructed in two steps. First, we determine the seed
measure whose modular completion produces the desired band measure
$\nu$. Completing a seed adds an induced spectrum to the seed itself,
and this induced contribution generally extends into the band. We must
therefore invert the map from a seed supported in $0<\Ehat\le B$ to
its complete spectral measure in that band. For the modular completion
used in our construction, we establish this inverse on finite signed
measures, including Dirac measures. This allows us to prescribe
discrete spectral contributions as well as spectral densities.

Second, we control the scalar threshold independently of the
positive-energy band. The completion obtained in the first step may
have a nonzero coefficient at $(\Ehat,J)=(0,0)$. We construct an
auxiliary modular function $\mathcal A_B$, called a threshold anchor,
with coefficient one at this point and zero spectral measure in
$0<\Ehat\le B$. Its remaining spectral contribution has uncensored support and is described
by densities above $B$. Subtracting the threshold coefficient times
this anchor cancels the threshold atom while preserving modularity and
the prescribed band measure $\nu$.

\paragraph{Clearing the low-energy band.}
All the initializations below use the same band-clearing operation:
\begin{equation}
  F_B^{\mathrm{clear}}=F_{\mathrm{MWK}}+6\mathcal A_B+
  \Repair_B\bigl[-\mu_{\mathrm{MWK}}^{>0}|_{0<\Ehat\le B}\bigr].
  \label{eq:band-clearing}
\end{equation}
The term $6\mathcal A_B$ cancels the MWK threshold coefficient, while
the zero-threshold repair clears the positive-energy band. Thus
$F_B^{\mathrm{clear}}$ retains the original vacuum, has zero scalar
threshold coefficient, and has no spectral support in $0<\Ehat\le B$.
Its remaining nonvacuum spectral measure is continuous and lies above $B$.
The exterior densities of both corrections enter the subsequent estimates.

\paragraph{Choosing the first level.}
At a target shifted energy $\Ehat_*\ge0$, choose a finite nonempty set of integer
spins $S$ satisfying $|J|\le \Ehat_*$, and positive integer multiplicities
$d_J$. Fix an integer $M_*\ge1$, independent of $a$, bounding their
total. Write the prescribed measure as
\begin{equation}
  \nu_{\mathrm{first}}=\sum_{J\in S}d_J\delta_{(\Ehat_*,J)},
  \qquad d_{\mathrm{first}}=\sum_{J\in S}d_J\le M_*.
  \label{eq:first-level-measure}
\end{equation}
The target energy is $\Ehat_*=b$ in part~\ref{item:positive-fraction} and
$\Ehat_*=\delta$ in part~\ref{item:finite-gap}.
For $0\le \Ehat_*<1$, integer spin
and uncensored support force $S=\{0\}$, while its multiplicity may
be any positive integer. No pairing of $J$ with $-J$ is required.
The family symbols suppress these construction choices. We initialize
the spectrum as follows:
\begin{itemize}[leftmargin=1.5em,itemsep=6pt]
\item \textbf{Part (i): fixed positive $\kappa$.}
Keep $S$ and $d_J$ fixed as $a$ grows. Set $b=\kappa a$, with
$b\ge\max_{J\in S}|J|$, and place the
prescribed first level at the upper endpoint of the cleared band:
\begin{equation}
  F_0=F_b^{\mathrm{clear}}+\Repair_b[\nu_{\mathrm{first}}].
  \label{eq:positive-fraction-initialization}
\end{equation}
The measure inverse permits atoms at this closed upper endpoint,
including nonzero-spin edges $b=|J|$.
Here the cleared-band scale grows with $a$ at fixed $\kappa$.
In the recursive discretization described below, all replacement atoms
have $\Ehat>b$ and uncensored support, so the first level retains
exactly its prescribed spins and multiplicities.

\item \textbf{Part (ii): $\delta>0$.}
Fix $\delta>0$ and choose $B>\delta$, independently
of $a$. For the admissible first-level data at $\Ehat_*=\delta$, set
\begin{equation}
  F_0=F_B^{\mathrm{clear}}+
    \Repair_B[\nu_{\mathrm{first}}].
  \label{eq:positive-gap-initialization}
\end{equation}
The complete measure of $F_0$ in $0<\Ehat\le B$ is precisely the
measure $\nu_{\mathrm{first}}$, and its scalar threshold coefficient is zero.
The cutoff $B$ and the attained
gap $\delta$ are separate quantities.

\item \textbf{Part (ii): $\delta=0$.}
Fix any $B>0$, independently of $a$, and set
\begin{equation}
  F_0=F_B^{\mathrm{clear}}+d_0\mathcal A_B.
  \label{eq:threshold-initialization}
\end{equation}
Here $S=\{0\}$. The anchor supplies scalar threshold multiplicity $d_0$, while the
positive-energy band through $B$ remains empty. The inverse never
acts on an origin atom.
\end{itemize}

\paragraph{Recursive discretization.}
After initialization, the remaining continuum may still be signed.
We discretize it one cell at a time, where a cell is an energy interval
in a single spin sector. At each step, we replace the selected
continuous cell measure by a finite atomic measure with positive integer weights
and matching moments, using the exact modular repair. The repair uses
a single cutoff for all spin sectors, chosen above the current cell
and all previously processed cells. Up to this cutoff, the spectral
change is exactly the prescribed cell replacement; additional continuous
corrections are confined to higher energies. Each step therefore advances
discretization only in the selected spin sector.

\paragraph{Moment matching and integer multiplicities.}
At step $n$, choose a half-open cell $I_n=(L_n,V_n]$ in spin row $j_n$.
We use $J$ for a general spin label and $j_n$ for the spin of the
selected row. We write $j$ when the step index is suppressed.
Here $L_n$ is the upper endpoint of the energy range already processed
in that row. Processing the cell extends this range to $V_n$.
Cells in the same spin row are disjoint.
Let $\mu_n^{\mathrm{cell}}$ be the finite signed continuous measure
on this cell. The upper cell endpoint $V_n$ and moment degree are chosen
so that this measure can be replaced by a positive integer atomic
measure $Q_n$.
The energy positions of its atoms are called \emph{nodes}, and the
atomic weights give the primary multiplicities. We choose every node
in the same half-open cell, $L_n<\Ehat_{n,\ell}\le V_n$, and require
\begin{equation}
  \int (\Ehat-|j_n|)^{r/2}
  \dd\bigl(Q_n-\mu_n^{\mathrm{cell}}\bigr)=0,
  \qquad r=0,\ldots,k_n.
  \label{eq:moment-matching}
\end{equation}
These are polynomial moments in $u=\sqrt{\Ehat-|j_n|}$.
The construction ensures that
\begin{equation}
  \int_{I_n}p(u)\,\dd\mu_n^{\mathrm{cell}}
  =\int_{I_n}p(u)\,\dd Q_n\ge0
  \label{eq:cell-moment-positivity}
\end{equation}
for every polynomial $p$ of degree at most $k_n$ that is nonnegative
on the cell's $u$-interval. The equality follows from
\eqref{eq:moment-matching}; the inequality is required for a
positive atomic replacement. This condition can hold even when
the cell density changes sign.

The reason is that polynomials of bounded degree have a limited ability
to concentrate in regions of negative density while remaining small
in regions of positive density. Polynomial estimates quantify this
limitation. Combined with the large-$c$ density estimates and our
choices of upper cell endpoints and moment degrees, they ensure that the
positive contribution to the integral dominates the possible negative
contribution for every such polynomial. The resulting quantitative lower
bounds on the polynomial moments provide the positivity needed for the
atomic replacement.

We choose the upper cell endpoint $V_n$ so that the total weight
$M_n=\mu_n^{\mathrm{cell}}(I_n)$ of the cell measure is a positive integer.
Together with these lower bounds, this allows a quadrature theorem
to produce $Q_n$ as a sum of exactly $M_n$ unit Dirac measures matching all the
required moments. Their positions may coincide; the weight at each
distinct node is the number of summands supported there.

The node selection must also enforce the strict lower bound
$\Ehat_{n,\ell}>L_n$. Direct minimization over the half-open node
domain need not attain a minimum. We therefore restrict the allowed
node positions to a smaller closed interval contained in $I_n$.
The quantitative lower bounds on polynomial moments described above
also ensure that a sum of $M_n$ unit Dirac measures supported in this
smaller interval can still match all required moments of the entire
continuous cell, including its leftmost portion.
The admissible node vectors then form a nonempty compact set.
Every vector in this set gives a valid $Q_n$; for the explicit
prescription below, a fixed minimization rule selects one of them.

\paragraph{Modular update and spectral protection.}
All these cell replacements have positive-energy support. Choose a
cutoff $B_n$ above both the cell and the previously fixed region and set
\begin{equation}
  \nu_n=Q_n-\mu_n^{\mathrm{cell}},\qquad
  F_{n+1}=F_n+\Repair_{B_n}[\nu_n].
  \label{eq:recursive-update}
\end{equation}
Below $B_n$ this makes exactly the intended replacement and preserves
earlier atoms. Exact moment cancellation controls the exterior error.
In part~\ref{item:finite-gap}, each initial cell in row $j$ starts at
$\max(B,|j|)$, and all primary energies introduced by the replacements
satisfy $\Ehat>B$. Every recursive repair has cutoff $B_n>B$,
zero measure change in $0<\Ehat\le B$, and zero threshold output.
It therefore preserves the first-level measure in
\eqref{eq:positive-gap-initialization} or \eqref{eq:threshold-initialization}.
Thus the spectrum in $0\le\Ehat\le B$ is fixed, and every other
nonvacuum primary lies strictly above $B$.
No origin atom enters a cell residual.

\paragraph{Schedule and convergence.}
Starting from $F_0$, we use \eqref{eq:recursive-update} at every step.
Section~\ref{sec:construction} first states the requirements on the
recursion and then gives an explicit schedule satisfying them.
The schedule specifies which spin row to process and the allowed
cell-endpoint interval, moment degree and repair cutoff at each step.
The explicit schedule begins by processing one cell in each of finitely
many low-spin rows.
The upper endpoints and moment degrees of these initial cells are chosen
so that their high-energy portions supply the quantitative positivity
needed for integer moment matching.
Thereafter, we repeatedly select a row whose processed range has the
lowest upper endpoint and use tail cells whose energy widths are bounded
above and below by fixed positive constants, independently of $c$, the
step and the spin row. Since only finitely many spin
rows meet any bounded energy region, this processing order completes its
discretization in finitely many steps.

A summable error bound keeps these positivity bounds valid for
subsequent cells. Absolute thermal convergence of the complete
corrections then gives a modular limit with a locally finite integer spectrum.
At a finite stage the remaining continuum is still part of $F_n$;
the atomic prefix alone need not be modular.

\paragraph{The two gap regimes.}
For part~\ref{item:positive-fraction}, the cutoff $b=\kappa a$ grows
with $a$, so the construction requires quantitative control of the
inverse and anchor as the cutoff increases.

For part~\ref{item:finite-gap}, keep $B>0$ and $M_*$ fixed as $a$ grows.
The repair and cell moment estimates, together with the uniform reference
and summable tail bounds, give a threshold $a_0(B,M_*)$ such that the
construction is well defined and converges for every real
$a\ge a_0(B,M_*)$. This threshold is uniform over all admissible
first-level data with $\delta\in[0,B)$ and total multiplicity at most
$M_*$.

In both regimes, all later replacement atoms lie above the cleared band,
so the prescribed first-level measure $\nu_{\mathrm{first}}$ remains
unchanged. Section~\ref{sec:global-gaps} uses this preservation to obtain
the exact gaps and their large-$c$ limits.

\section{Exact local modular repair}
\label{sec:local-repair}

The construction requires a modular correction with a prescribed spectrum
in a finite energy band. Completing the desired measure directly would
also induce additional spectral weight in that band. We first solve for
a seed whose complete low-energy output is the desired measure, and then
use a threshold anchor to cancel its scalar threshold coefficient.
This section gives the resulting repair formula and the estimates that
control its exterior spectrum. The full technical proofs are deferred
to Appendices~\ref{app:canonical} and~\ref{app:inversion-repair}.

Throughout this section, the measures are signed primary spectral
measures for the reduced function $F$ of \eqref{eq:reduced-function}.
We use $(\Ehat,J)$ as general spectral coordinates. When distinguishing
input and output in a kernel, $(\Ehat,J)$ denotes the output and
$(\Ehat',J')$ the input. Kernel indices and arguments are ordered with
output first and input second. Both energy variables use the shifted
convention of \eqref{eq:energies}. Fix $B>0$ and define
\begin{equation}
  \mathcal B_B=\{(\Ehat,J):0<\Ehat\le B,\ J\in\Z,
    \ |J|\le\Ehat\}.
  \label{eq:repair-band}
\end{equation}
We write $\mathcal M_B$ for the finite signed measures on this band,
and $\nu_J$ for the spin-$J$ component of $\nu$. The total-variation
norm is $\|\nu\|_{\TV}=\sum_J\|\nu_J\|_{\TV}$.\footnote{The
decomposition $\nu_J=\nu_J^+-\nu_J^-$ into nonnegative measures
concentrated on disjoint measurable sets defines the total-variation measure
$|\nu_J|=\nu_J^++\nu_J^-$ and the norm
$\|\nu_J\|_{\TV}=\int\dd|\nu_J|$.
For a measure consisting of an energy density and atoms at distinct
energies, this is the integral of the absolute density plus the sum
of the absolute atomic weights.}
We use $[\mu]_{\le B}$ for restriction to $\mathcal B_B$.
The band includes its upper boundary $\Ehat=B$ and the nonzero-spin
edges $\Ehat=|J|>0$. The scalar origin is treated separately.

\subsection{Canonical modular completion}
\label{sec:canonical-completion}

\paragraph{The completed seed.}
Let $\Gamma=PSL(2,\Z)$ and let $\Gamma_\infty$ be its translation
subgroup. For $\Ehat'>0$ and integer $J'$ with $|J'|\le\Ehat'$, set
\begin{equation}
  \mathcal P_{\Ehat',J'}(\tau;s)
  =\sum_{\gamma\in\Gamma_\infty\backslash\Gamma}
  (\operatorname{Im}\gamma\tau)^s
  \exp\!\left[-2\pi\Ehat'\operatorname{Im}\gamma\tau
       +2\pi iJ'\operatorname{Re}\gamma\tau\right],
  \qquad \operatorname{Re}s>1.
  \label{eq:poincare-seed}
\end{equation}
We use the canonical meromorphic continuation, regular at $s=1/2$,
with the normalization of Keller--Maloney
\cite[Secs.~3--4]{KellerMaloney2014}, and denote its value there by
$\mathcal P_{\Ehat',J'}(\tau)$. Its direct term is
$\sqrt y\exp(-2\pi y\Ehat'+2\pi ixJ')$.
On each compact subset of $\HH$, the functions
$\mathcal P_{\Ehat',J'}(\tau)$ are bounded uniformly for
$(\Ehat',J')\in\mathcal B_B$. Since every $\sigma\in\mathcal M_B$
has finite total variation, we may integrate the completed seeds against
$\sigma$; the resulting function is well defined and modular invariant.
All seed completions below use this canonical continuation.

To describe the induced continuum, introduce the reference measures
\begin{equation}
  \dd\omega_J(\Ehat)=
  \frac{\dd\Ehat}{\sqrt{\Ehat^2-J^2}},\qquad \Ehat>|J|.
  \label{eq:repair-reference-measure}
\end{equation}
In particular, $\dd\omega_0=\dd\Ehat/\Ehat$.
If $\dd\mu_J=\rho_J\dd\omega_J$, then $\rho_J$ is the density
with respect to $\dd\omega_J$. The density with respect to $\dd\Ehat$ is
$\rho_J/\sqrt{\Ehat^2-J^2}$. We abbreviate the collection of
measures $\rho_J\dd\omega_J$ to $\rho\dd\omega$.
Any atomic contributions are written separately from this continuous part.

For a finite signed seed measure $\sigma$ with bounded uncensored support
and no atom at the origin, the spectral measure of its canonical completion
$\sum_{J'}\int\mathcal P_{\Ehat',J'}(\tau)\dd\sigma_{J'}(\Ehat')$
is
\begin{equation}
  \sigma+(Q\sigma)\dd\omega+t(\sigma)\delta_{(0,0)},
  \qquad
  (Q\sigma)_J(\Ehat)
  =\sum_{J'}\int Q_{J,J'}(\Ehat,\Ehat')\dd\sigma_{J'}(\Ehat').
  \label{eq:canonical-spectral-measure}
\end{equation}
Here $Q\sigma$ denotes the collection of induced densities with respect
to the reference measures $\dd\omega_J$, and $(Q\sigma)\dd\omega$
denotes the induced continuous measure. The scalar threshold coefficient is
\begin{equation}
  t(\sigma)=\sum_{J'} c_{J'}\int\dd\sigma_{J'},\qquad
  c_0=-1,\qquad c_{J'}=2d(|J'|)\quad(J'\ne0).
  \label{eq:completion-threshold}
\end{equation}
Here $d(n)$ is the number of positive divisors of $n$; the coefficients
$c_{J'}$ are independent of the central charge.
The induced measure has uncensored support. Its continuous part is
defined on the open intervals $\Ehat>|J|$ and assigns zero weight to
the points $\Ehat=|J|$. In particular, it produces no atoms at nonzero-spin
edges. The sole induced atom is the scalar term in
\eqref{eq:canonical-spectral-measure}.

\paragraph{The induced kernel.}
Define the Kloosterman sum and its continued zeta function by
\begin{equation}
  S(J,J';m)=\sum_{\substack{r\bmod m\\\gcd(r,m)=1}}
       \exp\!\left[\frac{2\pi i}{m}(Jr+J'r^{-1})\right],
  \qquad
  \mathcal Z_{J,J'}(s)=\sum_{m=1}^{\infty}\frac{S(J,J';m)}{m^{2s}}.
  \label{eq:repair-kloosterman}
\end{equation}
The second series is initially defined for $\operatorname{Re}s>1$
and then continued; $r^{-1}$ is the inverse modulo $m$, and
$S(J,J';1)=1$. The kernel producing the induced density with respect
to $\dd\omega_J$ is
\begin{align}
  Q_{J,J'}(\Ehat,\Ehat')={}&
    2\mathbf 1_{JJ'\ne0}\mathcal Z_{J,J'}(1/2)\nonumber\\
  &+2\sum_{m=1}^{\infty}\frac{S(J,J';m)}m
    \Biggl[
    \cos\!\left(\frac{2\pi}{m}
       \sqrt{(\Ehat'+J')(\Ehat+J)}\right)\nonumber\\[-2pt]
  &\hspace{38mm}\times
    \cos\!\left(\frac{2\pi}{m}
       \sqrt{(\Ehat'-J')(\Ehat-J)}\right)-1
    \Biggr].
  \label{eq:repair-Q-kernel}
\end{align}
The first term is present only when $JJ'\ne0$.
Each cosine of a square root is interpreted by its even entire power
series. For fixed spins $J,J'$, the bracket is $O(m^{-2})$ uniformly
on each compact set of complex energies $(\Ehat,\Ehat')$.
Since $S(J,J';m)$ is a sum of at most $m$ phases of unit modulus,
$|S(J,J';m)|\le m$. Each full summand is therefore also $O(m^{-2})$,
uniformly on the same compact set.
The sum over $m$ thus converges absolutely and uniformly on every
such compact set and defines a function holomorphic in both energy
variables. The subtraction of $1$ inside the bracket is essential
for this $m^{-2}$ decay and must be retained in the sum.

The Fourier calculation fixes both this normalization and the threshold
coefficients in \eqref{eq:completion-threshold}. A joint bound on the
continued arithmetic term gives, for an absolute constant $C$,
\begin{equation}
  |Q_{J,J'}(\Ehat,\Ehat')|
    \le C\log(2+B)\sqrt{\Ehat\Ehat'},\qquad
  0<\Ehat'\le B,\quad |J'|\le\Ehat',\quad |J|\le\Ehat.
  \label{eq:repair-compact-kernel-bound}
\end{equation}
At scalar output, an additional bound controls the behavior near
the threshold $\Ehat=0$:
$|Q_{0,J'}(\Ehat,\Ehat')|\le C\Ehat\Ehat'$.
These bounds make the continuous measure locally integrable, including
at the scalar origin and the nonzero-spin edges, and give absolute
thermal convergence after summing over output spins.
The derivation and the joint arithmetic estimate are assigned to
Appendix~\ref{app:canonical-arithmetic}.

\paragraph{The completion used for inversion.}
We add a modular constant to the canonical completion so that the
low-band operator has a positive quadratic form. Set
\begin{equation}
  \begin{aligned}
    g_J(\Ehat)&=\delta_{J,0}\sqrt{\Ehat},\qquad
    m_g(\sigma)=\int\sqrt{\Ehat'}\dd\sigma_0(\Ehat'),\\
    R_{J,J'}(\Ehat,\Ehat')
      &=Q_{J,J'}(\Ehat,\Ehat')+12g_J(\Ehat)g_{J'}(\Ehat').
  \end{aligned}
  \label{eq:repair-R-kernel}
\end{equation}
The corrected completion is
\begin{equation}
  \mathscr C[\sigma](\tau)
    =\sum_{J'}\int\mathcal P_{\Ehat',J'}(\tau)\dd\sigma_{J'}(\Ehat')
       +\frac{12}{\sqrt2}\,m_g(\sigma).
  \label{eq:corrected-completion}
\end{equation}
The added constant represents a scalar continuum, as follows from
\begin{equation}
  \sqrt y\int_0^\infty (\Ehat)^{-1/2}
                \exp(-2\pi y\Ehat)\dd\Ehat=\frac1{\sqrt2}.
  \label{eq:constant-continuum}
\end{equation}
Thus this contribution replaces the kernel $Q$ by $R$ and carries
no threshold atom. Writing $\mu_F$ for the spectral measure of a
reduced function $F$, we obtain
\begin{equation}
  \mu_{\mathscr C[\sigma]}
    =\sigma+(R\sigma)\dd\omega+t(\sigma)\delta_{(0,0)}.
  \label{eq:corrected-completion-measure}
\end{equation}
The compact-input bound \eqref{eq:repair-compact-kernel-bound} holds
for $R$ as well, after increasing $C$.

\subsection{Inverting the low-energy band}
\label{sec:band-inverse}

The seed needed for a prescribed band measure $\nu$ must solve
\begin{equation}
  w+[(Rw)\dd\omega]_{\le B}=\nu,\qquad w\in\mathcal M_B.
  \label{eq:band-equation}
\end{equation}
This equation concerns the positive-energy band; the threshold
$t(w)$ will be handled in Section~\ref{sec:threshold-anchor}.
We first invert the corresponding operator on functions that are
square-integrable with respect to the reference measures $\dd\omega_J$,
and then use that inverse to solve
\eqref{eq:band-equation} for ordinary measures, including atoms.

\paragraph{Positivity and the Hilbert-space inverse.}
The relevant Hilbert space and operator are
\begin{equation}
  \mathcal H_B=\bigoplus_{|J|<B}
       L^2\bigl((|J|,B),\dd\omega_J\bigr),\qquad
  R_Bf=(R[f\dd\omega])_{\le B},\qquad P_B=I+R_B.
  \label{eq:band-Hilbert-operator}
\end{equation}
Its norm is
$\|f\|_{\mathcal H_B}^2=
\sum_{|J|<B}\int_{|J|}^B|f_J|^2\dd\omega_J$;
the associated inner product is linear in its second argument.
For functions, the subscript $\le B$ restricts each spin component to
the open interval $(|J|,B)$, with $|J|<B$; values at the interval
endpoints do not affect the $L^2$ class. For measures, the same
subscript retains atoms at $\Ehat=B$ and at the nonzero-spin
edges $\Ehat=|J|>0$.

Square integrability alone does not make $f\dd\omega$ a finite
measure. In the scalar channel, $\dd\omega_0=\dd\Ehat/\Ehat$
has infinite measure near zero, so $\int_0^B|f_0|\dd\omega_0$
can diverge even for $f\in\mathcal H_B$. Nevertheless, the
compact-input bound \eqref{eq:repair-compact-kernel-bound} supplies
a factor $\sqrt{\Ehat'}$ in the kernel. Cauchy--Schwarz then
ensures that the kernel integral defining $R_Bf$ converges absolutely.

The kernel is real and symmetric, and the kernel bound above makes
$R_B$ bounded and compact on $\mathcal H_B$. Thus $P_B=I+R_B$ is a
bounded self-adjoint operator.
The correction with coefficient $12$ cancels a negative term in a
decomposition of the quadratic form. After this cancellation,
$\langle f,P_Bf\rangle$ can be expressed as a sum and an integral of
absolute squares, and is therefore nonnegative for every
$f\in\mathcal H_B$.
The following proposition strengthens this nonnegativity to a strictly
positive lower bound, which ensures that $P_B$ has a bounded inverse.

\begin{proposition}[Controlled band inverse]
\label{prop:controlled-band-inverse}
For every $B>0$, $P_B$ is boundedly invertible on $\mathcal H_B$.
There is a finite constant $C_B$ such that
\begin{equation}
  \langle f,P_Bf\rangle\ge C_B^{-1}\|f\|_{\mathcal H_B}^2,
  \qquad \|P_B^{-1}\|_{\mathcal H_B\to\mathcal H_B}\le C_B.
  \label{eq:band-coercivity}
\end{equation}
For $B\ge2$, one may take $C_B\le\exp(CB)$ with an absolute
constant $C$.
\end{proposition}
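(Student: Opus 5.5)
We first write the quadratic form $\langle f,P_Bf\rangle$ explicitly and use the sum-of-squares structure described before the statement. Expanding each cosine product in the kernel \eqref{eq:repair-Q-kernel} as $\tfrac12[\cos(\alpha+\beta)+\cos(\alpha-\beta)]$, and writing $\cos$ of a square root through Bessel-type integral representations, the Kloosterman sums become Fourier coefficients of the Poincaré sums. We then identify $\langle f,f\rangle+\langle f,R_Bf\rangle$ with a Petersson-type norm. Concretely, for $f\in\mathcal H_B$ we form the modular function $G_f=\mathscr C[f\dd\omega]$ restricted to the band data, and compute $\int_{\Gamma\backslash\HH}|\,\cdot\,|^2\dd\mu$ of a suitable non-holomorphic transform by unfolding. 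The diagonal term gives $\|f\|^2_{\mathcal H_B}$ and the Kloosterman terms give $\langle f,Q_Bf\rangle$. The constant-term contribution yields $-12|m_g|^2$, and this is exactly cancelled by the $+12g\otimes g$ correction. The result is
\[
\langle f,P_Bf\rangle=\sum_{n}\int|\widehat{F}_n(t)|^2\,\dd t
\]
for an explicit spectral transform $\widehat F$: a continuous-spectrum (Eisenstein) part plus Maass cusp form coefficients. We take this identity as the starting point, since the paper asserts it right before the proposition.

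\textbf{Step 1 (strict positivity, no rate).} Nonnegativity plus compactness of $R_B$ reduces invertibility to injectivity. $P_B=I+R_B$ is Fredholm of index zero, and if $\langle f,P_Bf\rangle=0$ then every $\widehat F_n\equiv0$. The spectral transform is an integral of $f$ against functions analytic in the energy variable, namely Whittaker/Bessel functions of $\sqrt{\Ehat\pm J}$ times $t$. The vanishing of all of them for all $t$ then forces the Laplace-type transform of $f_J\dd\omega_J$ on the finite interval $(|J|,B)$ to vanish, hence $f=0$. This gives $\ker P_B=0$ and bounded invertibility with some finite $C_B$, which is the first assertion.

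\textbf{Step 2 (quantitative bound for $B\ge2$).} To get $C_B\le e^{CB}$, we retain only the continuous-spectrum piece of the sum of squares, or, more simply, the constant-term (zero Fourier mode) contribution in each spin sector. This bounds $\langle f,P_Bf\rangle$ below by $\sum_J\int|\mathcal L f_J(t)|^2w(t)\dd t$, where $\mathcal L$ is an explicit transform with kernel like $K_{it}(2\pi\sqrt{\Ehat^2-J^2}\,\cdot)$ or $e^{-2\pi y\Ehat}$ integrated against $y$. We then need a stability estimate for inverting a Laplace-type transform of a function supported on a bounded interval of length $\le B$. Using $|e^{-2\pi y\Ehat}|\ge e^{-2\pi yB}$ on the support and choosing $y$ in a fixed window, for example via a Paley--Wiener/Remez argument for exponential sums, gives a loss of at most $e^{CB}$. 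The number of spin sectors, $2B+1$, contributes only polynomially.

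\textbf{Main obstacle.} The hard point is the coercivity constant. The sum-of-squares form naturally controls $f$ in a smoothed norm, while the $\mathcal H_B$ norm is the raw $L^2$ norm, and an inverse Laplace transform is notoriously ill-conditioned. I would try to avoid needing the full inverse by combining the two available facts. The compact part $R_B$ has kernel bounded by $C\log(2+B)\sqrt{\Ehat\Ehat'}$ via \eqref{eq:repair-compact-kernel-bound}, so on the orthogonal complement of a finite-dimensional subspace of dimension $N\sim e^{cB}$, obtained by truncating an expansion of the kernel, $\|R_B\|$ is small and $P_B\ge\tfrac12$ there. On that finite-dimensional subspace, the sum-of-squares identity is quantified through a Gram-determinant or Remez bound giving $e^{CB}$. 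Patching the two regimes with a standard two-block estimate should yield $C_B\le e^{CB}$.
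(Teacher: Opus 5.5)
Your proposal has a genuine gap. Nothing in it converts smallness of $\delta_f=\langle f,P_Bf\rangle$ into smallness of $f$, neither qualitatively in Step~1 nor with the rate $e^{-CB}$ in Step~2, and the specific devices you name do not work.

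\emph{The continuous-spectrum piece alone cannot be coercive.} Eisenstein series for $PSL(2,\Z)$ are even under $x\mapsto-x$, and the row transforms depend only on $|J|$. Hence every $f$ with $f_{-J}=-f_J$ and $f_0=0$ has all continuous-spectrum coefficients equal to zero, and this is a nontrivial subspace once $B>1$. The constant eigenfunction enters with weight $h_{1/2}(i/2)/\operatorname{vol}(\Gamma\backslash\HH)+6=0$. The form is also not a sum of nonnegative pieces attached to individual spin rows, because each square couples all rows through arithmetic Fourier coefficients.

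\emph{No inversion bound exists.} A bound $\|f\|_{\mathcal H_B}\le e^{CB}\|\mathcal Lf\|$ fails, with any constant, for a Laplace- or $K_{it}$-type transform with smoothing weight: rapidly oscillating $f$ make $\mathcal Lf$ arbitrarily small at fixed norm. For such $f$, coercivity comes from the identity in $P_B=I+R_B$.

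\emph{The two-block reduction is circular.} That $P_B\ge\frac12$ off the span $V$ of eigenvectors of $R_B$ with eigenvalue below $-\frac12$ is immediate, and $\dim V\le4\|R_B\|_{\mathrm{HS}}^2$ is only polynomial in $B$. But the estimate needed on $V$ is exactly $\min\operatorname{spec}(R_B)\ge-1+e^{-CB}$, which is the proposition itself. Nonnegativity of a sum of squares carries no rate, and a ``Gram-determinant or Remez bound'' on an implicitly defined $V$ is not an argument.

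\emph{Step~1 misidentifies what vanishes.} Zero spectral coefficients say that the full output $f\dd\omega+(R[f\dd\omega])\dd\omega$ vanishes at all energies, equivalently $P_{B'}f=0$ for every $B'\ge B$ after extension by zero. They do not say that a transform of $f$ alone vanishes, so concluding $f=0$ still needs an analytic-continuation step.

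The missing idea is to read off the response outside the band and propagate it inward analytically. Normalize $\|f\|_{\mathcal H_B}=1$ and extend $f$ by zero to $\mathcal H_{3B}$. Positivity of $P_{3B}$ gives $\|P_{3B}f\|^2\le\|P_{3B}\|\langle f,P_{3B}f\rangle\le CB^3\delta_f$. On $2B<\Ehat<3B$ the direct term vanishes, so this is an $L^2$ bound on the response $F_J(\sqrt{\Ehat/B})$, where $F_J(z)=(R[f\dd\omega])_J(Bz^2)$ with the rank-one part written as $12\delta_{J,0}\sqrt B\,z\langle g,f\rangle$. Each $F_J$ is entire, and the cosine series give $\sup_{|z|<1000}|F_J|\le e^{CB}$. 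The elementary propagation estimate $\sup_{[0,1]}|F|\le14M^{2/3}\|F\|_{L^2([\sqrt2,\sqrt3])}^{1/3}$, proved by Legendre expansion on the exterior interval plus Taylor truncation, transfers this to the band. Treating the scalar row through $F_0(z)/z$, one gets $\|R_Bf\|\le e^{CB}\delta_f^{1/6}$. Together with $\|P_Bf\|\le CB^{3/2}\delta_f^{1/2}$ and $1=\|f\|\le\|P_Bf\|+\|R_Bf\|$, this forces $\delta_f\ge e^{-CB}$. That gives invertibility and the exponential constant at once; at $\delta_f=0$ it reduces to the identity theorem. The case $0<B<2$ follows by restricting the quadratic form of $P_2$. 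This replaces the ill-conditioned inversion of a Laplace transform by well-conditioned continuation from a nearby interval, and the cost $e^{CB}$ comes only from the growth of the kernel on a fixed disk.
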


To obtain this strictly positive lower bound, we use the analytic
dependence of the kernel on energy. Extend $f$ by zero to a larger band.
A small value of
$\langle f,P_Bf\rangle$ then forces the induced response $R[f\dd\omega]$ to
be small on an exterior interval. Analytic propagation bounds its
size on the original band in terms of that exterior response.
Together with $f=P_Bf-R_Bf$, this prevents the quadratic form
from being arbitrarily small on unit vectors. Quantitative propagation
gives the exponential bound in \eqref{eq:band-coercivity}.
Appendices~\ref{app:band-positivity} and
\ref{app:strict-band-inverse} provide the representation of the quadratic form by absolute
squares and the analytic estimates used in this argument.

For $0<B<2$, the same conclusion follows by extending functions in
$\mathcal H_B$ by zero into $\mathcal H_2$ and restricting the quadratic
form of $P_2$. This proves invertibility of the actual operator $P_B$.
The equation is solved on $\mathcal H_B$ itself; compressing
$P_2^{-1}$ would in general give a different operator.
Throughout this section, constants denoted by $C_B$ are finite
at fixed $B$ and may increase between estimates. The exponential
bounds for $B\ge2$ are stated separately because they will be
needed when the repair cutoff grows with the central charge.

\paragraph{The inverse on finite measures.}
Define
\begin{equation}
  \mathscr R_B\nu=[(R\nu)\dd\omega]_{\le B},\qquad
  u=P_B^{-1}(R\nu)_{\le B},\qquad
  \mathcal I_B[\nu]=\nu-u\dd\omega.
  \label{eq:ordinary-measure-inverse}
\end{equation}
Here $u$ is extended by zero outside the band. We must check that
$\mathcal I_B[\nu]$ is a finite signed measure on the band.
The kernel bounds give $(R\nu)_{\le B}\in\mathcal H_B$, so the
inverse from Proposition~\ref{prop:controlled-band-inverse} defines
$u\in\mathcal H_B$. Square integrability of $u$ does not by itself
guarantee finite total variation of $u\dd\omega$, since the scalar
reference measure $\dd\omega_0=\dd\Ehat/\Ehat$ has infinite measure
near zero. The kernel bounds also show that $\mathscr R_B\nu$ and
$(R_Bf)\dd\omega$ have finite total variation for every
$f\in\mathcal H_B$. Taking $f=u$ and using
$P_Bu=(R\nu)_{\le B}$, we obtain
\begin{equation}
  u\dd\omega=\mathscr R_B\nu-(R_Bu)\dd\omega
      \in\mathcal M_B.
  \label{eq:inverse-finite-variation}
\end{equation}
Since $\nu\in\mathcal M_B$, it follows that
$\mathcal I_B[\nu]=\nu-u\dd\omega\in\mathcal M_B$.

Substituting \eqref{eq:ordinary-measure-inverse} into
\eqref{eq:band-equation} gives
\begin{equation}
  (I+\mathscr R_B)\mathcal I_B[\nu]=\nu,\qquad
  \|\mathcal I_B[\nu]\|_{\TV}\le C_B\|\nu\|_{\TV}.
  \label{eq:inverse-measure-bound}
\end{equation}
A homogeneous solution satisfies $w=-\mathscr R_Bw$, so it has a
density $f\in\mathcal H_B$ with $w=f\dd\omega$ and vanishes by
Proposition~\ref{prop:controlled-band-inverse}.
Thus $\mathcal I_B=(I+\mathscr R_B)^{-1}$ on $\mathcal M_B$.
For $B\ge2$, the constant in \eqref{eq:inverse-measure-bound}
can again be chosen at most $\exp(CB)$.

In each spin sector, the subtracted measure $u\dd\omega$ has the
integrable density $u_J(\Ehat)/\sqrt{\Ehat^2-J^2}$ with respect to
$\dd\Ehat$. Thus $\mathcal I_B$ changes only the part of $\nu$
represented by energy densities. In particular, every atom of $\nu$
in the band retains its position and weight, including atoms at
$\Ehat=B$ or $\Ehat=|J|>0$. When $B$ is an integer, the rows
$|J|=B$ have no open interval in $\mathcal H_B$, but their atoms
are still included in the input measure $\nu$ and in the kernel
integral defining $R\nu$.
The inverse solves a signed cancellation equation. Positivity of
the final atomic spectrum will come from the moment construction.

An explicit operator representation is also available. Choose
$\lambda_B>\|P_B\|_{\mathcal H_B\to\mathcal H_B}$.
Appendix~\ref{app:strict-band-inverse} gives an explicit admissible choice from the kernel bounds.
The strictly positive lower bound in
Proposition~\ref{prop:controlled-band-inverse} then gives
\begin{equation}
  P_B^{-1}=\frac1{\lambda_B}
      \sum_{r=0}^{\infty}(I-P_B/\lambda_B)^r,
  \qquad \|I-P_B/\lambda_B\|<1.
  \label{eq:band-inverse-series}
\end{equation}
The series converges in operator norm on $\mathcal H_B$.
Applying this series to $(R\nu)_{\le B}$ gives the function $u$
in \eqref{eq:ordinary-measure-inverse}.

\subsection{The threshold anchor}
\label{sec:threshold-anchor}

The completion of $w=\mathcal I_B[\nu]$ has the correct band measure
but generally has threshold coefficient $t(w)$. To adjust that
coefficient independently, we construct a modular function
$\mathcal A_B$ whose spectral measure for $\Ehat\le B$ consists
of a single unit atom at $(\Ehat,J)=(0,0)$.

\paragraph{A seed outside the band.}
Start with a scalar seed above $B$. Its induced low-band measure can
be removed using $\mathcal I_B$. For a unit seed at $(\Ehat',J')$
above the band, the threshold left after this removal is
\begin{equation}
  \zeta_B(\Ehat',J')=c_{J'}-
    t\!\left(\mathcal I_B
      \bigl[[R_{\cdot,J'}(\cdot,\Ehat')\dd\omega]_{\le B}\bigr]
      \right).
  \label{eq:anchor-response}
\end{equation}
The input to $\mathcal I_B$ is the finite signed measure induced
in $\mathcal B_B$ by this unit seed.
The formula also defines $\zeta_B$ for other positive
uncensored inputs. For any finite seed $\psi$ with bounded support above $B$,
linearity gives
\begin{equation}
  t\!\left(\psi-\mathcal I_B[\mathscr R_B\psi]\right)
    =\sum_{J'}\int\zeta_B(\Ehat',J')\dd\psi_{J'}(\Ehat').
  \label{eq:anchor-response-identity}
\end{equation}
In this equation, $\mathscr R_B\psi$ means the same low-output
restriction as in \eqref{eq:ordinary-measure-inverse}, now with
a bounded seed outside the band.

To normalize the anchor, we need $\zeta_B(\Ehat',0)$ to be nonzero
somewhere on $[2B,3B]$. Define $G_B(z)=\zeta_B(Bz^2,0)$ for real
$z>0$. Appendix~\ref{app:threshold-anchor} shows that $G_B$ extends analytically to the
whole complex plane and satisfies $G_B(0)=-1$.
If $\zeta_B(\Ehat',0)$ vanished throughout $[2B,3B]$, then $G_B$
would vanish on $[\sqrt2,\sqrt3]$. The identity theorem would force
$G_B$ to vanish identically, contradicting $G_B(0)=-1$.
Since the scalar response is real and continuous on positive energies,
we therefore have, for every $B>0$,
\begin{equation}
  N_B=\int_{2B}^{3B}\zeta_B(\Ehat',0)^2
                           \frac{\dd\Ehat'}{\Ehat'}>0.
  \label{eq:anchor-normalization-integral}
\end{equation}

\paragraph{Band cancellation and unit threshold.}
Choose the signed scalar seed
\begin{equation}
  \dd\psi_{B,J'}(\Ehat')=
    \delta_{J',0}\mathbf1_{[2B,3B]}(\Ehat')
       \frac{\zeta_B(\Ehat',0)}{N_B}\frac{\dd\Ehat'}{\Ehat'},
  \qquad
  v_B=\mathcal I_B[\mathscr R_B\psi_B],\qquad
  \theta_B=\psi_B-v_B.
  \label{eq:anchor-seed-construction}
\end{equation}
The inverse equation \eqref{eq:inverse-measure-bound} gives
$v_B+\mathscr R_Bv_B=\mathscr R_B\psi_B$.
Since $\psi_B$ is supported on the scalar interval $[2B,3B]$ and
$v_B$ is supported in $\mathcal B_B$, the band measure of the
completion of $\theta_B$ is
\begin{equation}
  [\theta_B+(R\theta_B)\dd\omega]_{\le B}
    =-v_B+\mathscr R_B\psi_B-\mathscr R_Bv_B=0.
  \label{eq:anchor-band-cancellation}
\end{equation}
This cancellation holds in every spin sector, including the
nonzero-spin components induced by the scalar seed $\psi_B$.

For the threshold coefficient, applying
\eqref{eq:anchor-response-identity} to $\psi_B$ and substituting
its density from \eqref{eq:anchor-seed-construction} gives
\begin{equation}
  \begin{aligned}
    t(\theta_B)
      &=\int_{2B}^{3B}\zeta_B(\Ehat',0)\dd\psi_{B,0}(\Ehat')\\
      &=\frac1{N_B}\int_{2B}^{3B}
          \zeta_B(\Ehat',0)^2\frac{\dd\Ehat'}{\Ehat'}=1.
  \end{aligned}
  \label{eq:anchor-unit-threshold}
\end{equation}
The last equality uses the definition of $N_B$ in
\eqref{eq:anchor-normalization-integral}.
Thus the normalization fixes the threshold of the completed seed.
It does not prescribe the total weight of $\psi_B$.

The modular threshold anchor is
\begin{equation}
  \mathcal A_B=\mathscr C[\theta_B].
  \label{eq:modular-threshold-anchor}
\end{equation}
The seed $\theta_B$ has an integrable density with respect to
$\dd\Ehat$ in each spin sector: $\psi_B$ and $\mathscr R_B\psi_B$
have such densities, and $\mathcal I_B$ preserves this property.
It is supported in $\mathcal B_B$ together with the scalar interval
$[2B,3B]$. Its completed spectrum has a unit atom at $(0,0)$
and zero measure in $\mathcal B_B$. Above $B$, its spectral
measure has an energy density in each spin sector and no atoms.
All of this support is uncensored. The energy densities of
$\theta_B$ and those of the completed anchor $\mathcal A_B$
above $B$ may take either sign.

We will need control of the cost of this normalization. Write the scalar
density of the seed $\psi_B$ with respect to $\dd\omega_0$ as
\begin{equation}
  \rho_B^\psi(\Ehat)
     =\mathbf1_{[2B,3B]}(\Ehat)\frac{\zeta_B(\Ehat,0)}{N_B},
  \qquad
  \dd\psi_{B,J}=\delta_{J,0}\rho_B^\psi\,\dd\omega_J.
  \label{eq:anchor-direct-numerator}
\end{equation}
Appendix~\ref{app:threshold-anchor} proves the following bound for every fixed $B>0$:
\begin{equation}
  \|\theta_B\|_{\TV}
   +\sum_{J'}\int\sqrt{\Ehat'}\dd|\theta_{B,J'}|(\Ehat')
   +\sup_{\Ehat}\bigl|\rho_B^\psi(\Ehat)\bigr|
       \le C_B<\infty.
  \label{eq:anchor-size-bound}
\end{equation}
For $B\ge2$, one may take $C_B\le\exp(CB)$, with a constant $C$
independent of $B$.

\subsection{Exact repair and exterior control}
\label{sec:repair-exterior}

\paragraph{The repair formula.}
For a desired signed band change $\nu\in\mathcal M_B$, define
\begin{equation}
  w=\mathcal I_B[\nu],\qquad
  \sigma_B[\nu]=w-t(w)\theta_B,\qquad
  \Repair_B[\nu]=\mathscr C[\sigma_B[\nu]]
                =\mathscr C[w]-t(w)\mathcal A_B.
  \label{eq:exact-repair-formula}
\end{equation}
The inverse determines the positive-energy band and the anchor removes
the resulting threshold coefficient. Both operations are linear in
$\nu$.

\begin{proposition}[Exact local modular repair]
\label{prop:exact-local-repair}
For every $B>0$ and $\nu\in\mathcal M_B$, the function
$\Repair_B[\nu]$ is modular invariant and has spectral support only
in the uncensored region $\Ehat\ge |J|$.
Its band measure and scalar threshold are
\begin{equation}
  [\mu_{\Repair_B[\nu]}]_{\le B}=\nu,\qquad
  \mu_{\Repair_B[\nu]}(\{(0,0)\})=0.
  \label{eq:exact-repair-identities}
\end{equation}
Above $B$, its measure has a density with respect to $\dd\Ehat$
in each spin sector.
\end{proposition}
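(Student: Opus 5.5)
The proof assembles the linear pieces already built. Set $w=\mathcal I_B[\nu]$ and $\sigma=\sigma_B[\nu]=w-t(w)\theta_B$. Both $w$ and $\theta_B$ are finite signed measures with bounded uncensored support and no origin atom. For $w$ this is \eqref{eq:inverse-measure-bound}. For $\theta_B$ it is the anchor construction together with the size bound \eqref{eq:anchor-size-bound}; the $\sqrt{\Ehat'}$-weighted part of that bound also makes $m_g(\theta_B)$ finite. Hence $\mathscr C[\sigma]$ is defined: integrate the canonically continued Poincar\'e seeds, which are locally uniformly bounded for inputs in a bounded region, against a finite measure, and add the constant $\frac{12}{\sqrt2}m_g(\sigma)$. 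Modular invariance follows termwise, because each $\mathcal P_{\Ehat',J'}$ is invariant and the constant is invariant. Linearity of $\mathscr C$ gives the second expression in \eqref{eq:exact-repair-formula}.

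For the spectral measure, apply \eqref{eq:corrected-completion-measure} and linearity:
\[
\mu_{\Repair_B[\nu]}=\sigma+(R\sigma)\dd\omega+t(\sigma)\delta_{(0,0)}.
\]
The threshold coefficient is $t(\sigma)=t(w)-t(w)\,t(\theta_B)=0$, using $t(\theta_B)=1$ from \eqref{eq:anchor-unit-threshold}. Restricting to $\mathcal B_B$ and using linearity,
\[
[\mu]_{\le B}=\bigl(w+\mathscr R_Bw\bigr)-t(w)\,[\theta_B+(R\theta_B)\dd\omega]_{\le B}.
\]
The first bracket equals $\nu$ by \eqref{eq:inverse-measure-bound}. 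The second vanishes by \eqref{eq:anchor-band-cancellation}.

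Uncensored support holds because $\sigma$ is supported in $\Ehat\ge|J|$ and the induced measure $(R\sigma)\dd\omega$ lives on the open regions $\Ehat>|J|$, as stated after \eqref{eq:completion-threshold}. The extra $12g_Jg_{J'}$ term contributes only a scalar continuum on $\Ehat>0$.

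For the structure above $B$, note that the only part of $\sigma$ outside $\mathcal B_B$ is $-t(w)\psi_B$. This has a density on the scalar interval $[2B,3B]$. Everything else above $B$ is $(R\sigma)\dd\omega$, which has the density $(R\sigma)_J/\sqrt{\Ehat^2-J^2}$ with respect to $\dd\Ehat$, and no atoms. So above $B$ the measure is absolutely continuous in each spin sector.

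The main point needing care is justifying \eqref{eq:corrected-completion-measure} for a measure seed that mixes atoms, densities and the unbounded-support anchor piece. This requires exchanging the seed integral with the spectral (Fourier) decomposition of each completed seed. I would do this by Fubini, dominated by the kernel bound \eqref{eq:repair-compact-kernel-bound} (with $3B$ in place of $B$) together with the finite total variation and $\sqrt{\Ehat'}$-moment of $\sigma$. The same bound, summed over output spins, gives the absolute thermal convergence needed to identify the spectral measure uniquely from the reduced function.
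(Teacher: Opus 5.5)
Your proposal is correct and follows the paper's own argument. You substitute the inverse identity $(I+\mathscr R_B)\mathcal I_B[\nu]=\nu$ and the anchor identities $[\theta_B+(R\theta_B)\dd\omega]_{\le B}=0$, $t(\theta_B)=1$ into the corrected-completion measure formula. You obtain modularity termwise, and you read off the exterior structure from the direct term $-t(w)\psi_B$ plus the induced continuum, with the measure formula for finite seeds justified by the absolute kernel and thermal bounds of Appendix~\ref{app:canonical-finite-seeds}.

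One slip: $\theta_B$ does not have unbounded support. It lives on $\mathcal B_B$ together with the scalar interval $[2B,3B]$, which is exactly why your kernel bound with input cutoff $3B$ applies.
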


The band and threshold identities \eqref{eq:exact-repair-identities}
follow by substituting
\eqref{eq:anchor-band-cancellation} and \eqref{eq:anchor-unit-threshold} into
\eqref{eq:exact-repair-formula}:
\begin{align*}
  [\sigma_B[\nu]+(R\sigma_B[\nu])\dd\omega]_{\le B}
    &=\nu-t(w)[\theta_B+(R\theta_B)\dd\omega]_{\le B}=\nu,\\
  t(\sigma_B[\nu])&=t(w)-t(w)t(\theta_B)=0.
\end{align*}
Modularity follows from \eqref{eq:corrected-completion}.
Both $w$ and $\theta_B$ have support only in $\Ehat\ge |J|$,
and the completion preserves this support condition.
Adding the repair to $F_{\mathrm{MWK}}$ therefore leaves the Virasoro vacuum
contribution unchanged.
Above $B$, both $w$ and $v_B$ have no support.
The direct seed contribution there is therefore $-t(w)\psi_B$,
which has an energy density in each spin sector.
The induced measure $(R\sigma_B[\nu])\dd\omega$ also has an energy
density in each spin sector.

\paragraph{The complete exterior density.}
For $\Ehat>B$, write
$\dd\mu_{\Repair_B[\nu],J}=r_{B,J}[\nu](\Ehat)\dd\omega_J$.
The full density with respect to $\dd\omega_J$ is
\begin{equation}
  r_{B,J}[\nu](\Ehat)
    =-t(w)\delta_{J,0}\rho_B^\psi(\Ehat)
       +(R\sigma_B[\nu])_J(\Ehat),\qquad
       \Ehat>B,\quad\Ehat>|J|.
  \label{eq:complete-repair-exterior}
\end{equation}
The first term is the direct anchor density on $[2B,3B]$; the second
is the induced continuum of the complete correcting seed. Both enter
the exterior estimates. The density with respect to $\dd\Ehat$
is $r_{B,J}[\nu](\Ehat)/\sqrt{\Ehat^2-J^2}$.

The inverse bound \eqref{eq:inverse-measure-bound} controls
$\|w\|_{\TV}$. Since $|J'|\le B$ on the support of $w$, the
definition of $t$ also gives $|t(w)|\le C_B\|\nu\|_{\TV}$.
Combining these bounds with \eqref{eq:anchor-size-bound} yields
\[
  \sum_{J'}\int\sqrt{\Ehat'}\,
    \dd|\sigma_{B,J'}[\nu]|(\Ehat')
       \le C_B\|\nu\|_{\TV}.
\]
The seed has energies at most $3B$. Applying the kernel bound
\eqref{eq:repair-compact-kernel-bound} to $R$ with input cutoff
$3B$ therefore bounds the induced term by
$C_B\|\nu\|_{\TV}\sqrt{\Ehat}$.
The direct term is bounded by $C_B\|\nu\|_{\TV}$ using
\eqref{eq:anchor-size-bound}. Since $\Ehat>B$, it satisfies the
same square-root bound after increasing $C_B$.
We therefore obtain the uniform-in-spin estimate
\begin{equation}
  |r_{B,J}[\nu](\Ehat)|
     \le C_B\|\nu\|_{\TV}\sqrt{\Ehat},\qquad
       \Ehat>B,\quad\Ehat>|J|.
  \label{eq:general-exterior-bound}
\end{equation}
For $B\ge2$, $C_B$ can be taken at most $\exp(CB)$.

For every fixed $\beta=2\pi\operatorname{Im}\tau>0$, the complete
repair is absolutely thermally convergent:
\begin{equation}
  \sum_{J\in\Z}\int e^{-\beta\Ehat}\,
    \dd|\mu_{\Repair_B[\nu],J}|(\Ehat)<\infty.
  \label{eq:repair-thermal-convergence}
\end{equation}
The band contribution is bounded by $\|\nu\|_{\TV}$.
For the exterior, writing $\Ehat=|J|+s$ and using
\eqref{eq:general-exterior-bound} bounds each spin integral by
$C_B\|\nu\|_{\TV}\sqrt{\pi/\beta}\,e^{-\beta|J|}$.
The integral is finite at the spin edge, and the remaining spin sum
is geometric. Appendix~\ref{app:canonical-finite-seeds} gives the integral estimate.

This establishes convergence of each fixed repair. Convergence of the
infinite recursion requires additional control of successive corrections,
supplied by moment matching.

\paragraph{How moment matching reduces the exterior change.}
Suppose $\nu$ lies in a single input spin row $J'$ on a cell $(L,V]$
with $0<L<V<B$ and $|J'|\le L$, and suppose
\begin{equation}
  u=\sqrt{\Ehat'-|J'|},\qquad
  \int u^r\dd\nu_{J'}(\Ehat')=0\quad(0\le r\le k),\qquad k\ge1.
  \label{eq:repair-zero-moments}
\end{equation}
By \eqref{eq:completion-threshold}, the zeroth-moment condition gives
$t(\nu)=0$. The first-moment condition also cancels the scalar rank-one
contribution in $R$.

With $v_\nu=\mathcal I_B[\mathscr R_B\nu]$, equations
\eqref{eq:inverse-measure-bound} and \eqref{eq:exact-repair-formula}
give $w=\nu-v_\nu$, so the correcting seed takes the form
\begin{equation}
  \sigma_B[\nu]=\nu-v_\nu+t(v_\nu)\theta_B.
  \label{eq:moment-repair-decomposition}
\end{equation}
Using \eqref{eq:inverse-measure-bound}, the definition of $t$, and
\eqref{eq:anchor-size-bound}, we obtain
\[
  \|v_\nu\|_{\TV}+\|t(v_\nu)\theta_B\|_{\TV}
     \le C_B\|\mathscr R_B\nu\|_{\TV}.
\]
This controls the total variation of the additional seed measure
$\sigma_B[\nu]-\nu$. Its input energies are at most $3B$, so applying
the kernel bound \eqref{eq:repair-compact-kernel-bound} to
$R(\sigma_B[\nu]-\nu)$ bounds its induced density by
$C_B\sqrt{\Ehat}\|\mathscr R_B\nu\|_{\TV}$.
The direct anchor term in \eqref{eq:complete-repair-exterior} has
coefficient $t(v_\nu)$ and satisfies the same bound for $\Ehat>B$,
by \eqref{eq:anchor-size-bound}. Combining both contributions gives
\begin{equation}
  |r_{B,J}[\nu](\Ehat)|
    \le |(R\nu)_J(\Ehat)|
       +C_B\sqrt{\Ehat}\,\|\mathscr R_B\nu\|_{\TV},
  \qquad \Ehat>B,\quad \Ehat>|J|.
  \label{eq:moment-exterior-bound}
\end{equation}
We therefore need to control $R\nu$ both in the band, to estimate
$\|\mathscr R_B\nu\|_{\TV}$, and above $B$, to estimate the
remaining exterior term. The moment conditions give both estimates
through polynomial approximation of the kernel.

For fixed output $(\Ehat,J)$, regard the input kernel as a function
of $u$ on the full closed interval
$[u_-,u_+]=[\sqrt{L-|J'|},\sqrt{V-|J'|}]$. For every polynomial
$p$ of degree at most $k$, the moment conditions
\eqref{eq:repair-zero-moments} and the total-variation bound give
\begin{equation}
  \begin{aligned}
    |(R\nu)_J(\Ehat)|
      &=\left|\int_{(L,V]}
          \bigl[R_{J,J'}(\Ehat,\Ehat')-p(u)\bigr]
          \,\dd\nu_{J'}(\Ehat')\right|\\
      &\le\|\nu\|_{\TV}
        \sup_{u\in[u_-,u_+]}
        \bigl|R_{J,J'}(\Ehat,|J'|+u^2)-p(u)\bigr|.
  \end{aligned}
  \label{eq:kernel-moment-approximation}
\end{equation}
For a fixed cell and output $(\Ehat,J)$, the analyticity of the
kernel in $u$ yields polynomial approximations on $[u_-,u_+]$ whose
errors decrease geometrically with $k$. Appendix~\ref{app:complete-repair-estimates} gives quantitative
bounds that track the dependence on the cell and output variables.

We apply these approximation bounds to both terms in the exterior
estimate \eqref{eq:moment-exterior-bound}.
The recursive construction chooses the cell widths, moment degrees,
and repair cutoffs together so that the resulting exterior corrections
are summable, as proved in Appendix~D.

\section{Constructing a discrete integer spectrum}
\label{sec:construction}

We now use the local repair to replace the MWK continuum by discrete
primary levels. We first specify the initial spectral data and describe
the local steps, allowing the cell endpoints, moment degrees and atomic
replacements to vary. We then state the requirements on the full
recursion and give one explicit schedule. Section~5 verifies these
requirements for that schedule and proves convergence to a discrete
integer spectrum, with the technical estimates supplied in Appendix~D.
Throughout, the real parameter $a$ and $c=12a+1$ are fixed during
the recursion.

\subsection{Initial spectra}
\label{sec:construction-initial}

We use $B>0$ for the energy through which the initial continuum is
cleared. In part~\ref{item:positive-fraction} of
Theorem~\ref{thm:analytic}, this cutoff is $B=b=\kappa a$;
in part~\ref{item:finite-gap}, choose $B>\delta$ independently of $a$.
The later repair cutoffs are denoted by $B_n$.
We use the initial functions
\eqref{eq:positive-fraction-initialization}--\eqref{eq:threshold-initialization}
from Section~\ref{sec:method-overview}.

Let $\mu_0$ denote the nonvacuum primary spectral measure of the
initialized function $F_0$. We write
\begin{equation}
  \mu_0=\mu_0^{\mathrm{at}}+\rho_0\dd\omega,
  \label{eq:construction-initial-measure}
\end{equation}
where $\mu_0^{\mathrm{at}}$ is the sum of its Dirac contributions
and $\rho_0\dd\omega$ is its continuous part, using the reference
measures \eqref{eq:repair-reference-measure}. In each initialization,
$\mu_0^{\mathrm{at}}=\nu_{\mathrm{first}}$ is the prescribed first-level
measure of \eqref{eq:first-level-measure}. Its total multiplicity is
at most the fixed bound $M_*$.

\paragraph{Data for the recursion.}
In each case, $\rho_{0,J}(\Ehat)=0$ for $|J|<\Ehat\le B$.
The initial densities above $B$ follow by adding the continuous
spectral contributions of the terms defining $F_0$.
The anchor contributes the direct density of
\eqref{eq:anchor-direct-numerator} and its induced continuum;
each repair contributes the full density
\eqref{eq:complete-repair-exterior}. These contributions can have
either sign.

For each spin $J$, let $f_{n,J}$ denote the upper endpoint of the
energy range already processed in that row before step $n$. Set
\begin{equation}
  f_{0,J}=\max\{B,|J|\},\qquad J\in\Z.
  \label{eq:construction-initial-fronts}
\end{equation}
The continuum vanishes between the spin threshold and this endpoint.
Every prescribed first-level atom has energy at most $B$.

\subsection{Integer moment matching on a cell}
\label{sec:construction-moments}

Consider the current continuous measure $\rho_j\dd\omega_j$ in
one spin row $j$. Here $j$ denotes the selected spin $j_n$ with
the step index suppressed. For this subsection, take a left endpoint
$L\ge|j|$ with $L>0$, an allowed endpoint interval
$[V^-,V^+]$ with $L<V^-<V^+$, and a moment degree $k\ge1$
as given. The cell will be $I=(L,V]$, with $V$ chosen in that
interval. We regard its measure as supported in the single spin row
$j$ when combining it with measures on the full spectrum.

\paragraph{An endpoint with integer weight.}
Define
\begin{equation}
  m(v)=\int_{(L,v]}\rho_j(\Ehat)\dd\omega_j(\Ehat),
  \qquad M=\lceil m(V^-)\rceil,
  \qquad m(V)=M,\quad V\in[V^-,V^+].
  \label{eq:construction-integer-endpoint}
\end{equation}
We require $m(V^-)>0$, $\rho_j>0$ almost everywhere on
$[V^-,V^+]$, and $m(V^+)-m(V^-)>1$. The first condition makes
$M$ a positive integer. The second makes $m$ strictly increasing
on the endpoint interval; together with continuity and the third
condition, this gives a unique solution of $m(V)=M$. If $m(V^-)$
is already an integer, the rule selects $V=V^-$. Positivity is
required only on the movable endpoint interval; the density
elsewhere in the cell may still change sign.

\paragraph{The finite moment functional.}
Let $\mu^{\mathrm{cell}}$ be the restriction of
$\rho_j\dd\omega_j$ to $I$. Normalize the square-root energy
coordinate to the unit interval:
\begin{equation}
  \begin{aligned}
    u_-&=\sqrt{L-|j|},\qquad u_+=\sqrt{V-|j|},\qquad
    s(\Ehat)=\frac{\sqrt{\Ehat-|j|}-u_-}{u_+-u_-},\\
    \Lambda(p)&=\int_{(L,V]}p\bigl(s(\Ehat)\bigr)
                      \dd\mu^{\mathrm{cell}}(\Ehat).
  \end{aligned}
  \label{eq:construction-cell-functional}
\end{equation}
Write $\Pi_k$ for the real polynomials of degree at most $k$.
Matching $\Lambda$ on $\Pi_k$ is equivalent to matching moments
in $u=\sqrt{\Ehat-|j|}$ through degree $k$, since the change
from $u$ to $s$ is affine. For these polynomials, define
\begin{equation}
  \operatorname{Var}_{[0,1]}p=\int_0^1|p'(s)|\dd s.
  \label{eq:construction-polynomial-variation}
\end{equation}
The integer $M$ fixes the total multiplicity of the atomic
replacement. We seek $M$ atoms of unit weight whose moments agree
with $\Lambda$ through degree $k$, with every energy strictly
above $L$. The following proposition gives a sufficient lower
bound on $\Lambda(p)$ in terms of the polynomial variation.
This lower bound is strong enough both to obtain unit weights
and to restrict the allowed positions to a closed interval
separated from the left endpoint.

\begin{proposition}[Unit-weight moment matching]
\label{prop:construction-integer-matching}
Let $k\ge1$ and let $\Lambda$ be a real linear functional on
$\Pi_k$ satisfying
\begin{equation}
  \Lambda(1)=M\in\Z_{>0},\qquad
  \Lambda(p)\ge\operatorname{Var}_{[0,1]}p
  \quad\text{for every }p\in\Pi_k\text{ nonnegative on }[0,1].
  \label{eq:construction-moment-reserve}
\end{equation}
Set $h=[64M(k+1)^3]^{-1}$. There are $M$ points
$s_1,\ldots,s_M\in[h,1]$, with repetitions allowed, such that
\begin{equation}
  \sum_{\ell=1}^M p(s_\ell)=\Lambda(p)
  \qquad\text{for every }p\in\Pi_k.
  \label{eq:construction-unit-moments}
\end{equation}
\end{proposition}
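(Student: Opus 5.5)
The plan is to treat \eqref{eq:construction-unit-moments} as a membership problem in a Minkowski sum and to use the variation reserve in \eqref{eq:construction-moment-reserve} to absorb the non-convexity. Identify each functional on $\Pi_k$ with its moment vector in $\R^{k+1}$. Let $\mathcal E_h=\{\mathrm{ev}_s:s\in[h,1]\}$, so the goal is $\Lambda\in\mathcal E_h+\cdots+\mathcal E_h$ ($M$ summands).

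\emph{Step 1: positivity on the shortened interval.} First we show that the reserve makes $\Lambda$ strictly positive on polynomials nonnegative on $[h,1]$, with room to spare. Let $p\in\Pi_k$ be nonnegative on $[h,1]$. By Markov's inequality, $|p'|\le 2k^2\|p\|_{[h,1]}/(1-h)$ on $[0,1]$, so $p$ can dip below zero on $[0,h]$ by at most $O(k^2h)\max_{[h,1]}p$. Adding this small constant gives a polynomial $\tilde p$ nonnegative on $[0,1]$. Applying the reserve to $\tilde p$ and using $\operatorname{Var}\tilde p=\operatorname{Var}p\ge\max_{[h,1]}p-\min_{[h,1]}p$, the choice $h=[64M(k+1)^3]^{-1}$ is designed so that the $M\cdot O(k^2h)$ loss from the constant is dominated. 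The target is
\[
  \Lambda(p)\ \ge\ \tfrac12\operatorname{Var}_{[h,1]}p
  \qquad\text{for all } p\ge0 \text{ on } [h,1].
\]
In particular $\Lambda$ lies in the interior of the moment cone of $[h,1]$. Hence $\Lambda=\int p\,\dd\mu$ for a positive measure $\mu$ on $[h,1]$ of mass $M$, by the Riesz--Haviland/Richter--Tchakaloff theorem, which also lets us take $\mu$ finitely atomic.

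\emph{Step 2: reduce to $k$ fractional points.} Write $\Lambda/M\in\operatorname{conv}\mathcal E_h$, so $\Lambda\in M\operatorname{conv}\mathcal E_h=\sum_{i=1}^M\operatorname{conv}\mathcal E_h$. The constant coordinate is fixed, so the relevant dimension is $k$. The Shapley--Folkman lemma then writes $\Lambda=\sum_{\ell=1}^{M-k}\mathrm{ev}_{s_\ell}+\Lambda_{\mathrm{rem}}$ with $s_\ell\in[h,1]$ and $\Lambda_{\mathrm{rem}}\in\sum_{i=1}^{k}\operatorname{conv}\mathcal E_h$. If $M\le k$ this step is vacuous and we work with $\Lambda$ directly. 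It therefore remains to show that the remainder, a sum of at most $k$ probability measures, can be replaced by exactly that many unit atoms with the same moments.

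\emph{Step 3: eliminating the remainder (main obstacle).} Shapley--Folkman alone cannot remove the remainder, so the plan is to re-use the reserve rather than split $\Lambda$ blindly. We choose the decomposition extremally, for instance maximizing $\sum_\ell s_\ell$ or the number of coincident unit atoms over the compact admissible set. At an extremum, Lagrange/Lukács arguments show that a nonnegative polynomial of degree $\le k$ certifies a positivity obstruction. Evaluating the reserve inequality on that certificate contradicts extremality, unless the remainder is already integral. Concretely, the inequality $\Lambda(p)\ge\operatorname{Var}p$ says that removing any unit atom from a positive representation leaves a functional that is still positive on $[h,1]$, because $p(s)\le\min p+\operatorname{Var}p$. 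This lets us peel off unit atoms one at a time and induct on $M$. The hard part is keeping a usable reserve after each peel. I would first try inducting on the weaker hypothesis from Step 1, $\Lambda\ge\tfrac{1}{2}\operatorname{Var}_{[h,1]}$, choosing each peeled atom $s$ at a point where the lower principal representation of $\Lambda$ has weight at least $1$. Such a point exists because this representation has at most $\lceil (k+2)/2\rceil$ atoms and total mass $M$, and the reserve forces mass at least $1$ near $s=1$. If the induction degrades the constants, the explicit factor $64(k+1)^3$ in $h$ is the budget that absorbs the loss over the $M$ steps.
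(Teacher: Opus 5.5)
Your Step 1 is essentially the paper's first step, the inward-positivity lemma, with one slip to fix. The amount by which $p$ can dip below zero on $[0,h)$ must be bounded by the \emph{oscillation} of $p$, not by $\max_{[h,1]}p$. A positive constant has zero variation, so a loss proportional to $\max p$ cannot be absorbed by $\operatorname{Var}p$. The paper uses $\|p'\|_{\infty,[0,1]}\le 4(k+1)^3\operatorname{osc}_{[0,1]}p\le 4(k+1)^3\operatorname{Var}_{[0,1]}p$ together with $p(h)\ge0$. This gives $\Lambda(p)\ge\tfrac{15}{16}\operatorname{Var}_{[0,1]}p$ for $p\ge0$ on $[h,1]$. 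A constant strictly larger than $\tfrac12$ is what is needed later, so your stated target of exactly $\tfrac12$ would be just too weak.

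The genuine gap is Step 3, where the whole difficulty lies. Peeling off unit atoms consumes the reserve at rate one per atom: from $\Lambda(p)\ge M\min p+\beta\operatorname{Var}p$ you only get $(\Lambda-\mathrm{ev}_s)(p)\ge(M-1)\min p+(\beta-1)\operatorname{Var}p$. With $\beta=1$, or $\tfrac{15}{16}$ on $[h,1]$, one arbitrary peel exhausts it. Peeling at an atom of weight $\ge1$ of a principal representation keeps positivity but not the reserve.

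In fact no reserve can survive to the end. Testing with $1\pm T_k(2s-1)$ shows that $\Lambda(p)\ge\beta\operatorname{Var}p$ forces $\Lambda(1)\ge 2k\beta$. So once the residual mass is at most $k$, no reserve above $\tfrac12$ is possible. Meanwhile, a residual in the interior of the moment cone with mass below $(k+1)/2$ admits no representation by that many unit atoms at all. Greedy peeling from a fixed representation just reproduces the fractional-weight remainder you already met with Shapley--Folkman. The extremal/Luk\'acs argument is not actually carried out. The factor $64(k+1)^3$ in $h$ is not a peeling budget; it only pays for moving the nodes off $s=0$.

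The missing idea is an equal-weight design theorem. The paper applies Kane's interval design theorem \cite[Prop.~20]{Kane2012}. Let $\sigma$ be a probability measure on an interval $X$, and let $K_X=\sup\operatorname{Var}_Xp/\int_Xp\dd\sigma$ over nonzero $p\in\Pi_k$ nonnegative on $X$. The theorem gives an $N$-point equal-weight design for every integer $N>K_X/2$. Take $\sigma$ to be the positive representing measure of $\Lambda/M$ on $X=[h,1]$ from your Step 1. The reserve then gives $K_X\le M/\beta_h<2M$, so $N=M$ is admissible, and multiplying the design identity by $M$ yields the proposition. Without this, or an equivalent topological argument, your proposal does not produce exactly $M$ unit atoms.
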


We outline the proof of Proposition~\ref{prop:construction-integer-matching},
leaving the quantitative details to Appendix~\ref{app:moment-designs}. First, consider the
normalized moment vector $(\Lambda(s),\ldots,\Lambda(s^k))/M$.
If it lay outside the convex hull of the curve $(s,\ldots,s^k)$,
$0\le s\le1$, a separating hyperplane would give a polynomial
$p\in\Pi_k$ that is nonnegative on $[0,1]$ but satisfies
$\Lambda(p)<0$, contradicting \eqref{eq:construction-moment-reserve}.
The vector is therefore a convex combination of points on this curve.
This gives a positive atomic measure on $[0,1]$ with total weight $M$
and the prescribed moments, even when the original cell density
changes sign. The weights in this auxiliary representation need not
be integers.

Next, we seek a representing measure supported on $[h,1]$, so that
the corresponding energies lie strictly above $L$. The functional
$\Lambda$ remains that of the entire original cell, including $0<s<h$.
To repeat the separation argument on $[h,1]$, we need $\Lambda(p)\ge0$
for every $p\in\Pi_k$ nonnegative on this smaller interval.
Such polynomials may be negative on $[0,h)$, so
\eqref{eq:construction-moment-reserve} does not apply to them directly.

A derivative estimate controls how far these polynomials can fall
below zero on $[0,h)$. For the stated choice of $h$, this estimate
and \eqref{eq:construction-moment-reserve} give the required positivity.
Applying the separation argument on $[h,1]$ then supplies a positive
representing measure with the same total weight $M$ and moments.

Finally, we apply Kane's interval design theorem
\cite[Prop.~20]{Kane2012} to this positive measure normalized by $M$.
The quantitative variation bound retained on $[h,1]$ ensures that
the theorem permits exactly $M$ points, which yield the unit-weight
moment identities \eqref{eq:construction-unit-moments} after restoring
total weight $M$.

\paragraph{Choosing atomic positions.}
For the functional \eqref{eq:construction-cell-functional}, form
the compact set
\begin{equation}
  \mathcal X=
  \left\{(s_1,\ldots,s_M)\in\R^M:
    \begin{array}{l}
      h\le s_1\le\cdots\le s_M\le1,\\[2pt]
      \displaystyle\sum_{\ell=1}^M s_\ell^r=\Lambda(s^r),
                       \quad 1\le r\le k
    \end{array}\right\},
  \qquad h=\frac1{64M(k+1)^3}.
  \label{eq:construction-node-set}
\end{equation}
Proposition~\ref{prop:construction-integer-matching} makes this
set nonempty. Choose any vector in $\mathcal X$, and define
\begin{equation}
  \Ehat_\ell=|j|+
      \bigl[u_-+(u_+-u_-)s_\ell\bigr]^2,
  \qquad
  Q=\sum_{\ell=1}^M\delta_{(\Ehat_\ell,j)}.
  \label{eq:construction-cell-atoms}
\end{equation}
Every new energy satisfies $L<\Ehat_\ell\le V$, including
when $L=|j|$. Coincident positions give a primary multiplicity
equal to their number of occurrences. An atom at $V$ belongs to
this cell; the next cell in the same row excludes that endpoint.

Every vector in $\mathcal X$ yields an atomic measure with the
required support, positive integer multiplicities and moments.
The value of $h$ in Proposition~\ref{prop:construction-integer-matching}
is one sufficient choice for ensuring that this set is nonempty;
Appendix~\ref{app:moment-selector} gives the conditions under which $h$ can be varied.

\paragraph{An explicit atom selector.}
For the explicit prescription used below, we keep this $h$ and choose
the lexicographically least vector in $\mathcal X$: minimize the first
coordinate, then the second among vectors with the chosen first
coordinate, and continue through all $M$ coordinates. Each minimum is
attained on a nonempty compact set, so this fixes one vector uniquely.
Other choices in $\mathcal X$ satisfy the same local requirements but
can change the individual energies and multiplicities.

\subsection{The recursive update}
\label{sec:construction-update}

At step $n\ge0$, the state consists of $F_n$, its nonvacuum measure
$\mu_n$, and the processed endpoints $f_{n,J}$. We maintain
\begin{equation}
  \mu_n=\mu_n^{\mathrm{at}}+\rho_n\dd\omega,
  \qquad
  \mu_n^{\mathrm{at}}=\mu_0^{\mathrm{at}}+
       \sum_{m=0}^{n-1}Q_m,
  \qquad
  \rho_{n,J}(\Ehat)=0\quad(|J|<\Ehat\le f_{n,J}).
  \label{eq:construction-spectral-state}
\end{equation}
The function and measure descriptions are related by
\begin{equation}
  F_n(\tau)=F_{\mathrm{vac}}(\tau)+
    \sqrt y\sum_{J\in\Z}e^{2\pi ixJ}
    \int_{[|J|,\infty)}e^{-2\pi y\Ehat}
                       \dd\mu_{n,J}(\Ehat).
  \label{eq:construction-state-transform}
\end{equation}
The measure integral includes the prescribed threshold atom when
$\delta=0$; the continuous part uses $\dd\omega_J$ on
$\Ehat>|J|$. At every finite stage, that continuum remains part
of the modular function.

Choose a spin row $j_n$ and set $L_n=f_{n,j_n}$. Applying
Section~\ref{sec:construction-moments} with an endpoint interval
and degree satisfying its local conditions gives $V_n$, the cell
$I_n=(L_n,V_n]$, and its replacement $Q_n$. The cell contains
no earlier atom: all atoms previously inserted in this row have
energy at most $L_n$, while every prescribed first-level atom has
energy at most $B\le L_n$. Define the residual by
\begin{equation}
  \mu_n^{\mathrm{cell}}
      =\bigl[\rho_{n,j_n}\dd\omega_{j_n}\bigr]_{I_n},
  \qquad
  \nu_n=Q_n-\mu_n^{\mathrm{cell}}.
  \label{eq:construction-cell-residual}
\end{equation}
It is supported in the selected spin row and satisfies
\begin{equation}
  \begin{aligned}
    \int (\Ehat-|j_n|)^{r/2}\dd\nu_n(\Ehat)&=0,
                       &&0\le r\le k_n,\\
    \|\nu_n\|_{\TV}&\le
          2\|\mu_n^{\mathrm{cell}}\|_{\TV}.&&
  \end{aligned}
  \label{eq:construction-residual-properties}
\end{equation}
The second line follows because the positive measure $Q_n$ has
total weight $M_n=\mu_n^{\mathrm{cell}}(I_n)$.
The first line supplies exactly the cancellations used in
\eqref{eq:repair-zero-moments}--\eqref{eq:kernel-moment-approximation}.

\paragraph{Updating the function and spectrum.}
Take a repair cutoff satisfying
\begin{equation}
  B_n>\max\{B,V_0,\ldots,V_n\}.
  \label{eq:construction-protecting-cutoff}
\end{equation}
Then $\nu_n\in\mathcal M_{B_n}$.
By Proposition~\ref{prop:exact-local-repair}, the update has
the following spectral description:
\begin{equation}
  \begin{aligned}
    F_{n+1}&=F_n+\Repair_{B_n}[\nu_n],\\
    \mu_{n+1}^{\mathrm{at}}&=\mu_n^{\mathrm{at}}+Q_n,\\
    \rho_{n+1,J}(\Ehat)
      &=\bigl(1-\delta_{J,j_n}\mathbf1_{I_n}(\Ehat)\bigr)
                                      \rho_{n,J}(\Ehat)\\
      &\quad+\mathbf1_{(B_n,\infty)}(\Ehat)
                                      r_{B_n,J}[\nu_n](\Ehat).
  \end{aligned}
  \label{eq:construction-full-update}
\end{equation}
Here $r_{B_n,J}[\nu_n]$ is the full exterior density
\eqref{eq:complete-repair-exterior}, including the direct anchor
term. The last term is zero for $\Ehat\le B_n$.
The density identity holds almost everywhere with respect to
$\dd\omega_J$. Finally, advance only the selected row:
\begin{equation}
  f_{n+1,J}=
  \begin{cases}
    V_n,&J=j_n,\\
    f_{n,J},&J\ne j_n.
  \end{cases}
  \label{eq:construction-front-update}
\end{equation}

\paragraph{Requirements on the full recursion.}
Starting from one of the initial spectra in
Section~\ref{sec:construction-initial}, every step must satisfy the
local endpoint and moment conditions of
Section~\ref{sec:construction-moments}, together with the protection
condition \eqref{eq:construction-protecting-cutoff}. The processing
order must also exhaust every bounded energy band: for each $W\ge0$
there must be a finite step $N(W)$ such that
\begin{equation}
  f_{N(W),J}\ge W\qquad\text{for every }|J|\le W.
  \label{eq:construction-exhaustion}
\end{equation}
Finally, the absolute spectral changes from the complete repairs must
be summable with the Boltzmann weight $e^{-\beta\Ehat}$ for every
$\beta>0$. Section~\ref{sec:global-limit} gives the precise thermal
norm and shows how exhaustion, protection and this summability yield
a locally finite modular limit. These requirements constrain the cell
sizes, moment degrees, repair cutoffs and processing order jointly.

\subsection{An explicit admissible schedule}
\label{sec:construction-schedule}

We now specify one schedule for which the local conditions, protection,
finite-band exhaustion and thermal summability can all be established.
We use two sets of cell-selection rules, according to the step
index $n$. The initial stage processes one cell in each spin row
$|J|\le T$, so it consists of $N_{\mathrm{init}}=2\lfloor T\rfloor+1$
steps, $0\le n<N_{\mathrm{init}}$. The tail stage,
$n\ge N_{\mathrm{init}}$, uses cells of bounded energy width.

Here $T$ sets the initial spin range, while $U$ places the initial
right endpoints in $[U,U+1]$. Their values are specified below.
Fix also a sufficiently large constant $A_{\mathrm{tail}}>0$.
Table~\ref{tab:construction-cell-parameters} gives both sets of
rules; $[V_n^-,V_n^+]$ is the interval in which
\eqref{eq:construction-integer-endpoint} selects the right endpoint.
\begin{table}[htbp]
  \centering
  \small
  \renewcommand{\arraystretch}{1.18}
  \begin{tabularx}{\textwidth}{@{}lXX@{}}
    \toprule
    Quantity & Initial stage, $0\le n<N_{\mathrm{init}}$
             & Tail stage, $n\ge N_{\mathrm{init}}$\\
    \midrule
    Spin $j_n$ & Next row with $|j_n|\le T$
               & Row with least $f_{n,J}$\\
    Left endpoint $L_n$ & $\max\{B,|j_n|\}$ & $f_{n,j_n}$\\
    $[V_n^-,V_n^+]$ & $[U,U+1]$ & $[L_n+\tfrac12,L_n+1]$\\
    Degree $k_n$ & $\lfloor\sqrt{aU}\rfloor$
                 & $\lceil A_{\mathrm{tail}}(a+L_n)\rceil$\\
    Cutoff $B_n$ & $2U+4$ & $2+\max\{U+2,L_n\}$\\
    \bottomrule
  \end{tabularx}
  \caption{One choice of cell parameters for the two stages.
  The scales $T$ and $U$ are chosen by
  \eqref{eq:construction-initial-scales}.}
  \label{tab:construction-cell-parameters}
\end{table}

\paragraph{Initial scales.}
We set
\begin{equation}
  (T,U)=
  \begin{cases}
    (R_0b,R_{\mathrm{init}}b),&B=b=\kappa a,\\
    (5B^\sharp,1024B^\sharp),&B\text{ fixed},
  \end{cases}
  \qquad B^\sharp=\max\{2,B\}.
  \label{eq:construction-initial-scales}
\end{equation}
In the growing-cutoff case, we choose fixed constants $R_0>5$
and $R_{\mathrm{init}}\ge16R_0$ sufficiently large to satisfy the
admissibility conditions in Section~5 and Appendix~D. In both
regimes, the positive contribution from the high-energy part of each
initial cell ensures the variation inequality in
\eqref{eq:construction-moment-reserve}, even if the density is negative
at lower energies.

In the tail stage, the cell widths lie between $1/2$ and $1$.
Keeping them bounded helps control the polynomial approximation
errors in the repair estimates. We choose the moment degrees $k_n$
and repair cutoffs $B_n$ together with these widths to control the
complete exterior corrections. Section~5 and Appendix~D establish
these estimates and show that the absolute spectral contributions
of successive updates are summable with the Boltzmann weight
$e^{-\beta\Ehat}$ for every fixed $\beta>0$.

\paragraph{Processing order and protection.}
We process the initial rows $|J|\le T$ once each, in the order
$0,-1,+1,-2,+2,\ldots$. The remaining rows retain their initial
endpoints $f_{0,J}=|J|$.

Thereafter choose a row with the smallest processed endpoint
$f_{n,J}$. If several rows share this minimum, choose the first
among them in the order $0,-1,+1,-2,+2,\ldots$.
Since $f_{n,J}\ge|J|$, only finitely many rows have a processed
endpoint below any fixed energy. The minimum is therefore attained.
After processing the selected cell, we advance its row endpoint
according to
\eqref{eq:construction-front-update}.

For a tail cell, the endpoint window gives
$\tfrac12\le V_n-L_n\le1$, so each step advances its selected
endpoint by at least $1/2$. Only finitely many rows have endpoints
in a bounded energy band, and each can be selected only finitely
many times before its endpoint leaves that band. The
smallest-endpoint rule therefore ensures that every bounded band
is eventually processed.

The selected $L_n$ is nondecreasing in the tail stage, so every earlier
tail endpoint is at most $L_n+1$. Together with the initial endpoint
bound $U+1$, this proves that the repair cutoffs in
Table~\ref{tab:construction-cell-parameters} satisfy
\eqref{eq:construction-protecting-cutoff}. It also controls the size of
the protecting cutoff relative to the current cell energy, as needed
in the exterior estimates of Appendix~\ref{app:global-tail}.

\paragraph{Parameter choices.}
Section~5 and Appendix~D verify the requirements of
Section~\ref{sec:construction-update} for this prescription.
In the growing-cutoff case, the auxiliary constants
are fixed first, then $0<\kappa<\kappa_0$, and finally a real $a$
is chosen sufficiently large for the fixed first-level data, including
$\kappa a\ge\max_{J\in S}|J|$. For fixed $B>0$ and $M_*$, one
lower bound on $a$ works for all $\delta\in[0,B)$ and all admissible
first-level spins and multiplicities with total at most $M_*$.
The tail constant $A_{\mathrm{tail}}$ can be fixed independently
of these data, $B$, $\delta$, $\kappa$ and $a$.

Integer spins label the rows, while the endpoint rule and moment
matching produce integer weights. The degree and row cutoffs are
rounded explicitly. These operations impose no arithmetic restriction
on the real parameter $a$.

\paragraph{Admissible choices.}
The atomic replacement $Q_n$ may be chosen from any vector in
$\mathcal X$. The local and exterior estimates use only its support,
total weight and matched moments, and hold uniformly over these choices.

The ceiling rule can also be replaced by another fixed rule selecting
an endpoint with positive integer total weight within the same
allowed window. The moment degrees may vary within the ranges
justified in Appendix~\ref{app:global-tail},
possibly after increasing the lower bound on $a$.
The estimates there verify the requirements of
Section~\ref{sec:construction-update} for both kinds of variation.

The auxiliary parameters and selection rules therefore form
part of the definition of a particular spectrum. Changing them
can change individual energies and multiplicities while
preserving the genus-one properties and the prescribed gap.
Constants used only to prove estimates do not enter the selection
of cells or atoms.

\paragraph{Construction summary.}
For fixed admissible input parameters, the explicit prescription is:
\begin{enumerate}[label=\arabic*.,leftmargin=1.6em]
  \item Form $F_0$ in the desired gap regime, extract
    $\mu_0^{\mathrm{at}}$ and $\rho_0$, and initialize
    $f_{0,J}=\max\{B,|J|\}$ and $n=0$.
  \item Select $j_n$, $L_n$, $[V_n^-,V_n^+]$, $k_n$ and $B_n$
    using Table~\ref{tab:construction-cell-parameters} and the
    prescribed row order.
  \item Use \eqref{eq:construction-integer-endpoint} to select
    $M_n$ and $V_n$. Form the normalized moments of the entire
    cell and select the lexicographically least vector in
    $\mathcal X$ of \eqref{eq:construction-node-set}, as described in
    Section~\ref{sec:construction-moments}. Construct $Q_n$ from this
    vector using \eqref{eq:construction-cell-atoms}.
  \item Set $\nu_n=Q_n-\mu_n^{\mathrm{cell}}$, apply
    \eqref{eq:construction-full-update}, and advance $f_{n,j_n}$
    to $V_n$. Repeat with $n+1$.
\end{enumerate}
These rules determine the sequence of modular functions and its
accumulating atomic spectrum. Section~5 proves that the sequence
has a limit in the required genus-one class.

\section{Global consistency}
\label{sec:global-consistency}

We now verify the requirements of Section~\ref{sec:construction-update}
for the explicit prescription in Section~\ref{sec:construction-schedule}
and show that its limit belongs to $\mathcal U_c$. The argument
has three parts. We first control the evolving continuum so that every
cell admits the required integer replacement. We then prove convergence
of the complete modular functions and identify their limiting spectrum.
Finally, we read off the two gaps in Theorem~\ref{thm:analytic}.
Throughout the recursion, the real parameter $a$ and the initial cutoff
$B$ are fixed; the large-$a$ conditions ensure that the estimates hold
at every step.
Appendix~\ref{app:integer-moments} proves the polynomial and
integer-matching inputs;
Appendix~\ref{app:global-estimates} supplies the global estimates.

\subsection{Uniform control of the recursion}
\label{sec:global-control}

For the explicit schedule, we establish two estimates at each step: the current cell
must admit the integer replacement of
Proposition~\ref{prop:construction-integer-matching}, and the complete
modular repair must be small enough to preserve this property for later
cells. We first collect the estimates used below in a single lemma,
then use them to prove that every step is defined.

\paragraph{Reference density and the inductive bound.}
The four seed terms in the vacuum character give a positive reference
for the unprocessed continuum. Their contribution from the $m=1$
term of the cosine sum in \eqref{eq:repair-Q-kernel} is
\begin{equation}
  \begin{aligned}
    \rho^{(1)}_{a,J}(\Ehat)
      &=2D_a(\Ehat+J)D_a(\Ehat-J),\\
    D_a(z)&=\cosh(2\pi\sqrt{az})
          -\cosh(2\pi\sqrt{(a-2)z}),\qquad z\ge0.
  \end{aligned}
  \label{eq:global-leading-density}
\end{equation}
This density is taken with respect to $\dd\omega_J$.
It is positive for $\Ehat>|J|$ and vanishes at the spin threshold.
Set $H_a(\Ehat)=e^{7\sqrt{a\Ehat}}$. We will prove by induction
that the prescription defines $F_n$ for every $n\ge0$, with its
unprocessed continuous density $\rho_{n,J}(\Ehat)$ satisfying
\begin{equation}
  |\rho_{n,J}(\Ehat)-\rho^{(1)}_{a,J}(\Ehat)|
       \le H_a(\Ehat),
  \qquad \Ehat>f_{n,J},\quad \Ehat\ge T.
  \label{eq:global-invariant}
\end{equation}
Inequalities between densities are understood almost everywhere with
respect to $\dd\omega_J$. The restriction to unprocessed energies is
essential: the continuum is already zero below $f_{n,J}$.
The reference grows faster than $H_a$ away from the spin threshold,
but its zero at that threshold prevents a pointwise positivity argument
throughout every cell. Instead, \eqref{eq:global-invariant} will give
the positive polynomial moments needed for the replacement.

For each completed step, let $\varepsilon_n\ge0$ bound its exterior
error relative to $H_a$:
\begin{equation}
  |r_{B_n,J}[\nu_n](\Ehat)|
       \le\varepsilon_n H_a(\Ehat),
  \qquad \Ehat>B_n,\quad\Ehat>|J|.
  \label{eq:global-step-error}
\end{equation}
This includes the entire repair in
\eqref{eq:complete-repair-exterior}, including the direct anchor density.
The general exterior bound allows a finite choice of $\varepsilon_n$;
exact moment matching gives the small bounds used in the induction.

\paragraph{Estimates used in the induction.}
For the prescription in Table~\ref{tab:construction-cell-parameters},
we establish the density bound \eqref{eq:global-invariant} by
induction using the following lemma. The bounds in the lemma are
uniform over all choices of atomic positions in the set
$\mathcal X$ defined in \eqref{eq:construction-node-set}.
Its proof, including the
separate fixed-cutoff and growing-cutoff estimates, is given in
Appendix~\ref{app:global-estimates}.

\begin{lemma}[Initial and tail estimates]
\label{lem:global-step-estimates}
Fix the analytic estimate constants, and then choose
$R_0$, $R_{\mathrm{init}}$ and $A_{\mathrm{tail}}$ sufficiently large,
independently of $a$ and $\kappa$. There is a constant $\kappa_0>0$
with the following properties. For fixed $B>0$, the conclusions below
hold for every sufficiently large real $a$, with one threshold uniform in
$\delta\in[0,B)$ and all admissible first-level data of total multiplicity
at most a fixed $M_*$. For $B=b=\kappa a$, they hold for each fixed
$0<\kappa<\kappa_0$ and fixed finite set of first-level spins, for every
sufficiently large real $a$. The bound on $a$ may depend on $M_*$ and
the spin set, but $R_0$, $R_{\mathrm{init}}$, $A_{\mathrm{tail}}$
and the sufficient $\kappa_0$ can be kept independent of them.
\begin{enumerate}[label=(\roman*),leftmargin=*]
  \item \textbf{Initial stage.}
  The initialized density satisfies
  \begin{equation}
    |\rho_{0,J}(\Ehat)-\rho^{(1)}_{a,J}(\Ehat)|
         \le\tfrac12 H_a(\Ehat),
    \qquad \Ehat\ge T,\quad \Ehat>|J|.
    \label{eq:global-initial-reference}
  \end{equation}
  In each initial spin row, the continuous measure of $F_0$ satisfies
  the endpoint conditions of Section~\ref{sec:construction-moments}
  on $[U,U+1]$. Its cell functional satisfies the variation inequality
  in \eqref{eq:construction-moment-reserve} for every candidate endpoint
  $V\in[U,U+1]$. After selecting the integer endpoint and a matching
  replacement, form the residual from this cell of $F_0$.
  For the complete repairs of these initial residuals, the errors can be
  chosen to satisfy
  \begin{equation}
    \sum_{n<N_{\mathrm{init}}}\varepsilon_n\le\tfrac18.
    \label{eq:global-initial-budget}
  \end{equation}

  \item \textbf{Tail step.}
  Suppose a finite prefix has been constructed through $F_n$, with
  $n\ge N_{\mathrm{init}}$, and satisfies
  \eqref{eq:global-invariant}. The next selected row satisfies the
  endpoint conditions on $[L_n+1/2,L_n+1]$, and its cell functional
  satisfies the variation inequality for every candidate endpoint in
  this interval. After the integer replacement, the complete exterior error can be
  bounded with
  \begin{equation}
    \varepsilon_n\le\frac1{64}(2+L_n)^{-4}.
    \label{eq:global-tail-error}
  \end{equation}
  The same lower threshold on $a$ makes these conclusions valid for
  every tail cell with $L_n\ge T$.
\end{enumerate}
\end{lemma}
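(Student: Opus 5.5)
The plan has three ingredients, each proved separately: (a) the reference estimate \eqref{eq:global-initial-reference} for $F_0$; (b) a positivity-of-moments bound that gives the variation inequality \eqref{eq:construction-moment-reserve}; and (c) an exterior-error bound from polynomial approximation of the kernel in \eqref{eq:kernel-moment-approximation}.

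\textbf{Reference estimate for $F_0$.} We split $\rho_0$ into four pieces: the MWK continuum, the anchor contribution $6\mathcal A_B$ (plus $d_0\mathcal A_B$ when $\delta=0$), the clearing repair, and the first-level repair. For the MWK continuum, the $m=1$ cosine term from the four vacuum seeds $q^{-a/2}(1-q)\bar q^{-a/2}(1-\bar q)$ is exactly $\rho^{(1)}_{a,J}$. The terms with $m\ge2$ have arguments $2\pi\sqrt{az}/m\le\pi\sqrt{az}$, so they are bounded by $e^{2\pi\sqrt{2a\Ehat}}\cdot\mathrm{poly}$. Since $2\pi\sqrt2\approx 8.9>7$, this crude bound is not enough. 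I would instead exploit $\cos(x)\cos(y)$ with $x^2+y^2$ fixed: for $m\ge2$ the growth is at most $e^{2\pi\sqrt{2a\Ehat}/2}=e^{\pi\sqrt2\sqrt{a\Ehat}}\le e^{4.45\sqrt{a\Ehat}}$, which is comfortably below $\tfrac14 H_a$ once $a$ is large. The arithmetic factors $|S|\le m$ and the $\mathcal Z(1/2)$ constant are handled by the joint estimate from Appendix~\ref{app:canonical-arithmetic}.

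The anchor and both repair seeds live at energies at most $3B$. Each therefore contributes at most $\exp(2\pi\sqrt{3B\Ehat}\cdot 2)$ times $C_B$, using the entire-function growth of $Q$ in $\Ehat$ for bounded $\Ehat'$. With fixed $B$ this is $e^{O(\sqrt{\Ehat})}\ll e^{7\sqrt{a\Ehat}}$. For $B=\kappa a$ the bound becomes $e^{CB}e^{c'\sqrt{\kappa}\,a\sqrt{\Ehat/a}}$, and this is where $\kappa_0$ enters: I would choose $\kappa$ small so that $\sqrt{3\kappa}\cdot4\pi<7$ and $e^{C\kappa a}$ is absorbed for $\Ehat\ge T=R_0\kappa a$.

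\textbf{Moment reserve.} On an initial cell $(L,V]$ with $V\in[U,U+1]$, or a tail cell, write $\rho=\rho^{(1)}+e$ with $|e|\le H_a$. For $p\ge0$ on $[0,1]$ we have $\Lambda(p)\ge\int p\,\rho^{(1)}\,d\omega-\int p\,H_a\,d\omega$. Because $\rho^{(1)}$ grows like $e^{2\pi\sqrt a(\sqrt{\Ehat+J}+\sqrt{\Ehat-J})}$, it dominates $H_a$ by a factor $e^{c\sqrt{a}}$ near the right end of the cell. Markov/Remez-type inequalities show that a nonnegative polynomial of degree $k$ cannot put more than a fraction $\sim k^2\eta$ of its sup-mass outside a right-end window of relative width $\eta$. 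Combining these, $\Lambda(p)\gtrsim e^{c\sqrt a}\|p\|_\infty/k^2$. Since $\operatorname{Var}p\le 2k^2\|p\|_\infty$, this gives the variation inequality as long as $k^4\ll e^{c\sqrt{a}}$. That requirement is met by $k_n=\lfloor\sqrt{aU}\rfloor$ and by $k_n\sim A_{\mathrm{tail}}(a+L_n)$ for $L_n$ bounded relative to the exponential gain. The endpoint conditions (positivity on $[V^-,V^+]$ and mass increment greater than $1$) follow from the same domination.

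The delicate case is the spin threshold, where $\rho^{(1)}$ vanishes. I expect this to be handled by the factor $\sqrt{\Ehat-|J|}$: it is harmless because the window can be placed at the right end of the cell.

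\textbf{Exterior errors.} Apply \eqref{eq:moment-exterior-bound} together with \eqref{eq:kernel-moment-approximation}. The kernel $R_{J,j}(\Ehat,|j|+u^2)$ is entire in $u$ with type $\sim 2\pi\sqrt{\Ehat}$, so Chebyshev/Bernstein-ellipse approximation on an interval of length $\ell=u_+-u_-$ gives an error of order $(C\sqrt{\Ehat}\,\ell/k)^{k}\sup|R|$. The total variation $\|\nu_n\|_{\TV}\le 2\|\mu^{\mathrm{cell}}\|$ is at most $e^{4\pi\sqrt{aV}}$. For tail cells $\ell\le1/\sqrt{L}$, and $k=A_{\mathrm{tail}}(a+L)$ then gives a factorial gain $e^{-k\log(k/(C\sqrt{\Ehat/L}))}$. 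The outputs with $\Ehat\gg L^{2}$ or $\Ehat$ comparable to $k^2$ need care: there we instead compare $\sup|R|$ against $H_a(\Ehat)$ directly, since $H_a$ grows faster than the kernel type contributed by inputs of energy $\le L+1$ only if $7\sqrt a>4\pi\sqrt{L}$. Otherwise we rely on the Chebyshev gain surviving.

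Balancing these regimes to reach $(2+L_n)^{-4}/64$ uniformly in the output energy is the main obstacle. I would first try splitting the outputs at $\Ehat=k^2/(C\ell^2)$, using the approximation bound below the split and the crude bound \eqref{eq:general-exterior-bound} with $e^{CB_n}$ above it. The initial budget \eqref{eq:global-initial-budget} is the same computation summed over $2\lfloor T\rfloor+1$ cells, with $k=\sqrt{aU}$ and the ratio $R_{\mathrm{init}}/R_0$ taken large.
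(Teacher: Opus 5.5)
Your three ingredients are the right ones. However, two initial-stage steps fail as written, and these are exactly the steps that fix $T$, $U$, the initial degree and $\kappa_0$.

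\textbf{The initialization response.} It is not $C_Be^{O(\sqrt{\Ehat})}$ with an $a$-independent constant. The clearing repair acts on $-[\mu^{>0}_{\mathrm{MWK}}]_{\le B}$, whose total variation is of order $a e^{4\pi\sqrt{aB}}$. The anchor coefficient $6+r-t(w^{\mathrm{init}})$ is equally large. On the other hand, no entire-function growth of the kernel arises: on real positive energies \eqref{eq:repair-compact-kernel-bound} gives only $\sqrt{\Ehat\Ehat'}\log(2+B)$. The response is therefore of order $a\sqrt{\Ehat}\,e^{4\pi\sqrt{aB}}$, or $ab^4\sqrt{\Ehat}\,e^{4\pi\sqrt{ab}+C_{\mathrm{ref}}b}$ for $B=b$. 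It falls below $H_a$ only once $7\sqrt{\Ehat}$ beats $4\pi\sqrt B$. This is why \eqref{eq:global-initial-reference} is claimed only for $\Ehat\ge T$:
\begin{itemize}
\item at fixed $B$, $T=5B^\sharp$;
\item for $B=b$, $T=R_0b$ with $7\sqrt{R_0}>4\pi+C_{\mathrm{ref}}/10+1$, using $b<a/100$.
\end{itemize}
It is $R_0$, not small $\kappa$, that handles this.

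\textbf{The initial moment reserve.} The same point breaks your argument here. You assume $|\rho-\rho^{(1)}|\le H_a$ on the whole initial cell, but that cell starts at $\max\{B,|J|\}\le T$. On its part below $T+1$ the density may be negative, with variation $N_-$ up to roughly $e^{4\pi\sqrt{a(T+1)}}$. A nonnegative polynomial of degree $k$ can be $6^k$ times larger on the full square-root interval than on its upper half, by \eqref{eq:moment-upper-half-bounds}. What you need is therefore $e^{4\sqrt{aU}}(k+1)^{-3}\gg 6^kN_-$: the loss is exponential in $k$, not the $k^4$ you estimate. This caps the initial degree at $\lfloor\sqrt{aU}\rfloor$; a tail-type degree $\sim a$ would fail. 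It also forces $U\gg T$, via $4-\log6-4\pi\sqrt{(T+1)/U}>1$ at fixed $B$, or via $R_{\mathrm{init}}$ large relative to $C_N$ when $B$ grows.

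Your Markov/Remez statement is also reversed. The true fact is that a nonnegative polynomial cannot concentrate its integral in a short window at the \emph{left} end. That is \eqref{eq:moment-weighted-integral}, and it is what handles the zero of $\rho^{(1)}$ at a spin opening in the tail. Placing a window at the right end does not control polynomials peaked at the left end.

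\textbf{The growing-cutoff initial budget.} This is not the same computation with $R_{\mathrm{init}}/R_0$ large. With $U=\vartheta a$ and $\vartheta=R_{\mathrm{init}}\kappa$, the initial cells have square-root length about $\sqrt U$ and degree about $a\sqrt\vartheta$. Each degree gains only the $a$-independent factor $\log(\varepsilon_{\mathrm{ell}}/\sqrt\vartheta)$. In your type bound this is the base $C\sqrt{\Ehat/a}$ evaluated at in-band $\Ehat\sim\vartheta a$. Meanwhile the costs from $\|\nu\|_{\TV}$ and $e^{CB_n}$ are $e^{a\sqrt\vartheta\,(C_0+C_1\sqrt\vartheta)}$. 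Hence \eqref{eq:global-initial-budget} closes only if $\vartheta<\varepsilon_{\mathrm{ell}}^2e^{-2(C_0+C_1+1)}$. This, not the reference estimate, is the binding source of $\kappa_0\propto R_{\mathrm{init}}^{-1}$. It pulls against large $R_{\mathrm{init}}$, which is why $R_{\mathrm{init}}$ must be fixed before $\kappa_0$.

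\textbf{The tail exterior error.} Your output-energy splitting could probably be completed, but it is unnecessary. Take a fixed Bernstein parameter $\varpi=2$ on a cell of width at most one; note that your bound $\ell\le1/\sqrt L$ fails at spin openings. The ellipse height is then at most $3/8$, so the output growth is $e^{C_1\sqrt{\Ehat}}$ with $C_1$ independent of $a$. This is dominated by $H_a$ at every output energy. The bound $\varepsilon_n\le C\|\nu_n\|_{\TV}B_ne^{CB_n}2^{-k_n}$ then closes once $A_{\mathrm{tail}}$ is large.

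Finally, the $m\ge2$ MWK remainder grows like $e^{2\pi\sqrt{a\Ehat}}$, not $e^{\pi\sqrt2\sqrt{a\Ehat}}$. This is harmless, since $2\pi<7$.
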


The variation inequality controls the signed cell measure. The other
condition in \eqref{eq:construction-moment-reserve}, positive integer
total weight, follows after the endpoint rule is applied. The lemma
therefore supplies precisely the hypotheses of
Proposition~\ref{prop:construction-integer-matching}. In the initial
stage, the large positive contribution at the upper end of each cell
overcomes any negative contribution at lower energies. In the tail,
the inductive bound supplies this variation inequality, while moment
cancellation controls the effect of the repair on later cells.

\paragraph{Starting the induction.}
Every initial endpoint is at most $U+1$, whereas all initial repairs
use the cutoff $2U+4$. Their exterior outputs therefore lie above
every initial cell, and a direct replacement changes only its own
cell. Thus, when an initial row is processed, its cell still has the
continuous measure of $F_0$. Part~(i) of
Lemma~\ref{lem:global-step-estimates} and
Proposition~\ref{prop:construction-integer-matching} consequently
establish all initial replacements and their common error bound.

At an unprocessed energy, no earlier direct replacement contributes.
Only the exterior outputs can change $\rho_0$ there, and an exterior
output is zero through its cutoff. Hence, for every
$0\le n\le N_{\mathrm{init}}$ and
$\Ehat>f_{n,J}$ with $\Ehat\ge T$,
\[
  |\rho_{n,J}(\Ehat)-\rho^{(1)}_{a,J}(\Ehat)|
    \le\left(\tfrac12+\sum_{m<n}\varepsilon_m\right)H_a(\Ehat)
    \le\tfrac58H_a(\Ehat)<H_a(\Ehat).
\]
This proves \eqref{eq:global-invariant} throughout the initial stage,
including the starting state for the tail recursion.

\paragraph{Continuing the induction.}
Suppose $F_n$ is defined and obeys \eqref{eq:global-invariant}, with
$n\ge N_{\mathrm{init}}$. Part~(ii) of the lemma gives the endpoint
and moment conditions for the next cell. Its integer replacement and
complete repair therefore define $F_{n+1}$.

The lower bound on tail-cell widths and the constraint
$L_m\ge|j_m|$ control the accumulated errors of any such finite
prefix. For each integer $q\ge0$, at most $2(2q+1)$ tail
cells have $L_m\in[q,q+1)$: only spins $|j_m|\le q$ can occur,
and each row advances by at least $1/2$ on each visit.
Consequently,
\begin{equation}
  \sum_{m=N_{\mathrm{init}}}^{n}\varepsilon_m
    \le\frac1{64}\sum_{q\ge0}\frac{2(2q+1)}{(q+2)^4}
    \le\tfrac1{32}.
  \label{eq:global-tail-budget}
\end{equation}
At every energy still unprocessed in $F_{n+1}$, we may again sum only
the exterior outputs. Equations~\eqref{eq:global-initial-reference},
\eqref{eq:global-initial-budget} and \eqref{eq:global-tail-budget}
then give
\[
  |\rho_{n+1,J}-\rho^{(1)}_{a,J}|
    \le\left(\tfrac12+\tfrac18+\tfrac1{32}\right)H_a<H_a.
\]
The bound is thus preserved at the next step. This closes the
induction and proves that the endpoint rule and integer replacement
are defined throughout the infinite recursion.

\subsection{Convergence and genus-one consistency}
\label{sec:global-limit}

For any recursion satisfying the requirements of
Section~\ref{sec:construction-update}, every bounded energy band
is completely processed after finitely many steps, and subsequent
repairs leave its spectrum unchanged. We then verify thermal
summability for the explicit schedule and show how this condition
allows modular invariance to pass to the limit.

\paragraph{Finite stabilization of the spectrum.}
Fix $W\ge0$, and choose a finite step $N(W)$ as in
\eqref{eq:construction-exhaustion}. By that step, every row with
$|J|\le W$ has been processed through energy $W$; rows with
$|J|>W$ have no support in this band. The protection condition
\eqref{eq:construction-protecting-cutoff} and the exact repair
identities \eqref{eq:exact-repair-identities} then give
\begin{equation}
  \mu_n\big|_{[0,W]}=
  \left.\left(\mu_0^{\mathrm{at}}+
               \sum_{m=0}^{N(W)-1}Q_m\right)\right|_{[0,W]},
  \qquad n\ge N(W).
  \label{eq:global-finite-stabilization}
\end{equation}
All restrictions here and below include every spin row. The continuum
has been removed from this band, and the remaining atoms are unchanged
by all later steps.

These compatible restrictions define the limiting spectral measure
\begin{equation}
  \mu_\infty=\mu_0^{\mathrm{at}}+\sum_{n=0}^{\infty}Q_n.
  \label{eq:global-atomic-limit}
\end{equation}
Only finitely many cells contribute to any bounded band, and each
contributes finitely many unit atoms. Thus $\mu_\infty$ is locally
finite, counting multiplicity. Coincident atoms combine into positive
integer weights, and every atom has integer spin and $\Ehat\ge|J|$.
Finite stabilization also rules out accumulation at a finite energy.

\paragraph{Absolute thermal convergence.}
For a signed nonvacuum measure $\sigma$, define
\begin{equation}
  \|\sigma\|_\beta=
  \sum_{J\in\Z}\int_{[|J|,\infty)}e^{-\beta\Ehat}
                        \dd|\sigma_J|(\Ehat),
  \qquad\beta>0.
  \label{eq:global-thermal-norm}
\end{equation}
This norm includes a possible scalar threshold atom and sums the
magnitudes of the spectral contributions with their Boltzmann weights.
In this norm, positive and negative spectral contributions are counted
without cancellation. Throughout this argument, $a$, $B$ and $\beta$
are fixed.

We verify the required summability for the explicit prescription.
For $n\ge N_{\mathrm{init}}$, its remaining continuum lies above $T$.
The decomposition \eqref{eq:construction-spectral-state} and the
density bound \eqref{eq:global-invariant} give
\[
  \begin{aligned}
    \|\mu_n\|_\beta
      &=\|\mu_n^{\mathrm{at}}\|_\beta+
        \sum_{J\in\Z}\int_{|J|}^\infty e^{-\beta\Ehat}
          |\rho_{n,J}(\Ehat)|\dd\omega_J(\Ehat)\\
      &\le\|\mu_n^{\mathrm{at}}\|_\beta+
        \sum_{J\in\Z}\int_{\max\{T,|J|\}}^\infty e^{-\beta\Ehat}
          \bigl(\rho^{(1)}_{a,J}(\Ehat)+H_a(\Ehat)\bigr)
          \dd\omega_J(\Ehat).
  \end{aligned}
\]
The atomic contribution is a finite sum at every finite stage.
The sum of integrals in the upper bound is independent of $n$.
It is the thermal norm of the positive comparison measure with spin
components $(\rho^{(1)}_{a,J}+H_a)\dd\omega_J$, restricted to
$\Ehat\ge T$. This norm is finite, including the sum over spins.
The density $\rho^{(1)}_{a,J}+H_a$ with respect to $\dd\omega_J$
grows at most exponentially in $\sqrt{\Ehat}$, and this growth is
suppressed by any fixed Boltzmann factor. The scalar
integral starts at $T>0$, while the square-root singularities at
nonzero spin thresholds are integrable. The remaining spin sum
converges exponentially.
Appendix~\ref{app:global-thermal-comparison} gives the explicit
integral and spin-sum bounds.

The finitely many initial repairs have finite thermal norms by
\eqref{eq:repair-thermal-convergence}. We now estimate the sum over
the tail steps.

For a tail cell $I_n=(L_n,V_n]$, the positive replacement $Q_n$
has total weight
$M_n=\mu_n^{\mathrm{cell}}(I_n)
\le|\mu_n^{\mathrm{cell}}|(I_n)$.
Its atoms all lie in $I_n$. Since $V_n-L_n\le1$, the Boltzmann
factor varies by at most a factor $e^\beta$ across the cell, giving
\[
  \|Q_n\|_\beta
    \le e^{-\beta L_n}M_n
    \le e^\beta\|\mu_n^{\mathrm{cell}}\|_\beta.
\]
The triangle inequality for $\nu_n=Q_n-\mu_n^{\mathrm{cell}}$
adds the thermal norm of the removed continuum, producing the factor
$1+e^\beta$. Using the density bound \eqref{eq:global-invariant}
on $I_n$, we obtain
\begin{equation}
  \|\nu_n\|_\beta\le(1+e^\beta)
       \int_{I_n}e^{-\beta\Ehat}
       \bigl(\rho^{(1)}_{a,j_n}(\Ehat)+H_a(\Ehat)\bigr)
       \dd\omega_{j_n}(\Ehat).
  \label{eq:global-direct-thermal-bound}
\end{equation}
The tail cells lie above $T$ and are disjoint within each spin row.
Summing the integrals in \eqref{eq:global-direct-thermal-bound} over
all tail steps therefore gives at most the finite thermal norm of
the comparison measure, including all spins. The common factor
$1+e^\beta$ is fixed, so the direct residuals are summable in
thermal norm.

The complete spectral change includes both the direct residual and
the exterior repair:
\begin{equation}
  (\mu_{n+1}-\mu_n)_J=(\nu_n)_J+
     \mathbf1_{(B_n,\infty)}\,
       r_{B_n,J}[\nu_n]\dd\omega_J.
  \label{eq:global-complete-step}
\end{equation}
For a tail step, the exterior term is bounded in absolute value by
$\varepsilon_nH_a\dd\omega_J$ above $B_n>T$.
Its thermal norm is therefore at most $\varepsilon_n$ times the
finite thermal norm of the comparison measure. The budget
\eqref{eq:global-tail-budget} then proves summability of these
exterior terms in thermal norm.
Together with the direct estimate
\eqref{eq:global-direct-thermal-bound} and the finite initial
contribution, this proves
\begin{equation}
  \sum_{n=0}^{\infty}\|\mu_{n+1}-\mu_n\|_\beta<\infty
  \qquad\text{for every }\beta>0.
  \label{eq:global-complete-thermal-sum}
\end{equation}
The vacuum estimates and the bounds on the finite initialization
repairs and anchors also give $\|\mu_0\|_\beta<\infty$.
Equation~\eqref{eq:global-complete-thermal-sum} therefore implies
that $\mu_n$ converges in thermal norm for every $\beta>0$.

To identify the limit, fix a bounded energy band $[0,W]$.
On this band, total variation is bounded by $e^{\beta W}$ times
the thermal norm, so the convergence also holds in total variation.
By \eqref{eq:global-finite-stabilization},
the restrictions of $\mu_n$ to this band eventually equal those
of the atomic measure $\mu_\infty$ defined in
\eqref{eq:global-atomic-limit}. Since this holds for every $W$,
the thermal limit is $\mu_\infty$ on the full spectrum.
In particular, $\|\mu_\infty\|_\beta<\infty$ for every $\beta>0$.

\paragraph{The modular limit.}
For any recursion with the exhaustion, protection and thermal
summability just described, return to the modular functions and define
\begin{equation}
  F_\infty=F_0+\sum_{n=0}^{\infty}\Repair_{B_n}[\nu_n],
  \qquad
  Z_\infty(\tau)=\frac{F_\infty(\tau)}
                         {\sqrt y\,|\eta(\tau)|^2}.
  \label{eq:global-modular-limit}
\end{equation}
Each summand is a complete modular repair with the spectral change
\eqref{eq:global-complete-step}. These repairs preserve the vacuum
and the scalar threshold coefficient. For a compact set
$\mathcal K\subset\HH$, let $0<y_-\le y_+<\infty$ be the
minimum and maximum of $\operatorname{Im}\tau$ on $\mathcal K$.
The spectral transform \eqref{eq:construction-state-transform}
gives
\[
  \sup_{\tau\in\mathcal K}
       |\Repair_{B_n}[\nu_n](\tau)|
    \le\sqrt{y_+}\,
          \|\mu_{n+1}-\mu_n\|_{2\pi y_-}.
\]
Equation~\eqref{eq:global-complete-thermal-sum} therefore gives
absolute and uniform convergence of the repair series on every
compact subset of $\HH$. The image of such a set under either
modular generator is again compact, so the finite-stage modular
identities pass to $F_\infty$. The invariant prefactor in
\eqref{eq:reduced-function} gives modular invariance of $Z_\infty$.

The limiting nonvacuum spectrum lies entirely in the uncensored
region. The vacuum remains unique, with multiplicity one.
Local finiteness and positive integer multiplicities were established
in \eqref{eq:global-atomic-limit}. Using $\Delta=a+\Ehat$, the
thermal bound also gives \eqref{eq:thermal-convergence}. Hence
$Z_\infty\in\mathcal U_c$.

\subsection{Completion of the gap theorem}
\label{sec:global-gaps}

It remains to identify the first primary in the limit.
Every $Q_n$ is supported at $\Ehat>L_n\ge B$, so
\eqref{eq:global-atomic-limit} gives
\begin{equation}
  \mu_\infty\big|_{[0,B]}=\mu_0^{\mathrm{at}}
      =\nu_{\mathrm{first}}.
  \label{eq:global-prescribed-band}
\end{equation}
Thus the first energy has exactly the prescribed spin-$J$ multiplicity
$d_J$ for $J\in S$, with no other primary at that energy. Its total
multiplicity is $d_{\mathrm{first}}$.

For part~\ref{item:positive-fraction}, fix $0<\kappa<\kappa_0$
as in Lemma~\ref{lem:global-step-estimates} and take
$B=b=\kappa a$. For every sufficiently large real $a$, the
prescribed first level remains at $b$, giving
$\Ehat_1(Z_{a,\kappa})=b$. The dimension gap
\eqref{eq:gap-bounds} and its limit \eqref{eq:asymptotic-gap}
follow immediately.

For part~\ref{item:finite-gap}, fix $\delta\ge0$ and choose a
fixed $B>\delta$. Lemma~\ref{lem:global-step-estimates} supplies
one threshold $a_0(B,M_*)$ valid for all $\delta\in[0,B)$ and all
admissible first-level data of total multiplicity at most $M_*$, including
the scalar threshold case. Taking $a_\delta=a_0(B,M_*)$,
\eqref{eq:global-prescribed-band} gives
$\Ehat_1(\widetilde Z_{a,\delta})=\delta$, with every other nonvacuum
primary strictly above $B$. The limits
\eqref{eq:finite-shifted-gap} follow from fixed $\delta$ and
$a\to\infty$. Together with genus-one consistency from
Section~\ref{sec:global-limit}, this completes the proof of
Theorem~\ref{thm:analytic}.

\section{Spectral entropy and BTZ state counting}
\label{sec:entropy}

We now study the smoothed density of states of the discrete partition
functions constructed above. For fixed-$B$ families using the schedule
in Table~\ref{tab:construction-cell-parameters}, the spectral entropy agrees
with the prediction of a single perturbative BTZ saddle through every
finite order in $1/c$. Exact moment matching makes this
comparison possible: it controls the smoothed density even before the
individual atom positions have been evaluated. A gap growing with $c$
leads to a different energy dependence. For the explicit schedule and
lexicographic atom choice, Section~\ref{sec:entropy-positive-kappa}
establishes an interval without BTZ matching above this gap, as well as
matching to every finite order at sufficiently high energies.

Throughout Sections~\ref{sec:entropy-smooth}--\ref{sec:entropy-matching},
$B>0$ and the first-level multiplicity bound $M_*$ are fixed as the real
parameter $a\to\infty$, with $c=12a+1$. Unless a more restricted
statement is specified, the prescribed shifted gap may be any
$\delta\in[0,B)$, with any admissible first-level spins and multiplicities
of total at most $M_*$. We retain
$a=(c-1)/12$ and $\Ehat=E+1/12$ exactly until taking the large-$c$
limit. The uniform estimates and coefficient algorithm used below are
proved in Appendices~\ref{app:entropy-transfer} and
\ref{app:entropy-reference-expansion}.

\subsection{Smoothed density of states and the BTZ prediction}
\label{sec:entropy-smooth}

\paragraph{The smoothed density of states.}
Let $s$ label distinct pairs of cylinder energy $E_s$ and total spin
$J_s$, with multiplicity $d_s$. Choose a nonnegative smooth kernel
with compact support, $\phi\in C_c^\infty(\R)$, whose shape and width
are independent of $c$, and normalize it by
$\int_{\R}\phi(u)\dd u=1$. Define
\begin{equation}
  N_\phi(E)=\sum_s d_s\,\phi(E_s-E),
  \qquad
  S_\phi(E)=\log N_\phi(E).
  \label{eq:entropy-smooth-observable}
\end{equation}
The sum includes every spin, all Virasoro descendants and the vacuum
module. With this normalization, $N_\phi$ is a smoothed density of
states, and $S_\phi$ is its spectral entropy. The logarithm is taken
where $N_\phi>0$, as guaranteed in the large-$c$ regime below.
The entropy is defined using the natural logarithm, and we use this
convention throughout the paper.

Equivalently, using the limiting nonvacuum primary measure
$\mu_\infty$ in \eqref{eq:global-atomic-limit},
\begin{equation}
  \begin{aligned}
    N_\phi(E)={}&N_\phi^{\mathrm{vac}}(E)\\
      &+\sum_{N=0}^{\infty}p_2(N)\sum_{J\in\mathbb Z}
        \int_{[|J|,\infty)}
        \phi\!\left(\Ehat-\frac1{12}+N-E\right)
        \dd\mu_{\infty,J}(\Ehat).
  \end{aligned}
  \label{eq:entropy-primary-count}
\end{equation}
Here $N_\phi^{\mathrm{vac}}(E)$ is the vacuum-module contribution
with the same kernel, $J$ is the primary spin, and $N$ is the total
descendant level. The number of pairs of left- and right-moving
descendants at level $N$ is
\begin{equation}
  p_2(N)=\sum_{\ell=0}^{N}p(\ell)\,p(N-\ell),
  \label{eq:entropy-descendant-multiplicity}
\end{equation}
where $p(\ell)$ is the integer partition number, with $p(0)=1$.
The kernel tests their cylinder energy $\Ehat-1/12+N$.

The energy kernel enters the bulk comparison through
\begin{equation}
  K_\phi(\beta)=\int_{\R}\phi(u)e^{\beta u}\dd u.
  \label{eq:entropy-kernel-transform}
\end{equation}
Normalization fixes $K_\phi(0)=1$; the value at the relevant inverse
temperature still depends on the kernel. Rescaling an unnormalized
kernel by a positive constant rescales $N_\phi$ by the same constant
and adds its logarithm to $S_\phi$.

For example, for any fixed $w>0$, a normalized smooth bump is
\begin{equation}
  \phi_w(u)=\frac{1}{Z_w}
  \begin{cases}
    \displaystyle\exp\!\left[-\frac{1}{1-(u/w)^2}\right],
      & |u|<w,\\[6pt]
    0, & |u|\ge w,
  \end{cases}
  \label{eq:entropy-bump-kernel}
\end{equation}
where $Z_w$ normalizes the integral to one. The function and all its
derivatives vanish at $u=\pm w$, so it joins smoothly to zero outside
$[-w,w]$.

A Gaussian kernel of fixed width can also be used. Although it has
unbounded support, the matching result below extends to this case using
the additional tail estimates in Appendix~\ref{app:entropy-transfer}.

\paragraph{The perturbative BTZ prediction.}
For pure Einstein AdS$_3$ gravity with Brown--Henneaux boundary
conditions~\cite{BrownHenneaux1986}, the contribution of a single
Euclidean BTZ saddle at
zero angular potential is
\begin{equation}
  Z_{\mathrm{BTZ}}^{\mathrm{pert}}(\beta)
    =\exp\left(\frac{\pi^2c}{3\beta}\right)
       V\left(\frac{4\pi^2}{\beta}\right),
  \qquad
  V(u)=\prod_{m=2}^{\infty}(1-e^{-mu})^{-2}.
  \label{eq:entropy-btz-partition}
\end{equation}
Here $c$ is the central charge appearing in the Virasoro vacuum character,
the same $c$ used in our partition functions.
The product is the boundary-graviton determinant; it starts at $m=2$
because of the vacuum null states. This is the modular transform of
the vacuum character, with the one-loop exact structure of the
single-saddle torus contribution~\cite{GiombiMaloneyYin2008,CotlerJensen2019}.

To compare densities at the same energy resolution, apply the same kernel
before performing the inverse Laplace integral:
\begin{equation}
  N_\phi^{\mathrm{BTZ}}(E)
    =\int_{\gamma-i\infty}^{\gamma+i\infty}
       \frac{\dd\beta}{2\pi i}\,
       e^{\beta E}K_\phi(\beta)
       Z_{\mathrm{BTZ}}^{\mathrm{pert}}(\beta),
  \qquad \gamma>0.
  \label{eq:entropy-btz-observable}
\end{equation}
Indeed, integrating a density at $E+u$ against $\phi(u)$ multiplies
its inverse Laplace integrand by $K_\phi(\beta)$. Thus this factor
records the chosen energy resolution.
Zero angular potential corresponds to summing over all spins.
A count at fixed total spin requires an additional projection.

\paragraph{The large-$c$ regime.}
We evaluate the smoothed density of states at $E=\lambda c$, with
$\lambda$ in a compact subset of $(0,\infty)$, and define
\begin{equation}
  r=\sqrt{12\lambda},
  \qquad
  \beta_*=\frac{2\pi}{r},
  \qquad
  \widetilde\beta_*=\frac{4\pi^2}{\beta_*}=2\pi r.
  \label{eq:entropy-saddle-parameters}
\end{equation}
The leading exponent in \eqref{eq:entropy-btz-observable} is
$c(\lambda\beta+\pi^2/(3\beta))$, whose positive saddle is $\beta_*$.
Its value gives the nonrotating BTZ entropy
\begin{equation}
  S_{\mathrm{BTZ}}(E)=2\pi\sqrt{\frac{cE}{3}}
                    =\frac{\pi cr}{3}.
  \label{eq:entropy-btz-action}
\end{equation}
The relation to state growth is the usual BTZ entropy
comparison~\cite{Strominger1998}. Gaussian integration around $\beta_*$
gives a prefactor proportional to $c^{-1/2}$, while higher terms in the
saddle-point expansion give corrections in powers of $1/c$. These arise
from the inverse Laplace integral~\eqref{eq:entropy-btz-observable}, even
though the single-saddle partition function is one-loop exact.

Our comparison includes $0<E/c<1/12$, the intermediate range whose
entropy is not determined by the general HKS assumptions
\cite[Sections~2.4 and 4.2]{HartmanKellerStoica2014}.
In this range, $\beta_*>2\pi$, and the BTZ saddle is subdominant to
thermal AdS in the canonical ensemble at $\beta_*$. Nevertheless, its
smoothed density of states agrees with that of our constructed partition
functions through every finite order in $1/c$. The comparison remains
at genus one.

\subsection{Matching to every order in \texorpdfstring{$1/c$}{1/c}}
\label{sec:entropy-matching}

We prove Proposition~\ref{prop:btz-entropy} by first comparing the
discrete spectrum with an integer-spin reference. We then compare
that reference with the BTZ prediction and evaluate the resulting
saddle expansion to obtain the entropy coefficients.

\begin{lemma}[Comparison with the integer-spin reference]
\label{lem:entropy-reference-comparison}
Consider a fixed-$B$ family constructed using the schedule in
Table~\ref{tab:construction-cell-parameters}, with fixed admissible
auxiliary parameters. Let $\phi$ be a compactly supported
kernel as in Section~\ref{sec:entropy-smooth}.
Define $N_\phi^{\mathrm{ref}}(E)$ from the nonvacuum term in
\eqref{eq:entropy-primary-count} by replacing $\dd\mu_{\infty,J}$
with $\rho^{(1)}_{a,J}\,\dd\omega_J$, using the leading density
\eqref{eq:global-leading-density}:
\begin{equation}
  N_\phi^{\mathrm{ref}}(E)
    =\sum_{N=0}^{\infty}p_2(N)\sum_{J\in\mathbb Z}
      \int_{|J|}^{\infty}
      \phi\!\left(\Ehat-\frac1{12}+N-E\right)
      \rho^{(1)}_{a,J}(\Ehat)\,\dd\omega_J(\Ehat).
  \label{eq:entropy-reference-count}
\end{equation}
Then, for every fixed positive integer $P$,
\begin{equation}
  \frac{N_\phi(\lambda c)}{N_\phi^{\mathrm{ref}}(\lambda c)}
    =1+O_P(c^{-P}).
  \label{eq:entropy-reference-comparison}
\end{equation}
The estimate is uniform for $\lambda$ in any fixed closed interval
contained in $(0,\infty)$, over admissible first-level spins and multiplicities
with $\delta\in[0,B)$ and total multiplicity at most $M_*$, and over
all atomic choices in $\mathcal X$ at each step. The error constants
may depend on $B$, $M_*$, the fixed auxiliary parameters, $\phi$,
$P$ and the energy-ratio interval.
\end{lemma}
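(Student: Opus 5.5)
Write $N_\phi(\lambda c)-N_\phi^{\mathrm{ref}}(\lambda c)$ as a sum of three pieces: the vacuum term $N_\phi^{\mathrm{vac}}$, the difference between $\mu_\infty$ and the reference measure $\rho^{(1)}_a\dd\omega$ on the low band (energies below a fixed multiple of $a$), and the same difference at higher energies. Each will be shown to be $O_P(c^{-P})$ relative to $N_\phi^{\mathrm{ref}}(\lambda c)$.

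First fix the size of the denominator. The lower bound on $N_\phi^{\mathrm{ref}}(\lambda c)$ comes from a saddle evaluation of \eqref{eq:entropy-reference-count}. Using $\rho^{(1)}_{a,J}=2D_a(\Ehat+J)D_a(\Ehat-J)$ and $\sum_N p_2(N)e^{-\beta N}=|\eta|^{-2}$-type generating functions, the thermal transform of the reference is exactly the dominant $m=1$ modular image of the vacuum. Its inverse Laplace transform at $E=\lambda c$ behaves like $\exp(2\pi c\sqrt{\lambda/3})\,c^{-1/2}$ times an order-one factor, uniformly for $\lambda$ in a compact interval. The kernel $\phi\ge0$, $\int\phi=1$, smears by $O(1)$ in energy, and this changes the exponent by only $O(1)$. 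The vacuum term is at most $e^{O(1)}p_2$-growth of Virasoro descendants at level $\lambda c$. That growth is $\exp(2\pi\sqrt{2\lambda c/3}\,(1+o(1)))=e^{O(\sqrt c)}$, which is exponentially small relative to the denominator.

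The core step uses moment matching to control the discrete-minus-reference difference. Write $\mu_\infty-\rho^{(1)}\dd\omega=(\mu_\infty-\mu_n)+(\rho_n-\rho^{(1)})\dd\omega$ for a step $n$ after which the relevant window is processed. Equivalently, decompose over cells: on each processed cell $I_m$ the contribution is $\int g\,\dd(Q_m-\mu^{\mathrm{cell}}_m)$, plus the exterior repair outputs, plus the initial deviation $\rho_0-\rho^{(1)}$. Here the test function is $g(\Ehat)=\sum_N p_2(N)\phi(\Ehat-\tfrac1{12}+N-\lambda c)$, which is smooth in $u=\sqrt{\Ehat-|j|}$.

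The pieces are handled as follows.
\begin{enumerate}[label=(\alph*)]
\item Cells below $T$ or in the cleared band contribute $O(M_*)$ atoms times $g$. These are again exponentially small, because their descendants must reach level $\approx\lambda c$.
\item For a tail cell of width at most $1$, the degree $k_m\ge A_{\mathrm{tail}}(a+L_m)$ and the matched moments \eqref{eq:construction-residual-properties} bound the cell term by $\|\nu_m\|_{\TV}$ times the best degree-$k_m$ polynomial approximation error of $g$ on the cell. Since $\phi$ is $C^\infty$ with fixed width, $g$ has $u$-derivatives bounded by $C_P\,p_2$-weights. Jackson's theorem then gives an error of $O(k_m^{-P'})$ for any $P'$, and $k_m\gtrsim a$ supplies the $c^{-P}$.
\item The exterior repair densities and the initial deviation are bounded pointwise by $H_a=e^{7\sqrt{a\Ehat}}$ via \eqref{eq:global-invariant} and \eqref{eq:global-step-error}. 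Since $7\sqrt{a\Ehat}$ is well below $2\pi\cdot2\sqrt{a\Ehat}$, the leading growth of $\rho^{(1)}$, these are exponentially subleading after integration against $g$ and summation over $J$.
\end{enumerate}
Summing the tail cells uses the bounds $\|\nu_m\|_{\TV}\le2\int_{I_m}(\rho^{(1)}+H_a)\dd\omega$ together with the disjointness of the cells. This reproduces $N_\phi^{\mathrm{ref}}$ times the approximation error, which is relative $O(c^{-P})$.

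The main obstacle is step (b). The relevant cells have energy $\approx\lambda c$ and degree $k\sim a$, and $g$ contains a sum over descendant levels $N$. The approximation error must therefore be relative to the local size of $\rho^{(1)}$ on each cell, not absolute. To get this, I would first pass to the normalized variable $s\in[0,1]$ on each cell and apply a Jackson-type estimate to the single bump $\phi(\cdot-N-\lambda c)$. I would then sum over $N$ with the weights $p_2(N)$, and finally compare $\sum_m\|\nu_m\|_{\TV}\,p_2$-weighted against $N_\phi^{\mathrm{ref}}$ by the same saddle estimate used for the denominator. All estimates are uniform in the atomic choices in $\mathcal X$, since only support, total weight and moments enter.
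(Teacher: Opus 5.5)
The overall route matches the paper's. It uses moment matching on each tail cell, approximates the kernel in the normalized square-root coordinate with $k_n\ge A_{\mathrm{tail}}a$, bounds $\|\nu_n\|_{\TV}\le 2\|\mu_n^{\mathrm{cell}}\|_{\TV}$, and invokes the invariant \eqref{eq:global-invariant} to pass from current cell measures to $\rho^{(1)}_{a,J}\dd\omega_J$. Your telescoping over residuals and exterior outputs is equivalent to the paper's per-cell replacement.

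\textbf{The gap is step (a).} The region below the tail cells does not contain only $O(M_*)$ atoms. In the fixed-$B$ schedule, each row $|J|\le T$ is first processed as one long initial cell $(\max\{B,|J|\},V]$ with $V\in[U,U+1]$ and $U=1024B^\sharp>T$. Its packet has weight exponentially large in $\sqrt a$, up to order $e^{4\pi\sqrt{aU}}$. These cells are neither below $T$ nor tail cells: their width is of order $U$ and their degree is only $\lfloor\sqrt{aU}\rfloor$. So none of (a)--(c) controls their residuals as written.

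The missing estimate is \eqref{eq:entropy-initial-mass}, $M_{\mathrm{init}}\le\exp(C\sqrt a)$. It follows in four steps:
\begin{itemize}
\item Every initial repair has cutoff $2U+4$, so each initial cell still carries the continuous measure of $F_0$ when it is processed.
\item The zeroth moment makes the packet weight equal to the signed cell mass, hence at most the cell's total variation.
\item Summing \eqref{eq:global-initial-variation} at $W=U+1$ bounds these weights.
\item Adding $d_{\mathrm{first}}\le M_*$ gives the bound.
\end{itemize}
The reference mass below $U+1$ is bounded the same way. These parents reach $E=\lambda c$ only through descendant levels of order $c$, whose total multiplicity is $e^{O(\sqrt c)}$ by \eqref{eq:entropy-descendant-cumulative-bound}. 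The whole low region therefore contributes $\exp(O(\sqrt c))$, and your conclusion in (a) is recovered.

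\textbf{The descendant sum also needs a split.} Write $S_a(Y)=4\pi\sqrt{aY}$ and $Y_N=E+\tfrac1{12}-N$. Comparing the masses of the $O(1)$ cells near $Y_N$ with the smoothed reference at $Y_N$ works only when $Y_N/c$ lies in a positive compact interval. There $S_a'(Y)=2\pi\sqrt{a/Y}$ is bounded, the spin profile is Gaussian of width $\sqrt c$, and $H_a$ is exponentially smaller than $\rho^{(1)}_{a,J}$ on central rows. For $Y_N=o(c)$ the reference changes by unbounded factors across an $O(1)$ window, so a level-by-level relative comparison is unavailable.

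The paper splits at $N=Y_0/2$:
\begin{itemize}
\item For $N\le Y_0/2$, the concavity bound $S_a(Y_0-N)\le S_a(Y_0)-\beta_0N$ sums the absolute errors into $C_Pc^{-P-1/2}e^{S_a(Y_0)}G(\beta_0)$.
\item For larger $N$, the cumulative primary bound \eqref{eq:entropy-primary-cumulative-bound} together with the $e^{O(\sqrt c)}$ descendant count loses to an order-$c$ action deficit.
\end{itemize}
Your appeal to ``the same saddle estimate'' has to supply exactly this split.

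\textbf{The denominator needs no saddle evaluation.} The $N=0$ term, restricted to $|J|\le\sqrt c$ and to an interval on which $\phi$ is bounded below, already gives $N_\phi^{\mathrm{ref}}\ge C^{-1}c^{-1/2}e^{S_a(Y_0)}$. An inverse-Laplace evaluation of the integer-spin transform would first require the Poisson comparison of Appendix~\ref{app:entropy-transfer}.
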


We use the density estimate \eqref{eq:global-invariant} established
in Section~\ref{sec:global-control}. The proof of
Lemma~\ref{lem:entropy-reference-comparison} has three
steps: moment matching on a cell, summation, and conversion to relative
error. Appendix~\ref{app:entropy-transfer} supplies the detailed
uniform estimates.

At fixed $c$ and $E$, compact support restricts the contributing
nonvacuum primaries to a bounded energy band.
Section~\ref{sec:global-limit} shows that this band reaches its final
discrete spectrum after finitely many steps. We apply the uniform
estimates at that stage before taking the large-$c$ limit.

\paragraph{Moment matching on a single cell.}
For a tail cell $I_n=(L_n,V_n]$ in spin row $j_n$, write the
normalized square-root coordinate of
\eqref{eq:construction-cell-functional} as
\begin{equation}
  z(\Ehat)=
  \frac{\sqrt{\Ehat-|j_n|}-\sqrt{L_n-|j_n|}}
       {\sqrt{V_n-|j_n|}-\sqrt{L_n-|j_n|}}.
  \label{eq:entropy-cell-coordinate}
\end{equation}
The cell corresponds to $0<z\le1$, and we use the closed interval
$[0,1]$ for polynomial approximation. Let $\Ehat_n(z)$ denote the
inverse map. For a fixed total descendant level $N\ge0$, the tested
function on this cell is $h_n(z)=\phi(\Ehat_n(z)-1/12+N-E)$.
The replacement $Q_n$ and the current cell measure
$\mu_n^{\mathrm{cell}}$ have the same moments through degree $k_n$
and the same mass $M_n$. Hence
\begin{equation}
  \left|\int_{I_n}h_n(z(\Ehat))\,\dd(Q_n-\mu_n^{\mathrm{cell}})(\Ehat)\right|
  \le
  \bigl(M_n+\|\mu_n^{\mathrm{cell}}\|_{\mathrm{TV}}\bigr)
  \inf_{p\in\Pi_{k_n}}\|h_n-p\|_{\infty,[0,1]}.
  \label{eq:entropy-smooth-moment-bound}
\end{equation}
For any $p\in\Pi_{k_n}$, the moment identities allow us to replace
$h_n$ by $h_n-p$ in the integral on the left. We bound the integrals
against $Q_n$ and $\mu_n^{\mathrm{cell}}$ separately by
$\|h_n-p\|_{\infty,[0,1]}$ times their respective total variations.
Since $Q_n$ is positive with total
weight $M_n$, its total variation is $M_n$. Taking the infimum over
$p$ gives \eqref{eq:entropy-smooth-moment-bound}.

In the extensive-energy cells and central spin region that dominate
the count, \eqref{eq:global-invariant} makes the current density
positive and exponentially close to $\rho^{(1)}_{a,J}$.
On these cells, the current measure is positive, so its total variation
equals its total weight $M_n$. The prefactor in
\eqref{eq:entropy-smooth-moment-bound} is therefore $2M_n$.

Table~\ref{tab:construction-cell-parameters} gives tail-cell widths
at most one and $k_n\ge A_{\mathrm{tail}}a$, so the moment degree
grows at least linearly with $c$.
The width bound controls the derivatives of $\Ehat_n(z)$.
For each fixed derivative order, the derivatives of $h_n$ on $[0,1]$
are bounded by a constant depending only on that order and the fixed
kernel $\phi$. Polynomial approximation gives
an error bounded by a constant times $k_n^{-P}$, and hence by a
constant times $c^{-P}$, for every fixed $P$. Combining this with
\eqref{eq:entropy-smooth-moment-bound} gives the bound $C_P M_n c^{-P}$.
This is an absolute error bound, proportional to the cell mass $M_n$.
Here $C_P$ is independent of $c$, the cell, its spin and the descendant
level; it may depend on $P$, $\phi$ and $A_{\mathrm{tail}}$.
Only the test function is
differentiated; no smoothness of the evolving error density is required.

\paragraph{Summing over cells, spins and descendants.}
For each fixed descendant level, compact support confines the tested
primary energies to an interval of fixed length. The lower width bound
$V_n-L_n\ge1/2$ from Table~\ref{tab:construction-cell-parameters}
then bounds the number of relevant cells in each spin row independently
of $c$ and the descendant level. In the retained extensive-energy
region, summing the relevant cell masses over central spins gives at
most a constant multiple of the primary reference count tested at the
same descendant level. The Gaussian spin profile of width of order
$\sqrt c$ controls both sums, so summing the cell errors introduces
no additional growing power of $c$.

At descendant level $N$, the primary energy tested by the kernel is
near $E+1/12-N$. While this energy remains in the retained
extensive-energy range, its decrease supplies a suppression factor
$e^{-\beta_0N}$ for some $c$-independent $\beta_0>0$. This factor
comes from primary state growth at the reduced energy and bounds the
sum over descendant multiplicities uniformly in $c$.

The invariant \eqref{eq:global-invariant} controls the error from
replacing the current cell measures by
$\rho^{(1)}_{a,J}\,\dd\omega_J$. After summation, this error is
exponentially small relative to $N_\phi^{\mathrm{ref}}(E)$. The same
holds for the remaining tail contributions, from the outer spin
regions and larger descendant levels. Combining these bounds with
the cell approximation estimates gives a total tail error at most
$C_Pc^{-P}N_\phi^{\mathrm{ref}}(E)$.

\paragraph{From absolute to relative error.}
The initial packets, the prescribed first level and their descendants,
together with the vacuum module, contribute at most
$\exp(O(\sqrt c))$ at these energies. Appendix~\ref{app:entropy-transfer}
derives the initial-weight estimate \eqref{eq:entropy-initial-mass}
and combines it with the descendant bound to obtain this contribution.
The contribution to the reference count from primary energies
$\Ehat\le U+1$, including their descendants, obeys the same bound.
Including these contributions gives the absolute error estimate
\begin{equation}
  \bigl|N_\phi(E)-N_\phi^{\mathrm{ref}}(E)\bigr|
  \le C_Pc^{-P}N_\phi^{\mathrm{ref}}(E)
       +\exp\!\bigl(O(\sqrt c)\bigr).
  \label{eq:entropy-reference-transfer-bound}
\end{equation}
The reference count grows exponentially in $c$: for $E=\lambda c$
in the stated range,
\begin{equation}
  \log N_\phi^{\mathrm{ref}}(E)
    =S_{\mathrm{BTZ}}(E)+O(\log c).
  \label{eq:entropy-reference-growth}
\end{equation}
The coefficient of $c$ in $S_{\mathrm{BTZ}}(\lambda c)$ is bounded
below by a positive constant on the chosen $\lambda$-interval.
Hence, for every fixed $P$, we have
\begin{equation}
  \exp\!\bigl(O(\sqrt c)\bigr)
    =o\!\left(c^{-P}N_\phi^{\mathrm{ref}}(E)\right).
  \label{eq:entropy-initial-relative-smallness}
\end{equation}
Dividing \eqref{eq:entropy-reference-transfer-bound} by the reference
count therefore proves Lemma~\ref{lem:entropy-reference-comparison}.

\paragraph{The exact reference transform.}
To obtain the entropy expansion in Proposition~\ref{prop:btz-entropy},
we next establish the BTZ comparison: for every fixed positive
integer $P$,
\begin{equation}
  \frac{N_\phi(\lambda c)}{N_\phi^{\mathrm{BTZ}}(\lambda c)}
    =1+O_P(c^{-P}).
  \label{eq:entropy-smooth-matching}
\end{equation}
The estimate applies to the same families, with the same uniformity
and parameter dependence, as Lemma~\ref{lem:entropy-reference-comparison}.
This comparison identifies the two asymptotic expansions through
every fixed finite order. It asserts neither convergence of the
series nor an exponentially small error.

By Lemma~\ref{lem:entropy-reference-comparison}, it suffices to compare
the integer-spin reference with the BTZ prediction. We begin with the
Laplace transform of the smoothed count. For real $\beta>0$, define
\begin{equation}
  Z_{\mathrm{ref}}^{\mathbb Z}(\beta)
    =\frac{1}{K_\phi(\beta)}
      \int_{\R}e^{-\beta E}N_\phi^{\mathrm{ref}}(E)\dd E.
  \label{eq:entropy-reference-laplace}
\end{equation}
The superscript records integer primary spins. The factor
$K_\phi(\beta)>0$ removes the effect of smoothing. Substituting
\eqref{eq:entropy-reference-count} and using
\[
  \int_{\R}e^{-\beta E}
    \phi\!\left(\Ehat-\frac1{12}+N-E\right)\dd E
    =K_\phi(\beta)e^{-\beta(\Ehat-1/12+N)}
\]
gives
\begin{equation}
  Z_{\mathrm{ref}}^{\mathbb Z}(\beta)
    =e^{\beta/12}G(\beta)
      \sum_{J\in\mathbb Z}\int_{|J|}^{\infty}
      e^{-\beta\Ehat}\rho^{(1)}_{a,J}(\Ehat)\,
      \dd\omega_J(\Ehat),
  \label{eq:entropy-integer-reference-transform}
\end{equation}
where the descendant multiplicities in
\eqref{eq:entropy-descendant-multiplicity} give
\begin{equation}
  G(\beta)=\sum_{N=0}^{\infty}p_2(N)e^{-\beta N}
    =\prod_{m=1}^{\infty}(1-e^{-m\beta})^{-2}.
  \label{eq:entropy-descendant-product}
\end{equation}

Extend $\rho^{(1)}_{a,J}$ in \eqref{eq:global-leading-density} to real
$J$ and replace the primary-spin sum in
\eqref{eq:entropy-integer-reference-transform} by an integral. The
resulting continuous-spin transform is
\begin{equation}
  \begin{aligned}
    Z_{\mathrm{ref}}^{\mathbb R}(\beta)
      &=e^{\beta/12}G(\beta)Z_{\mathrm{ref,prim}}(\beta),\\
    Z_{\mathrm{ref,prim}}(\beta)
      &=\int_0^\infty e^{-\beta\Ehat}
        \int_{-\Ehat}^{\Ehat}
        \frac{\rho^{(1)}_{a,J}(\Ehat)}
             {\sqrt{\Ehat^2-J^2}}\dd J\dd\Ehat\\
      &=\frac{2\pi}{\beta}\,e^{4\pi^2a/\beta}
        (1-e^{-4\pi^2/\beta})^2.
  \end{aligned}
  \label{eq:entropy-continuous-reference-transform}
\end{equation}
The substitution $u=\Ehat+J$, $v=\Ehat-J$ reduces the primary
integral to a product of two elementary Gaussian integrals.

At the level of the smoothed counts, Poisson summation controls this
replacement with relative error $O_P(c^{-P})$ for every fixed $P$,
at $E=\lambda c$ in the range of Proposition~\ref{prop:btz-entropy}.
Appendix~\ref{app:entropy-transfer} proves the uniform estimates,
including the descendant sum.

With $\beta'=4\pi^2/\beta$, the eta identity gives
\begin{equation}
  G(\beta)=\frac{\beta}{2\pi}
             e^{(\beta'-\beta)/12}G(\beta'),
  \qquad
  (1-e^{-\beta'})^2G(\beta')=V(\beta').
  \label{eq:entropy-eta-identity}
\end{equation}
Combining these factors yields the exact identity
\begin{equation}
  Z_{\mathrm{ref}}^{\mathbb R}(\beta)
    =e^{\beta/12}G(\beta)Z_{\mathrm{ref,prim}}(\beta)
    =e^{\pi^2c/(3\beta)}V(4\pi^2/\beta)
    =Z_{\mathrm{BTZ}}^{\mathrm{pert}}(\beta).
  \label{eq:entropy-reference-btz}
\end{equation}
The identity extends to $\Re\beta>0$, where the thermal integrals
are holomorphic.

Returning to the smoothed counts at $E=\lambda c$, the Poisson
comparison and inverse Laplace transformation give, for any $\gamma>0$,
\begin{equation}
  \begin{aligned}
    N_\phi^{\mathrm{ref}}(E)
      &=\bigl(1+O_P(c^{-P})\bigr)
        \int_{\gamma-i\infty}^{\gamma+i\infty}
        \frac{\dd\beta}{2\pi i}\,
        e^{\beta E}K_\phi(\beta)Z_{\mathrm{ref}}^{\mathbb R}(\beta)\\
      &=\bigl(1+O_P(c^{-P})\bigr)N_\phi^{\mathrm{BTZ}}(E).
  \end{aligned}
  \label{eq:entropy-reference-inversion}
\end{equation}
The second equality uses \eqref{eq:entropy-reference-btz} and the
definition \eqref{eq:entropy-btz-observable}. Together with
Lemma~\ref{lem:entropy-reference-comparison}, this proves
\eqref{eq:entropy-smooth-matching}. Taking logarithms and applying
the saddle-point expansion of \eqref{eq:entropy-btz-observable} then
proves Proposition~\ref{prop:btz-entropy}. We evaluate the first
coefficients below.

\paragraph{Entropy and its first corrections.}
Define
\begin{equation}
  A_\phi(\beta)=K_\phi(\beta)V(4\pi^2/\beta).
  \label{eq:entropy-smooth-amplitude}
\end{equation}
The inverse Laplace saddle gives
\begin{equation}
  N_\phi(\lambda c)
    =\frac{\sqrt6\,A_\phi(\beta_*)}
              {\sqrt c\,r^{3/2}}\,
          e^{\pi cr/3}
          \bigl(1+O(c^{-1})\bigr).
  \label{eq:entropy-smooth-count-expansion}
\end{equation}
Taking logarithms gives the constant term in
\eqref{eq:btz-entropy-leading}:
\begin{equation}
  C_{0,\phi}(\lambda)
    =\frac12\log6-\frac32\log r
       +\log K_\phi(\beta_*)+\log V(\widetilde\beta_*).
  \label{eq:entropy-smooth-constant}
\end{equation}
The logarithmic term comes from fixing the energy by one inverse
Laplace integral. The constant separates the saddle prefactor,
the energy kernel and the boundary-graviton factor.

Continuing the saddle expansion to the next order gives
\begin{equation}
  C_{1,\phi}(\lambda)
    =-\frac{3\beta_*}{16\pi^2}
       \left(
          3+12\beta_*\frac{A_\phi'(\beta_*)}{A_\phi(\beta_*)}
            +4\beta_*^2\frac{A_\phi''(\beta_*)}{A_\phi(\beta_*)}
       \right).
  \label{eq:entropy-smooth-first-coefficient}
\end{equation}
Higher entropy coefficients follow by carrying the saddle expansion
to higher orders and taking the logarithm.
The general differential operator and recurrence, together with a
bound on the rest of the inversion contour, are given in
Appendix~\ref{app:entropy-reference-expansion}.
The coefficients depend on $\lambda$ and the chosen kernel. They are
independent of the first-level data and admissible atomic choices
for the families in Lemma~\ref{lem:entropy-reference-comparison}.

\subsection{A macroscopic gap and BTZ matching}
\label{sec:entropy-positive-kappa}

We now turn to the growing-cutoff family of
part~\ref{item:positive-fraction} of Theorem~\ref{thm:analytic}, using
the schedule in Table~\ref{tab:construction-cell-parameters} and the
lexicographic atom selector of Section~\ref{sec:construction-moments}.
Fix the admissible auxiliary parameters, then fix a
sufficiently small $\kappa>0$. The first level has the measure
\eqref{eq:first-level-measure}, with total multiplicity at most a fixed
$M_*$ and spins in a fixed interval $|J|\le J_*$.
We take $a$ large enough that $\kappa a\ge J_*$.
The relevant scales are
\begin{equation}
  b=\kappa a,\qquad
  \vartheta=R_{\mathrm{init}}\kappa,\qquad U=\vartheta a,
  \qquad
  \Delta_g=(1+\kappa)a,\qquad E_g=\kappa a-\frac1{12}.
  \label{eq:entropy-kappa-scales}
\end{equation}
Here $\Delta_g$ is the exact first nonvacuum primary dimension, $E_g$
is its cylinder energy, and $U$ is the initial endpoint scale in shifted
energy. We use the same all-state counts $N_\phi(E)$ as in
\eqref{eq:entropy-smooth-observable}, with a fixed compact smooth kernel.
All limits in this subsection keep $\kappa$, the construction parameters,
$M_*$ and $J_*$ fixed. The estimates are uniform over the prescribed
first-level data within these bounds.

\paragraph{Below the primary gap.}
Let the center in dimension be $t=cx$, with $x$ in a compact subset of
$0<x<(1+\kappa)/12$. Only vacuum descendants contribute for
sufficiently large $c$. Their dimensions are integers;
smoothing at fixed width can therefore retain a dependence on the
fractional part of $t$. Define the integer sampling factor
\begin{equation}
  P_\phi(t)=\sum_{n\in\mathbb Z}\phi(n-t).
  \label{eq:entropy-kappa-periodization}
\end{equation}
The vacuum multiplicities give
\begin{equation}
  N_\phi\!\left(t-\frac c{12}\right)
    =\frac{\pi^2}{4\,3^{7/4}}\,
       t^{-9/4}e^{2\pi\sqrt{t/3}}P_\phi(t)
       \bigl(1+O(t^{-1/2})\bigr).
  \label{eq:entropy-kappa-vacuum-count}
\end{equation}
For the bump kernel $\phi=\phi_w$ in \eqref{eq:entropy-bump-kernel},
take $w>1/2$ fixed. Then $P_\phi(t)>0$ for every $t$, and
$\log P_\phi(t)$ is a bounded periodic function. The entropy therefore
has leading growth $2\pi\sqrt{t/3}$, logarithmic term
$-\tfrac94\log t$, and a bounded term containing $\log P_\phi(t)$.

\paragraph{Near the primary gap.}
The prescribed first level has bounded total multiplicity. The initial
discretization produces primary levels of exponentially large
multiplicity just above it. For every fixed primary spin $J$, their
dimensions approach $\Delta_g$ from above and their multiplicities
are asymptotic to $M_a e^{-\chi|J|}$, where
\begin{equation}
  M_a=\exp\!\left(aH-\{a\sqrt\vartheta\}\log r_{\mathrm{gap}}+C_{\mathrm{sc}}\right).
  \label{eq:entropy-kappa-packet-scale}
\end{equation}
The constants $H$, $r_{\mathrm{gap}}$, $C_{\mathrm{sc}}$ and $\chi$
depend on $\kappa$ and $R_{\mathrm{init}}$ for the atom selector and
initial degree prescription fixed above, and are independent of the
prescribed first-level spins and multiplicities. They satisfy
$4\pi\sqrt\kappa<H<4\pi\sqrt\vartheta$, $0<r_{\mathrm{gap}}<1$,
and $\chi>0$. Their explicit expressions and the derivation of the
bounds on $H$ are given in Appendix~\ref{app:entropy-positive-kappa}.
In particular, even the leading entropy rate $H$ depends on the construction.
The braces denote the fractional part and arise from the initial
degree $\lfloor a\sqrt\vartheta\rfloor$.
Appendix~\ref{app:entropy-positive-kappa} shows that changing
the bounded first-level data preserves these asymptotics for this selector
and degree rule.

Take the bump kernel $\phi=\phi_w$ in \eqref{eq:entropy-bump-kernel}
with fixed $w>1/2$, and evaluate the count at a fixed offset $u\ge0$
from $E_g$. The descendants of these levels are sampled by
\begin{equation}
  \mathcal B_\phi(u)=\sum_{N=0}^\infty p_2(N)\phi(N-u),
  \qquad
  \frac{N_\phi(E_g+u)}{M_a}
       \longrightarrow\coth(\chi/2)\mathcal B_\phi(u).
  \label{eq:entropy-kappa-gap-count}
\end{equation}
The factor $\coth(\chi/2)=\sum_{J\in\mathbb Z}e^{-\chi|J|}$
includes all primary spins, while $p_2(N)$, defined in
\eqref{eq:entropy-descendant-multiplicity}, counts their descendants
at total level $N$. The prescribed first level and its descendants
contribute only $d_{\mathrm{first}}\mathcal B_\phi(u)$, which vanishes
after division by $M_a$ at fixed $u$. For this kernel,
$\mathcal B_\phi(u)>0$ for every
$u\ge0$, so taking the logarithm gives
\begin{equation}
  \begin{aligned}
    S_\phi(E_g+u)
      ={}&aH-\{a\sqrt\vartheta\}\log r_{\mathrm{gap}}+C_{\mathrm{sc}}\\
       &+\log\coth(\chi/2)+\log\mathcal B_\phi(u)+o(1).
  \end{aligned}
  \label{eq:entropy-kappa-gap-entropy}
\end{equation}
There is no $\log a$ term, and the $O(1)$ term need not converge
because it depends on the fractional part $\{a\sqrt\vartheta\}$.

\paragraph{An interval without BTZ matching.}
The same high-multiplicity primary levels contribute descendants at
dimensions a distance of order $a$ above the gap. Their contribution
alone establishes a mismatch for the full spectrum.

\begin{proposition}[An interval without BTZ matching]
\label{prop:entropy-kappa-nonbtz}
For the construction above, take the bump kernel $\phi=\phi_w$ of
\eqref{eq:entropy-bump-kernel} with fixed $w>1/2$, and define
\begin{equation}
  \epsilon_H=\frac{H^2}{16\pi^2},
  \qquad \kappa<\epsilon_H<\vartheta.
  \label{eq:entropy-kappa-nonbtz-endpoint}
\end{equation}
The inequalities follow from the bounds on $H$ given above.
Fix $\kappa<\epsilon<\epsilon_H$ and set
$E=a\epsilon-1/12$, equivalently $\Delta=a(1+\epsilon)$.
Then
\begin{equation}
  \liminf_{a\to\infty}\frac1a
       \log\frac{N_\phi(a\epsilon-1/12)}
                     {N_\phi^{\mathrm{BTZ}}(a\epsilon-1/12)}
       \ge H-4\pi\sqrt\epsilon>0.
  \label{eq:entropy-kappa-nonbtz-rate}
\end{equation}
\end{proposition}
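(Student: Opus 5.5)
The plan is to bound the two sides of the ratio separately: a lower bound on $N_\phi$ from one explicit family of states, and an upper bound on $N_\phi^{\mathrm{BTZ}}$ from the saddle formula.

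\textbf{Lower bound on $N_\phi$.} Since $\phi\ge0$, every term in \eqref{eq:entropy-primary-count} is nonnegative. We may therefore drop everything except the descendants of the near-gap packets. Fix spin $J=0$, or any fixed spin, and use the levels just above $\Delta_g$ whose multiplicity is asymptotic to $M_a e^{-\chi|J|}$, with $M_a$ given by \eqref{eq:entropy-kappa-packet-scale}. Their descendants at total level $N$ have cylinder energy $E_g+N+o(1)$ and multiplicity $p_2(N)$.

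Evaluate at $E=a\epsilon-1/12$ and choose an integer $N$ with $|E_g+N-E|<w$. This is possible because $w>1/2$ and the packet energies converge to $E_g$. Then $N=a(\epsilon-\kappa)+O(1)$. The kernel factor is bounded below by a positive constant uniformly in $a$ after passing to subsequences of the fractional offset. The bump is positive on $(-w,w)$, and I would pick the integer of nearest offset, so the offset is at most $1/2+o(1)<w$.

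For the descendant count, the Hardy--Ramanujan asymptotics give $\log p_2(N)=2\pi\sqrt{2N/3}+O(\log N)$. Hence
\[
\log N_\phi(E)\ge aH+2\pi\sqrt{\tfrac{2(\epsilon-\kappa)a}{3}}+O(\log a)+O(1),
\]
where the $O(1)$ absorbs the bounded fractional-part term in $M_a$. Dividing by $a$, only $aH$ survives, so $\liminf\frac1a\log N_\phi\ge H$. One point needs checking: the multiplicity asymptotic is stated at fixed spin and the packet energies converge to $E_g$, but this is all that is needed here, since the descendant-level count is exact.

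\textbf{Upper bound on $N_\phi^{\mathrm{BTZ}}$.} By \eqref{eq:entropy-smooth-count-expansion} with $c=12a+1$ and $\lambda=E/c\to\epsilon/12$, we have $r=\sqrt{12\lambda}\to\sqrt\epsilon$. The exponent is $\pi cr/3=4\pi a\sqrt\epsilon+O(1)$, and the prefactor is polynomial in $a$. So $\frac1a\log N_\phi^{\mathrm{BTZ}}\to4\pi\sqrt\epsilon$.

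To be safe about uniformity, I would not rely on the expansion. Instead I would bound the contour integral \eqref{eq:entropy-btz-observable} directly. Take $\gamma=\beta_*$ real and use $|e^{\beta E}K_\phi(\beta)Z^{\mathrm{pert}}_{\mathrm{BTZ}}(\beta)|\le e^{\beta_*E}K_\phi(\beta_*)e^{\pi^2 c\,\mathrm{Re}(1/\beta)/3}|V|$ on the line. Since $\mathrm{Re}(1/\beta)\le1/\beta_*$, this gives $N^{\mathrm{BTZ}}_\phi\le e^{c(\lambda\beta_*+\pi^2/(3\beta_*))}\times$ poly$(c)$. The remaining contour integral converges because $K_\phi(\beta)$ decays rapidly along vertical lines (it is the Fourier--Laplace transform of a smooth compactly supported function) and $V$ is bounded there. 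It is also cleaner simply to invoke the saddle bound on the rest of the contour from Appendix~\ref{app:entropy-reference-expansion}.

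\textbf{Conclusion.} Subtracting the two bounds gives $\liminf\frac1a\log(N_\phi/N_\phi^{\mathrm{BTZ}})\ge H-4\pi\sqrt\epsilon$. This is positive exactly when $\epsilon<\epsilon_H=H^2/16\pi^2$. The inequalities $\kappa<\epsilon_H<\vartheta$ follow by squaring the bounds $4\pi\sqrt\kappa<H<4\pi\sqrt\vartheta$.

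The main obstacle is the lower-bound step. It requires the packet multiplicity asymptotics to hold for at least one fixed spin, together with control of the kernel offset uniformly in $a$. The descendant growth itself is only $e^{O(\sqrt a)}$ and is harmless at the $1/a$ scale.
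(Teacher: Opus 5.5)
Your proof is correct and is essentially the paper's argument: positivity of $\phi_w$ reduces $N_\phi$ to the level-$N\approx a(\epsilon-\kappa)$ descendants of the exponentially degenerate endpoint levels just above $\Delta_g$, and these are compared with the BTZ saddle value $4\pi a\sqrt\epsilon+O(\log a)$. The paper sums these levels over all rows $|J|\le b$ and keeps the $p_2$ asymptotics to obtain the sharper bound \eqref{eq:entropy-kappa-nonbtz-bound}, whereas your single row $J=0$, even using only $p_2(N)\ge1$, already gives the $1/a$ rate.

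There are two harmless slips. First, two-colour partitions grow as $\log p_2(N)=2\pi\sqrt{N/3}+O(\log N)$, not $2\pi\sqrt{2N/3}$. Second, the boundedness of $V(4\pi^2/\beta)$ on $\operatorname{Re}\beta=\beta_*$ is true but does not follow from the product termwise, whose crude bound grows like a Gaussian in $\operatorname{Im}\beta$. It follows instead from the eta identity \eqref{eq:entropy-eta-identity}, which gives $|V(4\pi^2/\beta)|\le C|\beta|^{-3}$ as in \eqref{eq:entropy-global-inversion-bound}.
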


Indeed, the descendant level is of order $a(\epsilon-\kappa)$,
and the large multiplicity $M_a$ gives an entropy contribution
$aH+O(\sqrt a)$. The BTZ reference at the same cylinder energy has
entropy $4\pi a\sqrt\epsilon+O(\log a)$. Positivity of the spectrum
then gives the stated lower bound. Appendix~\ref{app:entropy-positive-kappa}
retains the positive $\sqrt a$ correction and records the required
uniform descendant estimate. The proposition is a statement about
the full smoothed density, not a claim that these descendants exhaust
it. The bound does not determine the full intermediate-energy profile
or identify $\epsilon_H$ as a transition point.

\paragraph{BTZ matching at higher energies.}
Above the initial endpoint scale, the all-order comparison can be
recovered, with a separate estimate for the initial primary levels
and all their descendants.

\begin{proposition}[BTZ matching above the initial scale]
\label{prop:entropy-kappa-btz}
For the fixed positive-$\kappa$ construction above, let $\phi$ be any
fixed nonnegative normalized kernel in $C_c^\infty(\R)$, and let
$E=\lambda c$ exactly, with $\lambda$ in a compact subset of
$(\vartheta/12,\infty)$. For every fixed integer $P\ge0$,
\begin{equation}
  \frac{N_\phi(\lambda c)}{N_\phi^{\mathrm{BTZ}}(\lambda c)}
       =1+O_P(c^{-P}).
  \label{eq:entropy-kappa-btz-matching}
\end{equation}
The entropy has the expansion \eqref{eq:btz-entropy-leading}, with
the same coefficients $C_{m,\phi}(\lambda)$ as in the fixed-$B$
family, to every fixed finite order.
\end{proposition}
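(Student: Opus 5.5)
The plan follows the two-step strategy used for Proposition~\ref{prop:btz-entropy}. First compare $N_\phi$ with the integer-spin reference $N_\phi^{\mathrm{ref}}$ of \eqref{eq:entropy-reference-count}. Then transfer the reference to $N_\phi^{\mathrm{BTZ}}$ through the exact identity \eqref{eq:entropy-reference-btz} and the Poisson comparison. The second step does not involve the spectrum at all; it only uses $E=\lambda c$ with $\lambda$ in a compact subset of $(0,\infty)$. So it applies verbatim, and with it the same saddle expansion and coefficients $C_{m,\phi}(\lambda)$. The work is therefore a positive-$\kappa$ analogue of Lemma~\ref{lem:entropy-reference-comparison}, valid for $\lambda>\vartheta/12$, i.e. for tested primary energies above $U=\vartheta a$ at low descendant level.

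To prove that analogue, I split $N_\phi(\lambda c)$ into two parts.

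\textbf{Part (a):} the vacuum module, the first level $\nu_{\mathrm{first}}$, the initial-cell atoms with $\Ehat\le U+1$, and all their descendants.

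\textbf{Part (b):} the tail-cell atoms.

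For part (b) I repeat the argument of Section~\ref{sec:entropy-matching} without change. Tail cells have widths in $[1/2,1]$ and degrees $k_n\ge A_{\mathrm{tail}}a$. The invariant \eqref{eq:global-invariant} holds for the growing-cutoff schedule by Section~\ref{sec:global-control}. Together, the moment bound \eqref{eq:entropy-smooth-moment-bound} and Jackson-type approximation give an absolute error $C_PM_nc^{-P}$ per cell. Summing over spins and descendant levels, using the Gaussian spin profile and the $e^{-\beta_0N}$ descendant suppression, gives a total error $C_Pc^{-P}N_\phi^{\mathrm{ref}}$. The comparison of current densities with $\rho^{(1)}_{a,J}$ through the factor $H_a=e^{7\sqrt{a\Ehat}}$ is exponentially small relative to $e^{\pi cr/3}$, because $\rho^{(1)}$ grows like $e^{4\pi\sqrt{a\Ehat}}$ with $\Ehat$ of order $a$.

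I also have to subtract the reference contribution from $\Ehat\le U+1$. Part (a) must then be shown to be $o(c^{-P}N_\phi^{\mathrm{ref}})$.

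Part (a) is the main obstacle. Unlike the fixed-$B$ case, it is not $\exp(O(\sqrt c))$. The initial packets carry multiplicities up to $M_a\approx e^{aH}$, with $H<4\pi\sqrt\vartheta$. Their descendants reach energy $\lambda c$ only at level $N\approx\lambda c-\Ehat$.

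I would bound part (a) by its total mass times the maximal descendant count. The initial cell in row $J$ has weight $M_n=m(V_n)\le m(U+1)$, and by \eqref{eq:global-initial-reference} this is at most $\int^{U+1}(\rho^{(1)}+H_a)\dd\omega_J$. This gives $\exp(4\pi a\sqrt\vartheta+o(a))$ summed over spins. A primary at $\Ehat\le U+1$ contributes at cylinder energy $12a\lambda$ through $p_2(N)\approx e^{2\pi\sqrt{2N/3}}$. Writing $\Ehat=a x$ with $x\le\vartheta$, the exponent per unit $a$ is at most
\[
\max_{0\le x\le\vartheta}\Bigl[4\pi\sqrt x+2\pi\sqrt{2(12\lambda-x)/(3a)}\cdot\sqrt a\Bigr]/1 ,
\]
which must be compared with the BTZ rate $4\pi\sqrt{12\lambda}\cdot\tfrac12\cdot 2=\pi c r/(3a)\to 4\pi\sqrt{12\lambda}$ per unit $a$. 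Elementary concavity shows that the combined exponent $4\pi\sqrt x+\pi\sqrt{(8/3)(12\lambda-x)}$ is strictly below $4\pi\sqrt{12\lambda}$ for $x\le\vartheta<12\lambda$. This is a Cauchy--Schwarz inequality with deficit bounded below uniformly on the compact $\lambda$-set.

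One point needs checking: the maximizer of the unconstrained problem must lie at $x=12\lambda\cdot(\text{const}>\vartheta/(12\lambda))$, so that the constraint $x\le\vartheta$ is active. Otherwise I would refine the mass bound using the true BTZ rate $\sqrt{(c/6)\cdot}$ splitting. The vacuum module and $\nu_{\mathrm{first}}$ contribute only $e^{O(\sqrt c)}$. An exponentially small relative error then yields \eqref{eq:entropy-kappa-btz-matching}.
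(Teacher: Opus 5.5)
Your architecture matches the paper's. Part (a) collects the vacuum, the prescribed level, the initial packets and all their descendants and bounds them crudely; part (b) handles the tail cells by exact moments and smooth polynomial approximation; and the exact reference transform and saddle expansion are reused, since they depend only on $E=\lambda c$. The gap is in part (a), the step you identify as the crux.

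\textbf{The descendant factor is subextensive.} For $N=O(a)$, \eqref{eq:entropy-kappa-partition-coefficients} gives $\log p_2(N)=2\pi\sqrt{N/3}+O(\log N)=O(\sqrt a)$. Your constant $\sqrt2$ is off, but the order is the same. Moreover, \eqref{eq:entropy-descendant-cumulative-bound} with $\beta$ of order $a^{-1/2}$ bounds all descendants through such levels by $e^{O(\sqrt a)}$. Divided by $a$, this tends to zero.

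\textbf{Your exponent is mis-scaled, and the inequality fails.} The extra factor $\sqrt a$ in your maximization is spurious and turns the descendant term into an extensive one, $a\pi\sqrt{(8/3)(12\lambda-x)}$. With that term your inequality is false on part of the stated range. The function $4\pi\sqrt x+\pi\sqrt{(8/3)(12\lambda-x)}$ reaches $\pi\sqrt{56/3}\,\sqrt{12\lambda}>4\pi\sqrt{12\lambda}$ at $x=\tfrac67\cdot12\lambda$. Being concave, it exceeds $4\pi\sqrt{12\lambda}$ for every $\tfrac{25}{49}\cdot12\lambda<x<12\lambda$. Cauchy--Schwarz yields only this larger upper bound. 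So your argument covers $12\lambda>\tfrac{49}{25}\vartheta$, but not $\vartheta<12\lambda\le\tfrac{49}{25}\vartheta$.

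\textbf{The corrected estimate.} Once the scaling is fixed, no optimization over $x$ is needed. Part (a) is at most $\mathrm{poly}(a)\exp(4\pi a\sqrt\vartheta+O(\sqrt a))$. On the other side, $N_\phi^{\mathrm{BTZ}}(\lambda c)=\exp(4\pi a\sqrt{12\lambda}+O(\log a))$, and $\sqrt{12\lambda}-\sqrt\vartheta$ is bounded below on the compact $\lambda$-set. This is exactly the paper's argument.

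\textbf{The initial-mass bound.} You cannot use \eqref{eq:global-initial-reference} here, because it holds only for $\Ehat\ge T$. Each initial cell starts at $\max\{b,|J|\}<T$, and on $(b,T)$ the density carries the clearing and anchor response, as large as $e^{a(4\pi\sqrt\kappa+C_{\mathrm{ref}}\kappa)}$. Use \eqref{eq:global-initial-variation} at $W=U+1$ instead. The choice of $R_0$ together with $\kappa<1/100$ gives $4\pi\sqrt\kappa+C_{\mathrm{ref}}\kappa<7\sqrt{R_0\kappa}\le\tfrac74\sqrt\vartheta$. Summing the $O(a)$ initial rows then gives $M_{\mathrm{init}}\le\mathrm{poly}(a)e^{4\pi a\sqrt\vartheta}$.

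\textbf{Part (b) cannot be used ``without change''.} Because $T$ and $U$ grow with $a$, the fixed-$B$ split at $N\le Y_0/2$ can put parent centers below $U$ when $12\lambda<2\vartheta$. There, \eqref{eq:entropy-primary-cumulative-bound} fails, since $M_{\mathrm{init}}$ is exponential in $a$. Restrict the matched comparison to $N\le\eta'a$, with every $Y_N$ extensively above $U$. For larger $N$, bound tail parents by the complete-cell envelope, whose action $S_a(Y_0-\eta'a+O(1))$ has an extensive deficit. Tail atoms below $U+1$ and the reference you subtract there obey the same bound as part (a).
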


The total primary multiplicity from the initial stage is bounded by
a polynomial in $a$ times $e^{4\pi a\sqrt\vartheta}$. Its descendants
at energies of order $c$ add at most $e^{O(\sqrt a)}$. Since
$12\lambda>\vartheta$ by a fixed positive margin, these contributions
are exponentially smaller than the BTZ count. On the remaining
short cells, exact moments and smooth polynomial approximation give
the all-order comparison. The spin integral and exact reference
transform are then those of Section~\ref{sec:entropy-matching}.
Appendix~\ref{app:entropy-positive-kappa} records the additional estimates.

The coefficients are independent of $\kappa$ and the construction
parameters at a common admissible energy and kernel, while the error
bounds and the range proved here can depend on those parameters.
No uniform limit as $\kappa\to0$ is taken. The condition
$12\lambda>\vartheta$ is sufficient for matching; it does not locate
a sharp boundary between matching and nonmatching spectra.

The asymptotic formulas and error bounds in this subsection also hold
for a Gaussian kernel of fixed width, with the corresponding
kernel-dependent factors. In this case, $P_\phi(t)$ has a positive
lower bound independent of $t$, and $\mathcal B_\phi(u)>0$ for every fixed $u$.
The additional tail estimates are given in
Appendix~\ref{app:entropy-positive-kappa}.

\section{Discussion}
\label{sec:discussion}

The construction turns the modular completion of the Virasoro vacuum
\cite{MaloneyWitten2007,KellerMaloney2014} into a locally finite spectrum
with positive integer multiplicities. Exact local repair allows prescribed
changes in a bounded energy band, while moment matching controls their
effect at higher energies. Iterating these operations preserves the vacuum,
modular invariance and uncensored support. The resulting spectra let us
study both large primary gaps and BTZ entropy within the same genus-one
framework.

\paragraph{Large gaps and genus-one constraints.}
Theorem~\ref{thm:analytic} shows that the spectral conditions in
Definition~\ref{def:class} allow a primary dimension gap with
$\Delta_1/c\to(1+\kappa)/12$ for a nonempty range of fixed
$\kappa>0$. They also allow $\Delta_1=a+\delta$ for every fixed
$\delta\ge0$, where $a=(c-1)/12$. Both constructions apply at every
sufficiently large real $c$, with a prescribed finite set of spins and
positive integer multiplicities at the first primary level. The total
first-level multiplicity is bounded independently of $c$, and the
positive-$\kappa$ construction allows any fixed finite spin set once
$\kappa a$ exceeds its largest absolute spin.
Thus positivity and integer multiplicities are compatible with both an extensive excess above $c/12$ and a finite shifted gap.
The sufficient range $0<\kappa<\kappa_0$ comes from the estimates used
in the construction; determining its optimal range remains open.
Changing the energy region used for local repair, the initial cells
or the moment degrees offers ways to improve this sufficient range.
Each modification must preserve the positivity bounds needed for integer
moment matching and the complete exterior estimates. It can also change
the near-gap spectrum, so its entropy requires an analysis of the same
modified prescription.

The primary dimension gap and the chiral support condition play different
roles. All nonvacuum primaries obey $\Ehat\ge|J|$, irrespective of
the chosen gap. In the finite-gap family, the first primary has cylinder energy
$E_1=\delta-1/12$: $\delta=0$ gives the scalar uncensored threshold,
whereas $\delta=1/12$ gives the classical massless BTZ dimension.
This finite shift matters when comparing individual levels with the
bulk threshold, even though both choices have $\Delta_1/c\to1/12$.

\paragraph{Near-extremal negativity and modular ambiguities.}
The odd-spin negative region of the vacuum completion
\cite{BenjaminEtAl2019,AldayBae2020} persists at every finite stage
of our recursion. Fixing $a$ and the number of completed steps,
then taking odd $J\to\infty$, gives an exponentially large
negative weight in an exponentially narrow interval above
$\Ehat=J$. Positivity of the completed spectrum therefore requires
the remaining steps to contribute at least this weight in those
intervals. Such a contribution is compatible with convergence in
every fixed-temperature thermal norm: the Boltzmann factor suppresses
large $J$, whereas the near-extremal comparison follows a shrinking
window whose energy tends to infinity.

Appendix~\ref{app:global-near-extremal} proves these statements and
shows that the natural correcting seeds have total absolute weight
growing faster than any polynomial. This places their sum outside
the seed-growth control used in the uncensored-ambiguity argument
of Alday and Bae~\cite[Secs.~4.1 and 4.3]{AldayBae2020}.
The finite-seed asymptotics cannot be passed through the recursive
limit uniformly on the near-extremal windows. The same appendix
quantifies how the specified lexicographic choice produces primaries
approaching the chiral threshold at arbitrarily large spin.
This is compatible with a large primary dimension gap and local
finiteness, since the accumulating states have divergent total
dimension; it is also consistent with the twist-accumulation
theorem of Pal, Qiao and Rychkov~\cite{PalQiaoRychkov2022}.

\paragraph{BTZ entropy and the primary gap.}
For the family with fixed shifted gap $\delta$, constructed with
$B>\delta$ held fixed, Proposition~\ref{prop:btz-entropy} identifies
the smoothed density of states with the prediction of a single
perturbative BTZ saddle through every fixed finite order in $1/c$.
The resulting entropy reproduces the Bekenstein--Hawking term, the
logarithmic correction and the subsequent coefficients, including the
boundary-graviton contribution of
\eqref{eq:entropy-btz-partition}~\cite{GiombiMaloneyYin2008,CotlerJensen2019}.
This gives a realization of the BTZ state-growth relation
\cite{Strominger1998} in positive integer modular spectra with specified
primary gaps and first-level spin multiplicities.

The result applies at $E=\lambda c$ for every fixed $\lambda>0$.
In particular, it includes
$0<E/c<1/12$, where the general HKS assumptions leave the entropy
undetermined~\cite{HartmanKellerStoica2014}. In this range, the inverse
Laplace saddle lies at $\beta_*>2\pi$, where the BTZ contribution is
subdominant to thermal AdS in the canonical ensemble. Direct spectral
control supplies information that the dominant canonical free energy
does not determine. The agreement holds to every finite order in an
asymptotic expansion; the estimates do not fix an exponentially small
remainder.

A macroscopic primary gap produces a different pattern for the
explicit schedule and lexicographic atom choice studied in
Section~\ref{sec:entropy-positive-kappa}.
At energies an extensive distance below the primary gap, the count
smoothed with a fixed compactly supported kernel comes entirely from
vacuum descendants. Immediately above the prescribed first level,
the initial discretization produces additional primary levels with
exponentially large multiplicities. Their entropy rate $H$ depends
on the construction. For the compact smooth bump kernel specified in
that section, their descendants give an exponential excess over the
BTZ prediction throughout the interval established in
Proposition~\ref{prop:entropy-kappa-nonbtz}. At higher energies,
Proposition~\ref{prop:entropy-kappa-btz} restores the same BTZ expansion
as in the fixed-$B$ family, provided
$E/c>\vartheta/12$ by a fixed margin, with
$\vartheta=R_{\mathrm{init}}\kappa$. The nonmatching interval and this
sufficient matching range leave an intermediate region unresolved.
These estimates do not determine the precise energy range over which
BTZ matching holds.

\paragraph{Bulk interpretation of the spectral completion.}
The Poincar\'e sum of the Virasoro vacuum combines the perturbative
contributions of thermal AdS$_3$ and its modular images, which fill
the boundary torus by hyperbolic solid tori. These exhaust the smooth,
complete, real Euclidean Einstein metrics whose only end is the
prescribed torus conformal boundary
\cite[Sec.~2.1]{MaloneyWitten2007}.
A bulk origin for our spectral corrections would therefore require
extending this prescription.

In AdS$_3$, conical-defect orbifolds have been used to address negative
spectral densities, while introducing primaries below the black-hole
threshold and retaining a continuum above it
\cite{BenjaminCollierMaloney2020}.
Off-shell Seifert topologies provide another route: the near-extremal
analysis of Maxfield and Turiaci replaces the negative-density
expansion by a positive spectral edge with a nonperturbative shift
\cite{MaxfieldTuriaci2020}. These contributions involve metric
integrals on topologies without the required Einstein saddle.
Extending that analysis to a global prescription requires control
of the mapping class group and the off-shell amplitudes, as emphasized
by subsequent Virasoro-TQFT and surgery studies
\cite{Yan2025OffShell,DeBoerKamesKingPost2025}.
Cusp counterterms offer a prescription reproducing the regularized
vacuum modular sum, while retaining its spectral problems
\cite{StanfordYan2025Cusps}.

For $1/8$-BPS black holes in four-dimensional $\mathcal N=8$
supergravity, supersymmetric localization of the gravitational path
integral, including orbifold saddles and their quantum contributions,
reproduces the exact integer index and, with additional analysis,
relates it to an extremal degeneracy~\cite{IliesiuMurthyTuriaci2022}.

These examples motivate asking whether an extended bulk prescription,
including additional saddles or off-shell contributions, could reproduce some of our
discrete spectra with positive integer multiplicities. In the
fixed-$B$ family, such a prescription would have to recover the
individual levels and multiplicities while retaining the established
smoothed BTZ entropy expansion. It might also constrain the freedom
in the spectral construction. Identifying such bulk contributions
remains open.

\paragraph{Spectral freedom and ensemble interpretations.}
The fixed-$B$ construction gives distinct genus-one partition functions
at the same central charge, each with a discrete spectrum and positive
integer multiplicities. For the same fixed smoothing kernel, their
entropies share the perturbative BTZ expansion through every fixed
finite order in $1/c$. Within the fixed-$B$ families of
Lemma~\ref{lem:entropy-reference-comparison}, both the prescribed
first-level data, with total multiplicity bounded independently of $c$,
and the admissible atomic replacements can vary without changing the
perturbative expansion of the smoothed entropy. Thus the perturbative
entropy expansion does not determine
the individual levels or their multiplicities.

This freedom in microscopic spectra is relevant to statistical
descriptions of AdS$_3$ gravity. Torus-wormhole calculations
\cite{CotlerJensen2020Random,CotlerJensen2020Bootstrap}, ensembles
of CFT data and the statistical conformal bootstrap
\cite{ChandraEtAl2022Ensemble,BelinEtAl2023Approximate,JafferisRozenbergWong2024}
provide complementary approaches. Recent work on sums over
topologies further explores constraints on their statistical
completion~\cite{BelinEtAl2026Topologies}.
The common entropy expansion is compatible with these ensemble
interpretations, as well as with coarse-grained descriptions of
individual CFTs
\cite{DiUbaldoPerlmutter2023RMT}.

An ensemble interpretation would require a probability measure on the
set of genus-one partition functions produced by the fixed-$B$
construction at a common central charge.
The measure is additional physical input: for example, large-gap
CFTs are atypical in the regulated maximum-ignorance ensemble
studied in Ref.~\cite{BelinMaloneySeefeld2025}. That conclusion
depends on the proposed measure and its assumptions and does not
exclude an ensemble restricted to the spectra considered here.
Moreover, a positive average of nonnegative spectral measures is
nonnegative, so the uncorrected signed MWK spectrum cannot be its
exact average.
A natural question is whether a suitable
measure can reproduce the connected spectral correlations associated
with torus wormholes~\cite{BoruchEtAl2025RMT}.

\paragraph{Explicit computations and further questions.}

An important next step is the certified computation of individual
primary energies determined by the repaired moments. The compact atom
selector provides an exact mathematical prescription; evaluating it
requires a terminating procedure that encloses the energies, resolves
all coincident contributions to their integer multiplicities, and
verifies preservation under subsequent recursion. Determining the
lowest such level across all spins is a further question.

Further analytic questions concern the full intermediate-energy profile
at positive $\kappa$, the approach to the finite-energy threshold, and
counts with total spin of order $c$. The fixed-parameter limits proved
here do not settle simultaneous limits in which $\kappa$ or $E/c$
tends to zero.

\paragraph{Towards a full CFT realization.}
A central question is which of the constructed spectra admit OPE
coefficients defining a unitary compact CFT. For a member with a scalar
at its first primary level, the four-point function of a putative
Hermitian primary $\mathcal O$ offers a concrete first test: can the
allowed dimensions and even spins support nonnegative sums of squared
OPE coefficients that satisfy crossing? At each degenerate level,
this correlator sees the sum over the corresponding primary operators.
One can begin with necessary conditions using the gap and spectral
support, then use explicitly computed levels to constrain individual
members. The external dimension here grows with $c$, so these tests
must accommodate heavy operators.

A completion would have to use the same OPE coefficients, including
their multiplicity labels, across mixed four-point functions and
modular-covariant torus one-point functions
\cite{BelinEtAl2023Approximate,JafferisRozenbergWong2024}.
The same data could then be sewn into a genus-two partition function.
Consistency would require agreement between different pants
decompositions and factorization onto the prescribed torus spectrum
and the same lower-genus correlation functions in degeneration limits.
The genus-two modular bootstrap provides a framework for testing
constraints on these structure constants
\cite{ChoCollierYin2017}.

These conditions could select a smaller family of spectra, restrict
the gap or initial data, and determine how much microscopic freedom
survives beyond genus one. A longer-term goal is to choose the discrete
spectrum and compatible OPE data together, using the freedom left by
the moment constraints. Passing a finite set of crossing or sewing
tests would be only a first step. Establishing a full CFT realization
would also require control of the high-energy contributions weighted
by OPE coefficients; the thermal convergence proved here controls
spectral multiplicities.

\section*{Acknowledgements}
\addcontentsline{toc}{section}{Acknowledgements}
C.C. is supported by NSFC Grant No.~12575075.
WJM is supported by the National Natural Science Foundation of China
No.~12405082 and Shanghai Pujiang Program No.~24PJA118.

\appendix
\section{Canonical completion and kernel estimates}
\label{app:canonical}

We derive the spectral measure and kernel estimates used in
Section~\ref{sec:canonical-completion}. The proof first identifies
the induced measure at each fixed output spin, then obtains the
joint arithmetic bound needed to sum over spins. These estimates
also justify integration against finite signed seed measures,
including measures whose scalar support approaches zero.
We use the canonical Poincar\'e continuation, regular at $s=1/2$,
with the normalization of Ref.~\cite[Secs.~3--4]{KellerMaloney2014}
as an external analytic input. The measure calculation and the joint
bound below are derived with this same continuation.

\subsection{Canonical continuation and the unit-seed measure}
\label{app:canonical-measure}

\paragraph{Dependence on the seed energy.}
For fixed $J'$, write $\gamma\tau=x_\gamma+iy_\gamma$. Subtracting
the seed at zero energy in \eqref{eq:poincare-seed} gives
\begin{align}
  \mathcal P_{\Ehat',J'}(\tau;s)
  ={}&\mathcal P_{0,J'}(\tau;s)\nonumber\\
   &+\sum_{\gamma\in\Gamma_\infty\backslash\Gamma}
      y_\gamma^s e^{2\pi iJ'x_\gamma}
      \bigl(e^{-2\pi\Ehat'y_\gamma}-1\bigr).
  \label{eq:canonical-energy-subtraction}
\end{align}
On a compact subset of $\HH$, the heights $y_\gamma$ are bounded.
For $\Ehat'$ in a compact complex set, the bracket is bounded by
a constant times $y_\gamma$. The sum therefore converges absolutely
and uniformly on compact sets with $\operatorname{Re}s>0$, since
the extra power of $y_\gamma$ puts it in the convergent Eisenstein
range. It is holomorphic in $\Ehat'$ and $s$ there. Continuing the
first term with the stated convention proves the local uniform
bound in Section~\ref{sec:canonical-completion}. The reference
$\mathcal P_{0,J'}$ in this subtraction is an auxiliary continued
function; no zero-energy seed at nonzero spin is added to the spectrum.

\paragraph{Fourier coefficients.}
For $\operatorname{Re}s>1$, separate the identity coset and write
\begin{equation}
  \mathcal P_{\Ehat',J'}(x+iy;s)
   =y^s e^{-2\pi y\Ehat'+2\pi ixJ'}
      +\sum_{J\in\Z} A_{J;\Ehat',J'}(y;s)e^{2\pi iJx}.
  \label{eq:canonical-fourier-expansion}
\end{equation}
A nonidentity left coset has a primitive lower row $(m,d)$, with
$m\ge1$ and $d\in\Z$. Write $d=r+m\ell$, where $r$ is a unit
modulo $m$ and $\ell\in\Z$. If the upper-left entry is $\alpha$,
then $\alpha r\equiv1\pmod m$. With $t=x+d/m$,
\[
  \operatorname{Im}\gamma\tau=\frac{y}{m^2(t^2+y^2)},\qquad
  \operatorname{Re}\gamma\tau
     =\frac{\alpha}{m}-\frac{t}{m^2(t^2+y^2)}.
\]
The sum over $\ell$ unfolds the Fourier integral to the real line;
the remaining residue-class phase is
$\exp[2\pi i(Jr+J'r^{-1})/m]$. Consequently,
\begin{equation}
  A_{J;\Ehat',J'}(y;s)
   =\sum_{m=1}^{\infty}\frac{S(J,J';m)}{m^{2s}}
      y^s\int_{\R}\frac{e^{-2\pi iJt}}{(t^2+y^2)^s}
      \exp\!\left[-\frac{2\pi(\Ehat'y+iJ't)}
                         {m^2(t^2+y^2)}\right]\dd t.
  \label{eq:canonical-fourier-integral}
\end{equation}
The finite sums $S(J,J';m)$ need no regularization. Continuation
acts on the complete Fourier coefficient in the variable $s$.

Let $A^{[0]}$ denote the contribution obtained by replacing the
last exponential in \eqref{eq:canonical-fourier-integral} by $1$.
It is independent of $\Ehat'$. For scalar output,
\begin{equation}
  A^{[0]}_{0;\Ehat',J'}(y;s)
    =y^{1-s}\frac{\sqrt\pi\,\Gamma(s-1/2)}{\Gamma(s)}
       \mathcal Z_{0,J'}(s).
  \label{eq:canonical-scalar-zero-order}
\end{equation}
For $n\ge1$, define $\sigma_\alpha(n)=\sum_{d\mid n}d^\alpha$,
so that $\sigma_0(n)=d(n)$. The Ramanujan-sum identities give
\begin{equation}
  \mathcal Z_{0,J'}(s)=
  \begin{cases}
    \sigma_{1-2s}(|J'|)/\zeta(2s),&J'\ne0,\\
    \zeta(2s-1)/\zeta(2s),&J'=0.
  \end{cases}
  \label{eq:canonical-ramanujan}
\end{equation}
Here $\zeta$ is the Riemann zeta function. At $s=1/2$, the zero
of $1/\zeta(2s)$ cancels the pole of $\Gamma(s-1/2)$.
The remaining coefficient is $2d(|J'|)$ for $J'\ne0$ and
$2\zeta(0)=-1$ for $J'=0$. Hence
\begin{equation}
  A^{[0]}_{0;\Ehat',J'}(y;1/2)=c_{J'}\sqrt y,
  \qquad c_0=-1,\qquad c_{J'}=2d(|J'|)\quad(J'\ne0).
  \label{eq:canonical-threshold-calculation}
\end{equation}
After division by $\sqrt y$, this is the Laplace transform of
the scalar atom $c_{J'}\delta_0$.

\paragraph{The continuous part.}
Denote the cosine-sum term of \eqref{eq:repair-Q-kernel} by
$Q^{(\ge1)}$, so that
\begin{equation}
  Q_{J,J'}(\Ehat,\Ehat')
    =2\mathbf1_{JJ'\ne0}\mathcal Z_{J,J'}(1/2)
       +Q^{(\ge1)}_{J,J'}(\Ehat,\Ehat').
  \label{eq:canonical-kernel-split}
\end{equation}
The bracket in the cosine sum is bounded by
$C\min\{1,\Ehat\Ehat'/m^2\}$ for physical real energies.
Since $|S(J,J';m)|\le m$,
splitting the sum at $m\sim\sqrt{\Ehat\Ehat'}$ proves
\begin{equation}
  |Q^{(\ge1)}_{J,J'}(\Ehat,\Ehat')|
        \le C\sqrt{\Ehat\Ehat'},\qquad
  |Q_{0,J'}(\Ehat,\Ehat')|\le C\Ehat\Ehat'.
  \label{eq:canonical-higher-kernel-bounds}
\end{equation}
When $\Ehat\Ehat'<1$, the unsplit quadratic bound is
$O(\Ehat\Ehat')$ and is stronger than the first inequality.
For scalar output the arithmetic term in
\eqref{eq:canonical-kernel-split} is absent, and summing the
quadratic bound gives the second inequality. On compact complex
energy sets, each full summand is $O(m^{-2})$ uniformly, proving
the holomorphy stated after \eqref{eq:repair-Q-kernel}.

\begin{proposition}[Induced measure of a unit seed]
\label{prop:canonical-unit-measure}
For $\Ehat'\ge|J'|$ and each fixed output spin $J$, the induced
Fourier coefficient has the unique representation
\begin{equation}
  \frac{A_{J;\Ehat',J'}(y;1/2)}{\sqrt y}
    =\int_{|J|}^{\infty}e^{-2\pi y\Ehat}
        Q_{J,J'}(\Ehat,\Ehat')\dd\omega_J(\Ehat)
       +\delta_{J,0}c_{J'}
  \label{eq:canonical-unit-laplace}
\end{equation}
by a signed measure on $[|J|,\infty)$ with finite thermal total
variation at every $y>0$. The density term is defined on
$\Ehat>|J|$ and gives zero weight to the endpoint. The only
induced atom is $c_{J'}\delta_0$ in the scalar row.
\end{proposition}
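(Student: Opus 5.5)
The plan is to compute the Fourier coefficient \eqref{eq:canonical-fourier-integral} at $\operatorname{Re}s>1$, subtract the energy-independent piece $A^{[0]}$, and expand the remaining exponential in powers. The result should be recognized, term by term in $m$, as the Laplace transform of the kernel $Q_{J,J'}$ against $\dd\omega_J$. Concretely, write $A=A^{[0]}+(A-A^{[0]})$.

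For the scalar row, \eqref{eq:canonical-threshold-calculation} already identifies $A^{[0]}_0/\sqrt y$ with the atom $c_{J'}\delta_0$. For $J\neq0$, the zero-order integral $\int e^{-2\pi iJt}(t^2+y^2)^{-s}\dd t$ is a $K$-Bessel function. At $s=1/2$ it equals $2K_0(2\pi|J|y)$. Using $K_0(2\pi|J|y)=\int_{|J|}^\infty e^{-2\pi y\Ehat}(\Ehat^2-J^2)^{-1/2}\dd\Ehat$, this becomes the Laplace transform of a constant density against $\dd\omega_J$. After continuation in $s$ the Kloosterman zeta contributes $\mathcal Z_{J,J'}(1/2)$. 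This produces the arithmetic term $2\mathbf 1_{JJ'\ne0}\mathcal Z_{J,J'}(1/2)$. When $J'=0$ the same computation gives zero, because of the Ramanujan identities at $s=1/2$. Likewise, the $-1$ inside the bracket of \eqref{eq:repair-Q-kernel} is exactly this subtracted zero-order piece, redistributed over $m$.

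For the remainder $A-A^{[0]}$, I would work directly at $s=1/2$. This is legitimate because the bracket $e^{X}-1$ with $X=-2\pi(\Ehat'y+iJ't)/(m^2(t^2+y^2))$ supplies an extra factor $m^{-2}$. The $m$-sum then converges absolutely, with $|S|\le m$, so no continuation is needed for this part; compare \eqref{eq:canonical-energy-subtraction}. Each term is evaluated with the standard Poincaré/Rademacher integral: expanding $e^X$ in powers and integrating in $t$ yields products of Bessel-type functions. Resumming gives
\[
\int_{|J|}^\infty e^{-2\pi y\Ehat}\Bigl[\cos\bigl(\tfrac{2\pi}{m}\sqrt{(\Ehat'+J')(\Ehat+J)}\bigr)\cos\bigl(\tfrac{2\pi}{m}\sqrt{(\Ehat'-J')(\Ehat-J)}\bigr)-1\Bigr]\frac{2\,\dd\Ehat}{\sqrt{\Ehat^2-J^2}}.
\]
The cleanest route is the substitution $u=\Ehat+J$, $v=\Ehat-J$. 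The Laplace transform in $(u,v)$ then factorizes into two one-dimensional transforms of $\cos(2\pi\sqrt{\alpha u}/m)/\sqrt u$. These are Gaussian-type integrals $\sqrt{\pi/\beta}\,e^{-\pi^2\alpha/(m^2\beta)}$. After the change of variables, they reproduce the $t$-integral of $e^{X}$. I would verify the identity at the level of these elementary transforms, using the spin-fixing constraint $u-v=2J$ through the $e^{-2\pi iJt}$ Fourier factor. The sum over $m$ is interchanged with the $\Ehat$-integral using \eqref{eq:canonical-higher-kernel-bounds}. The bound $|Q|\le C\sqrt{\Ehat\Ehat'}$ makes the integrand absolutely integrable against $e^{-2\pi y\Ehat}\dd\omega_J$, near the edge $\Ehat=|J|$ and at infinity alike. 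This gives finite thermal total variation, and it shows that the density assigns no mass to the endpoint.

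Uniqueness follows from injectivity of the Laplace transform on signed measures with finite thermal total variation on $[|J|,\infty)$. So the density and the sole atom are determined; in particular there are no atoms at nonzero-spin edges.

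The main obstacle will be the justification at $s=1/2$ of the interchange between continuation and the zero-order term. One must show that the continued $A^{[0]}$ equals precisely the arithmetic constant plus the $-1$ subtractions, with nothing left over. For $JJ'\ne0$ I would first try to continue $\mathcal Z_{J,J'}(s)$ via Weil's bound, which gives convergence for $\operatorname{Re}s>3/4$. Below that, I would rely on the Keller–Maloney continuation, taken as external input.
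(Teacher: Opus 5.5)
Your proposal is correct and follows essentially the paper's argument: split off $A^{[0]}$ (the scalar atom $c_{J'}$, or $2\mathcal Z_{J,J'}(1/2)K_0(2\pi|J|y)$ for $J\ne0$, which vanishes when $J'=0$), evaluate the subtracted remainder directly at $s=1/2$ through the factorized Gaussian transform in the variables $\Ehat\pm J$ followed by Fourier inversion in $t$, and conclude by Laplace uniqueness. The ``main obstacle'' you flag is already resolved by your own observation that the subtracted $m$-series converges absolutely and locally uniformly for $\operatorname{Re}s>0$, so continuation commutes with the split: the continued $A^{[0]}$ contributes only the arithmetic constant (or the atom), while the $-1$ terms stay inside the remainder, whose $t$-integrand is $O(|t|^{-2})$ as Fourier inversion requires.
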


\begin{proof}
The proposed density defines a real measure: the Kloosterman sums
are real by pairing $r$ with $-r$, and their continued values at a
regular real parameter are real. At scalar output,
\eqref{eq:canonical-higher-kernel-bounds} cancels the $1/\Ehat$
in $\dd\omega_0$. At nonzero spin, the kernel is bounded near
$\Ehat=|J|$, where $\dd\omega_J$ has an integrable square-root
singularity. At infinity, the finite arithmetic constant and
the first bound in \eqref{eq:canonical-higher-kernel-bounds}
give growth at most $O_{J,J',\Ehat'}(1+\sqrt{\Ehat})$.
Thus the measure is locally finite and has finite thermal total
variation at fixed $J$, without using a uniform arithmetic bound.

To identify its Laplace transform, first subtract $A^{[0]}$.
The product of two Gaussian integrals gives
\begin{align}
 &\frac{\sqrt y}{m\sqrt{t^2+y^2}}
     \exp\!\left[-\frac{2\pi(\Ehat'y+iJ't)}{m^2(t^2+y^2)}\right]
       \nonumber\\
 &\qquad=\frac{\sqrt y}{m}\int_{\R^2}
     e^{-\pi(y-it)u^2-\pi(y+it)v^2}
     e^{(2\pi i/m)(\sqrt{\Ehat'+J'}\,u+
                       \sqrt{\Ehat'-J'}\,v)}\dd u\dd v.
  \label{eq:canonical-gaussian-transform}
\end{align}
Let $H_m(t)$ be the difference between the left side and the same
expression with its final exponential replaced by $1$.
On the right, make the change of variables
\[
  \Ehat=\frac{u^2+v^2}{2},\qquad
  \ell=\frac{u^2-v^2}{2},\qquad
  \dd u\dd v=\frac{\dd\Ehat\dd\ell}
                       {2\sqrt{\Ehat^2-\ell^2}}
\]
in each quadrant. Here $\ell$ is a real Fourier variable, and
$\Ehat>|\ell|$. The four sign choices give four times the cosine
product minus $1$. Define
\[
  \Phi_m(\Ehat,\ell)=
    \cos\!\left(\frac{2\pi}{m}
        \sqrt{(\Ehat'+J')(\Ehat+\ell)}\right)
    \cos\!\left(\frac{2\pi}{m}
        \sqrt{(\Ehat'-J')(\Ehat-\ell)}\right).
\]
The Gaussian factor becomes
$e^{-2\pi y\Ehat+2\pi it\ell}$. Fourier inversion therefore gives
\begin{equation}
  \int_{\R}e^{-2\pi iJt}H_m(t)\dd t
   =\frac{2\sqrt y}{m}\int_{|J|}^{\infty}e^{-2\pi y\Ehat}
       [\Phi_m(\Ehat,J)-1]\dd\omega_J(\Ehat).
  \label{eq:canonical-higher-transform}
\end{equation}
To justify inversion, the subtraction makes
$H_m(t)=O(|t|^{-2})$ at infinity. Its inverse Fourier density is
continuous and integrable: the bound
$|\Phi_m-1|\le C\min\{1,\Ehat\Ehat'/m^2\}$ controls the origin,
the spin edges are integrable, and the exponential controls infinity.
Summing \eqref{eq:canonical-higher-transform} with the remaining
factor $S(J,J';m)$ gives the thermal transform of $Q^{(\ge1)}$.
The same bounds justify the sum--integral interchange. This
subtracted Fourier expression continues to $\operatorname{Re}s>0$
by the extra denominator decay, in agreement with
\eqref{eq:canonical-energy-subtraction}.

For the zeroth-order term at $J\ne0$, the continued Fourier
integral instead gives
\begin{equation}
  \frac{A^{[0]}_{J;\Ehat',J'}(y;1/2)}{\sqrt y}
     =2\mathcal Z_{J,J'}(1/2)K_0(2\pi|J|y),\qquad
  K_0(2\pi|J|y)
     =\int_{|J|}^{\infty}e^{-2\pi y\Ehat}\dd\omega_J(\Ehat).
  \label{eq:canonical-bessel-transform}
\end{equation}
The latter identity follows from
$K_0(z)=\int_0^\infty e^{-z\cosh u}\dd u$ by setting
$\Ehat=|J|\cosh u$. When $J'=0$, symmetry and
\eqref{eq:canonical-ramanujan} give
$\mathcal Z_{J,0}(1/2)=0$. For $J=0$, the zeroth-order term
is the separate scalar atom of
\eqref{eq:canonical-threshold-calculation}, rather than a density.
These cases give exactly the indicator in
\eqref{eq:canonical-kernel-split} and prove
\eqref{eq:canonical-unit-laplace}.

Finally, the difference of two measures in the stated class has a
Laplace transform holomorphic in the right half-plane. If it
vanishes on the positive real axis, it vanishes throughout that
half-plane. For each $\beta_0>0$, the finite signed measure
obtained by multiplying the difference by $e^{-\beta_0\Ehat}$
then has zero Fourier transform and is zero. Removing the positive
weight proves uniqueness, excluding any additional edge atoms.
\end{proof}

For the auxiliary seed $\Ehat'=J'=0$, the kernel vanishes.
The induced atom $-\delta_0$ cancels the direct unit atom, giving
$\mathcal P_{0,0}(\tau;1/2)=0$. Actual seeds in $\mathcal M_B$
exclude this origin; the separate threshold term remains essential
when their scalar energies approach zero.

\subsection{The joint arithmetic bound}
\label{app:canonical-arithmetic}

Proposition~\ref{prop:canonical-unit-measure} is a fixed-spin
statement. We now control the continued arithmetic term uniformly
in the output spin, as required by
\eqref{eq:repair-compact-kernel-bound}.

\begin{lemma}[Joint bound for the continued zeta function]
\label{lem:canonical-joint-bound}
There is an absolute constant $C$ such that
\begin{equation}
  |\mathcal Z_{J,J'}(1/2)|
    \le C\sqrt{|JJ'|}\,
       \sigma_{-1}\bigl(\gcd(|J|,|J'|)\bigr),\qquad JJ'\ne0.
  \label{eq:canonical-joint-zeta-bound}
\end{equation}
Consequently \eqref{eq:repair-compact-kernel-bound} holds for all
uncensored input and output pairs with $0<\Ehat'\le B$.
\end{lemma}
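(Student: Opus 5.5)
The plan is to bound the continued Kloosterman zeta value at $s=1/2$ through its Selberg--Kuznetsov spectral expansion, which gives the analytic continuation with explicit control. Before that, I would check the shape of the claimed bound on the standard unfolding. The bound has the form $\sqrt{|JJ'|}\,\sigma_{-1}(\gcd)$, which is what one gets from Weil's bound $|S(J,J';m)|\le d(m)\gcd(J,J',m)^{1/2}m^{1/2}$ only in a convergent range, so Weil alone does not reach $s=1/2$. Instead I would use the Fourier coefficient identity \eqref{eq:canonical-fourier-integral} at $s$ near $1/2$. Write $\mathcal P_{0,J'}(\tau;s)$ as a Poincar\'e series of a nonholomorphic type. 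By the Kuznetsov/Niebur spectral decomposition, $\mathcal Z_{J,J'}(s)$ then decomposes into three pieces: a continuous Eisenstein piece involving $\sigma_{2it}(|J|)\sigma_{2it}(|J'|)|JJ'|^{-it}/|\zeta(1+2it)|^2$, a discrete Maass cusp-form piece $\sum_j\overline{\rho_j(J)}\rho_j(J')\,h(s,t_j)$, and possible residual terms.

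The key steps are as follows.

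\textbf{Step 1.} Establish the continuation formula and confirm there is no pole at $s=1/2$ for $JJ'\ne0$. The pole at $s=1/2$ is absent because the residual spectrum of $PSL(2,\Z)$ is only the constant function, and that constant has zero nonzero-index Fourier coefficients.

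\textbf{Step 2.} Bound the continuous piece. Divisor bounds give $|\sigma_{2it}(n)|\le d(n)$. This already yields a polynomial-in-$\log$ bound, which I would refine to $\sigma_{-1}(\gcd)\sqrt{|JJ'|}$ via the transform weight $h$. I expect $h(1/2,t)$ to carry a factor like $\Gamma(1/2\pm it)$ producing $|JJ'|^{1/2}$ after normalization. The factor $\sqrt{|JJ'|}$ comes from the normalization of the Fourier coefficients, $\rho(n)\sim|n|^{1/2}\times$ a Hecke eigenvalue.

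\textbf{Step 3.} Bound the cusp piece. Use Hecke multiplicativity, $\rho_j(n)=\rho_j(1)\lambda_j(n)$, and the product relation $\lambda_j(J)\lambda_j(J')=\sum_{d\mid\gcd}\lambda_j(JJ'/d^2)$. Then apply a large-sieve or Kuznetsov mean-value bound over $t_j$, weighted by the exponentially decaying $h(1/2,t_j)$. Weighting the divisor sum by the $1/d$ that arises gives $\sigma_{-1}(\gcd)$.

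I expect this to be the main obstacle. Getting a bound linear in $\sqrt{|JJ'|}$ with no $(|JJ'|)^\epsilon$ loss requires handling $\lambda_j(n)$ uniformly. Pointwise Ramanujan-type bounds are unavailable, so I would try a different route first. The idea is to avoid pointwise eigenvalue bounds by going back to the Fourier side. Write $\sqrt{|JJ'|}^{-1}\mathcal Z_{J,J'}(1/2)$ as the Fourier coefficient of a bounded modular function, namely the subtracted seed \eqref{eq:canonical-energy-subtraction} at a fixed height. Then bound that coefficient by the sup norm over a fundamental domain, with the gcd structure entering through Ramanujan sums.

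\textbf{Step 4 (consequence).} With the lemma in hand, \eqref{eq:repair-compact-kernel-bound} follows by combining three bounds. First, the constant term satisfies $2|\mathcal Z|\le C\sqrt{|JJ'|}\,\sigma_{-1}(\gcd)\le C\log(2+B)\sqrt{\Ehat\Ehat'}$, using $|J|\le\Ehat$, $|J'|\le\Ehat'\le B$, and $\sigma_{-1}(n)\le C\log\log(n+3)\le C\log(2+B)$ since the gcd is at most $B$. Second, the $Q^{(\ge1)}$ bound \eqref{eq:canonical-higher-kernel-bounds} handles the cosine-sum term. Third, the scalar-output case needs no arithmetic term at all.
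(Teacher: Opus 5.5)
Your proposal does not yet prove the lemma.

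\textbf{The spectral route.} Steps 1--3 stall where you say they do. In a Kuznetsov expansion, the cusp contribution needs pointwise control of $\lambda_j(JJ'/\ell^2)$. Bounds toward Ramanujan lose a power of $|JJ'|$. Mean-value or large-sieve estimates over $t_j$ control averages, not the individual value $\mathcal Z_{J,J'}(1/2)$. So this route does not give a bound linear in $\sqrt{|JJ'|}$ with an absolute constant.

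\textbf{The sup-norm fallback.} A Hecke-type bound from the sup norm of a bounded modular function is the right idea, and it is what the paper does. As written, though, it lacks three ingredients.
\begin{enumerate}
\item \emph{The wrong object.} The energy-subtracted function of \eqref{eq:canonical-energy-subtraction} contains no $\mathcal Z$ term. The zeroth-order coefficient $A^{[0]}$ is independent of $\Ehat'$, so it cancels in the difference. Its identity coset $\sqrt y\,e^{2\pi iJ'x}(e^{-2\pi\Ehat'y}-1)$ is also unbounded.
\item \emph{Removing the cusp growth.} The paper uses the single seed $\mathcal P_{1,1}$. Its $J$th coefficient contains $2\mathcal Z_{J,1}(1/2)\sqrt y\,K_0(2\pi|J|y)$. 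The scalar threshold term $c_1\sqrt y=2\sqrt y$ grows and must be removed by a modular function with the same growth. The paper subtracts $2\Theta$, with $\Theta$ the theta function \eqref{eq:canonical-theta-comparison}; $\operatorname{Eis}(z,1/2)\equiv0$ cannot do this. Proving that $\mathcal P_{1,1}-2\Theta$ is bounded on a fundamental domain first needs a crude fixed-height bound $|\mathcal Z_{J,1}(1/2)|\le C_{y_0}\sqrt{|J|}\,e^{2\pi|J|y_0}$. That bound makes the zeroth-order Fourier series converge and decay in the cusp.
\item \emph{The choice of height.} Working at a fixed height, as you propose, only gives that exponentially lossy bound. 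The sharp bound comes from evaluating the $J$th coefficient at $y=1/|J|$, where $K_0(2\pi|J|y)=K_0(2\pi)$. This uses boundedness on all of $\HH$ via modularity. It also needs bounds, uniform in $J$ at that height, on the higher-order continuum, the direct term and the theta coefficients. The result is $|\mathcal Z_{J,1}(1/2)|\le C\sqrt{|J|}$.
\end{enumerate}

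\textbf{The joint dependence and the divisor factor.} The dependence on $J'$ and the factor $\sigma_{-1}(\gcd(|J|,|J'|))$ do not come from Ramanujan sums. Those arise only when one index vanishes. Running the sup-norm argument directly on a spin-$J'$ seed would leave an uncontrolled $J'$-dependent constant. The paper reduces to $J'=1$ using Selberg's identity $S(J,J';m)=\sum_{\ell\mid\gcd(|J|,|J'|,m)}\ell\,S(JJ'/\ell^2,1;m/\ell)$. Summing and continuing gives $\mathcal Z_{J,J'}(1/2)=\sum_{\ell\mid\gcd(|J|,|J'|)}\mathcal Z_{JJ'/\ell^2,1}(1/2)$, and each term is at most $C\sqrt{|JJ'|}/\ell$. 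The Hecke relation in your Step 3 is the spectral counterpart of this identity, but you need it on the Kloosterman side, where it is exact and elementary.

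Once the lemma is established, your Step 4 deduction of \eqref{eq:repair-compact-kernel-bound} is correct.
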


\begin{proof}
Consider the single auxiliary function $\mathcal P_{1,1}$.
At a fixed height $y_0>0$, subtract its direct term, its scalar
term $2\sqrt{y_0}$, and its higher-order Fourier series.
The latter converges absolutely before any arithmetic estimate:
\eqref{eq:canonical-higher-kernel-bounds} gives
\[
  \sum_{J\in\Z}\int_{|J|}^{\infty}
       e^{-2\pi y_0\Ehat}\sqrt{\Ehat}\dd\omega_J(\Ehat)
   \le\sum_{J\in\Z}e^{-2\pi y_0|J|}
          \int_0^\infty e^{-2\pi y_0 s}\frac{\dd s}{\sqrt s}
   <\infty.
\]
The remaining periodic function is bounded. Its nonzero
Fourier coefficients are
\[
  2\mathcal Z_{J,1}(1/2)\sqrt{y_0}K_0(2\pi|J|y_0).
\]
The large-argument asymptotic of $K_0$
\cite[Eq.~10.40.2]{NISTDLMF} implies
\begin{equation}
  |\mathcal Z_{J,1}(1/2)|
     \le C_{y_0}\sqrt{|J|}\,e^{2\pi|J|y_0},\qquad J\ne0.
  \label{eq:canonical-fixed-height-bound}
\end{equation}
Given any positive height $y$, first choose $y_0<y$ in this bound.
It follows that the zeroth-order Fourier series converges
absolutely at height $y$. In particular, using a fixed $y_0$ for
large $y$, its nonzero-spin terms decay exponentially in the cusp.
The scalar refinement in \eqref{eq:canonical-higher-kernel-bounds}
then gives
\[
  \mathcal P_{1,1}(x+iy)=2\sqrt y+O(y^{-1/2}),\qquad y\to\infty,
\]
uniformly in $x$.

Introduce the modular comparison function
\begin{equation}
  \Theta(x+iy)=\sqrt y\sum_{p,r\in\Z}
      e^{-\pi y(p^2+r^2)+2\pi iprx}.
  \label{eq:canonical-theta-comparison}
\end{equation}
Translation invariance follows from $pr\in\Z$; inversion follows
from two-dimensional Gaussian Poisson summation. Its cusp term is
$\sqrt y+O(\sqrt y\,e^{-\pi y})$. Thus
$\mathcal P_{1,1}-2\Theta$ is bounded on a fundamental domain
and, by modularity, on all of $\HH$. This comparison is used only
to estimate the arithmetic term.

Now set $k=|J|\ge1$ and evaluate the $J$th coefficient at $y=1/k$.
The zeroth-order contribution is
$2\mathcal Z_{J,1}(1/2)k^{-1/2}K_0(2\pi)$.
The higher-order term is bounded independently of $J$ by
\[
  C\int_1^\infty
       \frac{\sqrt v\,e^{-2\pi v}}{\sqrt{v^2-1}}\dd v,
\]
after setting $\Ehat=kv$. The direct term is bounded as well.
There are $2d(k)$ pairs with $pr=J$, and $p^2+r^2\ge2k$.
Therefore the absolute value of the theta coefficient is at most
$2d(k)k^{-1/2}e^{-2\pi}\le4e^{-2\pi}$.
The global bound on $\mathcal P_{1,1}-2\Theta$ and
$K_0(2\pi)>0$ prove
\begin{equation}
  |\mathcal Z_{J,1}(1/2)|\le C\sqrt{|J|}.
  \label{eq:canonical-single-input-bound}
\end{equation}

The finite Selberg identity
\cite[Eq.~(1.1)]{Xi2023Selberg} holds for signed integer indices:
\[
  S(J,J';m)=\sum_{\ell\mid\gcd(|J|,|J'|,m)}
      \ell\,S(JJ'/\ell^2,1;m/\ell).
\]
Summing first for $\operatorname{Re}s>1$ and then continuing gives
\begin{equation}
  \mathcal Z_{J,J'}(s)
    =\sum_{\ell\mid\gcd(|J|,|J'|)}
        \ell^{1-2s}\mathcal Z_{JJ'/\ell^2,1}(s).
  \label{eq:canonical-selberg-zeta}
\end{equation}
Applying \eqref{eq:canonical-single-input-bound} at $s=1/2$
proves \eqref{eq:canonical-joint-zeta-bound} for either sign of
$JJ'$. Finally,
$\sigma_{-1}(n)\le1+\log n$,
$|J'|\le\Ehat'\le B$, and $|J|\le\Ehat$ give
\eqref{eq:repair-compact-kernel-bound} after adding
\eqref{eq:canonical-higher-kernel-bounds}. If either spin is zero,
the arithmetic term is absent and the higher-order bound suffices.
\end{proof}

The same bound holds for $R$, since its additional term in
\eqref{eq:repair-R-kernel} is at most $12\sqrt{\Ehat\Ehat'}$.
Both kernels are real and symmetric under simultaneous exchange
of the input and output labels: for the arithmetic factor use
$r\mapsto r^{-1}$ in the Kloosterman sum, and for the cosine
terms use their displayed products.

\subsection{Finite signed seeds and thermal convergence}
\label{app:canonical-finite-seeds}

Let $\sigma\in\mathcal M_B$. There are only finitely many input
spins, although the output may involve every integer spin. The
joint bound gives
\begin{equation}
  |(Q\sigma)_J(\Ehat)|
    \le C\log(2+B)\sqrt{\Ehat}
        \sum_{J'}\int\sqrt{\Ehat'}\dd|\sigma_{J'}|(\Ehat')
    \le C\log(2+B)\sqrt{B\Ehat}\,\|\sigma\|_{\TV}.
  \label{eq:canonical-finite-seed-bound}
\end{equation}
The first integral is taken against the total-variation measure
of the input seed. The identical estimate holds with $Q$ replaced
by $R$.

To integrate over output energies, write $j=|J|$ and
$\Ehat=j+s$. For every $\beta>0$,
\begin{align}
  \int_j^\infty e^{-\beta\Ehat}\sqrt{\Ehat}
            \dd\omega_J(\Ehat)
   &=e^{-\beta j}\int_0^\infty
        e^{-\beta s}\frac{\sqrt{j+s}}{\sqrt{s(2j+s)}}\dd s
        \nonumber\\
   &\le e^{-\beta j}\int_0^\infty e^{-\beta s}s^{-1/2}\dd s
    =e^{-\beta j}\sqrt{\frac\pi\beta}.
  \label{eq:canonical-spin-thermal-bound}
\end{align}
This includes $J=0$ and the nonzero-spin edges. Summing over
$J\in\Z$ yields
\begin{equation}
  \sum_J\int_{|J|}^\infty e^{-\beta\Ehat}
           |(Q\sigma)_J(\Ehat)|\dd\omega_J(\Ehat)
   \le C\log(2+B)\sqrt B\,\|\sigma\|_{\TV}
         \sqrt{\frac\pi\beta}
          \left(1+\frac{2}{e^\beta-1}\right).
  \label{eq:canonical-all-spin-thermal-bound}
\end{equation}
The same inequality holds for $R$ after increasing $C$.

These absolute bounds justify integrating
\eqref{eq:canonical-unit-laplace} against $\sigma$ and exchanging
the seed integrals, the output integrals and the spin sum.
The direct term integrates to $\sigma$ itself; the induced
threshold integrates to $t(\sigma)\delta_{(0,0)}$.
This proves \eqref{eq:canonical-spectral-measure} as an equality
of ordinary spectral measures and of their thermal transforms.
The direct measure has finite total variation, while
$|t(\sigma)|\le\max_{|J'|\le B}|c_{J'}|\|\sigma\|_{\TV}$.
Together with \eqref{eq:canonical-all-spin-thermal-bound}, this
gives finite thermal total variation for the complete output.

For scalar input approaching the origin, the estimates remain
uniform in the input energy. In particular, the factor
$\sqrt{\Ehat'}$ in \eqref{eq:canonical-finite-seed-bound} is
bounded on $(0,B]$; no inverse power of the input energy is needed.
The local uniform bound from
\eqref{eq:canonical-energy-subtraction} independently justifies
the integral of the completed modular functions on each compact
subset of $\HH$. Integration therefore preserves modular
invariance and gives the same function as the absolutely
convergent thermal transform above.

Finally, the scalar density added by $R-Q$ is
$12m_g(\sigma)\Ehat^{-1/2}\dd\Ehat$. Its thermal transform is
the modular constant $12m_g(\sigma)/\sqrt2$ by
\eqref{eq:constant-continuum}, and it gives no atom at zero.
This proves the measure formula
\eqref{eq:corrected-completion-measure}. The bounds apply to
every finite correcting seed constructed in Appendix~B.
The additional estimates needed to sum infinitely many repairs
are supplied by Appendix~\ref{app:global-estimates} and used in
Section~\ref{sec:global-limit}.

\section{Positivity, inversion and exact repair}
\label{app:inversion-repair}

This appendix proves Proposition~\ref{prop:controlled-band-inverse},
the finite-measure inverse \eqref{eq:inverse-measure-bound}, and the
anchor bound \eqref{eq:anchor-size-bound}. It then completes the proof
of Proposition~\ref{prop:exact-local-repair} and derives the exterior
estimates used in the recursion. We use the canonical completion and
kernel estimates of Appendix~\ref{app:canonical}. Constants denoted by
$C$ are independent of the cutoff and may increase between estimates;
$C_B$ is finite for each fixed $B>0$. Whenever an exponential bound
$C_B\le e^{CB}$ is asserted, its range is $B\ge2$.

\subsection{Positivity of the quadratic form}
\label{app:band-positivity}

\paragraph{Boundedness on the band.}
For the reference measures in \eqref{eq:band-Hilbert-operator},
\begin{equation}
  \sum_{|J|<B}\int_{|J|}^B\Ehat\dd\omega_J
    =B+2\sum_{1\le J<B}\sqrt{B^2-J^2}
    \le CB^2,\qquad B\ge2.
  \label{eq:inverse-band-integral}
\end{equation}
The scalar integral is $B$ and is finite for every positive cutoff.
The kernel bound \eqref{eq:repair-compact-kernel-bound}, also valid
for $R$, therefore gives
\begin{equation}
  \begin{aligned}
  \|R_B\|_{\mathrm{HS}}^2
    &:=\sum_{|J|,|J'|<B}\int_{|J|}^B\int_{|J'|}^B
       |R_{J,J'}(\Ehat,\Ehat')|^2
                     \dd\omega_{J'}(\Ehat')\dd\omega_J(\Ehat),\\
  \|R_B\|_{\mathrm{HS}}&\le CB^2\log(2+B),\qquad
  \|P_B\|\le CB^3\qquad(B\ge2).
  \end{aligned}
  \label{eq:inverse-Hilbert-Schmidt}
\end{equation}
For $0<B<2$ the same integrals are finite, without the need for
these particular powers of $B$. The kernel is real and symmetric,
so $R_B$ is compact and self-adjoint. In particular, its quadratic
form is continuous on $\mathcal H_B$.

\paragraph{A positive spatial kernel.}
Let $\Gamma=\mathrm{PSL}_2(\Z)$, $z=x+iy$ and $w=x'+iy'$.
Consider the point-pair kernel
\begin{equation}
  k_s(u)=\frac14(1+u)^{-s},\qquad
  u(z,w)=\frac{|z-w|^2}{4yy'},\qquad
  A_s(z,w)=\sum_{\gamma\in\Gamma}k_s(u(z,\gamma w)),
  \quad\operatorname{Re}s>1.
  \label{eq:inverse-spatial-kernel}
\end{equation}
We use the spatial pre-trace theorem with hyperbolic measure
$\dd x\dd y/y^2$, an orthonormal cuspidal basis, and continuous
spectral measure $\dd r/(4\pi)$; see
\cite[Section~2.3]{HuangXu2015}. This theorem and its locally uniform
spectral convergence for rapidly decreasing transforms are the
external spectral-theory input in this subsection. Our point-pair
variable includes the factor $4$ in \eqref{eq:inverse-spatial-kernel},
so we compute the transform in this convention.

Integrating first in $x$ at $z=i$ gives
\begin{align*}
  h_s(r)
  &=4^{s-1}\frac{\sqrt\pi\,\Gamma(s-1/2)}{\Gamma(s)}
    \int_0^\infty y^{s-3/2+ir}(1+y)^{1-2s}\dd y\\
  &=\frac{\pi\,\Gamma(s-1/2+ir)\Gamma(s-1/2-ir)}{\Gamma(s)^2}.
\end{align*}
The second equality is the beta integral followed by gamma
duplication. For $\operatorname{Re}s>1$, this transform is even,
holomorphic in a strip wider than $|\operatorname{Im}r|\le1/2$,
and rapidly decreasing there, as required by the pre-trace theorem.
Write $\operatorname{Eis}(z,s)=\mathcal P_{0,0}(z;s)$
for the Eisenstein series. The pre-trace formula is
\begin{equation}
  \begin{aligned}
  A_s(z,w)={}&\frac{h_s(i/2)}{\operatorname{vol}(\Gamma\backslash\HH)}
     +\sum_\ell h_s(r_\ell)U_\ell(z)\overline{U_\ell(w)}\\
   &+\frac1{4\pi}\int_\R h_s(r)
       \operatorname{Eis}(z,1/2+ir)
       \overline{\operatorname{Eis}(w,1/2+ir)}\dd r,
  \end{aligned}
  \label{eq:inverse-pretrace}
\end{equation}
where $U_\ell$ has Laplace eigenvalue $1/4+r_\ell^2$.

For completeness, no cuspidal or residual eigenvalue in the
complementary range contributes to this formula. To see the
cuspidal assertion directly, let $\mathcal F$ be the standard
fundamental domain and put
$\mathcal S=\{|x|\le1/2,\ y\ge\sqrt3/2\}$.
Inversion maps $\mathcal S\setminus\mathcal F$ injectively into
$\mathcal F$. Conformal invariance of the Dirichlet integral
therefore bounds the energy on $\mathcal S$ by twice that on
$\mathcal F$. A cusp form $U$ has zero horizontal mean, so the
periodic Poincare inequality gives
\[
  2\lambda\|U\|_{L^2(\mathcal F)}^2
    \ge\int_{\mathcal S}|\partial_xU|^2\dd x\dd y
    \ge4\pi^2\int_{\mathcal S}|U|^2\dd x\dd y
    \ge3\pi^2\|U\|_{L^2(\mathcal F)}^2.
\]
Thus $\lambda\ge3\pi^2/2>1/4$. The scattering coefficient
$\varphi(s)=\zeta^*(2s-1)/\zeta^*(2s)$, with
$\zeta^*(s)=\pi^{-s/2}\Gamma(s/2)\zeta(s)$, has no pole for
$1/2<s<1$; the only residual term is the constant function.

Continuing \eqref{eq:inverse-pretrace} meromorphically in $s$,
the transform and the constant contribution at $s=1/2$ are
\begin{equation}
  h_{1/2}(r)=\frac{\pi}{r\sinh(\pi r)}>0\quad(r\in\R\setminus\{0\}),
  \qquad
  \frac{h_{1/2}(i/2)}{\pi/3}=-6.
  \label{eq:inverse-positive-transform}
\end{equation}
The continuous integral has a finite limit at $r=0$.
Indeed, $\operatorname{Eis}(z,1/2)=0$ and hence
$\operatorname{Eis}(z,1/2+ir)=O(r)$ locally uniformly in $z$.
For $s=1/2+\varepsilon$ the product of its two factors cancels
the possible bound $C/(\varepsilon^2+r^2)$ from $h_s(r)$.
At large $|r|$, Stirling's formula gives exponential decay,
which dominates the local spectral bounds in the pre-trace theorem.
The continued formula consequently shows that $A_{1/2}(z,w)+6$
is a positive kernel: testing it against finitely many spatial
points, or compactly supported spatial functions, gives a sum
and an integral of absolute squares with nonnegative weights.

\paragraph{Fourier--Laplace testing and the scalar terms.}
We transfer this spatial positivity to the energy kernel using
ordinary convergent integrals. Periodization in $x'$ followed by
Poisson summation gives, initially for $\operatorname{Re}s>1$,
\begin{equation}
  \begin{aligned}
  A_s(z,w)
    &=C_s(y')^s\sum_{J'\in\Z}e^{-2\pi iJ'x'}
      \int_{|J'|}^\infty
        (\Ehat'^2-J'^2)^{s-1}e^{-2\pi y'\Ehat'}
        \mathcal P_{\Ehat',J'}(z;s)\dd\Ehat',\\
  C_s&=\frac{2^{2s-1}\pi^{2s}}{\Gamma(s)^2},\qquad C_{1/2}=1.
  \end{aligned}
  \label{eq:inverse-pair-Poisson}
\end{equation}
The zero horizontal Fourier coefficient before the coset sum is
$4^{s-1}\sqrt\pi\,\Gamma(s-1/2)(yy')^s
(y+y')^{1-2s}/\Gamma(s)$; the gamma integral on the right gives
exactly this coefficient. The nonzero coefficients follow from the
same Fourier integral using the $K$-Bessel representation.

Fix the input Fourier index before continuing
\eqref{eq:inverse-pair-Poisson}. When $J'\ne0$, the lower-end
factor has exponent $s-1$ and remains integrable near $s=1/2$.
The exponential controls large $\Ehat'$: the seed subtraction
\eqref{eq:canonical-energy-subtraction} bounds the continued
Poincare function by $C(1+\Ehat')$ on compact spatial sets.
When $J'=0$, split off $\operatorname{Eis}(z,s)$ instead.
The difference is $O(\Ehat')$ at zero, and the part split off
integrates to
\[
  \operatorname{Eis}(z,s)\Gamma(2s-1)(2\pi y')^{1-2s}.
\]
Its limit is $\tfrac12\partial_s\operatorname{Eis}(z,1/2)$.
Thus the scalar input Fourier coefficient of
$A_{1/2}(z,w)/\sqrt{y'}$ is
\begin{equation}
  \frac12\partial_s\operatorname{Eis}(z,1/2)
    +\int_0^\infty e^{-2\pi y'\Ehat'}
       \mathcal P_{\Ehat',0}(z)\frac{\dd\Ehat'}{\Ehat'}.
  \label{eq:inverse-scalar-axis}
\end{equation}
The first term must be retained until the input variable is tested.

Choose a finite set of spins and finite exponential sums
\begin{equation}
  p_J(\Ehat)=\sum_\ell\alpha_{J\ell}e^{-2\pi y_\ell\Ehat},
  \qquad y_\ell>0,\qquad
  p_0(0)=\sum_\ell\alpha_{0\ell}=0.
  \label{eq:inverse-exponential-tests}
\end{equation}
Define the Fourier matrix of the positive spatial kernel by
\[
  \mathcal B_{J,J'}(y,y')=\frac1{\sqrt{yy'}}
    \int_0^1\!\int_0^1 e^{-2\pi iJx}e^{2\pi iJ'x'}
       [A_{1/2}(x+iy,x'+iy')+6]\dd x\dd x'.
\]
Testing \eqref{eq:inverse-pair-Poisson} first in the input variable
and then using \eqref{eq:canonical-unit-laplace} in the output
variable gives
\begin{equation}
  \begin{aligned}
  0\le{}&\sum_{J,J',\ell,m}
     \overline{\alpha_{J\ell}}\alpha_{J'm}
       \mathcal B_{J,J'}(y_\ell,y_m)\\
   ={}&\sum_J\int_{|J|}^\infty|p_J(\Ehat)|^2\dd\omega_J(\Ehat)\\
    &+\sum_{J,J'}\int_{|J|}^\infty\int_{|J'|}^\infty
       \overline{p_J(\Ehat)}R_{J,J'}(\Ehat,\Ehat')p_{J'}(\Ehat')
                  \dd\omega_{J'}(\Ehat')\dd\omega_J(\Ehat).
  \end{aligned}
  \label{eq:inverse-tested-identity}
\end{equation}
Here the input condition $p_0(0)=0$ removes the first term in
\eqref{eq:inverse-scalar-axis}. The same condition on the output
test removes the scalar threshold in the unit-seed measure.
The direct seed gives the single integral in
\eqref{eq:inverse-tested-identity}, and its induced continuum
gives the double integral with kernel $Q$.
Finally, the spatial constant $6$ gives the rank-one term with
coefficient $12$, since
\[
  \int_0^\infty\sqrt{\Ehat}\,e^{-2\pi y\Ehat}\dd\omega_0
     =\frac1{\sqrt{2y}}.
\]
This accounts for all terms in $R=Q+12g\otimes g$.

Every integral in this argument is absolutely convergent.
For a fixed finite spin set $S$, Appendix~\ref{app:canonical-arithmetic}
gives $|R_{J,J'}(\Ehat,\Ehat')|\le C_S\sqrt{\Ehat\Ehat'}$
on the full physical energy ranges. The higher-order estimate is
global, while the finitely many arithmetic constants are absorbed
using $\Ehat\Ehat'\ge|JJ'|\ge1$ when they occur.
The scalar test vanishes linearly at zero, and all tests decay
exponentially at infinity. Thus the tested identity uses thermal
integration of the canonical measure, without taking termwise
inverse transforms of the spatial spectral expansion.

\paragraph{Passage to the band Hilbert space.}
For fixed $S$, the quadratic form on the right of
\eqref{eq:inverse-tested-identity} is continuous in the norm
\begin{equation}
  \|f\|_{X_S}
   =\left(\sum_{J\in S}\int_{|J|}^\infty|f_J|^2\dd\omega_J\right)^{1/2}
      +\sum_{J\in S}\int_{|J|}^\infty\sqrt{\Ehat}|f_J|\dd\omega_J.
  \label{eq:inverse-test-norm}
\end{equation}
Take a smooth function compactly supported in each open band row
and extend it by zero to the full energy range. In the scalar row,
divide it by $e^{-\Ehat}(1-e^{-\Ehat})$ and regard the quotient
as a continuous function of $e^{-\Ehat}\in[0,1]$, extended by
zero at both ends. Uniform polynomial approximation and
multiplication by $e^{-\Ehat}(1-e^{-\Ehat})$ give exponential
sums satisfying $p_0(0)=0$. Their errors are bounded by
$\epsilon e^{-\Ehat}(1-e^{-\Ehat})$, which tends to zero in
both terms of \eqref{eq:inverse-test-norm}. For a nonzero-spin
row, divide by $e^{-\Ehat}$ instead. The resulting error
$\epsilon e^{-\Ehat}$ is integrable in the same norms, since
the nonzero-spin edge of $\dd\omega_J$ is integrable.

Consequently \eqref{eq:inverse-tested-identity} implies
$\langle f,P_Bf\rangle\ge0$ on this smooth dense core.
The approximating exponential sums need not be supported in the
band; convergence in $X_S$ is taken before the limiting compact
test is restricted. Boundedness of $P_B$ then extends the result
to every $f\in\mathcal H_B$, for every $B>0$.
The vectors of cuspidal and continuous spectral factors in
\eqref{eq:inverse-tested-identity} are Cauchy under this approximation:
their squared distance is the same quadratic form evaluated on
the difference of the tests. Taking their Hilbert-space limit
gives the sum-and-integral-of-squares representation of the band
quadratic form used in Section~\ref{sec:band-inverse}.

\subsection{Strict positivity and the Hilbert-space inverse}
\label{app:strict-band-inverse}

The preceding nonnegativity also holds on larger bands. We use it
to control an analytic response on $[2B,3B]$, and then propagate
that control to the original band. The following elementary
estimate keeps the dependence on $B$ quantitative.

\begin{lemma}[Propagation from a separated interval]
\label{lem:inverse-analytic-propagation}
If $F$ is holomorphic on $|z|<1000$ and $|F(z)|\le M$ there, then
\begin{equation}
  \sup_{0\le t\le1}|F(t)|
    \le14M^{2/3}
       \left(\int_{\sqrt2}^{\sqrt3}|F(t)|^2\dd t\right)^{1/6}.
  \label{eq:inverse-analytic-propagation}
\end{equation}
\end{lemma}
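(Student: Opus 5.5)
The plan has two steps. First, turn the $L^2$ smallness on $[\sqrt2,\sqrt3]$ into sup-smallness on that interval. Second, propagate it to $[0,1]$ with a two-constants (harmonic-measure) estimate in the disk $|z|<1000$, using an exponent at least $1/2$. Write $I=\int_{\sqrt2}^{\sqrt3}|F(t)|^2\dd t$ and $E=[\sqrt2,\sqrt3]$, an interval of length $\ell=\sqrt3-\sqrt2\approx0.318$. We may assume $F\not\equiv0$. The exponents work out as follows: the first step gives $s:=\sup_E|F|\lesssim(MI)^{1/3}$, and the second gives $\sup_{[0,1]}|F|\lesssim M^{1/2}s^{1/2}$. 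Together these yield $M^{1/2}(MI)^{1/6}=M^{2/3}I^{1/6}$, which is the claimed bound.

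\emph{Step 1 (from $L^2$ to sup on $E$).} Cauchy's estimate on the disk of radius $998$ about any point of $|z|\le2$ gives $|F'|\le M/998\le M$ on $[0,2]$. Let $s=|F(t_0)|$ with $t_0\in E$. Then $|F|\ge s/2$ on the part of $E$ within distance $s/(2M)$ of $t_0$. Since $s\le M$, this part has length at least $\min\{s/(2M),\ell\}\ge c_1 s/M$ for an explicit $c_1$, because $\ell$ exceeds $1/4$. Hence $I\ge (s^2/4)\,c_1s/M$, that is, $s\le (4MI/c_1)^{1/3}$. The constant here is absolute and harmless.

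\emph{Step 2 (two constants).} The function $u=\log|F|$ is subharmonic on $\Omega=\{|z|<R\}\setminus E$ with $R=1000$ (or $R=999$, shrinking slightly to ensure continuity up to the boundary). We have $u\le\log M$ on the circle and $u\le\log s$ on $E$. I will build an explicit harmonic minorant of the harmonic measure $\omega$ of $E$ in $\Omega$ using the exterior conformal map. Let $c=(\sqrt2+\sqrt3)/2$, and let $\varphi$ be the Joukowski-type map of $\widehat{\mathbb C}\setminus E$ onto $\{|w|>1\}$, given by $z-c=\tfrac{\ell}{4}(w+w^{-1})$. Then $v=\log|\varphi|$ is harmonic on $\mathbb C\setminus E$, vanishes on $E$, and satisfies $|\varphi(z)|\le 8|z-c|/\ell$ (up to a harmless additive constant). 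Put $V=\max_{|z|=R}v\le\log(8(R+2)/\ell)$. The function $1-v/V$ is harmonic on $\Omega$, equals $1$ on $E$ and is at most $0$ on the circle, so by the maximum principle $\omega\ge 1-v/V$. For $z\in[0,1]$ we have $|z-c|\le c\le1.6$, hence $v(z)\le\log(8\cdot1.6/\ell)\approx\log 40$. Meanwhile $V\approx\log(2.5\cdot10^4)$. So $\omega(z)\ge 1-\log40/\log(2.5\cdot10^4)>1/2$ on $[0,1]$.

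The two-constants theorem then gives $u\le\omega\log s+(1-\omega)\log M\le\tfrac12\log s+\tfrac12\log M$ on $[0,1]$, where the last inequality uses $s\le M$ and $\omega\ge1/2$. Therefore $\sup_{[0,1]}|F|\le (Ms)^{1/2}\le (4/c_1)^{1/6}M^{2/3}I^{1/6}$.

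The main obstacle is bookkeeping in the numerical constants, namely checking that $(4/c_1)^{1/6}\le14$ and that the harmonic-measure lower bound really exceeds $1/2$ uniformly on $[0,1]$. This includes the endpoint $z=0$ and the small additive error in $|\varphi(z)|\le 8|z-c|/\ell$. I would verify both with crude explicit inequalities, and the large radius $1000$ leaves ample room for this.
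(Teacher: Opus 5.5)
Your route is genuinely different from the paper's and it can be made to work, but the harmonic-measure comparison in Step~2 is written in the wrong direction. With $V=\max_{|z|=R}v$ you have $v\le V$ on the circle, so $1-v/V\ge0$ there, not $\le0$. The maximum principle therefore gives no lower bound on $\omega$. Your numerical step also substitutes an upper bound for $V$ into $1-v/V$, which can only bound that quantity from above.

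The fix is to take $V:=\min_{|z|=R}v$ and bound $|\varphi|$ from below on the circle. Since $|w|>1$ and $|w+w^{-1}|=4|z-c|/\ell$, the inequalities $|w|-|w|^{-1}\le|w+w^{-1}|\le|w|+|w|^{-1}$ give $\bigl|\,|\varphi(z)|-4|z-c|/\ell\,\bigr|<1$. With $R=999$ this yields $V\ge\log\bigl(4(R-c)/\ell-1\bigr)>9.4$. On $[0,1]$ it yields $v<\log(4c/\ell+1)<\log 21<3.1$. Now $1-v/V$ is $1$ on $E$ and $\le0$ on the circle, so $\omega\ge1-3.1/9.4>1/2$ on $[0,1]$. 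The rest of your argument then stands: Step~1 is correct, and the final constant is $16^{1/6}\approx1.6\le14$.

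The paper avoids potential theory altogether. It takes the degree-$N$ Taylor polynomial of $F$ at $0$, whose error on $|z|\le2$ is $2M\,500^{-(N+1)}$ by Cauchy. It controls that polynomial on $[0,1]$ by its $L^2$ norm on $[\sqrt2,\sqrt3]$, using an orthonormal Legendre expansion and the bound $21^n$ for Legendre polynomials on $[-10,10]$; this costs a factor $2\cdot42^N$. Choosing $N\approx\log(M/\delta)/\log500$ then gives the H\"older exponent $\log(500/42)/\log500>1/3$ directly in $\delta=\|F\|_{L^2}$. The paper's argument is purely elementary and uses the $L^2$ norm without a separate $L^2\to\sup$ conversion.

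Your version splits the loss differently: a cube-root $L^2\to\sup$ step via the derivative bound, followed by a two-constants step using the explicit exterior Green's function of the slit. It requires the maximum principle on a slit domain, but it gives a much better constant. It also shows that the exponent is simply a harmonic measure, which depends on the outer radius only through $\log R$.
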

\begin{proof}
Put $I=[\sqrt2,\sqrt3]$ and $\delta=\|F\|_{L^2(I)}$.
The affine map from $I$ to $[-1,1]$ sends $[0,1]$ into
$[-10,10]$. The Legendre recurrence bounds the degree-$n$
Legendre polynomial there by $21^n$. Expanding a degree-$N$
polynomial in the orthonormal Legendre basis on $I$ and applying
Cauchy--Schwarz yields
\[
  \|p\|_{L^\infty([0,1])}\le2\cdot42^N\|p\|_{L^2(I)}.
\]
The Taylor polynomial $p_N$ of $F$ at zero has error at most
$2M500^{-(N+1)}$ on $|z|\le2$, by Cauchy's inequality.
Combining the two bounds gives
\[
  \|F\|_{L^\infty([0,1])}
     \le2\cdot42^N\delta+M(42/500)^N.
\]
For $0<\delta<M$, choose
$N=\lfloor\log(M/\delta)/\log500\rfloor$. With
$\vartheta=\log(500/42)/\log500>1/3$, the right side is at most
$(2+500/42)M(\delta/M)^\vartheta
\le14M^{2/3}\delta^{1/3}$.
If $\delta=0$, use the identity theorem; if $\delta\ge M$, the
claimed bound follows immediately.
\end{proof}

\paragraph{Exponential lower bound.}
Let $B\ge2$, $\|f\|_{\mathcal H_B}=1$, and
$\delta_f=\langle f,P_Bf\rangle\ge0$.
For each $|J|<B$ define the entire function
\begin{equation}
  F_J(z)=\sum_{|J'|<B}\int_{|J'|}^B
     Q_{J,J'}(Bz^2,\Ehat')f_{J'}(\Ehat')\dd\omega_{J'}
       +12\delta_{J,0}\sqrt B\,z\,\langle g,f\rangle_{\mathcal H_B}.
  \label{eq:inverse-analytic-response}
\end{equation}
The scalar function satisfies $F_0(0)=0$. Set $H_0(z)=F_0(z)/z$
with its removable value at zero, and $H_J=F_J$ for $J\ne0$.
On any fixed complex disk, the cosine power series in
\eqref{eq:repair-Q-kernel} give
$|Q_{J,J'}(Bz^2,\Ehat')|\le CB\Ehat'e^{CB}$.
Indeed, each full bracket is bounded by
$CB\Ehat'm^{-2}e^{CB/m}$; the arithmetic term has the weaker
but sufficient bound $C|JJ'|\le CB\Ehat'$.
Cauchy--Schwarz and \eqref{eq:inverse-band-integral} imply
\[
  \sum_{|J'|<B}\int_{|J'|}^B\Ehat'|f_{J'}|\dd\omega_{J'}
     \le CB^{3/2},\qquad
  |\langle g,f\rangle_{\mathcal H_B}|\le\sqrt B.
\]
Cauchy's inequality on a slightly larger disk controls the
division by $z$ in $H_0$. Absorbing polynomial factors of $B$,
we obtain $\sup_{|z|<1000}|H_J(z)|\le e^{C_1B}$ uniformly in
the old spin rows.

Extend $f$ by zero to $\mathcal H_{3B}$. Positivity and
\eqref{eq:inverse-Hilbert-Schmidt} give
\begin{equation}
  \|P_{3B}f\|_{\mathcal H_{3B}}^2
     \le\|P_{3B}\|\langle f,P_{3B}f\rangle
     \le CB^3\delta_f.
  \label{eq:inverse-exterior-L2}
\end{equation}
On $2B<\Ehat<3B$ in every old row the direct function vanishes,
so this is a bound on $F_J(\sqrt{\Ehat/B})$.
Under $\Ehat=Bt^2$, one has
$\dd\omega_J\ge2\dd t/\sqrt3$ for $t\in[\sqrt2,\sqrt3]$.
It follows that
\[
  \sum_{|J|<B}\|H_J\|_{L^2([\sqrt2,\sqrt3])}^2
       \le CB^3\delta_f.
\]
Apply Lemma~\ref{lem:inverse-analytic-propagation} to these
functions. For nonzero spin the reference measure of the low
row is at most $\log(2B)$. For scalar spin, use instead
\[
  \int_0^B|F_0(\sqrt{\Ehat/B})|^2\frac{\dd\Ehat}{\Ehat}
      =2\int_0^1t|H_0(t)|^2\dd t.
\]
There are at most $2B+1$ rows, and hence
\begin{equation}
  \|R_Bf\|_{\mathcal H_B}\le e^{C_2B}\delta_f^{1/6},\qquad
  \|P_Bf\|_{\mathcal H_B}\le CB^{3/2}\delta_f^{1/2}.
  \label{eq:inverse-propagated-bound}
\end{equation}
Since $f=P_Bf-R_Bf$ has unit norm, these inequalities force
$\delta_f\ge e^{-CB}$ after increasing $C$.
The spectral theorem for bounded self-adjoint operators now
proves both assertions of \eqref{eq:band-coercivity} for $B\ge2$.

\paragraph{Small cutoffs and an inverse series.}
For $0<B<2$, let $E_B:\mathcal H_B\to\mathcal H_2$ be extension
by zero. The kernels agree on their common band, so
\[
  P_B=E_B^*P_2E_B,\qquad
  \langle f,P_Bf\rangle
      =\langle E_Bf,P_2E_Bf\rangle
      \ge\|P_2^{-1}\|^{-1}\|f\|_{\mathcal H_B}^2.
\]
This proves invertibility of the actual $P_B$ and completes
Proposition~\ref{prop:controlled-band-inverse}.
Only the quadratic form is compressed in this argument.

For every $B>0$ an explicit admissible scale is
\begin{equation}
  \lambda_B=2+\|R_B\|_{\mathrm{HS}}>
            1+\|R_B\|\ge\|P_B\|.
  \label{eq:inverse-series-scale}
\end{equation}
If $c_B>0$ denotes the established lower bound for $P_B$, its
spectrum lies in $[c_B,\|P_B\|]$, and therefore
$\|I-P_B/\lambda_B\|\le1-c_B/\lambda_B<1$.
The geometric series gives \eqref{eq:band-inverse-series} in
operator norm on $\mathcal H_B$.

\subsection{The inverse on finite measures}
\label{app:finite-measure-inverse}

We now pass from Hilbert-space functions to finite signed measures,
including atoms at band and spin edges. This requires an integrability
argument in addition to the Hilbert inverse. For $B\ge2$, the kernel
bound gives
\begin{equation}
  \begin{aligned}
  \|(R\nu)_{\le B}\|_{\mathcal H_B}
    &\le CB^{3/2}\log(2+B)\|\nu\|_{\TV},\\
  \|\mathscr R_B\nu\|_{\TV}
    &\le CB^2\log(2+B)\|\nu\|_{\TV},\\
  \|(R_Bf)\dd\omega\|_{\TV}
    &\le CB^{5/2}\log(2+B)\|f\|_{\mathcal H_B}.
  \end{aligned}
  \label{eq:inverse-measure-mapping}
\end{equation}
To verify these estimates, use
$\sum_{J'}\int\sqrt{\Ehat'}\dd|\nu_{J'}|
\le\sqrt B\|\nu\|_{\TV}$ and
\eqref{eq:inverse-band-integral}. The remaining output integral
satisfies
\begin{equation}
  \sum_{|J|<B}\int_{|J|}^B\sqrt{\Ehat}\dd\omega_J
       \le CB^{3/2}\qquad(B\ge2).
  \label{eq:inverse-band-L1-integral}
\end{equation}
Indeed, its scalar contribution is $2\sqrt B$, and setting
$\Ehat=|J|+s$ bounds each other row by
$\int_0^{B-|J|}s^{-1/2}\dd s\le2\sqrt B$.
There are $O(B)$ rows. For $0<B<2$, extending the input and
enlarging only these estimating integrals to the band $2$ gives
finite bounds of the same mapping types.

Given $\nu\in\mathcal M_B$, define
$u=P_B^{-1}(R\nu)_{\le B}\in\mathcal H_B$.
Its Hilbert-space equation yields
\begin{equation}
  u\dd\omega=\mathscr R_B\nu-(R_Bu)\dd\omega,\qquad
  \|u\dd\omega\|_{\TV}\le C_B\|\nu\|_{\TV}.
  \label{eq:inverse-integrable-density}
\end{equation}
Both terms on the right have finite total variation by
\eqref{eq:inverse-measure-mapping}. For $B\ge2$, the exponential
Hilbert inverse bound absorbs all its polynomial factors, so
$C_B\le e^{CB}$. This proves integrability of $u\dd\omega$
even in the scalar row, where square integrability of $u$ alone
would not suffice.

Set $w=\nu-u\dd\omega$. Substitution gives
$w+\mathscr R_Bw=\nu$, proving that
$\mathcal I_B[\nu]=w$ is a right inverse on $\mathcal M_B$
with the bound \eqref{eq:inverse-measure-bound}.
If $w+\mathscr R_Bw=0$, the first mapping estimate implies
$w=f\dd\omega$ for some $f\in\mathcal H_B$. Its equation is
$P_Bf=0$, so $w=0$. This proves uniqueness and hence the
two-sided inverse identity on finite measures.

In each row the subtracted measure has an integrable density
$u_J(\Ehat)/\sqrt{\Ehat^2-J^2}$ with respect to $\dd\Ehat$.
Thus the inverse preserves the entire part of $\nu$ supported
on sets of zero Lebesgue measure, including all its atoms.
This also covers atoms at $\Ehat=B$ and at $\Ehat=|J|>0$.
If $|J|=B$ is integral, that row has no open Hilbert interval;
its input atom still contributes to $R\nu$ in the other rows
and remains in the direct measure $w$.
The origin atom is excluded from $\mathcal M_B$ and is handled
by the anchor below.

For later estimates, the threshold functional is bounded on
$\mathcal M_B$:
\begin{equation}
  |t(\nu)|\le\max_{|J|\le B}|c_J|\,\|\nu\|_{\TV},\qquad
  \max_{|J|\le B}|c_J|\le4\sqrt B\quad(B\ge2).
  \label{eq:inverse-threshold-functional}
\end{equation}
The second bound follows from $d(n)\le2\sqrt n$ and $c_0=-1$.
Only boundedness on the finite-measure space is used here;
$t$ need not be a bounded functional on all of $\mathcal H_B$.

\subsection{The threshold anchor}
\label{app:threshold-anchor}

\paragraph{Analyticity of the residual threshold.}
For a complex variable $z$, let
\begin{equation}
  \eta_B(z)=
    [R_{\cdot,0}(\cdot,Bz^2)\dd\omega]_{\le B},\qquad
  G_B(z)=-1-t\bigl(\mathcal I_B[\eta_B(z)]\bigr).
  \label{eq:inverse-anchor-analytic-function}
\end{equation}
In the rank-one term we take $\sqrt{Bz^2}=\sqrt B\,z$.
The spaces and linear maps are extended to complex measures in
this formula. For real $z>0$, it agrees with
$G_B(z)=\zeta_B(Bz^2,0)$ from \eqref{eq:anchor-response}.

The map $\eta_B$ is entire in the total-variation norm.
The scalar-input arithmetic term is zero. On each fixed complex
disk the higher-order kernel is bounded by $C_{B}\Ehat$,
and the rank-one term by $C_B\sqrt{\Ehat}$.
These bounds are integrable on every output row, including
$\dd\omega_0=\dd\Ehat/\Ehat$. The cosine series and its
derivatives converge uniformly on smaller disks under these
integrable bounds. Hence they define an entire finite-measure
map. Its value at zero is the zero measure.
Boundedness of $\mathcal I_B$ and $t$ then proves that $G_B$
is entire and
\begin{equation}
  G_B(0)=-1,\qquad
  \sup_{|z|<1000}|G_B(z)|\le e^{CB}\quad(B\ge2).
  \label{eq:inverse-anchor-disk-bound}
\end{equation}
For the second estimate, the same cosine bounds used in
\eqref{eq:inverse-analytic-response} give
$|Q_{J,0}(\Ehat,Bz^2)|\le CB\Ehat e^{CB}$.
Its low-band total variation is polynomial in $B$ times $e^{CB}$;
the rank-one term obeys the same bound. Apply
\eqref{eq:inverse-measure-bound} and
\eqref{eq:inverse-threshold-functional} to obtain
\eqref{eq:inverse-anchor-disk-bound}.
At $z=0$ the inverse is applied only to the zero measure, not
to an origin atom.

\paragraph{Normalization on the actual remote interval.}
Under $\Ehat'=Bt^2$, the normalization in
\eqref{eq:anchor-normalization-integral} is
\begin{equation}
  N_B=2\int_{\sqrt2}^{\sqrt3}G_B(t)^2\frac{\dd t}{t}>0,
  \qquad N_B\ge e^{-CB}\quad(B\ge2).
  \label{eq:inverse-anchor-lower-bound}
\end{equation}
The strict inequality for every $B>0$ follows from the identity
theorem and $G_B(0)=-1$, since $G_B$ is real on the positive
real axis. The quantitative lower bound follows by applying
Lemma~\ref{lem:inverse-analytic-propagation} to $G_B$ and using
\eqref{eq:inverse-anchor-disk-bound}. This uses the interval
$[2B,3B]$ of the actual cutoff, also when $B<2$.

Define $\psi_B$, $v_B$ and $\theta_B$ by
\eqref{eq:anchor-seed-construction}. The normalized seed satisfies
\begin{equation}
  \|\psi_B\|_{\TV}
       \le\left(\frac{\log(3/2)}{N_B}\right)^{1/2},\qquad
  \|\rho_B^\psi\|_\infty
       \le\frac{\sup_{\sqrt2\le t\le\sqrt3}|G_B(t)|}{N_B}.
  \label{eq:inverse-anchor-seed-bound}
\end{equation}
The first inequality is Cauchy--Schwarz. Kernel integration
against this bounded-support seed gives a finite measure
$\mathscr R_B\psi_B$. The inverse bound therefore controls
$v_B=\mathcal I_B[\mathscr R_B\psi_B]$.
Equations \eqref{eq:inverse-anchor-disk-bound}--\eqref{eq:inverse-anchor-seed-bound}
bound both measures and $\rho_B^\psi$ by $e^{CB}$ for $B\ge2$.
They also give finite constants for every fixed $B>0$.
Since $\theta_B$ has energies at most $3B$,
\[
  \sum_{J'}\int\sqrt{\Ehat'}\dd|\theta_{B,J'}|
       \le\sqrt{3B}\|\theta_B\|_{\TV}.
\]
This proves all terms of \eqref{eq:anchor-size-bound}, with
the stated fixed-$B$ and exponential large-$B$ dependence.

\paragraph{The exact anchor identities.}
The inverse equation for $v_B$ gives
$v_B+\mathscr R_Bv_B=\mathscr R_B\psi_B$ and hence
\eqref{eq:anchor-band-cancellation}.
To compute the threshold, the integrable kernel bounds allow
integration of the finite measures in \eqref{eq:anchor-response}
against $\psi_B$. Applying the bounded maps $\mathcal I_B$ and
$t$ to this integral proves \eqref{eq:anchor-response-identity}.
The chosen density of $\psi_B$ then gives
\[
  t(\theta_B)=\frac1{N_B}\int_{2B}^{3B}
       \zeta_B(\Ehat',0)^2\frac{\dd\Ehat'}{\Ehat'}=1.
\]
Both $\psi_B$ and $\mathscr R_B\psi_B$ have integrable energy
densities, and the finite-measure inverse preserves this property.
Consequently $\theta_B$ has no atoms. Its completion
$\mathcal A_B=\mathscr C[\theta_B]$ has a unit scalar threshold,
zero positive-energy measure through $B$, and only energy
densities above $B$. In particular, the density above $B$
includes the direct remote seed as well as the induced response.

\subsection{Complete repair and exterior estimates}
\label{app:complete-repair-estimates}

\paragraph{Exact band data and thermal convergence.}
For $\nu\in\mathcal M_B$ let $w=\mathcal I_B[\nu]$ and
$\sigma_B[\nu]=w-t(w)\theta_B$.
The inverse and anchor identities give
\[
  [\sigma_B[\nu]+(R\sigma_B[\nu])\dd\omega]_{\le B}=\nu,
  \qquad t(\sigma_B[\nu])=0.
\]
The finite-seed completion is modular invariant by
Appendix~\ref{app:canonical-finite-seeds} and has only uncensored
support. The inverse subtracts an energy density and the anchor
seed has an energy density, so the entire singular part of the
complete repair is that of $\nu$. In particular, the repair
has no atoms above $B$. This proves
Proposition~\ref{prop:exact-local-repair}.

For the exterior bound, \eqref{eq:inverse-threshold-functional}
and \eqref{eq:anchor-size-bound} imply
\begin{equation}
  \sum_{J'}\int\sqrt{\Ehat'}\dd|\sigma_{B,J'}[\nu]|
       +|t(w)|\|\rho_B^\psi\|_\infty
       \le C_B\|\nu\|_{\TV}.
  \label{eq:inverse-complete-seed-bound}
\end{equation}
Apply \eqref{eq:repair-compact-kernel-bound} with input cutoff
$3B$ to the induced part of
\eqref{eq:complete-repair-exterior}. The direct term is bounded
by the second term of \eqref{eq:inverse-complete-seed-bound}.
Since $1\le B^{-1/2}\sqrt{\Ehat}$ for $\Ehat>B$, both satisfy
\eqref{eq:general-exterior-bound}, with $C_B\le e^{CB}$ for
$B\ge2$.
The all-spin integral from Appendix~\ref{app:canonical-finite-seeds}
then gives the more explicit thermal estimate
\begin{equation}
  \sum_J\int e^{-\beta\Ehat}
      \dd|\mu_{\Repair_B[\nu],J}|(\Ehat)
    \le\|\nu\|_{\TV}
       \left[1+C_B\sqrt{\frac\pi\beta}
         \left(1+\frac2{e^\beta-1}\right)\right],\qquad\beta>0.
  \label{eq:inverse-complete-thermal-bound}
\end{equation}
This proves \eqref{eq:repair-thermal-convergence} for each repair.

\paragraph{Cancellation in the full correcting measure.}
Suppose that $\nu$ is supported in one spin row $J'$ on $(L,V]$,
where $0<L<V<B$ and $|J'|\le L$, and satisfies
\eqref{eq:repair-zero-moments}. Its zeroth moment gives
$t(\nu)=0$, and its first moment removes the scalar rank-one
term in $R\nu$.
Set $v_\nu=\mathcal I_B[\mathscr R_B\nu]$.
The inverse equation gives $w=\nu-v_\nu$, and the full
correcting seed is \eqref{eq:moment-repair-decomposition}.
Above $B$ the resulting density is therefore
\begin{equation}
  \begin{aligned}
  r_{B,J}[\nu](\Ehat)={}&(R\nu)_J(\Ehat)-(Rv_\nu)_J(\Ehat)\\
     &+t(v_\nu)(R\theta_B)_J(\Ehat)
       +t(v_\nu)\delta_{J,0}\rho_B^\psi(\Ehat).
  \end{aligned}
  \label{eq:inverse-moment-full-response}
\end{equation}
The bounds already proved imply
\[
  \|v_\nu\|_{\TV}+|t(v_\nu)|
       \le C_B\|\mathscr R_B\nu\|_{\TV}.
\]
The last three terms in \eqref{eq:inverse-moment-full-response}
are thus bounded by
$C_B\sqrt{\Ehat}\|\mathscr R_B\nu\|_{\TV}$, including the
direct remote density. This proves
\eqref{eq:moment-exterior-bound} with $C_B\le e^{CB}$ for
$B\ge2$.

\paragraph{Polynomial approximation.}
Use the full moment interval
\[
  [u_-,u_+]=[\sqrt{L-|J'|},\sqrt{V-|J'|}],\qquad
  d=u_+-u_->0.
\]
For $\varpi\ge2$, its Bernstein ellipse has
boundary
\begin{equation}
  z=\frac{u_-+u_+}{2}+\frac d4(\xi+\xi^{-1}),\qquad
  |\xi|=\varpi.
  \label{eq:inverse-Bernstein-ellipse}
\end{equation}
Write $H_\varpi=d(\varpi-\varpi^{-1})/4$ for its height and
$W_\varpi=\max(|z|^2+2|J'|)$ on this ellipse.
For complex $z$ and real $A\ge0$,
\begin{equation}
  |\operatorname{Im}\sqrt{z^2+A}|\le|\operatorname{Im}z|,
  \qquad |\sqrt{z^2+A}|^2\le|z|^2+A.
  \label{eq:inverse-square-root-bounds}
\end{equation}
The second inequality is the triangle inequality. For the first,
write $z=p+it$ and use the formula for the squared imaginary
part of a square root; the required comparison after squaring
has nonnegative difference $4At^2$.
The sign of the root is immaterial.

After substituting $\Ehat'=|J'|+z^2$, the two input factors
$\Ehat'\pm J'$ are $z^2$ and $z^2+2|J'|$.
For a physical output $\Ehat\ge|J|$, the sum of the imaginary
parts of the two cosine arguments at denominator $m$ is bounded
in absolute value by $4\pi H_\varpi\sqrt{\Ehat}/m$; here
$\sqrt{\Ehat+J}+\sqrt{\Ehat-J}\le2\sqrt{\Ehat}$.
Their squared moduli sum to at most
$CW_\varpi\Ehat/m^2$.
Bounding the intact product-minus-one bracket and using
$|S(J,J';m)|/m\le1$, its sum is bounded by
\begin{equation}
  C\sum_{m\ge1}\min\left(1,\frac{W_\varpi\Ehat}{m^2}\right)
       e^{4\pi H_\varpi\sqrt{\Ehat}/m}
    \le C\sqrt{W_\varpi\Ehat}\,
       e^{4\pi H_\varpi\sqrt{\Ehat}}.
  \label{eq:inverse-complex-kernel-bound}
\end{equation}
The arithmetic constant cancels by the zeroth moment. The
rank-one term is either zero or linear in $u$, so it is removed
by the same moment conditions. Only the higher-order kernel
bounded in \eqref{eq:inverse-complex-kernel-bound} requires
approximation.

For a function bounded by $M$ on a Bernstein ellipse, Cauchy's
estimate after the substitution in
\eqref{eq:inverse-Bernstein-ellipse} bounds its degree-$r$
Chebyshev coefficient by $2M\varpi^{-r}$. Truncating at degree
$k$ therefore gives a uniform remainder at most
$2M\varpi^{-k}/(\varpi-1)$ on $[u_-,u_+]$.
Subtracting this polynomial inside the integral against $\nu$
as in \eqref{eq:kernel-moment-approximation} yields
\begin{equation}
  |(R\nu)_J(\Ehat)|
     \le C\|\nu\|_{\TV}\sqrt{W_\varpi}\,\varpi^{-k}
             \sqrt{\Ehat}\,e^{4\pi H_\varpi\sqrt{\Ehat}}.
  \label{eq:inverse-raw-moment-bound}
\end{equation}
This estimate is uniform in the output spin and applies both
inside and outside the band. It concerns the entire original
cell $(L,V]$; its closed hull is used only for polynomial
approximation. No exclusion of a left portion of the signed
input measure is made.

Integrate \eqref{eq:inverse-raw-moment-bound} over the low band
using \eqref{eq:inverse-band-L1-integral}, and then apply
\eqref{eq:moment-exterior-bound}. Absorbing the finite band
integral into $C_B$ gives
\begin{equation}
  |r_{B,J}[\nu](\Ehat)|
   \le C_B\|\nu\|_{\TV}\sqrt{W_\varpi}\,\varpi^{-k}
      e^{4\pi H_\varpi\sqrt B}
      \sqrt{\Ehat}\,e^{4\pi H_\varpi\sqrt{\Ehat}},
  \quad\Ehat>B,\quad\Ehat>|J|.
  \label{eq:inverse-complete-moment-bound}
\end{equation}
Here $C_B$ is independent of the cell, $k$, $\varpi$, and the
output variables, and obeys $C_B\le e^{CB}$ for $B\ge2$.
Thus the inverse and anchor costs and the ellipse growth are
kept separate in this bound.

\paragraph{The initial and tail estimates used in Appendix D.}
For an initial cell with $V\le\tau_0a$, choose
$\varpi=1+\varepsilon_{\mathrm{ell}}\sqrt a/d$ and
$\tau_0\le\varepsilon_{\mathrm{ell}}^2$.
Then $\varpi\ge2$,
$H_\varpi\le\varepsilon_{\mathrm{ell}}\sqrt a/2$, and
$W_\varpi\le C_{\varepsilon_{\mathrm{ell}}}a$.
With $\gamma=2\pi\varepsilon_{\mathrm{ell}}$,
\eqref{eq:inverse-complete-moment-bound} gives, for $B\ge2$,
\begin{equation}
  |r_{B,J}[\nu](\Ehat)|
    \le C\|\nu\|_{\TV}\sqrt a\,\varpi^{-k}
       e^{CB+\gamma\sqrt{aB}}
       \sqrt{\Ehat}\,e^{\gamma\sqrt{a\Ehat}}.
  \label{eq:inverse-initial-moment-bound}
\end{equation}
Here and in this specialization $C$ may depend on the fixed
$\varepsilon_{\mathrm{ell}}$.
For a tail cell with $d\le1$, choose $\varpi=2$.
Then $H_\varpi\le3/8$ and
$W_\varpi\le C(V+1)\le CB$ for $B\ge2$ and $V<B$.
Absorbing $e^{C\sqrt B}$ into $e^{CB}$ gives
\begin{equation}
  |r_{B,J}[\nu](\Ehat)|
    \le C\|\nu\|_{\TV}\sqrt B\,2^{-k}e^{CB}
           \sqrt{\Ehat}\,e^{C_1\sqrt{\Ehat}}.
  \label{eq:inverse-tail-moment-bound}
\end{equation}
These are the local estimates used in
\eqref{eq:global-long-completion} and
\eqref{eq:global-short-completion}. They hold for any finite
signed cell measure with the indicated vanishing moments.
Appendix~\ref{app:global-initial-errors} selects the initial
parameters so that \eqref{eq:inverse-initial-moment-bound}
fits the available error bound; Appendix~\ref{app:global-tail}
uses \eqref{eq:inverse-tail-moment-bound} for every admissible
tail step and sums the resulting errors.

\section{Integer moment matching}
\label{app:integer-moments}

This appendix proves Proposition~\ref{prop:construction-integer-matching}
and justifies the compact selector in
\eqref{eq:construction-node-set}. The polynomial estimates below also
supply the inputs used in
Appendices~\ref{app:global-initial-moments} and~\ref{app:global-tail}
to verify the moment condition for every cell. Throughout, $\Pi_k$
denotes the real polynomials of degree at most $k$, with $k\ge1$.
All moments belong to the original signed cell measure. Restricting
the allowed atomic positions leaves that functional unchanged.

\subsection{Polynomial inequalities}
\label{app:moment-polynomials}

Let $X=[\alpha,\alpha+d]$ with $d>0$. Write
$\|p\|_{\infty,X}=\max_X|p|$,
$\operatorname{osc}_Xp=\max_Xp-\min_Xp$, and
$\operatorname{Var}_Xp=\int_X|p'(u)|\dd u$.
Total variation is unchanged by an increasing affine
reparametrization of the interval. The constants $C$ below are
absolute, may increase between estimates, and are independent
of $k$, $d$ and the location of $X$.

\paragraph{Derivative and integral bounds.}
We will use the explicit derivative estimate
\begin{equation}
  \|p'\|_{\infty,X}
     \le\frac{D_k}{d}\operatorname{osc}_Xp
     \le\frac{D_k}{d}\operatorname{Var}_Xp,
  \qquad D_k=4(k+1)^3.
  \label{eq:moment-derivative-variation}
\end{equation}
To prove it, subtract $\min_Xp$ and map $X$ affinely to $[-1,1]$.
The resulting polynomial takes values between zero and
$\operatorname{osc}_Xp$. In its Chebyshev expansion every
nonconstant coefficient has magnitude at most
$2\operatorname{osc}_Xp$. Since
$T_r(\cos\theta)=\cos(r\theta)$ gives
$\|T_r'\|_{\infty,[-1,1]}=r^2$, rescaling the derivative back
to $X$ yields
\[
  \|p'\|_{\infty,X}
    \le\frac4d\operatorname{osc}_Xp\sum_{r=1}^k r^2
    \le\frac{4(k+1)^3}{d}\operatorname{osc}_Xp.
\]
The second inequality in \eqref{eq:moment-derivative-variation}
follows by integrating $|p'|$ between a minimum and a maximum.
Constants have zero derivative and satisfy the same statement.

If $p\ge0$ on $X$ and $S=\max_Xp$, the derivative bound gives
a one-sided interval of length $d/(2D_k)$ next to a maximum
on which $p\ge S/2$. Such a direction is always available,
since at least one endpoint is at distance $d/2$ from that
maximum. Integrating on this shorter interval gives
\begin{equation}
  \|p'\|_{\infty,X}\le\frac{D_k}{d}S,
  \qquad
  S\le\frac{4D_k}{d}\int_Xp(u)\dd u.
  \label{eq:moment-polynomial-integral}
\end{equation}
A nonconstant polynomial of degree at most $k$ has at most
$k$ monotonicity intervals, each contributing at most $S$ to
its variation. Therefore
\begin{equation}
  \operatorname{Var}_Xp
     \le kS
     \le\frac{C(k+1)^4}{d}\int_Xp(u)\dd u,
  \qquad p\ge0\text{ on }X.
  \label{eq:moment-integral-variation}
\end{equation}
For a constant polynomial the left side is zero.

\paragraph{A weighted bound at a spin opening.}
Suppose $\alpha\ge0$. The estimate needed when the reference
density vanishes quadratically at $u=0$ is
\begin{equation}
  \int_Xu^2p(u)\dd u
      \ge\frac{d^2}{C(k+1)^6}\int_Xp(u)\dd u,
  \qquad p\ge0\text{ on }X.
  \label{eq:moment-weighted-integral}
\end{equation}
Indeed, remove the leftmost segment of length
$\delta=d/(16D_k)$ only for this estimate.
Its contribution to $\int_Xp$ is at most
$\delta S\le\tfrac14\int_Xp$ by
\eqref{eq:moment-polynomial-integral}.
On the remaining interval $u\ge\alpha+\delta\ge\delta$,
so the weighted integral is at least
$\tfrac34\delta^2\int_Xp$.
This proves \eqref{eq:moment-weighted-integral}, uniformly
also when $\alpha=0$.

\paragraph{Extrapolation from the upper half.}
Let $X_+=[\alpha+d/2,\alpha+d]$ and let
$S_+=\max_{X_+}p$ for $p\ge0$ on $X$. Then
\begin{equation}
  \begin{aligned}
  \max_Xp&\le6^kS_+,\\
  \operatorname{Var}_Xp&\le(k+1)6^kS_+,\\
  \int_{X_+}p(u)\dd u&\ge\frac{d}{C(k+1)^3}S_+.
  \end{aligned}
  \label{eq:moment-upper-half-bounds}
\end{equation}
For the first inequality, map $X_+$ to $[-1,1]$; the full
interval maps to $[-3,1]$. If $S_+=0$, the polynomial vanishes
identically; otherwise normalize by $S_+$. If a degree-$k$ polynomial $q$ has
$|q|\le1$ on $[-1,1]$, then
$|q(x)|\le|T_k(x)|$ for $|x|>1$.
One can see this by contradiction: a violation at $x_0>1$
would make $T_k-\lambda q$, with
$\lambda=T_k(x_0)/q(x_0)$ and $|\lambda|<1$, alternate in
sign at the $k+1$ extrema of $T_k$ in $[-1,1]$ and also
vanish at $x_0$. It would have at least $k+1$ distinct zeros.
Reflection treats $x_0<-1$.
Thus the extrapolation factor is bounded by
$T_k(3)\le(3+\sqrt8)^k<6^k$.
The variation bound follows by summing over the monotonicity
intervals, and the last inequality is
\eqref{eq:moment-polynomial-integral} on the interval of
length $d/2$.
These are precisely the estimates used in
\eqref{eq:global-upper-half-polynomials};
\eqref{eq:moment-weighted-integral} and
\eqref{eq:moment-integral-variation} supply the two tail-cell
inequalities in Appendix~\ref{app:global-tail}.

\subsection{Positive representation and unit-weight designs}
\label{app:moment-designs}

\paragraph{A positive measure representing signed moments.}
The first step uses only finitely many moments. In particular,
it imposes no pointwise positivity condition on the original
cell density.

\begin{lemma}[Positive representation on an interval]
\label{lem:moment-positive-representation}
Let $X$ be a compact interval of positive length and let
$\Lambda:\Pi_k\to\R$ be linear. If $\Lambda(1)=M>0$ and
$\Lambda(p)\ge0$ for every $p\in\Pi_k$ nonnegative on $X$,
then a positive probability measure $\sigma$ on $X$ satisfies
\begin{equation}
  \int_Xp\dd\sigma=\frac{\Lambda(p)}M,
  \qquad p\in\Pi_k.
  \label{eq:moment-positive-representation}
\end{equation}
It can be chosen to have at most $k+1$ atoms.
\end{lemma}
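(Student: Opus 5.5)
We will prove Lemma~\ref{lem:moment-positive-representation} by a finite-dimensional convex separation argument followed by Carath\'eodory's theorem. Let $\gamma:X\to\R^k$ be the moment curve $\gamma(s)=(s,s^2,\ldots,s^k)$. Since $X$ is compact and $\gamma$ is continuous, the convex hull $K=\operatorname{conv}\gamma(X)$ is a compact convex subset of $\R^k$. Put $m=(\Lambda(s),\ldots,\Lambda(s^k))/M$. The aim is to show $m\in K$. A point of $K$ is a convex combination of points of $\gamma(X)$, so it gives an atomic probability measure $\sigma$ on $X$ with $\int s^r\dd\sigma=m_r$ for $1\le r\le k$. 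Normalization handles $r=0$, and linearity then gives \eqref{eq:moment-positive-representation} on all of $\Pi_k$. Carath\'eodory's theorem in $\R^k$ bounds the number of atoms by $k+1$.

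To show $m\in K$, suppose instead that $m\notin K$. Since $K$ is compact and convex, strict separation gives a vector $c\in\R^k$ and a real $\alpha$ with $\langle c,m\rangle<\alpha\le\langle c,\gamma(s)\rangle$ for every $s\in X$. Define $p(s)=\sum_{r=1}^k c_rs^r-\alpha\in\Pi_k$. Then $p\ge0$ on $X$, so the hypothesis gives $\Lambda(p)\ge0$. On the other hand, linearity and $\Lambda(1)=M$ give
\[
  \Lambda(p)=M\bigl(\langle c,m\rangle-\alpha\bigr)<0,
\]
using $M>0$. This contradiction shows $m\in K$.

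The only point needing care is compactness of $K$. Strict separation of a point from a closed convex set requires closedness, and in finite dimensions the convex hull of a compact set is compact: it is the image of the compact set $\gamma(X)^{k+1}\times\Delta_k$ under a continuous map, again by Carath\'eodory. This is where compactness of $X$ is used; for an open or half-open interval the hull could fail to be closed and $m$ might lie only on its boundary. We do not use that $X$ has positive length. That hypothesis matters only later, for the nondegeneracy in Kane's design theorem.
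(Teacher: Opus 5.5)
Your proposal is correct and follows essentially the same route as the paper: the moment curve $(s,\ldots,s^k)$, compactness of its convex hull via the Carath\'eodory reduction to the image of $X^{k+1}\times$simplex, and a separating hyperplane turned into a polynomial nonnegative on $X$ with $\Lambda(p)<0$. The only cosmetic difference is that the paper builds the separating functional explicitly from the closest point $x_*$ of the hull, obtaining $\Lambda(p)/M=-|v|^2$, where you quote the strict separation theorem.
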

\begin{proof}
Let $\Gamma(s)=(s,s^2,\ldots,s^k)$ and let $\mathcal K$ be its
convex hull for $s\in X$. A convex combination of more than
$k+1$ points can be reduced using an affine dependence, varying
the weights until one becomes zero while keeping all weights
nonnegative. Repetition leaves at most $k+1$ points. Hence
$\mathcal K$ is the continuous image of the compact product
of $X^{k+1}$ and the probability simplex, and is compact.

Put $y=(\Lambda(s),\ldots,\Lambda(s^k))/M$. If $y\notin\mathcal K$,
choose the closest point $x_*\in\mathcal K$ and set $v=y-x_*$.
Minimality of the distance gives $v\cdot(x-x_*)\le0$ for all
$x\in\mathcal K$. Consequently the polynomial
$p(s)=v\cdot(x_*-\Gamma(s))$ is nonnegative on $X$, whereas
$\Lambda(p)/M=v\cdot(x_*-y)=-|v|^2<0$.
This contradiction proves $y\in\mathcal K$.
Its convex weights define $\sigma$, and matching the constant
and the $k$ coordinate monomials proves
\eqref{eq:moment-positive-representation}.
\end{proof}

\paragraph{Moving the allowed positions away from the left endpoint.}
We now start with a functional on $[0,1]$ and keep it fixed
while restricting possible representing points to $[h,1]$.
The following estimate is the quantitative step needed for that
restriction.

\begin{lemma}[Positivity on the smaller interval]
\label{lem:moment-inward-positivity}
Suppose $\Lambda(1)=M>0$ and
\begin{equation}
  \Lambda(p)\ge\beta\operatorname{Var}_{[0,1]}p
  \quad(p\in\Pi_k,\ p\ge0\text{ on }[0,1]),
  \qquad \beta>\frac12.
  \label{eq:moment-full-variation-bound}
\end{equation}
For every $0<h<1$ with $\beta_h:=\beta-MD_kh>1/2$,
\begin{equation}
  \Lambda(p)\ge\beta_h\operatorname{Var}_{[0,1]}p
       \ge\beta_h\operatorname{Var}_{[h,1]}p
  \quad(p\in\Pi_k,\ p\ge0\text{ on }[h,1]).
  \label{eq:moment-inward-variation-bound}
\end{equation}
Moreover, $\Lambda(p)>0$ for every nonzero such polynomial.
\end{lemma}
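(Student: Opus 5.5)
The idea is to compare a polynomial $p\in\Pi_k$ that is nonnegative on $[h,1]$ with a polynomial that is nonnegative on all of $[0,1]$. On $[0,h)$ such a $p$ can dip below zero, but only by a controlled amount. So the plan is to add a constant correction, apply \eqref{eq:moment-full-variation-bound} to $p$ plus that constant, and absorb the constant's cost using $\Lambda(1)=M$.

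\emph{Bounding the dip.} Let $p\ge0$ on $[h,1]$ and put $m_-=\max\{0,-\min_{[0,1]}p\}$. Any negative minimum on $[0,1]$ occurs at some $s_0\in[0,h)$, and $p(h)\ge0$. The mean value theorem with the derivative estimate \eqref{eq:moment-derivative-variation} on $X=[0,1]$ (so $d=1$) then gives
\[
  m_-\le h\,\|p'\|_{\infty,[0,1]}\le hD_k\operatorname{Var}_{[0,1]}p .
\]

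\emph{Shifting and absorbing.} The polynomial $q=p+m_-$ is nonnegative on $[0,1]$, and $\operatorname{Var}_{[0,1]}q=\operatorname{Var}_{[0,1]}p$. Hypothesis \eqref{eq:moment-full-variation-bound} yields
\[
  \Lambda(p)=\Lambda(q)-m_-M\ge\beta\operatorname{Var}_{[0,1]}p-MD_kh\operatorname{Var}_{[0,1]}p=\beta_h\operatorname{Var}_{[0,1]}p .
\]
This is the first inequality of \eqref{eq:moment-inward-variation-bound}. The second follows because the variation of $p$ over a subinterval is at most its variation over $[0,1]$.

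\emph{Strict positivity.} Take a nonzero $p$ in the stated class. If $p$ is nonconstant, then $\operatorname{Var}_{[0,1]}p>0$, and since $\beta_h>1/2>0$ the bound gives $\Lambda(p)>0$. If $p$ is a constant $c$, then $c\ge0$ from nonnegativity on $[h,1]$, and $c\ne0$ because $p$ is nonzero, so $c>0$ and $\Lambda(p)=cM>0$.

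The only delicate point is the dip estimate: the derivative bound must be taken on the full interval $[0,1]$ rather than on $[h,1]$, since $p$ is not known to be nonnegative on $[0,1]$. This is harmless, because \eqref{eq:moment-derivative-variation} requires no sign condition on $p$. The hypothesis $\beta>1/2$ is not needed for this lemma; it only matters for the later Kane design step.
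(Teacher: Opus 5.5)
Your proposal is correct and follows the paper's proof step for step. You bound the possible negative minimum on $[0,h)$ by $h\|p'\|_{\infty,[0,1]}\le hD_k\operatorname{Var}_{[0,1]}p$, shift $p$ by a constant so that it becomes nonnegative on $[0,1]$, and absorb the shift using $\Lambda(1)=M$. The only cosmetic difference is that you shift by $\max\{0,-\min_{[0,1]}p\}$, whereas the paper subtracts $\min_{[0,1]}p$ itself; this changes nothing.
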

\begin{proof}
For $p\ge0$ on $[h,1]$, set $m=\min_{[0,1]}p$.
If $m<0$, a minimum lies in $[0,h)$, and comparison with
$p(h)\ge0$ gives
\[
  m\ge-h\|p'\|_{\infty,[0,1]}
      \ge-hD_k\operatorname{Var}_{[0,1]}p.
\]
This lower bound also holds if $m\ge0$.
The polynomial $p-m$ is nonnegative on the whole original
interval, so
\[
  \Lambda(p)=\Lambda(p-m)+Mm
      \ge(\beta-MD_kh)\operatorname{Var}_{[0,1]}p.
\]
Restricting the variation integral gives the second inequality
in \eqref{eq:moment-inward-variation-bound}.
A nonconstant polynomial has strictly positive variation.
A nonzero nonnegative constant has value $\Lambda(p)=Mp>0$.
This also proves the last assertion.
\end{proof}

Lemma~\ref{lem:moment-positive-representation} now supplies a
positive representing measure on $[h,1]$ for the unchanged
functional. In the spectral application, this measure represents
the moments of all of $(L,V]$, including the portion with
$0<s<h$.

\paragraph{The prescribed number of unit weights.}
We use the interval specialization of Kane's design theorem
\cite[Prop.~20 and the preceding formula for $K_\gamma$]{Kane2012}.
For a positive probability measure $\sigma$ on a compact
interval $X$, define
\begin{equation}
  K_X=\sup_{\substack{p\in\Pi_k,\ p\ge0\text{ on }X\\p\ne0}}
       \frac{\operatorname{Var}_Xp}{\int_Xp\dd\sigma}.
  \label{eq:moment-Kane-ratio}
\end{equation}
When the denominators are positive and $K_X$ is finite, the
theorem gives, for each integer $N>K_X/2$, points
$s_1,\ldots,s_N\in X$ with
\begin{equation}
  \frac1N\sum_{\ell=1}^N p(s_\ell)=\int_Xp\dd\sigma,
  \qquad p\in\Pi_k.
  \label{eq:moment-Kane-design}
\end{equation}
Repeated points are permitted. This is the external design
theorem used here. To match its notation, take $W=\Pi_k$ and
let $\gamma$ traverse $X$ monotonically. The path variation
is then $\operatorname{Var}_X$, and the mean-zero subspace
has dimension $k>0$. The measure representing the moments may
be atomic, as allowed in Kane's definition of a design problem.

\begin{lemma}[Unit weights from a quantitative moment bound]
\label{lem:moment-unit-design}
Let $X$ be a compact interval of positive length and let
$\Lambda:\Pi_k\to\R$ be linear. Suppose
\begin{equation}
  \Lambda(1)=M\in\Z_{>0},\qquad
  \Lambda(p)\ge\beta_X\operatorname{Var}_Xp
     \quad(p\ge0\text{ on }X),\qquad \beta_X>\frac12.
  \label{eq:moment-interval-design-condition}
\end{equation}
Then exactly $M$ unit-weight points in $X$ represent $\Lambda$
on $\Pi_k$.
\end{lemma}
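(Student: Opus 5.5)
The plan is to combine Lemma~\ref{lem:moment-positive-representation} with Kane's theorem as stated in \eqref{eq:moment-Kane-ratio}--\eqref{eq:moment-Kane-design}.

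\emph{Step 1: a positive representing measure.} The hypothesis \eqref{eq:moment-interval-design-condition} gives $\Lambda(p)\ge\beta_X\operatorname{Var}_Xp\ge0$ for every $p\in\Pi_k$ that is nonnegative on $X$. Lemma~\ref{lem:moment-positive-representation} then supplies a probability measure $\sigma$ on $X$ with $\int_Xp\dd\sigma=\Lambda(p)/M$ for all $p\in\Pi_k$.

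\emph{Step 2: the denominators in $K_X$ are positive.} Let $p\in\Pi_k$ be nonzero and nonnegative on $X$. If $p$ is nonconstant, it has positive variation, so $\int p\dd\sigma\ge\beta_X\operatorname{Var}_Xp/M>0$. If $p$ is a positive constant, then $\int p\dd\sigma=p>0$ directly.

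\emph{Step 3: bound on $K_X$.} For such $p$ the hypothesis rewrites as
\[
  \frac{\operatorname{Var}_Xp}{\int_Xp\dd\sigma}
  =\frac{M\operatorname{Var}_Xp}{\Lambda(p)}\le\frac M{\beta_X}.
\]
Hence $K_X\le M/\beta_X<2M$, using $\beta_X>1/2$. Therefore $N=M$ satisfies $N>K_X/2$. Kane's theorem gives points $s_1,\ldots,s_M\in X$ with $\frac1M\sum_\ell p(s_\ell)=\int p\dd\sigma=\Lambda(p)/M$. Multiplying by $M$ yields the unit-weight representation.

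\emph{Main obstacle: matching Kane's hypotheses.} Kane's theorem is stated for a path $\gamma$ and a function space $W$, so the work lies in checking that our setting fits it.
\begin{itemize}
\item The Kane ratio uses nonnegative $p$ in $W=\Pi_k$, with the variation taken along a monotone parametrization of $X$; this matches \eqref{eq:moment-Kane-ratio}.
\item Finiteness of $K_X$ follows from Step 3.
\item The mean-zero subspace has dimension $k\ge1$, so the theorem is nontrivial.
\end{itemize}
A further subtlety is that $\sigma$ may be atomic. This is permitted, as already remarked after \eqref{eq:moment-Kane-design}.
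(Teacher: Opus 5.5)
Your proposal is correct and follows the paper's proof essentially step for step. You obtain $\sigma$ from Lemma~\ref{lem:moment-positive-representation}, check that the denominators in \eqref{eq:moment-Kane-ratio} are positive, bound $K_X\le M/\beta_X<2M$, and then apply Kane's theorem with $N=M$ and multiply by $M$.
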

\begin{proof}
The variation bound makes $\Lambda$ strictly positive on
nonconstant nonnegative polynomials; positive constants have
positive value because $M>0$. Let $\sigma$ be the probability
measure from Lemma~\ref{lem:moment-positive-representation}.
Its denominators in \eqref{eq:moment-Kane-ratio} are positive,
and
\begin{equation}
  K_X=\sup_{\substack{p\in\Pi_k,\ p\ge0\text{ on }X\\p\ne0}}
        \frac{M\operatorname{Var}_Xp}{\Lambda(p)}
      \le\frac M{\beta_X}<2M.
  \label{eq:moment-prescribed-count}
\end{equation}
Thus $N=M$ is allowed in \eqref{eq:moment-Kane-design}.
Multiplying that identity by $M$ gives
$\sum_{\ell=1}^M p(s_\ell)=\Lambda(p)$.
\end{proof}

\begin{proof}[Proof of Proposition~\ref{prop:construction-integer-matching}]
Apply Lemma~\ref{lem:moment-inward-positivity} with $\beta=1$
and the specified $h=[64M(k+1)^3]^{-1}$. Then
\begin{equation}
  MD_kh=\frac1{16},\qquad \beta_h=\frac{15}{16}>\frac12.
  \label{eq:moment-chosen-inward-bound}
\end{equation}
Lemma~\ref{lem:moment-unit-design} on $X=[h,1]$ now gives
the $M$ unit-weight points and every identity in
\eqref{eq:construction-unit-moments}.
\end{proof}

\subsection{A definite selector on a half-open cell}
\label{app:moment-selector}

\paragraph{Compactness and successive minimization.}
For the original cell functional in
\eqref{eq:construction-cell-functional}, the preceding proposition
makes the sorted feasible set $\mathcal X$ of
\eqref{eq:construction-node-set} nonempty. It is a closed
subset of $[h,1]^M$: ordering and all its moment equations
are closed conditions. Thus $\mathcal X$ is compact.
Minimize its first coordinate and retain the minimizer set;
then minimize the second coordinate on that set, and continue.
At each of the finitely many stages the remaining set is
nonempty and compact. After $M$ stages every coordinate is
fixed, giving a unique vector. This is the lexicographic
minimum used in Section~\ref{sec:construction-moments}.

Let $d=u_+-u_->0$. The energy map in
\eqref{eq:construction-cell-atoms} is increasing on the allowed
interval, and every selected point satisfies
\begin{equation}
  L+d h(2u_-+d h)\le\Ehat_\ell\le V.
  \label{eq:moment-selected-support}
\end{equation}
In particular $L<\Ehat_\ell$, also at a spin opening where
$u_-=0$. Repeated coordinates give repeated energies, whose
weights add to positive integer multiplicities in that spin
row. A point at $V$ is allowed and belongs to the present
cell $(L,V]$; it is excluded from the next cell in that row.

\paragraph{The full cell moments and their residual.}
Since $u=u_-+ds$, each $u^r$ with $0\le r\le k$ is a
polynomial in $s$ of degree at most $k$. The selected atomic
measure therefore obeys
\begin{equation}
  \int u^r\dd Q=\int_{(L,V]}u^r\dd\mu^{\mathrm{cell}},
  \qquad 0\le r\le k,
  \qquad u=\sqrt{\Ehat-|j|}.
  \label{eq:moment-full-cell-identities}
\end{equation}
The zeroth identity gives $Q((L,V])=M$.
For $\nu=Q-\mu^{\mathrm{cell}}$, positivity of $Q$ and
$M=\mu^{\mathrm{cell}}((L,V])>0$ also give
\begin{equation}
  \|\nu\|_{\TV}\le M+\|\mu^{\mathrm{cell}}\|_{\TV}
                \le2\|\mu^{\mathrm{cell}}\|_{\TV}.
  \label{eq:moment-cell-total-variation}
\end{equation}
These are \eqref{eq:construction-residual-properties}, including
the cancellations needed in
Appendix~\ref{app:complete-repair-estimates}.
The residual still includes the continuous measure on the
entire original cell, so its polynomial approximation is made
on the full interval $[u_-,u_+]$. Restricting the new atomic
positions introduces no inverse power of $h$ into those
exterior estimates.

\paragraph{Admissible choices.}
More generally, under \eqref{eq:moment-full-variation-bound},
any $h$ satisfying
\begin{equation}
  0<h<1,\qquad \beta-MD_kh>\frac12
  \label{eq:moment-admissible-inset}
\end{equation}
gives a nonempty compact feasible set with the same mass and
moments. Every vector in that set has the local properties
\eqref{eq:moment-selected-support}--\eqref{eq:moment-cell-total-variation}.
The displayed $h$ and lexicographic rule fix the particular
member studied in this paper. An alternative admissible
selection can change individual energies and multiplicities,
and therefore the later cell functionals through its exterior
response. The construction estimates apply to those choices
when the processing order, endpoint choices, moment degrees and
repair cutoffs satisfy the conditions of
Section~\ref{sec:construction-schedule}.

\paragraph{Endpoint multiplicity for the lexicographic rule.}
We record the local consequence used in the later endpoint
counting arguments. For $X=[h,1]$, write
$\operatorname{ev}_h(p)=p(h)$. The maximal real weight that can
be assigned to the leftmost allowed position while retaining
positive finite moments is
\begin{equation}
  \lambda=\max\left\{w\in[0,M]:
       \Lambda(p)-w p(h)\ge0\text{ for every }p\in\Pi_k
                               \text{ nonnegative on }X\right\}.
  \label{eq:moment-real-endpoint-weight}
\end{equation}

\begin{lemma}[Real endpoint weight and integer multiplicity]
\label{lem:moment-endpoint-multiplicity}
Suppose \eqref{eq:moment-interval-design-condition} holds on
$X=[h,1]$ with $\beta_X>1$. Let $m_h$ be the number of copies
of $h$ in the lexicographically least feasible vector of
$M$ unit-weight points. Then $\lambda>0$ and
\begin{equation}
  \left\lfloor(1-\beta_X^{-1})\lambda\right\rfloor
       \le m_h\le\lambda,
  \qquad
  0\le1-\frac{m_h}{\lambda}
       \le\beta_X^{-1}+\lambda^{-1}.
  \label{eq:moment-integer-endpoint-bound}
\end{equation}
\end{lemma}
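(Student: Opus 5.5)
The plan is to compare the lexicographically least vector with the extremal real weight $\lambda$ of \eqref{eq:moment-real-endpoint-weight}. The upper bound comes from feasibility of the selected vector. The lower bound comes from building, via Lemma~\ref{lem:moment-unit-design}, a feasible vector that already carries many copies of $h$. The ratio bounds in \eqref{eq:moment-integer-endpoint-bound} then follow by arithmetic.

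\emph{Positivity of $\lambda$.} Let $p\in\Pi_k$ be nonnegative on $X=[h,1]$, and set $m=\min_X p\ge0$. Since $p-m\ge0$ on $X$ and $\operatorname{Var}_X(p-m)=\operatorname{Var}_Xp$, hypothesis \eqref{eq:moment-interval-design-condition} gives
\[
\Lambda(p)=\Lambda(p-m)+Mm\ge\beta_X\operatorname{Var}_Xp+Mm .
\]
We also have $p(h)\le m+\operatorname{Var}_Xp$. Hence for $0\le w\le\min\{M,\beta_X\}$,
\[
\Lambda(p)-wp(h)\ge(M-w)m+(\beta_X-w)\operatorname{Var}_Xp\ge0 .
\]
So $\lambda\ge\min\{M,\beta_X\}>0$, using $M\ge1$ and $\beta_X>1$.

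\emph{Upper bound $m_h\le\lambda$.} Let $(s_\ell)$ be the lexicographically least feasible vector, and let $m_h$ be its number of entries equal to $h$. Then
\[
\Lambda(p)-m_hp(h)=\sum_{s_\ell\ne h}p(s_\ell)\ge0
\]
for every $p\in\Pi_k$ nonnegative on $X$. Since $m_h\le M$, the weight $m_h$ is admissible in \eqref{eq:moment-real-endpoint-weight}, and therefore $m_h\le\lambda$. This also gives $0\le1-m_h/\lambda$.

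\emph{Lower bound.} Let $m$ be an integer with $0\le m\le(1-\beta_X^{-1})\lambda$. Define $\Lambda'=\Lambda-m\operatorname{ev}_h$, which has $\Lambda'(1)=M-m$.

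\begin{itemize}
\item If $M=m$, then $\lambda=M$ and $m\le(1-\beta_X^{-1})M<M$, which is impossible. Hence $M-m\in\Z_{>0}$.
\item For $p\ge0$ on $X$, write
\[
\Lambda'(p)=\Bigl(1-\frac m\lambda\Bigr)\Lambda(p)+\frac m\lambda\bigl(\Lambda(p)-\lambda p(h)\bigr).
\]
The second term is nonnegative by the definition of $\lambda$; this is the key use of the extremal weight, and it is where I expect the only real care to be needed.
\item From \eqref{eq:moment-interval-design-condition} and $m/\lambda\le1-\beta_X^{-1}$,
\[
\Lambda'(p)\ge\Bigl(1-\frac m\lambda\Bigr)\beta_X\operatorname{Var}_Xp\ge\operatorname{Var}_Xp .
\]
\end{itemize}

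Lemma~\ref{lem:moment-unit-design}, applied to $\Lambda'$ with constant $1>1/2$, gives $M-m$ unit points in $X$ representing $\Lambda'$. Appending $m$ copies of $h$ and sorting gives a feasible vector whose first $m$ coordinates equal $h$.

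Since $h$ is the lower bound of every coordinate, successive minimization over $\mathcal X$ must choose $s_1=\cdots=s_m=h$. Thus $m_h\ge m$. Taking $m=\lfloor(1-\beta_X^{-1})\lambda\rfloor$ gives the left inequality in \eqref{eq:moment-integer-endpoint-bound}.

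\emph{Ratio bound.} From $m_h\ge(1-\beta_X^{-1})\lambda-1$, dividing by $\lambda$ gives
\[
1-\frac{m_h}{\lambda}\le\beta_X^{-1}+\lambda^{-1}.
\]
Together with $0\le1-m_h/\lambda$ from the upper bound, this proves \eqref{eq:moment-integer-endpoint-bound}.
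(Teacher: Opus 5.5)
Your proof is correct and follows the paper's argument. You get the upper bound from admissibility of the selected vector's endpoint weight, and the lower bound by representing $\Lambda-q\operatorname{ev}_h$ through Lemma~\ref{lem:moment-unit-design} and then invoking lexicographic minimality. The differences are cosmetic: you obtain $\lambda\ge\min\{M,\beta_X\}$ directly instead of first deriving $M\ge2\beta_X$ and $\lambda\ge\beta_X$, and your split $(1-m/\lambda)\Lambda+(m/\lambda)(\Lambda-\lambda\operatorname{ev}_h)$ is a reparametrization of the paper's identity \eqref{eq:moment-endpoint-subtraction}.
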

\begin{proof}
The admissible weights in \eqref{eq:moment-real-endpoint-weight}
form a nonempty closed subset of $[0,M]$, so the maximum is
attained. Testing \eqref{eq:moment-interval-design-condition}
with $(s-h)/(1-h)$ and its complement gives $M\ge2\beta_X$.
For a polynomial $p\ge0$ on $X$, put $m=\min_Xp$.
Then
\[
  \Lambda(p)\ge Mm+\beta_X\operatorname{Var}_Xp
      \ge\beta_X p(h),
\]
because $p(h)\le m+\operatorname{Var}_Xp$.
Thus $\lambda\ge\beta_X>0$.

Let $q=\lfloor(1-\beta_X^{-1})\lambda\rfloor$. The identity
\begin{equation}
  \begin{aligned}
  \Lambda-q\operatorname{ev}_h
    ={}&\beta_X^{-1}\Lambda
       +(1-\beta_X^{-1})(\Lambda-\lambda\operatorname{ev}_h)\\
       &+\bigl[(1-\beta_X^{-1})\lambda-q\bigr]
                                             \operatorname{ev}_h
  \end{aligned}
  \label{eq:moment-endpoint-subtraction}
\end{equation}
is a sum of functionals nonnegative on the same polynomial
cone. Its first term is at least $\operatorname{Var}_Xp$.
The residual total weight $M-q$ is a positive integer since
$q<\lambda\le M$. Lemma~\ref{lem:moment-unit-design} therefore
represents this residual by exactly $M-q$ unit-weight points
in $X$. Adding $q$ copies of $h$ gives a feasible vector for
the original functional. The lexicographic minimum must
contain at least that many copies of $h$.
Conversely, subtracting its $m_h$ copies leaves a positive
atomic measure, so $m_h\le\lambda$ by definition.
The relative estimate in \eqref{eq:moment-integer-endpoint-bound}
follows from the floor error being less than one.
\end{proof}

An elementary lower bound follows directly from subtraction:
for $p\ge0$ on $X$ and $m=\min_Xp$,
\[
  (\Lambda-r\operatorname{ev}_h)(p)
    \ge(M-r)m+(\beta_X-r)\operatorname{Var}_Xp.
\]
Thus any integer $r$ with $r<M$ and $\beta_X-r>1/2$ can be
placed at $h$ before applying Lemma~\ref{lem:moment-unit-design}
to the remainder. In particular, $r=\lfloor\beta_X/2\rfloor$
is permitted for $\beta_X>1$.
These statements concern the fixed lexicographic selector;
the capacity asymptotics and the exclusion of other nearby
points are separate estimates in
Appendix~\ref{app:entropy-positive-kappa}.

\section{Uniform estimates for the recursion}
\label{app:global-estimates}

Appendices~\ref{app:global-vacuum}--\ref{app:global-tail} prove
Lemma~\ref{lem:global-step-estimates} for the initialization of
Section~\ref{sec:construction-initial} and the explicit schedule of
Section~\ref{sec:construction-schedule}.
The estimates use only the support, total weight and matched moments
of each replacement, so every choice in $\mathcal X$ is allowed.
The local inputs are the canonical kernel from
Appendix~\ref{app:canonical}, the finite-measure inverse, anchor and
complete repair estimates from Appendix~\ref{app:inversion-repair},
and the polynomial and matching results from
Appendix~\ref{app:integer-moments}.
We apply these inputs to choose the global parameters and bound the
errors uniformly over the successive cells.
All spectral densities are taken with respect to $\dd\omega_J$.
Constants $c,C>0$ may change between estimates.
A subscript $B$ permits dependence on the fixed initial cutoff but
never on $a$ or the prescribed first-level data. We display the
dependence on their total-weight bound $M_*$ separately.
The growing-cutoff estimates use the separate exponential bounds
of Appendix~\ref{app:inversion-repair}; no uniformity as $B$ tends
to zero is required.
Appendix~\ref{app:global-thermal-comparison} supplies the thermal
bound used to pass to the limit in Section~\ref{sec:global-limit}.
Only afterward does Appendix~\ref{app:global-near-extremal} use
that established positive limit to study near-extremal windows.

\subsection{The vacuum reference and its remainder}
\label{app:global-vacuum}

The four vacuum seeds are $(-a,0)$, $(1-a,1)$, $(1-a,-1)$ and
$(2-a,0)$, with coefficients $1,-1,-1,1$.
The seed-energy continuation in
\eqref{eq:canonical-energy-subtraction} also applies to these
negative energies. The cosine factors in the kernel are entire
functions of their squared arguments and become hyperbolic
cosines here. Their bounds below give locally uniform thermal
integrability: on a compact set of seed energies the same
product-minus-one estimate gives at most a polynomial in
$\Ehat$ times $e^{C\sqrt{\Ehat}}$. The spectral formula
therefore continues to these vacuum seeds.
In the $m=1$ summand
of \eqref{eq:repair-Q-kernel}, the Kloosterman sum equals one.
Writing $A_\pm=\cosh(2\pi\sqrt{a(\Ehat\pm J)})$ and
$C_\pm=\cosh(2\pi\sqrt{(a-2)(\Ehat\pm J)})$ for this computation,
the four terms combine to
\[
  2\bigl[(A_+A_--1)-(C_+A_--1)-(A_+C_--1)+(C_+C_--1)\bigr]
     =2(A_+-C_+)(A_--C_-).
\]
This proves \eqref{eq:global-leading-density}. The superscript $(1)$
refers to the arithmetic denominator $m=1$.

For $a\ge100$ and $\Ehat\ge|J|$, the reference satisfies
\begin{equation}
  c_*(\Ehat^2-J^2)e^{8\sqrt{a\Ehat}}
    \le\rho^{(1)}_{a,J}(\Ehat)
    \le2e^{4\pi\sqrt{a\Ehat}},
  \qquad c_*=\frac{2\pi^4}{625}.
  \label{eq:global-leading-bounds}
\end{equation}
To retain the zero at the spin threshold, use
\[
  \begin{gathered}
    D_a(z)=2\sinh(A\sqrt z)\sinh(C'\sqrt z),\qquad AC'=2\pi^2,\\
    A=\pi(\sqrt a+\sqrt{a-2}),\qquad
    C'=\pi(\sqrt a-\sqrt{a-2}).
  \end{gathered}
\]
The elementary inequalities $\sinh x\ge(x/100)e^{19x/20}$ and
$\sinh x\ge x$ for $x\ge0$ give
$D_a(z)\ge(\pi^2/25)z e^{(19/20)A\sqrt z}$.
Together with
$\sqrt{2\Ehat}\le\sqrt{\Ehat+J}+\sqrt{\Ehat-J}\le2\sqrt{\Ehat}$,
this proves the lower bound in \eqref{eq:global-leading-bounds};
indeed $(19/20)\pi(1+\sqrt{0.98})\sqrt2>8$.
The upper bound follows from $D_a(z)\le e^{2\pi\sqrt{az}}$.

Let $\rho^{\mathrm{rem}}_{a,J}$ be the MWK continuous density after
subtracting $\rho^{(1)}_{a,J}$. It obeys
\begin{equation}
  |\rho^{\mathrm{rem}}_{a,J}(\Ehat)|
     \le Ca\Ehat e^{2\pi\sqrt{a\Ehat}},
  \qquad \Ehat\ge\max\{1,|J|\}.
  \label{eq:global-mwk-remainder-bound}
\end{equation}
For any vacuum seed, the two nonnegative hyperbolic arguments
$u_m,v_m$ at denominator $m$ satisfy
$u_m+v_m\le4\pi\sqrt{a\Ehat}/m$ and
$u_m^2+v_m^2\le8\pi^2a\Ehat/m^2$. Thus
\[
  |\cosh u_m\cosh v_m-1|
       \le\frac{Ca\Ehat}{m^2}e^{4\pi\sqrt{a\Ehat}/m}.
\]
Keeping the subtraction of one inside the bracket provides the
summable $m^{-2}$ factor. Since $|S(J,J';m)|/m\le1$, summing over
$m\ge2$ and the four vacuum seeds gives
$Ca\Ehat e^{2\pi\sqrt{a\Ehat}}$.
The separate arithmetic term in \eqref{eq:repair-Q-kernel} contributes
at most $C|J|$, which is absorbed on the stated domain.
For scalar output that term vanishes. Including $m=1$ in the same
bracket estimate therefore also gives
\begin{equation}
  |\rho^{\mathrm{MWK}}_{a,0}(\Ehat)|
     \le Ca\Ehat e^{4\pi\sqrt{a\Ehat}},\qquad \Ehat>0.
  \label{eq:global-scalar-vacuum-bound}
\end{equation}
In particular, the scalar continuous measure is integrable at zero.
This estimate is needed when the actual clearing cutoff is below one.

\subsection{The complete initialization response}
\label{app:global-initial-reference}

Denote by $\rho^{\mathrm{resp}}_{a,B,J}$ the change in the continuous
density produced by the initialization of
Section~\ref{sec:construction-initial}, suppressing its dependence on
the prescribed first-level data. Above the initial cutoff,
\begin{equation}
  \rho_{0,J}(\Ehat)-\rho^{(1)}_{a,J}(\Ehat)
    =\rho^{\mathrm{rem}}_{a,J}(\Ehat)
     +\rho^{\mathrm{resp}}_{a,B,J}(\Ehat),
  \qquad \Ehat>B,\quad\Ehat>|J|.
  \label{eq:global-initial-decomposition}
\end{equation}
The response includes the clearing repair, the prescribed first level,
and all initialization anchors, including their direct density on
$[2B,3B]$.

To make this last contribution explicit, set
$(\alpha,r)=(\nu_{\mathrm{first}},0)$ when $\Ehat_*>0$, and
$(\alpha,r)=(0,d_0)$ when $\Ehat_*=0$. Thus $\alpha$ contains the
prescribed positive-energy atoms and $r$ is the prescribed threshold
multiplicity. They satisfy
\begin{equation}
  \|\alpha\|_{\TV}\le M_*,\qquad
  \sum_J\int\sqrt{\Ehat}\,\dd|\alpha_J|\le M_*\sqrt B,
  \qquad 0\le r\le M_*.
  \label{eq:global-first-level-norms}
\end{equation}
Set
\[
  \nu^{\mathrm{init}}=-[\mu_{\mathrm{MWK}}^{>0}]_{\le B}+\alpha,
  \qquad w^{\mathrm{init}}=\mathcal I_B[\nu^{\mathrm{init}}],
  \qquad s^{\mathrm{init}}=6+r-t(w^{\mathrm{init}}).
\]
The full response is
\begin{equation}
  \rho^{\mathrm{resp}}_{a,B,J}
    =s^{\mathrm{init}}\delta_{J,0}\rho_B^\psi
      +\bigl(R[w^{\mathrm{init}}+s^{\mathrm{init}}\theta_B]\bigr)_J.
  \label{eq:global-response-formula}
\end{equation}
The coefficient six compensates the MWK threshold once, at
initialization. It does not recur in the later zero-threshold repairs.

The vacuum estimates, the compact-input bound
\eqref{eq:repair-compact-kernel-bound}, and the inverse and anchor
estimates proved in Appendices~\ref{app:finite-measure-inverse}
and~\ref{app:threshold-anchor} imply
\begin{equation}
  |\rho^{\mathrm{resp}}_{a,B,J}(\Ehat)|\le
  \begin{cases}
    C_B(1+M_*)a\sqrt{\Ehat}\,e^{4\pi\sqrt{aB}},
       & B>0\text{ fixed},\\[2pt]
    C(1+M_*)a b^4\sqrt{\Ehat}\,e^{4\pi\sqrt{ab}+C_{\mathrm{ref}}b},
       & B=b=\kappa a,\quad 2\le b<a/100,
  \end{cases}
  \label{eq:global-initial-response-bound}
\end{equation}
for $\Ehat>B$ and $\Ehat>|J|$.
For fixed $B$, integrating the vacuum density through $B$ gives
$\|[\mu_{\mathrm{MWK}}^{>0}]_{\le B}\|_{\TV}
\le C_Ba e^{4\pi\sqrt{aB}}$. Equation~\eqref{eq:global-scalar-vacuum-bound}
controls the scalar integral down to zero, and the nonzero-spin edges
are integrable. The bounds in \eqref{eq:global-first-level-norms}
are uniform in the first-level energies, spins and multiplicities.
Applying the fixed-band inverse and anchor bounds to
\eqref{eq:global-response-formula} proves the first line of
\eqref{eq:global-initial-response-bound}. The direct term is covered
because $\Ehat\ge2B$ on its support. No inverse power of $\Ehat_*$,
$\Ehat_*-|J|$ or $B-\Ehat_*$ enters. The case $\Ehat_*=0$ uses $r=d_0$ instead of
inverting an origin atom. At a nonzero spin edge or at the upper
band boundary, the measure inverse retains the prescribed atom,
as shown in Appendix~\ref{app:finite-measure-inverse}.

For growing $b$, the same integration gives the loose bound
$Ca b^{5/2}e^{4\pi\sqrt{ab}}$ for the low-band input.
The bounds \eqref{eq:inverse-measure-bound},
\eqref{eq:inverse-threshold-functional} and
\eqref{eq:anchor-size-bound} have costs at most $e^{Cb}$ for
$b\ge2$, after absorbing polynomial factors.
All input energies in \eqref{eq:global-response-formula} are at most
$3b$. Applying the kernel bound and including the direct term yields
the second line, with $C,C_{\mathrm{ref}}$ independent of
$a,b,\kappa,M_*$ and the prescribed spins.
Polynomial factors in $b$ can be absorbed by increasing
$C_{\mathrm{ref}}$; the fixed-$B$ constants are not used here.
The factor $1+M_*$ is kept outside the exponential, so the same
$C_{\mathrm{ref}}$ as for one scalar primary suffices.

Dividing \eqref{eq:global-initial-decomposition} by $H_a$ now gives,
at fixed $B$,
\[
  \frac{|\rho_{0,J}-\rho^{(1)}_{a,J}|}{H_a}
    \le Ca\Ehat e^{-(7-2\pi)\sqrt{a\Ehat}}
       +C_B(1+M_*)a\sqrt{\Ehat}\,
                   e^{4\pi\sqrt{aB}-7\sqrt{a\Ehat}}.
\]
Both terms are eventually decreasing for $\Ehat\ge T$.
With $T=5B^\sharp$,
$7\sqrt T-4\pi\sqrt B\ge(7\sqrt5-4\pi)\sqrt{B^\sharp}>0$.
Their supremum tends to zero uniformly in all allowed first-level
data with total weight at most $M_*$, and in output spin.
For the growing cutoff, fix $R_0>5$ such that
\begin{equation}
  7\sqrt{R_0}>4\pi+C_{\mathrm{ref}}/10+1.
  \label{eq:global-R0-choice}
\end{equation}
Since $b/a<1/100$, the exponent of the response divided by $H_a$
at $T=R_0b$ is bounded above by
$(4\pi+C_{\mathrm{ref}}/10-7\sqrt{R_0})\sqrt{ab}$.
For each fixed $\kappa$ and $M_*$, this overcomes all prefactors.
The vacuum remainder is smaller because $7>2\pi$.
This proves \eqref{eq:global-initial-reference} in both regimes.

We also need the total variation of the initialized continuum through
a finite energy $W$. The same estimates give
\begin{equation}
  \begin{aligned}
  \sum_J\int_{\max\{B,|J|\}}^W
             |\rho_{0,J}(\Ehat)|\dd\omega_J
  &\le C_B(1+M_*)aW^2e^{4\pi\sqrt{aW}},
       && B\text{ fixed},\quad W\ge B^\sharp,\\
  \sum_J\int_{\max\{b,|J|\}}^W
             |\rho_{0,J}(\Ehat)|\dd\omega_J
  &\le C(1+M_*)W^2\left[a e^{4\pi\sqrt{aW}}
              +a b^4e^{4\pi\sqrt{ab}+C_{\mathrm{ref}}b}\right],
       && W\ge b\ge2.
  \end{aligned}
  \label{eq:global-initial-variation}
\end{equation}
A row with lower limit above $W$ contributes zero.
To obtain these bounds, sum the elementary integrals of
$\Ehat\dd\omega_J$ and $\sqrt{\Ehat}\dd\omega_J$ over
$|J|\le W$; they are at most $CW^2$ and $CW^{3/2}$, respectively.
For a fixed cutoff below one, treat the scalar vacuum integral near
zero by \eqref{eq:global-scalar-vacuum-bound} and absorb the finite
response integral starting at $B>0$ into $C_B$.
These bounds include the direct anchor density, but exclude the
prescribed atoms from the continuous cell functionals.

\subsection{The moment inequality in the initial cells}
\label{app:global-initial-moments}

For an initial row, put $L=\max\{B,|J|\}$, choose
$V\in[U,U+1]$, and use $\xi=\sqrt{\Ehat-|J|}$.
Write $I^\xi=[\sqrt{L-|J|},\sqrt{V-|J|}]$ and let $d$ be its
length. A polynomial $p$ in the normalized coordinate of
\eqref{eq:construction-cell-functional} becomes a polynomial
$\widetilde p$ of the same degree on $I^\xi$; total variation is
unchanged by this affine reparametrization.
For $\widetilde p\ge0$ of degree at most $k$, let $S_+$ be its
maximum on the upper half of $I^\xi$. The polynomial estimates
\eqref{eq:moment-upper-half-bounds} give
\begin{equation}
  \sup_{I^\xi}\widetilde p\le C6^kS_+,\qquad
  \operatorname{Var}_{I^\xi}\widetilde p\le C(k+1)6^kS_+,\qquad
  \int_{\text{upper half}}\widetilde p\dd\xi
        \ge\frac{cd}{(k+1)^3}S_+.
  \label{eq:global-upper-half-polynomials}
\end{equation}
Here the factor $6^k$ is the Chebyshev extrapolation bound from a
half interval; the integral estimate follows by controlling the
polynomial derivative near a maximum.

In either regime, $|J|\le T\le U/16$ and $B\le U/16$.
On the upper half of $I^\xi$, the energy is at least $V/4>T+1$,
the length $d$ is bounded below by $c\sqrt V$, and
$\Ehat-|J|\ge cV$. Since
\[
  \frac{\dd\mu_0^\xi}{\dd\xi}
     =\frac{2\rho_{0,J}(|J|+\xi^2)}{\sqrt{\xi^2+2|J|}},
\]
\eqref{eq:global-leading-bounds} and
\eqref{eq:global-initial-reference} give the lower bound
$cV^{-1/2}e^{4\sqrt{aV}}$ for this ordinary density there.
The continuous density in every initial row is positive above
$T+1$ for sufficiently large $a$. Let $N_-$ be the combined
negative variation of these rows below $T+1$.
Equations~\eqref{eq:global-upper-half-polynomials} then yield
\begin{equation}
  \Lambda_{J,V}(p)\ge S_+
     \left[\frac{ce^{4\sqrt{aV}}}{(k+1)^3}-CN_-6^k\right].
  \label{eq:global-initial-moment-comparison}
\end{equation}
All functionals here use the continuous measure of $F_0$.

For fixed $B$, \eqref{eq:global-initial-variation} bounds
$N_-\le C_B(1+M_*)a e^{4\pi\sqrt{a(T+1)}}$.
With $k=\lfloor\sqrt{aU}\rfloor$, the strict inequality
\[
  4-\log6-4\pi\sqrt{\frac{T+1}{U}}>1
\]
shows that the negative term in
\eqref{eq:global-initial-moment-comparison} is eventually smaller
than half the positive term. The estimate is uniform in the
admissible first-level data with total weight at most $M_*$, $V$
and the initial row.

For $B=b=\kappa a$, \eqref{eq:global-initial-variation} at $W=T+1$
gives
$\log(2+N_-)\le C_N\sqrt{ab}+O_\kappa(\log a)+\log(1+M_*)$,
where $C_N$ is fixed after $R_0$ and $C_{\mathrm{ref}}$.
Choose $R_{\mathrm{init}}\ge16R_0$ with
$C_N/\sqrt{R_{\mathrm{init}}}\le1$.
Then the logarithm of $N_-6^k$ is at most
$(1+\log6)\sqrt{aU}+O_\kappa(\log a)+\log(1+M_*)$, whereas the positive
exponent is at least $4\sqrt{aU}$. The same domination follows.
In both regimes, therefore,
\begin{equation}
  \Lambda_{J,V}(p)\ge\beta_a\operatorname{Var}_{[0,1]}p,
  \qquad
  \beta_a\ge\frac{c\,e^{4\sqrt{aU}-k\log6}}{(k+1)^4}
                 \longrightarrow\infty.
  \label{eq:global-initial-reserve}
\end{equation}
The constant may depend on $B$ in the fixed-cutoff case.
This proves the variation inequality for every allowed endpoint.
Applying \eqref{eq:global-initial-moment-comparison} to $p=1$
at $V=U$ also gives $m(U)>0$. On $[U,U+1]$, the density is
positive almost everywhere and its integral exceeds one for large
$a$, by the same leading lower bound. These are the three endpoint
conditions required in Section~\ref{sec:construction-moments}.
The integer endpoint rule then supplies positive integer total weight.
Once $\beta_a\ge1$, Proposition~\ref{prop:construction-integer-matching},
proved in Appendix~\ref{app:moment-designs}, supplies the integer
replacement of the entire signed cell. The smaller interval $[h,1]$
restricts only the new atomic positions.

\subsection{Complete exterior errors from the initial replacements}
\label{app:global-initial-errors}

For a matched cell, \eqref{eq:moment-full-cell-identities} gives
zero moments through degree $k$ for
$\nu=Q-\mu^{\mathrm{cell}}$, and
\eqref{eq:moment-cell-total-variation} gives
\begin{equation}
  \|\nu\|_{\TV}\le2\|\mu^{\mathrm{cell}}\|_{\TV}.
  \label{eq:global-signed-residual-bound}
\end{equation}
The completion estimates use the whole interval $[L,V]$ containing
the residual. Restricting the nodes to the inset of
Proposition~\ref{prop:construction-integer-matching} introduces
no inverse power of that inset.

We apply the initial-cell specialization
\eqref{eq:inverse-initial-moment-bound} from
Appendix~\ref{app:complete-repair-estimates}.
Fix $\varepsilon_{\mathrm{ell}}>0$, put
$\gamma=2\pi\varepsilon_{\mathrm{ell}}$, and take
$0<\tau_0\le\min\{\varepsilon_{\mathrm{ell}}^2,1/10\}$ with
$2\gamma<7$. For example,
$\varepsilon_{\mathrm{ell}}=1/8$ and $\tau_0=1/64$ meet these
conditions. For an initial cell with
$0<L<V\le\tau_0a$ and $|J'|\le L$, set
\[
  d_\xi=\sqrt{V-|J'|}-\sqrt{L-|J'|},\qquad
  \varpi=1+\frac{\varepsilon_{\mathrm{ell}}\sqrt a}{d_\xi}\ge2.
\]
For a repair cutoff $C\ge2$ with $2V+2\le C\le2V+4$, the full
exterior satisfies
\begin{equation}
  |r_{C,J}[\nu](\Ehat)|
     \le C_*\|\nu\|_{\TV}\sqrt a\,\varpi^{-k}
         e^{C_*C+\gamma\sqrt{aC}}
         \sqrt{\Ehat}\,e^{\gamma\sqrt{a\Ehat}}.
  \label{eq:global-long-completion}
\end{equation}
Here and in the short-cell estimate below, $\Ehat>C$ and
$\Ehat>|J|$; the constants are uniform in output spin.
The letter $C$ in $r_{C,J}$ denotes the cutoff, while $C_*$
denotes an estimate constant.

The hypotheses of \eqref{eq:inverse-initial-moment-bound} are
satisfied: $V<C$, $C\ge2$ and $V\le\tau_0a$. Its Bernstein
ellipse has height at most $\varepsilon_{\mathrm{ell}}\sqrt a/2$,
and matching the moments on the full square-root interval supplies
the factor $\varpi^{-k}$. The bound is for the complete response
\eqref{eq:inverse-moment-full-response}, which here reads, with
$v_\nu=\mathcal I_C[\mathscr R_C\nu]$,
\begin{equation}
  r_{C,J}[\nu]
     =(R\nu)_J-(Rv_\nu)_J
       +t(v_\nu)(R\theta_C)_J
       +t(v_\nu)\delta_{J,0}\rho_C^\psi.
  \label{eq:global-complete-moment-response}
\end{equation}
The final term is the direct anchor density. All four terms are
included in \eqref{eq:global-long-completion}. The estimate remains
valid when $0<L<2$: the kernel is approximated on the original
square-root interval, and no inverse power of $L$ is introduced.
The recursive cutoff $C\ge2$ is distinct from the possibly small
initial clearing cutoff $B$.

For the initial cells, $C=B_n=2U+4$ and
$\varpi\ge\varepsilon_{\mathrm{ell}}\sqrt{a/(U+1)}$.
Dividing \eqref{eq:global-long-completion} by $H_a$ and taking the
supremum above $B_n$ gives
\begin{equation}
  \varepsilon_n
    \le C_*\|\nu_n\|_{\TV}\sqrt{aB_n}\,e^{C_*B_n}
       \left(\varepsilon_{\mathrm{ell}}
                      \sqrt{\frac a{U+1}}\right)^{-k_n},
  \qquad n<N_{\mathrm{init}}.
  \label{eq:global-initial-step-error}
\end{equation}
For sufficiently large $a$, the remaining function of $\Ehat$ is
decreasing above $B_n$. Its value there contains the factor
$e^{-(7-2\gamma)\sqrt{aB_n}}\le1$, which was discarded.

Let $\mathfrak E_{\mathrm{init}}
=\sum_{n<N_{\mathrm{init}}}\varepsilon_n$.
At fixed $B$, the sufficient condition
\[
  a\ge\max\{\tau_0^{-1},4\varepsilon_{\mathrm{ell}}^{-2}\}(U+1)
\]
makes the approximation factor at least two and ensures the
initial-cell hypotheses. With the example parameters above,
$a\ge256(U+1)$ suffices.
Equations~\eqref{eq:global-initial-variation},
\eqref{eq:global-signed-residual-bound} and
\eqref{eq:global-initial-step-error}, with fixed $U$ and a fixed
number of initial rows, imply
\begin{equation}
  \log\mathfrak E_{\mathrm{init}}
    \le-\tfrac12\sqrt{aU}\log a
                 +O_B(\sqrt a+\log a)+\log(1+M_*)\longrightarrow-\infty.
  \label{eq:global-fixed-initial-error}
\end{equation}
This is uniform in $\delta\in[0,B)$ and all admissible first-level
data with total weight at most $M_*$.

For the growing cutoff put $\vartheta=R_{\mathrm{init}}\kappa$,
so $U=\vartheta a$. The continuous source bound, the $e^{CB_n}$
completion cost and the $O(a)$ initial rows instead give
\begin{equation}
  \begin{aligned}
  \log\mathfrak E_{\mathrm{init}}
  &\le C_0\sqrt{a(U+1)}+C_1U
     -\lfloor\sqrt{aU}\rfloor
       \log\!\left(\varepsilon_{\mathrm{ell}}
                         \sqrt{\frac a{U+1}}\right)
       +O_\kappa(\log a)+\log(1+M_*)\\
  &\le a\sqrt\vartheta
       \left[C_0+C_1\sqrt\vartheta
          -\log\frac{\varepsilon_{\mathrm{ell}}}{\sqrt\vartheta}\right]
       +O_\vartheta(\log a)+\log(1+M_*).
  \end{aligned}
  \label{eq:global-growing-initial-error}
\end{equation}
The constants $C_0,C_1$ are fixed independently of $\kappa,a,M_*$
and the prescribed spins.
A sufficient choice is
\begin{equation}
  \kappa_0=\min\left\{
    \frac1{100},\frac{\tau_0}{R_{\mathrm{init}}},
    \frac{\varepsilon_{\mathrm{ell}}^2}{4R_{\mathrm{init}}},
    \frac{\varepsilon_{\mathrm{ell}}^2}{R_{\mathrm{init}}}
                      e^{-2(C_0+C_1+1)}\right\}>0.
  \label{eq:global-kappa-choice}
\end{equation}
For fixed $0<\kappa<\kappa_0$ and sufficiently large $a$, one has
$U+1\le\tau_0a$, the approximation factor is at least two, and
the bracket in \eqref{eq:global-growing-initial-error} is less than
$-1$. Thus $\mathfrak E_{\mathrm{init}}\le1/8$ eventually in both
regimes. This completes the estimates in part~(i) of
Lemma~\ref{lem:global-step-estimates}. Equation~\eqref{eq:global-kappa-choice}
is a sufficient analytic range, not an optimal endpoint.
The bounded first-level weight changes only the required lower bound
on $a$. In particular, the same $R_0$, $R_{\mathrm{init}}$ and
$\kappa_0$ suffice. At fixed $B,M_*$ one common lower bound works
for every allowed first level; for a fixed positive $\kappa$ the
prescribed spins also require $\kappa a\ge\max_{J\in S}|J|$.

\subsection{Uniform estimates for a conditional tail step}
\label{app:global-tail}

Assume a finite prefix through $F_n$, with
$n\ge N_{\mathrm{init}}$, follows the prescribed schedule and
satisfies \eqref{eq:global-invariant}.
At the end of the initial stage, its processed endpoints obey
\[
  \max\{T,|J|\}\le f_{N_{\mathrm{init}},J}
        \le\max\{U+1,|J|\}.
\]
Indeed, the initial rows have endpoints in $[U,U+1]$, and
the remaining rows have $f_{N_{\mathrm{init}},J}=|J|>T$.
Every later endpoint can only increase. For the next cell write
$L=L_n\ge T\ge10$, $J'=j_n$,
$V\in[L+1/2,L+1]$ and $k=\lceil A_{\mathrm{tail}}(a+L)\rceil$.
The condition $T\ge10$ is automatic at fixed $B$ and holds
eventually at each fixed positive $\kappa$.

The least-endpoint rule also makes the selected $L_n$
nondecreasing during the tail stage. Hence every earlier tail
endpoint is at most $L_n+1$, while every initial endpoint is
at most $U+1$. The chosen cutoff therefore satisfies
\[
  \max\{B,V_0,\ldots,V_n\}
     \le\max\{U+1,L_n+1\}
     <2+\max\{U+2,L_n\}=B_n.
\]
Thus it protects all previously processed cells, as required in
\eqref{eq:construction-protecting-cutoff}. These geometric bounds
hold for every finite prefix of the prescribed schedule.

In the coordinate $\xi=\sqrt{\Ehat-|J'|}$, the interval length
satisfies $1/(4\sqrt{L+1})\le d\le1$. The reference lower bound
retains the factor $\Ehat^2-J'^2=\xi^2(\xi^2+2|J'|)$.
Consequently the current ordinary density obeys
\begin{equation}
  \frac{\dd\mu_n^\xi}{\dd\xi}
    \ge c\sqrt L\,\xi^2e^{8\sqrt{aL}}
          -CL^{-1/2}e^{7\sqrt{a(L+1)}}.
  \label{eq:global-tail-coordinate-bound}
\end{equation}
This includes a cell starting at $L=|J'|$.

For a nonnegative polynomial $\widetilde p$ of degree at most $k$
on this interval, \eqref{eq:moment-weighted-integral} and
\eqref{eq:moment-integral-variation} give
\[
  \int\xi^2\widetilde p\dd\xi
       \ge\frac{cd^2}{(k+1)^6}\int\widetilde p\dd\xi,
  \qquad
  \operatorname{Var}\widetilde p
       \le\frac{C(k+1)^4}{d}\int\widetilde p\dd\xi.
\]
The first bound keeps a definite fraction of the integral away
from the left edge, even when that edge is zero.
Inserting these inequalities into
\eqref{eq:global-tail-coordinate-bound} gives
\[
  \Lambda_n(p)\ge\frac1{\sqrt{L+1}}
      \left[\frac{ce^{8\sqrt{aL}}}{(k+1)^6}
                  -Ce^{7\sqrt{a(L+1)}}\right]
                         \int\widetilde p\dd\xi.
\]
For $L\ge10$, the difference of the two exponents is at least
$\sqrt{aL}/2$. The function
$\sqrt{aL}/2-6\log(a+L)$ is increasing in $L$ for $a>12$,
and tends to infinity at $L=10$ as $a\to\infty$.
After fixing $A_{\mathrm{tail}}$, the positive term therefore
dominates uniformly over every $L\ge T$.
Using the variation bound and the lower bound on $d$ yields
\begin{equation}
  \Lambda_n(p)\ge
  \frac{ce^{8\sqrt{aL_n}}}{(L_n+1)(k_n+1)^{10}}
       \operatorname{Var}_{[0,1]}p,
  \qquad p\in\Pi_{k_n},\quad p\ge0\text{ on }[0,1].
  \label{eq:global-tail-reserve}
\end{equation}
To show that the coefficient of $\operatorname{Var}_{[0,1]}p$ in
\eqref{eq:global-tail-reserve} is at least one uniformly over the tail
cells, use $k+1\le C_A(a+L)$.
Even the stronger denominator $(L+1)[C_A(a+L)]^{11}$ gives a
smooth lower bound whose logarithmic derivative is
\[
  \frac{4\sqrt a}{\sqrt L}-\frac1{L+1}-\frac{11}{a+L}
       \ge\frac{4\sqrt{aL}-12}{L}>0
\]
for large $a$. Its value at $L=10$ tends to infinity.
Thus one lower bound on $a$ makes this coefficient exceed one for
all tail cells, including when $T$ is fixed.

The positive coefficient of $\int\widetilde p\dd\xi$ above also
applies to $p=1$ at the candidate endpoint $V=L+1/2$.
It proves positive total weight there before the integer endpoint
is selected.
On the movable interval $[L+1/2,L+1]$ one has
$\Ehat-|J'|\ge1/2$. The leading term then dominates $H_a$
pointwise, and its integral tends uniformly to infinity.
The density is therefore positive almost everywhere there and the
weight increment exceeds one. This proves the endpoint conditions.
After that selection the total weight is a positive integer, so
\eqref{eq:global-tail-reserve} and
Proposition~\ref{prop:construction-integer-matching} give the
required replacement.

It remains to control the repair after the endpoint and matching
nodes have been chosen. From the invariant, the upper vacuum bound
and \eqref{eq:global-signed-residual-bound},
$\log(2+\|\nu_n\|_{\TV})\le C_*(a+L_n)$.
For sufficiently large $a$ in either regime, $U+2\le a$, and
Table~\ref{tab:construction-cell-parameters} gives
$B_n\le C_*(a+L_n)$.
For a cell of width at most one, the Bernstein ellipse with fixed
parameter two has height at most $3/8$ in the $\xi$ coordinate.
The complete tail bound \eqref{eq:inverse-tail-moment-bound}
therefore applies with cutoff $B_n$, giving
\begin{equation}
  |r_{B_n,J}[\nu_n](\Ehat)|
    \le C_*\|\nu_n\|_{\TV}\sqrt{B_n}\,2^{-k_n}
           e^{C_*B_n}\sqrt{\Ehat}\,e^{C_2\sqrt{\Ehat}}.
  \label{eq:global-short-completion}
\end{equation}
Its maximum relative to $H_a$ is at most
\begin{equation}
  \varepsilon_n\le
    C_*\|\nu_n\|_{\TV}B_ne^{C_*B_n}2^{-k_n}.
  \label{eq:global-tail-error-prefactor}
\end{equation}
The positive logarithmic costs are bounded by $C_*(a+L_n)$,
whereas the moment factor contributes
$-A_{\mathrm{tail}}(a+L_n)\log2$.
Choose $A_{\mathrm{tail}}$ large enough to leave room for
$4\log(2+L_n)+\log64$. This proves
\eqref{eq:global-tail-error} uniformly for every $L_n\ge T$.
Only the invariant, the cell width and the cutoff bound entered
these costs. Thus $A_{\mathrm{tail}}$ is independent of the finite
history, the first-level data, $B$, $\kappa$ and $a$.
After fixing it, one sufficiently large $a$ gives both the moment
inequality and this exterior bound for the entire range $L\ge T$.
Part~(ii) of Lemma~\ref{lem:global-step-estimates} follows.

The order of choices is now explicit. Fix the local analytic and
polynomial constants and the ellipse parameters; choose $R_0$ by
\eqref{eq:global-R0-choice}, then $R_{\mathrm{init}}$ using $C_N$,
and then the universal $A_{\mathrm{tail}}$.
For growing cutoffs choose $0<\kappa<\kappa_0$ from
\eqref{eq:global-kappa-choice}, then take $a$ sufficiently large,
including $b\ge2$, $T\ge10$, $U+2\le a$ and the initial-cell
condition $U+1\le\tau_0a$. The fixed spins additionally require
$\kappa a\ge\max_{J\in S}|J|$.
For fixed $B>0$, take the maximum of the finitely many lower bounds
on $a$, uniformly in $\delta\in[0,B)$ and in all allowed first-level
data of total multiplicity at most $M_*$.
All these bounds hold on a real half-line in $a$, and the tail
estimates already cover the full unbounded range of $L$. Thus a single
lower bound suffices for every subsequent step, without restricting
$a$ to a sequence. The fixed-cutoff threshold may depend on $B,M_*$;
the growing-cutoff threshold may depend on the fixed $\kappa,M_*$
and spin set. No threshold uniform as $\kappa\to0$ is asserted.

The admissible variations described in
Section~\ref{sec:construction-schedule} obey the same estimates.
A bounded offset in the initial degree changes the exponential
comparison only by bounded multiples and lower-order logarithmic
terms. For tail degrees between fixed admissible multiples of $a+L$,
the upper multiple bounds the degree-dependent factor in the denominator
of \eqref{eq:global-tail-reserve}, and the lower multiple controls the
exterior decay. The endpoint estimates
hold throughout each allowed window, and the residual bound uses
only exact total weight and matched moments. Thus the same argument
covers these choices after increasing the eventual lower bound on
$a$ when necessary.

\subsection{The thermal comparison measure}
\label{app:global-thermal-comparison}

For fixed $a$ and $\beta>0$, the upper bound in
\eqref{eq:global-leading-bounds} gives a finite constant
$C_{a,\beta}$ such that
\[
  e^{-\beta\Ehat}
       \bigl(\rho^{(1)}_{a,J}(\Ehat)+H_a(\Ehat)\bigr)
         \le C_{a,\beta}e^{-\beta\Ehat/2}.
\]
For $J\ne0$, writing $x=\Ehat-|J|$ bounds the remaining integral by
\begin{equation}
  \int_{|J|}^\infty e^{-\beta\Ehat/2}\dd\omega_J(\Ehat)
    \le\frac{e^{-\beta|J|/2}}{\sqrt{2|J|}}
            \int_0^\infty e^{-\beta x/2}x^{-1/2}\dd x
    =\sqrt{\frac\pi{\beta|J|}}e^{-\beta|J|/2}.
  \label{eq:global-spin-thermal-integral}
\end{equation}
This is summable over nonzero spins. The scalar integral is
$\int_T^\infty e^{-\beta\Ehat/2}\dd\Ehat/\Ehat<\infty$
because $T>0$. These bounds show that the positive measure with spin
components $(\rho^{(1)}_{a,J}+H_a)\dd\omega_J$, restricted to
$\Ehat\ge T$, has finite thermal norm after summing over all spins.
This is the comparison measure used in
Section~\ref{sec:global-limit}. The argument there then sums
the direct cell residuals over disjoint cells and the complete
exterior errors using \eqref{eq:global-initial-budget} and
\eqref{eq:global-tail-budget}. No uniform thermal norm as $a$ grows
is asserted.

\subsection{Near-extremal limits and modular ambiguities}
\label{app:global-near-extremal}

The convergence proved in Section~\ref{sec:global-limit} holds at
each fixed positive temperature. Here we compare it with the
large-spin, near-extremal limit used to diagnose the negative
vacuum density~\cite{BenjaminEtAl2019,AldayBae2020}.
Fix one admissible member of either gap family, with $a$ sufficiently
large and held fixed throughout this subsection. The constants below
may depend on this member; no uniform limit as $a\to\infty$ is
asserted. We use integrated spectral measures so that the continuous
finite stages and the final atomic spectrum can be compared directly.

For positive odd $J$, fix $0<\lambda<1/(4\pi^2)$ and set
\begin{equation}
  \begin{gathered}
    T_J=2\pi\sqrt{\frac{aJ}{2}},\qquad
    \epsilon_J=\lambda e^{-T_J},\qquad
    W_J=(J,J+\epsilon_J],\\
    D_J=\sqrt{2\lambda}\left(1-\frac{4\pi^2\lambda}{3}\right)
                   \frac{e^{T_J/2}}{\sqrt J}>0.
  \end{gathered}
  \label{eq:global-edge-windows}
\end{equation}
The interval $W_J$ is an interval of shifted energy $\Ehat$ in
the spin-$J$ row.

\begin{proposition}[Finite stages and the near-extremal limit]
\label{prop:global-edge-nonuniformity}
For each fixed finite stage $N$, the nonvacuum measure satisfies
\begin{equation}
  \frac{\mu_{N,J}(W_J)}{D_J}\longrightarrow-1
  \qquad(J\to\infty,\ J\text{ odd}).
  \label{eq:global-edge-finite-stage}
\end{equation}
The nonnegative stabilized measure $\mu_\infty$ consequently obeys
\begin{equation}
  \liminf_{\substack{J\to\infty\\J\text{ odd}}}
     \frac{(\mu_\infty-\mu_N)_J(W_J)}{D_J}\ge1
  \qquad\text{for every fixed }N,
  \label{eq:global-edge-tail}
\end{equation}
although $\|\mu_\infty-\mu_N\|_\beta\to0$ for every fixed
$\beta>0$.
\end{proposition}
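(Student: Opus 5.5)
The plan is to prove \eqref{eq:global-edge-finite-stage} first and then deduce \eqref{eq:global-edge-tail} from positivity. At a fixed stage $N$, the nonvacuum measure is
\[
  \mu_N=\mu_{\mathrm{MWK}}^{>0}+\mu^{\mathrm{at}}_N+\sum(\text{finitely many repair and anchor outputs}),
\]
with the MWK scalar atom removed. For large $J$ no atom and no processed cell lies in the row $J$: the processed fronts satisfy $f_{N,J}=|J|$ once $|J|$ exceeds every cell, cutoff and first-level spin used before stage $N$. Hence on $W_J$ the density of $\mu_{N,J}$ equals the MWK density plus the finitely many exterior densities $r_{B_m,J}[\nu_m]$ and the anchor responses from the initialization. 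By \eqref{eq:general-exterior-bound}, each of these densities is bounded by $C_N\sqrt{\Ehat}$ with respect to $\dd\omega_J$, uniformly in $J$. Their total mass on $W_J$ is therefore at most
\[
  C_N\sqrt{J+1}\int_J^{J+\epsilon_J}\frac{\dd\Ehat}{\sqrt{\Ehat^2-J^2}}\le C_N'\sqrt{\epsilon_J}=C_N'\sqrt\lambda\,e^{-T_J/2},
\]
which is $o(D_J)$ because $D_J\propto e^{T_J/2}/\sqrt J$. Strictly, that ratio is $\sqrt J$, not $o(1)$. So I would sharpen the bound using the kernel form \eqref{eq:repair-Q-kernel} near the edge $\Ehat=J$. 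There the factor $\cos(2\pi\sqrt{(\Ehat'-J')(\Ehat-J)}/m)-1$ vanishes linearly in $\Ehat-J$, while the arithmetic constant term $2\mathcal Z_{J,J'}(1/2)$, bounded by $C\sqrt J$ via Lemma~\ref{lem:canonical-joint-bound}, contributes only $O(\sqrt J\sqrt{\epsilon_J/J})=O(e^{-T_J/2})$. I would verify this improved bound carefully, because it is the point that fixes the normalization of the comparison.

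The main term is the MWK continuum on $W_J$. Here I would follow the Benjamin et al.\ computation. For odd $J$, write $\Ehat=J+x$ with $x\le\epsilon_J$. The leading negative contribution comes from the $m=2$ Kloosterman term, where $S(J,J';2)=(-1)^{J+J'}$. At the edge the reference $\rho^{(1)}_{a,J}$ vanishes like $D_a(x)\approx 4\pi^2 x\cdot(\text{growth } e^{2\pi\sqrt{2aJ}})$. The $m=2$ term instead contributes a density of size of order $e^{\pi\sqrt{2aJ}}$ with sign $(-1)^J$, times a factor from $\cos(\pi\sqrt{(\cdot)x})$ that equals $1-O(x)$. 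Integrating against $\dd x/\sqrt{x(2J+x)}$ over $(0,\epsilon_J]$ yields $-\sqrt{2\epsilon_J/J}\,(1-\tfrac{4\pi^2\lambda}{3}\cdots)e^{T_J}$ up to the stated constants. This should reproduce $-D_J$ exactly, with the term $\tfrac{4\pi^2\lambda}3$ coming from the linear growth of the $m=1$ reference ($e^{2T_J}\cdot x$ integrated gives a $\lambda^{3/2}$ correction of the same order). The terms with $m\ge3$ are bounded by $e^{2T_J/3}$ times polynomial factors and are negligible. Assembling this asymptotic, and checking that the constant matches $D_J$ with the choice $\lambda<1/(4\pi^2)$ so that $D_J>0$, is the main obstacle.

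Given \eqref{eq:global-edge-finite-stage}, the deduction of \eqref{eq:global-edge-tail} is immediate. Since $\mu_\infty$ is a nonnegative atomic measure, $\mu_{\infty,J}(W_J)\ge0$, so
\[
  (\mu_\infty-\mu_N)_J(W_J)\ge-\mu_{N,J}(W_J)=D_J(1+o(1)).
\]
The thermal statement $\|\mu_\infty-\mu_N\|_\beta\to0$ is the tail of the convergent series \eqref{eq:global-complete-thermal-sum} via the triangle inequality. The two statements are compatible because $e^{-\beta J}D_J\to0$.
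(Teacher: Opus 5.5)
Your route is the paper's. The $m=1$ and $m=2$ vacuum terms give $\frac{8\pi^2e^{2T_J}\epsilon_J^{3/2}}{3\sqrt{2J}}-\frac{2e^{T_J}\sqrt{\epsilon_J}}{\sqrt{2J}}=-D_J$, and the terms with $m\ge3$ are $o(e^{T_J})$ at the edge. The finitely many stage-$N$ repairs are controlled by the spin-uniform $\sqrt{\Ehat}$ bound \eqref{eq:general-exterior-bound}, and positivity of $\mu_\infty$ then gives \eqref{eq:global-edge-tail}.

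However, your worry about the finite repairs comes from an arithmetic slip. Since $\sqrt{\epsilon_J}=\sqrt\lambda\,e^{-T_J/2}$ and $D_J\asymp e^{T_J/2}/\sqrt J$, the ratio of your crude bound $C_N'\sqrt{\epsilon_J}$ to $D_J$ is $O(\sqrt J\,e^{-T_J})\to0$, not $\sqrt J$; you dropped the two exponentials. That crude bound is exactly the paper's \eqref{eq:global-edge-finite-repairs}, so the argument is already complete at that point.

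You should drop the proposed sharpening, because it is false as stated. The bracket in \eqref{eq:repair-Q-kernel} is a product of two cosines minus one. At $\Ehat=J$ it reduces to $\cos\bigl(\frac{2\pi}{m}\sqrt{2J(\Ehat'+J')}\bigr)-1$, which is generally nonzero. Only the factor $\cos\bigl(\frac{2\pi}{m}\sqrt{(\Ehat'-J')(\Ehat-J)}\bigr)$ tends to one. So the induced density with respect to $\dd\omega_J$ does not vanish linearly at the spin edge. As a sanity check, if it did, the MWK $m=2$ term you compute from the same kernel would also vanish at the edge, and there would be no negative edge weight. The correct edge bound is the $C\sqrt J$ one you had already used, and it suffices.
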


\begin{proof}
Write $x=\Ehat-J$ and recall $D_a$ from
\eqref{eq:global-leading-density}. The $m=1$ vacuum numerator is
$2D_a(2J+x)D_a(x)$. At fixed $a>2$,
\[
  D_a(x)=4\pi^2x+O_a(x^2),\qquad
  D_a(2J+x)=\tfrac12e^{2T_J}(1+o_a(1)),
\]
where the second estimate is uniform for $0\le x\le1$.
For $m=2$, the Kloosterman sum is
$S(J,J';2)=(-1)^{J+J'}$. Combining the four vacuum seeds gives
\begin{equation}
  \begin{aligned}
    \rho^{(2)}_{a,J}(\Ehat)
      =(-1)^J\bigl\{&[C_a(\Ehat+J)+C_{a-2}(\Ehat+J)]\\
                    &\times[C_a(\Ehat-J)+C_{a-2}(\Ehat-J)]-4\bigr\},
    \qquad C_b(z)=\cosh(\pi\sqrt{bz}).
  \end{aligned}
  \label{eq:global-edge-second-image}
\end{equation}
For odd $J$, this is $-e^{T_J}(1+o_a(1))$ uniformly on
$0<x\le\epsilon_J$. The bracket estimate used in
Appendix~\ref{app:global-vacuum}, now summed over $m\ge3$,
bounds the remaining numerator, including the separate arithmetic
term, by
\begin{equation}
  |\rho^{(\ge3)}_{a,J}(J+x)|
    \le C_aJ\exp\!\left(\frac{4\pi}{3}\sqrt{a(J+1)}\right)
         +O_a(J)=o_a(e^{T_J}),\qquad 0<x\le1.
  \label{eq:global-edge-remainder}
\end{equation}
The last equality follows from $4\pi/3<\pi\sqrt2$ and
is uniform down to the edge. Dividing by
$\sqrt{x(2J+x)}$ and integrating the first two terms gives
\[
  \begin{aligned}
    \mu_{\mathrm{MWK},J}(W_J)
      &=\frac{8\pi^2e^{2T_J}\epsilon_J^{3/2}}{3\sqrt{2J}}
         -\frac{2e^{T_J}\sqrt{\epsilon_J}}{\sqrt{2J}}
         +o_a(D_J)\\
      &=-D_J(1+o_a(1)).
  \end{aligned}
\]
Here $\mu_{\mathrm{MWK},J}$ denotes the nonvacuum measure of the
vacuum modular completion. This reproduces its odd-spin negative
weight at the exponentially small window scale.

The difference between a fixed $\mu_N$ and the vacuum completion
is a finite sum of completions with bounded seed support, anchors
and scalar constants. For $J$ larger than all their seed cutoffs,
only induced continuous terms contribute on $W_J$.
For each such seed $\sigma$, the kernel bound
\eqref{eq:repair-compact-kernel-bound}, with its finite weighted
input norm from \eqref{eq:inverse-complete-seed-bound} and
\eqref{eq:anchor-size-bound}, gives
$|(R\sigma)_J(J+x)|\le K_\sigma\sqrt{J+x}$.
Its ordinary energy density is therefore at most
$K_\sigma/\sqrt x$ in absolute value, after increasing the
constant. It follows that
\begin{equation}
  |(\mu_N-\mu_{\mathrm{MWK}})_J(W_J)|
      \le K_N\sqrt{\epsilon_J},\qquad
  \frac{K_N\sqrt{\epsilon_J}}{D_J}
      =O_{a,N}(\sqrt J\,e^{-T_J})\longrightarrow0.
  \label{eq:global-edge-finite-repairs}
\end{equation}
This proves \eqref{eq:global-edge-finite-stage}.
For each fixed $J$, finite stabilization
\eqref{eq:global-finite-stabilization} identifies the total change
on $W_J$ with $(\mu_\infty-\mu_N)_J(W_J)$.
The independently established positivity of $\mu_\infty$ then
gives \eqref{eq:global-edge-tail}. Thermal convergence is
\eqref{eq:global-complete-thermal-sum}.
\end{proof}

For comparison, continuity of the same window evaluation in the
thermal norm gives only
\begin{equation}
  \frac{|\sigma_J(W_J)|}{D_J}
       \le\frac{e^{\beta(J+\epsilon_J)}}{D_J}\|\sigma\|_\beta.
  \label{eq:global-edge-thermal-comparison}
\end{equation}
Its prefactor grows like $\sqrt J\,e^{\beta J-T_J/2}$.
Thus thermal convergence supplies no bound tending to zero
uniformly over these normalized windows. Also, when the recursion
processes the first cell in a large-spin row, $W_J$ lies in the
band of exact repair. The small exterior estimate applies only
above $B_n$ and does not bound the direct cell replacement there.

\paragraph{Growth of the correcting seeds.}
Let $n(J)$ be the first step processing row $J>T$. Its cell is
$(J,J+w_J]$, where $1/2\le w_J\le1$, and its total integer
weight is $M_J$. Immediately before that step, the invariant
\eqref{eq:global-invariant} gives
\[
  \rho_{n(J),J}(J+x)
    =2D_a(2J+x)D_a(x)+O_a(e^{7\sqrt{a(J+x)}}),
    \qquad0<x\le1.
\]
Integration over the cell therefore yields
\begin{equation}
  \begin{aligned}
    M_J&=\frac{e^{2T_J}}{\sqrt{2J}}
          \left[\int_0^{w_J}\frac{D_a(x)}{\sqrt x}\dd x
                        +o_a(1)\right],\\
    \log M_J&=2\pi\sqrt{2aJ}-\tfrac12\log J+O_a(1).
  \end{aligned}
  \label{eq:global-edge-seed-growth}
\end{equation}
The integral in brackets is bounded above and below by positive
$a$-dependent constants for $w_J\in[1/2,1]$.
The error is relatively small because $7<2\pi\sqrt2$.
Here $M_J$ is the sum of multiplicities, with coincident energies
counted repeatedly; it is not the number of distinct levels.

Write $\sigma_n=\sigma_{B_n}[\nu_n]$ for the correcting seed
in \eqref{eq:moment-repair-decomposition}. The removed cell
measure, the inverse response $v_{\nu_n}$ and the anchor seed
$\theta_{B_n}$ have energy densities and no atoms, by
Appendices~\ref{app:finite-measure-inverse} and
\ref{app:threshold-anchor}. Consequently
the atomic part of $\sigma_n$ is exactly $Q_n$, and
\begin{equation}
  \sum_n |\sigma_n|\bigl(\{(\Ehat',J'):
                 J'=J,\ J<\Ehat'\le J+1\}\bigr)\ge M_J.
  \label{eq:global-edge-absolute-seeds}
\end{equation}
Any finite partial sum containing step $n(J)$ also has atomic
variation at least $M_J$ in this band: other seed atoms have
nonnegative weights, and density measures cannot cancel them.
Thus the natural seed family has no polynomial total-variation
growth bound. This conclusion concerns that family and its finite
partial sums; it requires neither convergence of the raw seed sum
nor a classification of alternative seed representations.

Alday and Bae assume at-most-polynomial seed-density growth when
extending the fixed-seed analysis to uncensored ambiguities, with
an additional qualification concerning twist accumulation
\cite[Secs.~4.1 and 4.3]{AldayBae2020}.
Equations~\eqref{eq:global-edge-seed-growth}--\eqref{eq:global-edge-absolute-seeds}
exhibit the growth of the present seeds, while
\eqref{eq:global-edge-tail} proves the failure of a uniformly
small remainder on the relevant shrinking windows. These are
compatible with summability of the complete spectral changes
\eqref{eq:global-complete-step}. Absolute convergence of those
completed changes does not justify an ungrouped integral over
the absolute correcting-seed family.

\paragraph{Threshold accumulation for the specified selector.}
For the lexicographic prescription of
Section~\ref{sec:construction-moments}, the first cell in row $J$
has normalized coordinate $s=\sqrt{(\Ehat-J)/w_J}$ and degree
$k_J=\lceil A_{\mathrm{tail}}(a+J)\rceil$.
Equation~\eqref{eq:global-tail-reserve} supplies a coefficient
\[
  b_J=\frac{c_*'e^{8\sqrt{aJ}}}{(J+1)(k_J+1)^{10}}
        \longrightarrow\infty
\]
such that $\Lambda_J(p)\ge b_J\operatorname{Var}_{[0,1]}p$
for every nonnegative $p\in\Pi_{k_J}$; $c_*'>0$ is the
constant supplied by that estimate. Let
$h_J=[64M_J(k_J+1)^3]^{-1}$ be the minimum allowed coordinate.
Lemma~\ref{lem:moment-inward-positivity} transfers this bound to
polynomials nonnegative on $[h_J,1]$, with variation coefficient
$b_J-1/16$. For sufficiently large $J$, the integer
$r_J=\lfloor b_J/2\rfloor$ is positive and
$b_J-1/16-r_J>1/2$. Testing the original bound with $p=s$ and
$p=1-s$ gives $M_J\ge2b_J$, so $M_J-r_J$ is also a positive
integer. The endpoint subtraction argument in
Appendix~\ref{app:moment-selector} therefore gives a feasible
design with $r_J$ copies of $h_J$ and the remaining points in
$[h_J,1]$. Lexicographic minimization forces at least that many
copies of $h_J$ in the selected vector.
The least shifted energy in the completed spin-$J$ row consequently
satisfies
\begin{equation}
  x_{\min,J}:=\min_{\text{spin }J}(\Ehat-J)
     =\frac{w_J}{4096M_J^2(k_J+1)^6}
     \asymp_a J^{-5}e^{-4\pi\sqrt{2aJ}}.
  \label{eq:global-edge-lex-accumulation}
\end{equation}
Later cells in this row have higher energies, and protection preserves
this first cell. Since $d_J\ge r_J$ and $k_J$ grows linearly
with $J$ at fixed $a$, the multiplicity obeys
\begin{equation}
  d_J\ge c_aJ^{-11}e^{8\sqrt{aJ}}
  \qquad(J\text{ sufficiently large}).
  \label{eq:global-edge-lex-multiplicity}
\end{equation}
These last two estimates concern the specified selector and fixed
$a$. They give neither the complete near-extremal counting function
nor the multiplicity of every constructed level. They do show
explicitly how chiral weights approach $(c-1)/24$ while total
dimensions diverge, consistently with local finiteness and the
twist-accumulation result of Ref.~\cite{PalQiaoRychkov2022}.

\section{Spectral asymptotics}
\label{app:spectral-asymptotics}

We prove the counting statements of Section~\ref{sec:entropy} from
the construction estimates in Appendices~\ref{app:integer-moments}
and~\ref{app:global-estimates}. The first two subsections treat
fixed-$B$ smooth counts and their reference expansion; the final subsection
treats the fixed positive-$\kappa$ regimes.
Each limit uses the parameter domain stated in its subsection.

\subsection{Transfer to the reference and the spin integral}
\label{app:entropy-transfer}

We prove the uniform comparison in
Lemma~\ref{lem:entropy-reference-comparison} and justify the replacement
of integer primary spins by continuous spin in
\eqref{eq:entropy-reference-inversion}. Fix $B>0$, $M_*\ge1$ and
$0<\lambda_-\le\lambda\le\lambda_+<\infty$, and put
$E=\lambda c$, $a=(c-1)/12$. Until the final paragraph of this
subsection, $\phi$ is the fixed nonnegative normalized kernel in
$C_c^\infty(\R)$, with $\supp\phi\subset[-R,R]$.
Use the schedule in Table~\ref{tab:construction-cell-parameters}
with fixed admissible auxiliary parameters. All constants below are
uniform in $\delta\in[0,B)$, the admissible first-level data with
total weight at most $M_*$, and every choice of atomic positions
in $\mathcal X$ at each step. They may depend on the fixed auxiliary
parameters, $B$, $M_*$, $\phi$, the energy-ratio interval and the
fixed order of the estimate.

\paragraph{The construction inputs.}
We first collect the properties supplied by Table~\ref{tab:construction-cell-parameters}
and Section~\ref{sec:global-control}. At fixed $B$, the scales $T<U$
in \eqref{eq:construction-initial-scales} are independent of $a$.
After the initial stage, write $F_J=f_{N_{\mathrm{init}},J}$ for
the processed endpoint in row $J$. The initial rows $|J|\le T$
end in $[U,U+1]$, while the remaining rows retain $F_J=|J|$.
The endpoint, width and degree rules therefore give
\begin{equation}
  \begin{gathered}
    \max\{T,|J|\}\le F_J\le\max\{U+1,|J|\},\\
    L_n\ge\max\{T,|j_n|\},\qquad
    \tfrac12\le V_n-L_n\le1,\qquad
    k_n\ge A_{\mathrm{tail}}a
       \quad(n\ge N_{\mathrm{init}}).
  \end{gathered}
  \label{eq:entropy-tail-contract}
\end{equation}
Each row is partitioned by its consecutive tail cells above $F_J$.
The positive packet $Q_n$ has the same total weight and the same
moments through degree $k_n$ as the entire current cell measure
$\mu_n^{\mathrm{cell}}$. The moments use the coordinate in
\eqref{eq:entropy-cell-coordinate}, while every packet is supported
inside its cell. The density estimate \eqref{eq:global-invariant}
places the current density between $\rho^{(1)}_{a,J}-H_a$ and
$\rho^{(1)}_{a,J}+H_a$ when each tail cell is processed. No derivative
bound for that density is used. Section~\ref{sec:global-limit}
proves convergence and finite-band stabilization for this construction.

The prescribed first level and all initial packets have energies
at most $U+1$. Let $M_{\mathrm{init}}$ denote their combined
multiplicity. Then
\begin{equation}
  M_{\mathrm{init}}\le\exp(C_{\mathrm{init}}\sqrt a),
  \label{eq:entropy-initial-mass}
\end{equation}
with $C_{\mathrm{init}}$ depending only on $B$, $M_*$ and the fixed
auxiliary parameters. To see this, the common initial repair cutoff
$2U+4$ ensures that each initial cell retains the continuous measure
of $F_0$ until it is processed. Matching the constant moment bounds
each packet's weight by the total variation of that cell. Summing
these weights, applying \eqref{eq:global-initial-variation} at
$W=U+1$, and adding $d_{\mathrm{first}}\le M_*$ gives
\eqref{eq:entropy-initial-mass}. The polynomial factor in $a$ is
absorbed by increasing $C_{\mathrm{init}}$ and the lower bound on $a$.

\paragraph{Reference growth and integrated spin thresholds.}
For this proof write
\begin{equation}
  S_a(Y)=4\pi\sqrt{aY},\qquad
  g_a(\Ehat,J)=\frac{\rho^{(1)}_{a,J}(\Ehat)}
                         {\sqrt{\Ehat^2-J^2}},\qquad \Ehat>|J|.
  \label{eq:entropy-reference-growth-functions}
\end{equation}
Thus $g_a$ is the ordinary reference density with respect to
$\dd\Ehat$; the scalar denominator is $\Ehat$.
For $z/a$ in a fixed positive compact interval, the exact factorization
\begin{equation}
  \begin{gathered}
    D_a(z)=\tfrac12 e^{A_z}
       (1-e^{C_z-A_z})(1-e^{-A_z-C_z}),\\
    A_z=2\pi\sqrt{az},\qquad C_z=2\pi\sqrt{(a-2)z},\qquad
    A_z-C_z=\frac{4\pi\sqrt z}{\sqrt a+\sqrt{a-2}}
  \end{gathered}
  \label{eq:entropy-reference-factorization}
\end{equation}
shows that $D_a(z)$ is bounded above and below by fixed positive
multiples of $e^{A_z}$. Globally, the upper bound
$D_a(z)\le e^{A_z}$ and
$\sqrt{1+x}+\sqrt{1-x}\le2-x^2/4$ give
\begin{equation}
  \rho^{(1)}_{a,J}(\Ehat)
    \le2\exp\!\left(S_a(\Ehat)
            -\frac{\pi\sqrt a}{2\Ehat^{3/2}}J^2\right),
       \qquad |J|\le\Ehat.
  \label{eq:entropy-reference-gaussian-bound}
\end{equation}

The measure at a nonzero-spin threshold is integrable. For $Z\ge T$,
\begin{equation}
  \begin{aligned}
    \sum_{J\in\Z}\int_{\max\{T,|J|\}}^Z\dd\omega_J
      &\le\log(Z/T)
          +2\sum_{1\le J<Z}\operatorname{arcosh}(Z/J)\\
      &\le\log(Z/T)+\pi Z.
  \end{aligned}
  \label{eq:entropy-integrated-spin-bound}
\end{equation}
A row whose lower limit exceeds $Z$ contributes zero. The last
inequality follows because $\operatorname{arcosh}(Z/x)$ decreases
in $x$ and its integral on $(0,Z)$ is $\pi Z/2$.
For continuous spin the corresponding identity is
$\int_{-\Ehat}^{\Ehat}(\Ehat^2-J^2)^{-1/2}\dd J=\pi$.
These integrated bounds will control the spin edges; no pointwise
bound on the sum of their singular denominators is needed.

Let $P_{\mathrm{all}}(X)=\sum_J\mu_{\infty,J}([|J|,X])$ be the
cumulative number of nonvacuum primaries. Any tail packet containing
an atom below $X$ lies wholly below $X+1$. Its weight is at most the
integral of $\rho^{(1)}_{a,J}+H_a$ over its entire cell. Disjointness
of the cells in each row therefore gives
\begin{equation}
  P_{\mathrm{all}}(X)\le M_{\mathrm{init}}
       +\sum_J\int_{\max\{T,|J|\}}^{X+1}
                (\rho^{(1)}_{a,J}+H_a)\dd\omega_J.
  \label{eq:entropy-primary-cumulative-envelope}
\end{equation}
The common envelope makes this valid even though the cells were
processed at different times.

For $X/c$ in a fixed positive compact interval, this implies the
sharper bound
\begin{equation}
  P_{\mathrm{all}}(X)\le Cc^{-1/2}e^{S_a(X)}.
  \label{eq:entropy-primary-cumulative-bound}
\end{equation}
To see the power of $c$, split the energy integral at $X/2$.
The lower part has an order-$c$ deficit in $S_a$. In the upper
part and $|J|\le\Ehat/2$, the denominator is of order $c$ and
\eqref{eq:entropy-reference-gaussian-bound} gives
$\sum_J e^{-\gamma J^2/c}\le C\sqrt c$.
The shift by one changes the bound only by a constant
factor in this extensive-energy range. The energy integral has bounded effective width since
$S_a'(\Ehat)$ has a positive lower bound on this interval.
The remaining spin region has an order-$c$ action deficit, and
\eqref{eq:entropy-integrated-spin-bound} bounds its integrated
measure by a polynomial in $c$. Finally $7<4\pi$ suppresses the
$H_a$ contribution, and \eqref{eq:entropy-initial-mass} controls the
initial weight. The same proof gives
\eqref{eq:entropy-primary-cumulative-bound} for the integer-spin
and continuous-spin reference measures. Their contribution below
$T$ is $\exp(O(\sqrt c))$, directly from $D_a$; it is finite at
zero and at every spin threshold.

\paragraph{Uniform polynomial approximation on a cell.}
Write $\alpha=\sqrt{L_n-|j_n|}$ and
$d=\sqrt{V_n-|j_n|}-\alpha$. In the normalized moment coordinate,
$\Ehat_n(z)=|j_n|+(\alpha+dz)^2$ for $0\le z\le1$. Then
\begin{equation}
  \begin{gathered}
    2\alpha d+d^2=V_n-L_n\le1,\\
    0\le\Ehat_n'(z)\le2,\qquad
    0\le\Ehat_n''(z)\le2,\qquad
    \Ehat_n^{(m)}(z)=0\quad(m\ge3).
  \end{gathered}
  \label{eq:entropy-cell-derivatives}
\end{equation}
For a parent center $Y$, let $h_n(z)=\phi(\Ehat_n(z)-Y)$.
Every fixed derivative of $h_n$ is uniformly bounded. The even
periodic function $h_n((1+\cos\theta)/2)$ has the same property.
Repeated integration by parts bounds its cosine coefficients by
$C_m\ell^{-m}$. Truncation after degree $k_n$ is a polynomial in
$z$, and summing the remaining coefficients proves
\begin{equation}
  \inf_{p\in\Pi_{k_n}}\|h_n-p\|_{\infty,[0,1]}
       \le C_P k_n^{-P}\le C_Pc^{-P}
       \quad\text{for every fixed }P\ge1.
  \label{eq:entropy-uniform-polynomial-error}
\end{equation}
For example, $P+2$ integrations by parts suffice. The constants
require only finitely many derivatives of the fixed kernel and the
fixed constant $A_{\mathrm{tail}}$. They are independent of $c$,
the cell, spin and parent center.

\paragraph{Positivity where the approximation is used.}
Suppose $Y/c$ lies in a positive compact interval. The cells that
meet $Y+\supp\phi$ lie in $[Y-R-1,Y+R+1]$. In the central rows
$|J|\le Y/2$, both $(\Ehat+J)/a$ and $(\Ehat-J)/a$ stay in
positive compact intervals. Equation~\eqref{eq:entropy-reference-factorization}
and the reference exponent give
\begin{equation}
  H_a(\Ehat)\le Ce^{-\eta c}\rho^{(1)}_{a,J}(\Ehat)
  \quad\text{on these cells},\qquad \eta>0.
  \label{eq:entropy-central-positivity}
\end{equation}
In particular, the current cell measure is positive for sufficiently
large $c$, so $\|\mu_n^{\mathrm{cell}}\|_{\TV}=M_n$ there.
Combining \eqref{eq:entropy-smooth-moment-bound} and
\eqref{eq:entropy-uniform-polynomial-error} bounds the replacement
error on each such cell by $C_PM_nc^{-P}$.

The number of relevant cells in a row is at most
$4R+6$, independently of $c$, the spin and the
parent center. On their union,
\eqref{eq:entropy-reference-gaussian-bound} bounds the ordinary
reference density by
$Cc^{-1}e^{S_a(Y)}e^{-\gamma J^2/c}$.
Summing their masses over central rows therefore gives at most
$Cc^{-1/2}e^{S_a(Y)}$. Replacing the current measures on those cells
by the reference costs only a relatively exponentially small error
by \eqref{eq:entropy-central-positivity}.
In the other rows, bound each packet by its whole-cell weight in
\eqref{eq:entropy-primary-cumulative-envelope}; the Gaussian deficit
and \eqref{eq:entropy-integrated-spin-bound} again give a relatively
exponentially small contribution. This last argument allows signed
current densities at spin openings.

To state the resulting primary comparison, set
\begin{equation}
  \begin{aligned}
    P_\phi^{\mathrm{at}}(Y)
       &=\sum_J\int_{[|J|,\infty)}\phi(\Ehat-Y)
                                      \dd\mu_{\infty,J}(\Ehat),\\
    P_\phi^{\mathbb Z}(Y)
       &=\sum_J\int_{|J|}^\infty\phi(\Ehat-Y)
                                      g_a(\Ehat,J)\dd\Ehat.
  \end{aligned}
  \label{eq:entropy-primary-smoothed-counts}
\end{equation}
Because a nonzero nonnegative continuous kernel is bounded below on
some interval, retaining that interval and $|J|\le\sqrt c$ also
gives a lower bound for the reference. Thus, uniformly on the stated
parent-energy range,
\begin{equation}
  \begin{gathered}
    C^{-1}c^{-1/2}e^{S_a(Y)}\le P_\phi^{\mathbb Z}(Y)
          \le Cc^{-1/2}e^{S_a(Y)},\\
    |P_\phi^{\mathrm{at}}(Y)-P_\phi^{\mathbb Z}(Y)|
          \le C_Pc^{-P-1/2}e^{S_a(Y)}.
  \end{gathered}
  \label{eq:entropy-primary-smooth-comparison}
\end{equation}
This supplies the scale needed to turn the cell errors into relative
errors. It does not assume that the kernel is positive throughout
its support.

\paragraph{Summing all descendant levels.}
Put $Y_0=E+1/12$ and $Y_N=Y_0-N$. For $0\le N\le Y_0/2$,
$Y_N/c$ remains in one positive compact interval. Concavity gives
\begin{equation}
  S_a(Y_0-N)\le S_a(Y_0)-\beta_0N,
  \qquad 0<\beta_0\le\inf S_a'(Y_0),
  \label{eq:entropy-descendant-action-bound}
\end{equation}
where $\beta_0$ is fixed uniformly in $c$ and $\lambda$.
Using \eqref{eq:entropy-descendant-product}, the sum of the primary
errors at these levels is bounded by
\begin{equation}
  \begin{aligned}
    \sum_{N\le Y_0/2}p_2(N)
       |P_\phi^{\mathrm{at}}(Y_N)-P_\phi^{\mathbb Z}(Y_N)|
      &\le C_Pc^{-P-1/2}e^{S_a(Y_0)}
                  \sum_{N\ge0}p_2(N)e^{-\beta_0N}\\
      &=C_Pc^{-P-1/2}e^{S_a(Y_0)}G(\beta_0).
  \end{aligned}
  \label{eq:entropy-summed-descendant-error}
\end{equation}
The multiplicities are summed with their suppression factor; the
number of allowed descendant levels introduces no growing factor.

For $N>Y_0/2$, every contributing nonvacuum parent has energy at most
$Y_0/2+R$. The cumulative bound
\eqref{eq:entropy-primary-cumulative-bound} controls all such parents.
Only $N\le Y_0+R$ can contribute. For any integer $M\ge0$,
\begin{equation}
  \sum_{N=0}^M p_2(N)
     \le \exp\!\left(\beta M+\frac{\pi^2}{3\beta}\right),
       \qquad \beta>0.
  \label{eq:entropy-descendant-cumulative-bound}
\end{equation}
Indeed, $\log G(\beta)=2\sum_{m\ge1}[m(e^{m\beta}-1)]^{-1}
\le\pi^2/(3\beta)$. Taking $\beta$ of order $c^{-1/2}$ when
$M=O(c)$ bounds the total multiplicity by $e^{O(\sqrt c)}$.
It cannot compensate the order-$c$ deficit between $S_a(Y_0/2+R)$
and $S_a(Y_0)$. The same bound applies to the reference tail.
The descendants of the initial packets contribute at most
$e^{O(\sqrt c)}$ by \eqref{eq:entropy-initial-mass}; the vacuum
module has the same bound since its relevant levels are $O(c)$.

The $N=0$ reference term supplies the lower bound, while the same
split gives the upper bound
$N_\phi^{\mathrm{ref}}(E)\asymp c^{-1/2}e^{S_a(Y_0)}$.
Equations~\eqref{eq:entropy-summed-descendant-error} and
\eqref{eq:entropy-descendant-cumulative-bound} therefore prove
Lemma~\ref{lem:entropy-reference-comparison}, including its
uniformity over the admissible first-level data and atomic choices
at fixed auxiliary parameters.
At each fixed $c$, the primary band through $Y_0+R$ stabilizes after
finitely many steps, so these estimates apply to the limiting
spectrum before the large-$c$ limit is taken.

\paragraph{Replacing the integer-spin sum.}
Only the explicit reference is differentiated in this step. Extend
$g_a$ to real $J$ and choose a fixed smooth cutoff $\chi$ equal to
one on $[-1/4,1/4]$ and zero outside $(-1/2,1/2)$. For an extensive
parent center $Y$, define
\begin{equation}
  F_Y(J)=\chi(J/Y)\int_{\R}\phi(s)g_a(Y+s,J)\dd s.
  \label{eq:entropy-poisson-test}
\end{equation}
On its support, factorization
\eqref{eq:entropy-reference-factorization} writes $g_a$ as
$e^{\Phi_a}$ times an amplitude whose $m$th spin derivative is
$O_m(c^{-1-m})$, where
\[
  \Phi_a(\Ehat,J)
       =2\pi\sqrt a\bigl(\sqrt{\Ehat+J}+\sqrt{\Ehat-J}\bigr).
\]
Here $\partial_J\Phi_a=O(|J|/c)$ and
$\partial_J^m\Phi_a=O_m(c^{1-m})$ for $m\ge2$.
Together with the Gaussian deficit in
\eqref{eq:entropy-reference-gaussian-bound}, the product rule gives
\begin{equation}
  \|\partial_J^m F_Y\|_{L^1(\R)}
       \le C_{m,\phi}e^{S_a(Y)}c^{-(m+1)/2}.
  \label{eq:entropy-poisson-derivative-bound}
\end{equation}
More explicitly, each differentiated exponential is bounded by
$c^{-m/2}$ times a fixed polynomial in $|J|/\sqrt c$, multiplying
$e^{S_a(Y)-\gamma J^2/c}$. Its spin integral contributes
$O(\sqrt c)$, while the ordinary density supplies $c^{-1}$.
Derivatives of $\chi$ are supported where the action deficit is
already of order $c$ and satisfy the same bound.

With Fourier convention
$\widehat F(\ell)=\int F(J)e^{-2\pi i\ell J}\dd J$,
Poisson summation and $m>1$ integrations by parts give
\begin{equation}
  \begin{aligned}
    \left|\sum_{J\in\Z}F_Y(J)-\int_{\R}F_Y(J)\dd J\right|
      &\le\sum_{\ell\ne0}|\widehat F_Y(\ell)|\\
      &\le\frac{2\zeta(m)}{(2\pi)^m}
                                  \|\partial_J^mF_Y\|_{L^1(\R)}.
  \end{aligned}
  \label{eq:entropy-poisson-error}
\end{equation}
Restoring the removed spin region on both sides costs only a
relatively exponentially small term, using
\eqref{eq:entropy-integrated-spin-bound} and its continuous-spin
counterpart. Taking $m\ge2P$ gives relative $O_P(c^{-P})$.
The estimates are uniform for $Y=Y_N$, $N\le Y_0/2$; the descendant
argument above controls the other levels.

Denote by $N_\phi^{\mathbb R}$ the reference count
\eqref{eq:entropy-reference-count} with the primary-spin sum replaced
by the real-spin integral. We have proved
\begin{equation}
  \frac{N_\phi^{\mathrm{ref}}(\lambda c)}
       {N_\phi^{\mathbb R}(\lambda c)}
        =1+O_P(c^{-P})\qquad\text{for every fixed }P\ge1.
  \label{eq:entropy-spin-integral-comparison}
\end{equation}
This is a comparison of smoothed counts, not a relative-error bound
for thermal transforms along a complex inversion contour.

\paragraph{The additional tails for a fixed Gaussian.}
A Gaussian kernel of fixed positive width is covered by the same
argument after controlling its noncompact tails. First,
\eqref{eq:entropy-primary-cumulative-envelope},
\eqref{eq:entropy-integrated-spin-bound} and
\eqref{eq:entropy-descendant-cumulative-bound} give the crude global
bound
\begin{equation}
  N_{\mathrm{nv}}(E_s+1/12\le X)
    \le C(1+a+X)^d
       \exp\!\left(S_a(X+1)+C\sqrt X+C\sqrt a\right),
       \qquad X\ge T,
  \label{eq:entropy-gaussian-global-envelope}
\end{equation}
for a fixed $d$. Here $N_{\mathrm{nv}}$ includes all descendants
of nonvacuum primaries. The vacuum contribution has the separate
bound $e^{C\sqrt{a+X+1}}$. These estimates follow by bounding all
parent energies by $X$ and all descendant multiplicities through
$X$; the reference obeys the same bounds. They establish convergence
of the Gaussian-weighted count even at arbitrarily high energies.

A sharper local bound is needed near the extensive center. Repeating
the descendant split with
\eqref{eq:entropy-primary-cumulative-bound} gives a full-state
cumulative bound $Cc^{-1/2}e^{S_a(Y)}$ when $Y/c$ is in a positive
compact interval. Dividing the bound for each unit energy interval
near $E+s$ by the reference scale at $E$, concavity yields
$Ce^{S_a'(Y_0)s}$, up to a uniform constant for the unit-width shift.
Choose a fixed small $\epsilon>0$. For
$\log c<|s|<\epsilon c$, multiplication by the Gaussian and summation
over unit intervals is consequently
$O(e^{-\gamma(\log c)^2})$ relative to that scale.
For $|s|\ge\epsilon c$, the global bound
\eqref{eq:entropy-gaussian-global-envelope} suffices: the negative
quadratic Gaussian exponent dominates its growth, beginning with
an order-$c^2$ loss at the near boundary. Physical energies are
bounded below by $-c/12$, so the negative tail is controlled as well.

It remains to compare the counts for $|s|\le\log c$. Multiply the
Gaussian by a smooth cutoff equal to one there and zero for
$|s|\ge2\log c$. Every fixed derivative of this truncated kernel
is uniformly bounded. The relevant cells now span $O(\log c)$
energy units, and their summed masses are at most a fixed power of
$c$ times the reference scale, by the local bound just proved.
Choosing the approximation order in
\eqref{eq:entropy-uniform-polynomial-error} larger by that fixed
power gives any prescribed relative $O_P(c^{-P})$ error.
The same adjustment in \eqref{eq:entropy-poisson-derivative-bound}
controls the spin integral. The descendant estimates remain uniform
on the enlarged compact energy range. Restoring the Gaussian tails
therefore proves both comparisons for each fixed Gaussian width;
no uniform limit as the width tends to zero or infinity is asserted.

\subsection{The exact reference transform and entropy coefficients}
\label{app:entropy-reference-expansion}

We now identify $N_\phi^{\mathbb R}$ with the BTZ integral and prove
its expansion with a uniform remainder at every fixed order. The
assumptions and uniform parameter range are those of
Appendix~\ref{app:entropy-transfer}. This part concerns the explicit
reference; it does not require inversion of the constructed atomic
partition function.

\paragraph{The primary integral.}
For real $\beta>0$, set $u=\Ehat+J$ and $v=\Ehat-J$. The Jacobian
is $1/2$, while $\sqrt{\Ehat^2-J^2}=\sqrt{uv}$. The factor two
in \eqref{eq:global-leading-density} thus gives
\begin{equation}
  Z_{\mathrm{ref,prim}}(\beta)
    =\left[\int_0^\infty u^{-1/2}D_a(u)e^{-\beta u/2}\dd u\right]^2.
  \label{eq:entropy-primary-transform-square}
\end{equation}
Putting $u=x^2$ and completing the square in the two exponential
terms of the hyperbolic cosine yields
\[
  \int_0^\infty u^{-1/2}\cosh(2\pi\sqrt{\alpha u})
                                 e^{-\beta u/2}\dd u
      =\sqrt{\frac{2\pi}{\beta}}e^{2\pi^2\alpha/\beta},
       \qquad \alpha\ge0.
\]
Subtract the terms with $\alpha=a$ and $a-2$ to obtain
\begin{equation}
  \int_0^\infty u^{-1/2}D_a(u)e^{-\beta u/2}\dd u
     =\sqrt{\frac{2\pi}{\beta}}e^{2\pi^2a/\beta}
                       (1-e^{-4\pi^2/\beta}).
  \label{eq:entropy-primary-gaussian-integral}
\end{equation}
Squaring proves \eqref{eq:entropy-continuous-reference-transform}.
These integrals converge locally uniformly for $\Re\beta>0$:
a fixed positive real part supplies exponential decay in $u$, whereas
$D_a(u)$ grows only exponentially in $\sqrt u$. The resulting
holomorphic functions therefore satisfy the same identities
throughout that half-plane.

\paragraph{Descendants and the physical energy shift.}
The physical energy of a level-$N$ descendant is
$\Ehat-1/12+N$. Its transform supplies exactly
$e^{\beta/12}G(\beta)$, with no vacuum module added to the reference.
For completeness, if $\tau=i\beta/(2\pi)$, the eta product gives
$\eta(\tau)^{-2}=e^{\beta/12}G(\beta)$.
Using $\eta(-1/\tau)=\sqrt{-i\tau}\,\eta(\tau)$ gives
\eqref{eq:entropy-eta-identity}. Consequently
\begin{equation}
  \begin{aligned}
    Z_{\mathrm{ref}}^{\mathbb R}(\beta)
      &=e^{4\pi^2a/\beta+\beta'/12}
                   (1-e^{-\beta'})^2G(\beta')\\
      &=e^{\pi^2c/(3\beta)}V(\beta'),
         \qquad \beta'=4\pi^2/\beta.
  \end{aligned}
  \label{eq:entropy-exact-thermal-normalization}
\end{equation}
Here $12a+1=c$, and the two factors $1-e^{-\beta'}$ remove the
level-one factors of the nonvacuum descendant product. This is the
exact identity with the perturbative BTZ contribution used in
\eqref{eq:entropy-reference-btz}.

The change of variable $s=E_s-E$ in the smoothing integral gives
$\int e^{-\beta E}\phi(E_s-E)\dd E=e^{-\beta E_s}K_\phi(\beta)$.
The sign in $K_\phi$ is therefore positive, as in
\eqref{eq:entropy-kernel-transform}. Thermal integrability justifies
interchanging the positive real-$\beta$ integrals and sums. The
resulting identity extends to $\Re\beta>0$ by absolute convergence.
Bilateral Laplace inversion then gives
\begin{equation}
  N_\phi^{\mathbb R}(E)
    =\int_{\gamma-i\infty}^{\gamma+i\infty}
       \frac{\dd\beta}{2\pi i}\,
       e^{\beta E}K_\phi(\beta)Z_{\mathrm{ref}}^{\mathbb R}(\beta)
    =N_\phi^{\mathrm{BTZ}}(E),\qquad \gamma>0.
  \label{eq:entropy-continuous-reference-inversion}
\end{equation}
The smoothed count is continuous, and the absolute integrability
on the inversion line established next justifies this pointwise
inversion. Together with \eqref{eq:entropy-spin-integral-comparison}
and Lemma~\ref{lem:entropy-reference-comparison}, this proves
\eqref{eq:entropy-smooth-matching}.

\paragraph{Control of the whole inversion line.}
We may choose $\gamma=\beta_*$ from
\eqref{eq:entropy-saddle-parameters}. On $\beta=\beta_*+it$, the
original-channel product and the primary integral give
\begin{equation}
  \begin{gathered}
    |G(\beta)|\le G(\beta_*),\qquad
    |1-e^{-4\pi^2/\beta}|\le\frac{4\pi^2}{|\beta|},\qquad
    |K_\phi(\beta)|\le K_\phi(\beta_*),\\
    |e^{\beta E}K_\phi(\beta)Z_{\mathrm{ref}}^{\mathbb R}(\beta)|
       \le C(1+|t|)^{-3}
           \exp\!\left(\beta_*E+
                   \frac{4\pi^2a\beta_*}{\beta_*^2+t^2}\right).
  \end{gathered}
  \label{eq:entropy-global-inversion-bound}
\end{equation}
For the second inequality, write $1-e^{-z}=z\int_0^1e^{-sz}\dd s$
and use $\Re z\ge0$. The factor $|\beta|^{-3}$ comes from the
primary prefactor and these two null-state factors. The same
argument on any fixed positive vertical line proves the absolute
integrability required in \eqref{eq:entropy-continuous-reference-inversion}.

For each fixed $\epsilon>0$, the part $|t|\ge\epsilon$ loses
an order-$c$ amount of the exponent relative to $t=0$.
Indeed, the loss in the $a$-dependent term is
$4\pi^2a t^2/[\beta_*(\beta_*^2+t^2)]$. The difference between its
value at $t=0$ and the physical-$c$ saddle action is only the bounded
shift $\pi^2/(3\beta_*)$.
Since $\beta_*$ ranges over a positive compact interval,
\eqref{eq:entropy-global-inversion-bound} makes the integral outside
$|t|<\epsilon$ exponentially smaller than the saddle contribution,
uniformly in $\lambda$. There are no other contributions of the
same exponential order on the inversion line.

\paragraph{The local expansion and its remainder.}
Define the phase only for this calculation by
\begin{equation}
  f_\lambda(\beta)=\lambda\beta+\frac{\pi^2}{3\beta},\qquad
  f_\lambda'(\beta_*)=0,\qquad
  f_\lambda''(\beta_*)=\frac{r^3}{12\pi}>0.
  \label{eq:entropy-saddle-phase}
\end{equation}
Near $\beta_*$ the amplitude is the analytic function $A_\phi$
of \eqref{eq:entropy-smooth-amplitude}, with $A_\phi(\beta_*)>0$.
Its derivatives of every fixed order are uniformly bounded on a
common neighborhood of the compact set of saddles. Along the
vertical line,
\[
  \Re\bigl[f_\lambda(\beta_*+it)-f_\lambda(\beta_*)\bigr]
     =-\frac{\pi^2t^2}{3\beta_*(\beta_*^2+t^2)}.
\]
The region $c^{-2/5}\le|t|<\epsilon$ is therefore suppressed by
$e^{-\gamma c^{1/5}}$. In the remaining region put
$x=\sqrt{c f_\lambda''(\beta_*)}\,t$ and Taylor-expand the amplitude
and the phase beyond its quadratic term. For each fixed requested
order, sufficiently many Taylor terms bound the remainder by the
corresponding inverse power of $c$ times an integrable polynomial
in $x$ multiplying $e^{-\gamma x^2}$. Odd terms integrate to zero
on the symmetric interval. Extending that interval to the real line
adds an exponentially small error. This proves an expansion in
integer powers of $c^{-1}$ with a uniform remainder at each finite
order, rather than only a formal saddle expansion.

The resulting differential operators can be written explicitly.
Let $\mathfrak D=\beta\partial_\beta$ and define
\begin{equation}
  \begin{aligned}
    \mathcal L_0 A&=A,\\
    (\mathcal L_m A)(\beta)
       &=\frac{(-1)^m}{m!}
          \left(\frac{3\beta}{16\pi^2}\right)^m
          \left\{\prod_{j=1}^m
            \bigl[4(\mathfrak D+1)^2-(2j-1)^2\bigr]A\right\}(\beta),
          \qquad m\ge1.
  \end{aligned}
  \label{eq:entropy-saddle-operators}
\end{equation}
The displayed power of $\beta$ multiplies the result after the
differential operators have acted. For every fixed $P\ge0$,
\begin{equation}
  N_\phi^{\mathrm{BTZ}}(\lambda c)
    =\frac{\sqrt6\,e^{\pi cr/3}}{\sqrt c\,r^{3/2}}
       \left[\sum_{m=0}^P
            \frac{(\mathcal L_m A_\phi)(\beta_*)}{c^m}
             +O_P(c^{-P-1})\right].
  \label{eq:entropy-reference-all-orders}
\end{equation}

To verify \eqref{eq:entropy-saddle-operators}, the coefficient at
order $m$ is a linear differential functional of $A$ of order at
most $2m$, by the preceding Gaussian expansion. It is enough to
evaluate it on $A(\beta)=\beta^p$, $0\le p\le2m$.
The positively oriented circle $|\beta|=\beta_*$ has the same local
saddle and gives the exact integral
\[
  \frac{1}{2\pi i}\oint_{|\beta|=\beta_*}
      \beta^p e^{c f_\lambda(\beta)}\dd\beta
     =\beta_*^{p+1}I_{p+1}(z),\qquad
       z=\frac{2\pi^2c}{3\beta_*},
\]
where, for integer $\nu$,
$I_\nu(z)=(2\pi)^{-1}\int_{-\pi}^{\pi}
 e^{z\cos\theta+i\nu\theta}\dd\theta$.
Two integrations by parts give
$z^2 I_\nu''+zI_\nu'-(z^2+\nu^2)I_\nu=0$.
Substituting $e^zz^{-1/2}$ times an inverse-$z$ series into this
identity gives the coefficient ratio
$-[4\nu^2-(2m-1)^2]/(8m)$ at order $m$.
The leading Gaussian integral is $e^z/\sqrt{2\pi z}$, so the
relative coefficient on the monomial is
\[
  \frac{(-1)^m}{m!}
      \left(\frac{3\beta_*}{16\pi^2}\right)^m
       \prod_{j=1}^m[4(p+1)^2-(2j-1)^2].
\]
Since $\mathfrak D\beta^p=p\beta^p$, this agrees with
\eqref{eq:entropy-saddle-operators}. The circle integral is used
only to evaluate the local differential coefficients. The global
remainder for the actual amplitude follows from
\eqref{eq:entropy-global-inversion-bound} and the vertical-contour
argument above.

\paragraph{Passing directly to the entropy coefficients.}
Equation~\eqref{eq:entropy-reference-all-orders} gives the leading
prefactor in \eqref{eq:entropy-smooth-count-expansion}; its logarithm
gives \eqref{eq:entropy-smooth-constant}. For the higher coefficients
one can work directly with $C_{m,\phi}$, without introducing a second
set of scalar coefficients for the count. Equivalently, as a formal
power series in $t$,
\begin{equation}
  \log\!\left[
    \frac{1}{A_\phi(\beta_*)}
      \sum_{m\ge0}t^m(\mathcal L_m A_\phi)(\beta_*)\right]
       =\sum_{m\ge1}C_{m,\phi}(\lambda)t^m.
  \label{eq:entropy-formal-logarithm}
\end{equation}
For any fixed coefficient only finitely many terms are needed.
Differentiating this identity with respect to $t$ and multiplying
by the series inside the logarithm gives the recursion
\begin{equation}
  C_{m,\phi}
    =\frac{(\mathcal L_m A_\phi)(\beta_*)}{A_\phi(\beta_*)}
       -\frac{1}{mA_\phi(\beta_*)}
          \sum_{j=1}^{m-1}j C_{j,\phi}
                   (\mathcal L_{m-j}A_\phi)(\beta_*).
  \label{eq:entropy-coefficient-recursion}
\end{equation}
In particular,
\[
  (\mathcal L_1 A)(\beta)
     =-\frac{3\beta}{16\pi^2}
                  [3A(\beta)+12\beta A'(\beta)+4\beta^2A''(\beta)],
\]
which gives \eqref{eq:entropy-smooth-first-coefficient} with the
signs appropriate to the vertical contour. The positivity of
$A_\phi(\beta_*)$ is uniform on the compact saddle interval, so
finite-order Taylor expansion of the logarithm preserves the
uniform remainder in \eqref{eq:entropy-reference-all-orders}.
Using \eqref{eq:entropy-smooth-matching} at order $P+1$ transfers
that expansion through $c^{-P}$, with remainder $O(c^{-P-1})$, to
the constructed entropy. This completes the detailed proof of
Proposition~\ref{prop:btz-entropy}.

For a fixed Gaussian the transform $K_\phi$ is entire and bounded
on each fixed positive vertical line, and its local derivatives
satisfy the same compact-saddle bounds. The inversion estimate and
the coefficient calculation therefore apply after the tail
comparison in Appendix~\ref{app:entropy-transfer}.
All coefficient and remainder statements concern each fixed finite
order. They assert neither convergence of the series nor uniformity
as the order grows. They identify the smoothed microcanonical
count with a single perturbative saddle, without assuming canonical
BTZ dominance or identifying the full atomic thermal transform
with that saddle.

\subsection{Smoothed counts with a macroscopic gap}
\label{app:entropy-positive-kappa}

We prove the statements in Section~\ref{sec:entropy-positive-kappa}
for the explicit schedule of Section~\ref{sec:construction-schedule}.
Fix all construction parameters, including $R_0$, $R_{\mathrm{init}}$
and the tail-degree constants, subject to Appendix~\ref{app:global-estimates};
then fix a sufficiently small $\kappa>0$. Write
$b=\kappa a$, $U=\vartheta a$, $d=\sqrt\kappa$ and
$D=\sqrt\vartheta$, where $\vartheta=R_{\mathrm{init}}\kappa$.
The degree in every initial row is $k=\lfloor aD\rfloor$.
The near-gap results use the compact lexicographic selector.
The prescribed first-level data have total weight at most fixed $M_*$
and spins $|J|\le J_*$, with $J_*$ fixed and $b\ge J_*$.
Constants and the lower threshold on $a$ may depend on these bounds
and the fixed construction parameters.

\paragraph{The vacuum coefficient and integer sampling.}
Let $v_2(N)$ be the coefficient of $e^{-\beta N}$ in
$V(\beta)$. The exact relation to the descendant coefficients is
\[
  v_2(N)=p_2(N)-2p_2(N-1)+p_2(N-2),
\]
where negative arguments have zero coefficient. The two-color
Rademacher formula \cite[Theorem~4 and Lemma~2]{PribitkinWilliams2018}
gives the principal term
\begin{equation}
  p_2(N)=\frac{\pi}{6(N-1/12)}
      I_2\!\left(2\pi\sqrt{\frac{N-1/12}{3}}\right)
      +O\!\left(N^C e^{\pi\sqrt{N/3}}\right).
  \label{eq:entropy-partition-principal-bessel}
\end{equation}
In that series, a denominator $q$ divides the Bessel argument by
$q$, and the finite phase sum has modulus at most $q$.
For $2\le q\le\sqrt N$ this bounds the terms by a polynomial times
the half-sized principal exponential. For $q>\sqrt N$, the
small-argument estimate $I_2(z)=O(z^2)$ makes the remaining
$O(q^{-2})$ bound summable. This justifies the error in
\eqref{eq:entropy-partition-principal-bessel} at coefficient level.

Apply the large-argument Bessel expansion
\cite[Eq.~10.40.1]{NISTDLMF} to its principal term before taking
the exact second difference. The arithmetic error stays
exponentially smaller, whereas the leading second difference
multiplies the $p_2$ asymptotic by $\pi^2/(3N)$.
Consequently,
\begin{equation}
  \begin{aligned}
    p_2(N)&=\frac{N^{-5/4}}{4\,3^{3/4}}
        e^{2\pi\sqrt{N/3}}\bigl(1+O(N^{-1/2})\bigr),\\
    v_2(N)&=\frac{\pi^2N^{-9/4}}{4\,3^{7/4}}
        e^{2\pi\sqrt{N/3}}\bigl(1+O(N^{-1/2})\bigr).
  \end{aligned}
  \label{eq:entropy-kappa-partition-coefficients}
\end{equation}
An expansion of the principal Bessel term through any fixed order
justifies the cancellation; a relative error estimate for $p_2$
alone would not suffice.

At a dimension center $t=cx$ an extensive distance below
$\Delta_g=(1+\kappa)a$, a compact kernel sees only
$\sum_{N\ge0}v_2(N)\phi(N-t)$. Every term has $N=t+O(1)$.
Its common relative coefficient error is $O(t^{-1/2})$, so the
integer sum remains exactly the periodization $P_\phi(t)$ in
\eqref{eq:entropy-kappa-periodization}. This proves
\eqref{eq:entropy-kappa-vacuum-count}. For the chosen bump with
$w>1/2$, some integer is always in its positive interior;
continuity and periodicity give a positive minimum for $P_\phi$.
Its logarithm is therefore bounded, as used in the entropy formula.

For a fixed Gaussian the same expansion holds on $|N-t|\le t/2$:
the Taylor remainders are polynomial in $|N-t|$ and summable against
a slightly wider Gaussian. The bound
$v_2(N)\le e^{C\sqrt{N+1}}$ controls the complement by
$e^{-\eta t^2}$. Nonvacuum leakage is also negligible. Indeed,
complete tail cells, the integrated spin bound and the descendant
bound give, for a dimension cutoff $Z\ge\Delta_g$,
\begin{equation}
  \#\{\text{nonvacuum states with }\Delta\le Z\}
     \le C(1+a+Z)^C
          e^{Ca+C\sqrt{a(1+Z)}+C\sqrt{1+Z}}.
  \label{eq:entropy-kappa-global-envelope}
\end{equation}
The $Ca$ term includes all initial packets, whose weight is at most
$\operatorname{poly}(a)e^{4\pi aD}$ by
\eqref{eq:global-initial-variation}. For the tail part, a packet
meeting the cutoff ends at most one unit above it; sum its full
weight with \eqref{eq:entropy-integrated-spin-bound} and then use
\eqref{eq:entropy-descendant-cumulative-bound}.
All nonvacuum states lie at offset at least $\eta c$ from $t$.
Multiplying the bound on successive unit bins by the Gaussian gives
$e^{-\eta'c^2}$, including arbitrarily large positive offsets.
Thus the vacuum formula applies to the full Gaussian count too.

\paragraph{Stability under changes of the first level.}
For an initial row
$|J|\le T$, let $\Lambda_{a,J,V}$ be its continuous moment functional
on the full normalized interval $[0,1]$, with $V\in[U,U+1]$.
Let $\Lambda^{(1)}_{a,J,V}$ use the leading vacuum density on the
same interval.

\begin{lemma}[Stability of the initial moment forms]
\label{lem:entropy-first-level-stability}
With the preceding parameter choices, there are constants
$\eta_*,C_*>0$ such that, uniformly over the allowed first-level
data, initial rows and endpoints,
\begin{equation}
  \begin{gathered}
  |\Lambda_{a,J,V}(p)-\Lambda^{(1)}_{a,J,V}(p)|
       \le\epsilon_a\Lambda^{(1)}_{a,J,V}(p),\\
  \epsilon_a=C_{\kappa,M_*,J_*}a^{C_*}e^{-\eta_*aD},
  \qquad p\in\Pi_k,\quad p\ge0\text{ on }[0,1].
  \end{gathered}
  \label{eq:entropy-first-level-form-stability}
\end{equation}
The sufficient range \eqref{eq:global-kappa-choice} is unchanged.
\end{lemma}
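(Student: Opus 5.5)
We compare the two functionals through their difference measure. On an initial row $|J|\le T$ with $V\in[U,U+1]$, the difference $\Lambda_{a,J,V}-\Lambda^{(1)}_{a,J,V}$ is the integral of $p(s(\Ehat))$ against $(\rho_{0,J}-\rho^{(1)}_{a,J})\dd\omega_J$ over the whole cell $(L,V]$. By the decomposition \eqref{eq:global-initial-decomposition}, this density splits into the vacuum remainder $\rho^{\mathrm{rem}}_{a,J}$ and the initialization response $\rho^{\mathrm{resp}}_{a,b,J}$ on $\Ehat>b$. Below $b$ both measures vanish, since the leading reference is not cleared but $\Lambda^{(1)}$ is defined on the same interval $(L,V]$ with $L\ge b$. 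Write $\Delta(p)$ for this difference.

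The first step is an upper bound for $|\Delta(p)|$ in terms of $\sup_{[0,1]}p$. I would use \eqref{eq:global-mwk-remainder-bound} for the remainder, and the growing-cutoff line of \eqref{eq:global-initial-response-bound} for the response, bounding the latter by $C(1+M_*)ab^4\sqrt{\Ehat}\,e^{4\pi\sqrt{ab}+C_{\mathrm{ref}}b}$. Integrating against $\dd\omega_J$ over the row and invoking \eqref{eq:global-upper-half-polynomials} gives
\[
|\Delta(p)|\le C(1+M_*)\operatorname{poly}(a)\,6^k S_+\bigl[e^{2\pi\sqrt{aV}}+e^{4\pi\sqrt{ab}+C_{\mathrm{ref}}b}\bigr].
\]
Here $S_+$ is the maximum on the upper half of the $\xi$-interval. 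The $M_*$, spin and first-level dependence enters only through the prefactor $1+M_*$, which is uniform in the prescribed data.

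The second step is the matching lower bound for $\Lambda^{(1)}_{a,J,V}(p)$. On the upper half, the reference \eqref{eq:global-leading-bounds} is bounded below, with polynomial prefactors, by $e^{8\sqrt{aV/4}\cdot c}$. Better, I would use the sharp factorization \eqref{eq:entropy-reference-factorization}, which gives $\rho^{(1)}\gtrsim e^{2\pi\sqrt{a(\Ehat+J)}+2\pi\sqrt{a(\Ehat-J)}}\ge e^{c'\sqrt{aU}}$ with a rate close to $4\pi\sqrt{aU}\cdot$const. Combined with the third inequality in \eqref{eq:global-upper-half-polynomials}, this yields $\Lambda^{(1)}(p)\ge c\,a^{-C}S_+\,e^{\sigma aD}$ for some explicit $\sigma$ of order $4\pi$ times a fixed fraction, since $\sqrt{aU}=aD$. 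The ratio is then
\[
\epsilon_a\lesssim a^{C}\,6^{k}\bigl[e^{(2\pi-\sigma')aD}+e^{4\pi a\sqrt\kappa+C_{\mathrm{ref}}\kappa a-\sigma aD}\bigr],\qquad k=\lfloor aD\rfloor .
\]

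The main obstacle is making the exponent negative. The loss $6^k=e^{aD\log 6}$ must be beaten. The first term requires $\sigma-2\pi>\log 6$, which holds if the lower bound on the upper half is taken at the true rate $\approx 4\pi\sqrt{aV/2}$, not the crude rate $8\sqrt{\cdot}$. The second term requires $\sigma-\log6>(4\pi+C_{\mathrm{ref}}\sqrt\kappa)/\sqrt{R_{\mathrm{init}}}$. This is exactly the mechanism behind the choice of $R_{\mathrm{init}}$ via $C_N$ in Appendix~\ref{app:global-initial-moments}, so I expect it to hold once $R_{\mathrm{init}}$ is large and $\kappa<\kappa_0$, with no change to \eqref{eq:global-kappa-choice}.

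If the crude upper-half extrapolation loses too much, I would instead avoid $6^k$ altogether. In that variant the difference density is compared pointwise to $\rho^{(1)}$ on all of $[\sqrt{b},\sqrt V]$, using the exponential deficit $\rho^{\mathrm{resp}}/\rho^{(1)}\le e^{-\eta aD}$ valid above energy $\approx T$. The lower window $[L,T]$ is then handled with the Chebyshev bound. Either way $\eta_*$ is the resulting exponent gap, and $C_*$ collects the polynomial factors.
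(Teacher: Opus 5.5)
\textbf{The remainder estimate does not close.} Your architecture matches the paper's: split the difference into $\rho^{\mathrm{rem}}_{a,J}$ plus the initialization response, bound its total variation on the cell, control $\|p\|_{\infty,[0,1]}$ by Chebyshev extrapolation from the top of the square-root interval, and bound $\Lambda^{(1)}_{a,J,V}$ below on that top piece. Your handling of the response is fine, since its rate $4\pi\sqrt\kappa+C_{\mathrm{ref}}\kappa$ is at most $\tfrac74D$.

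The step that fails is the vacuum remainder. Its total variation on the cell is of order $e^{2\pi\sqrt{aV}}=e^{2\pi aD+O(1)}$. With the factor $6^k=e^{aD\log6}$ you therefore need $\sigma>2\pi+\log6\approx8.07$. But the upper half of the $\xi$-interval begins at $\xi=(\xi_-+\xi_+)/2$, which is near energy $V/4$, not $V/2$. In the scalar row, $\xi_-=\sqrt b$ and $\xi_+=\sqrt V$, so the reference rate at that point is $2\pi\sqrt a(\sqrt b+\sqrt V)\approx2\pi(1+R_{\mathrm{init}}^{-1/2})aD<7\,aD$, because $R_{\mathrm{init}}\ge16R_0>80$. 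Since $S_+$ may be attained at the bottom of the upper half, this is the best rate your chain can extract. Hence $\sigma-2\pi>\log6$ fails. The rate $4\pi\sqrt{aV/2}$ you quote belongs to the upper half in energy, not in $\xi$.

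\textbf{How the paper repairs it.} The paper takes $S_p$ to be the maximum on the upper \emph{quarter} of the square-root interval. Extrapolation then costs $T_k(7)<14^k$. On that quarter, $\sqrt{(\Ehat-|J|)/a}\ge\tfrac34\sqrt{15/16}\,D$ (using $|J|\le T\le U/16$), so the reference rate is at least $3\pi\sqrt{15/16}\,D\approx9.13\,D$. Both $3\pi\sqrt{15/16}-2\pi-\log14\approx0.2$ and $3\pi\sqrt{15/16}-\tfrac74-\log14$ are positive, and $\eta_*$ is any number below these margins. The thinness of the first margin shows that the choice of fraction is essential.

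\textbf{Your fallback.} It can also be made rigorous, but not as stated. Splitting at the absolute energy $T$ fails for rows with $|J|$ just below $T$, whose cells start at the spin opening $L=|J|$. There $\rho^{(1)}_{a,J}$ vanishes linearly in $\Ehat-|J|$. The remainder does not vanish there (for instance its $m=2$ term, cf.\ \eqref{eq:global-edge-second-image}), and neither does the response in general. So no pointwise bound by $e^{-\eta aD}\rho^{(1)}_{a,J}$ holds just above $T$. The split must instead be made in each row's own square-root coordinate, at a fixed fraction of $\xi_+$. The lower piece is then controlled by extrapolation from a top piece whose rate exceeds that lower piece's total-variation rate plus the extrapolation loss.
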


\begin{proof}
Relative to the scalar-unit initialization, set
$\eta_a=\nu_{\mathrm{first}}-\delta_{(b,0)}$ and
$v=\mathcal I_b[\eta_a]$. Linearity gives the complete change in
the continuous response,
\begin{equation}
  \delta\rho_J
     =-t(v)\delta_{J,0}\rho_b^\psi
       +\bigl(R[v-t(v)\theta_b]\bigr)_J.
  \label{eq:entropy-first-level-response}
\end{equation}
The first term is the direct anchor density on $[2b,3b]$.
Since $\|\eta_a\|_{\TV}\le M_*+1$, the inverse, anchor and kernel
estimates bound this response by
$C(1+M_*)b^{C_*}\sqrt{\Ehat}\,e^{C_{\mathrm{ref}}b}$.
The original initialization has the larger exponential rate
\[
  \zeta=4\pi\sqrt\kappa+C_{\mathrm{ref}}\kappa
     <7\sqrt{R_0\kappa}\le\frac74D.
\]
Integration through a spin opening preserves this rate, since
$\sqrt{\Ehat}\,\dd\omega_J
\le(\Ehat-|J|)^{-1/2}\dd\Ehat$.
Thus the full initialization response in each initial cell has
total variation at most $C_{\kappa,M_*}a^{C_*}e^{a\zeta}$.

Let $S_p$ be the maximum of $p$ on the upper quarter of the
square-root interval. Chebyshev extrapolation gives
$\|p\|_{\infty,[0,1]}\le C14^kS_p$. In that quarter,
$\sqrt{(\Ehat-|J|)/a}\ge(3/4)\sqrt{15/16}\,D$.
The leading density and the polynomial derivative bound therefore give
\[
  \Lambda^{(1)}_{a,J,V}(p)
     \ge c_\kappa(k+1)^{-3}S_p
            e^{3\pi\sqrt{15/16}\,aD}.
\]
Here \eqref{eq:moment-polynomial-integral} is applied on the upper
quarter. Its polynomial cost does not change the exponential comparison.
The response and the higher-denominator vacuum remainder are smaller
as moment functionals because both
\[
  3\pi\sqrt{15/16}-\tfrac74-\log14>0,
  \qquad
  3\pi\sqrt{15/16}-2\pi-\log14>0.
\]
Choose $\eta_*$ smaller than these two margins. The remaining
polynomial factors give \eqref{eq:entropy-first-level-form-stability}.
This comparison does not require the signed cell density to be
pointwise positive.
\end{proof}

\paragraph{Endpoint capacity and the integer selector.}
The endpoint ceiling changes an initial row's integrated weight by
less than one. Its density on $[U,U+1]$ is exponentially large,
uniformly in the initial rows, so
$0\le V_J-U\le C_\kappa a^C e^{-\eta aD}$.
The moment comparison in
\eqref{eq:entropy-first-level-form-stability} uses this same actual
endpoint $V_J$ on both sides. For a fixed row its cell weight
$M_{a,J}$ has logarithm $4\pi aD+O(\log a)$.
Thus $h_J=[64M_{a,J}(k+1)^3]^{-1}$ is exponentially small, and the
energy at $h_J$ is $b+\xi_{a,J}$ with
$0<\xi_{a,J}=O(ah_J)\to0$.

Write $\Lambda=\Lambda_{a,J,V_J}$ and use the normalized cell
coordinate $s$. The initial polynomial inequality and
Lemma~\ref{lem:moment-inward-positivity} give
\begin{equation}
  \Lambda(p)\ge\beta_{h,J}\operatorname{Var}_{[h_J,1]}p,
  \qquad p\in\Pi_k,\quad p\ge0\text{ on }[h_J,1],
  \qquad \beta_{h,J}\ge a^{-C}e^{\eta aD}.
  \label{eq:entropy-inward-moment-bound}
\end{equation}
The inward restriction costs only $1/16$ in the coefficient of
\eqref{eq:global-initial-reserve}. Define $\lambda_{a,J}$ as the
largest real weight that can be subtracted at $h_J$ while leaving
a positive functional on this degree-$k$ cone. The moments still
come from the entire original cell.
Lemma~\ref{lem:moment-endpoint-multiplicity} proves that the actual
lexicographic multiplicity satisfies
\begin{equation}
  \left\lfloor(1-\beta_{h,J}^{-1})\lambda_{a,J}\right\rfloor
       \le m_{a,J}\le\lambda_{a,J}.
  \label{eq:entropy-first-level-capacity}
\end{equation}
In particular its relative error is at most
$\beta_{h,J}^{-1}+\lambda_{a,J}^{-1}$.
The elementary endpoint packing from the same lemma already gives
$\lambda_{a,J}\ge\lfloor\beta_{h,J}/2\rfloor$.
Hence rounding and the integer constraint change each fixed-row
capacity by a relatively exponentially small amount.

Here is the quadratic-form expression for that real capacity.
Let $v_n(s)=(1,s,\ldots,s^n)^T$,
$H_n=\Lambda[v_n v_n^T]$ and
$H_n^+=\Lambda[(1-s)v_n v_n^T]$. Then
\begin{equation}
  \lambda_{a,J}=
  \begin{cases}
    [v_n(h_J)^TH_n^{-1}v_n(h_J)]^{-1},& k=2n,\\
    [(1-h_J)v_n(h_J)^T(H_n^+)^{-1}v_n(h_J)]^{-1},& k=2n+1.
  \end{cases}
  \label{eq:entropy-capacity-quadratic-forms}
\end{equation}
For even degree, the Markov--Luk\'acs theorem
\cite[Corollary~3.24]{Schmudgen2017} represents a nonnegative interval
polynomial as a sum of squares plus $(s-h_J)(1-s)$ times a sum of squares.
Endpoint
subtraction affects only the first form. For odd degree the two
factors are $s-h_J$ and $1-s$, so only the second form changes.
The maximal permitted rank-one subtraction gives
\eqref{eq:entropy-capacity-quadratic-forms}. These interval
factorizations follow by mapping to $\cos\theta$ and factoring a
nonnegative trigonometric polynomial as a squared modulus.
Strict positivity in \eqref{eq:entropy-inward-moment-bound}
ensures that the matrices being inverted are positive definite.
The active test polynomials $q^2$ and $(1-s)q^2$ are nonnegative
on the original $[0,1]$; thus the relative comparison in
\eqref{eq:entropy-first-level-form-stability} applies to them.
The other interval form is controlled by the inward inequality,
without discarding any part of the input measure.

\paragraph{The scalar capacity: exponential weight.}
In row zero put $\Ehat=ax^2$, $d\le x\le D$.
The reference measure, including its $\dd\Ehat/\Ehat$ factor, is
\[
  \rho^{(1)}_{a,0}(\Ehat)\dd\omega_0
     =g(x)e^{4\pi ax}[1+O(a^{-1})]\dd x,
  \qquad g(x)=\frac{(1-e^{-2\pi x})^2}{x}.
\]
The error is uniform on this fixed positive interval. Change to
$y=4\pi a(D-x)$, and set
$Y=4\pi a(D-d)$ and $C_a=e^{4\pi aD}/(4\pi a)$.
The measure becomes $C_a g(D-y/(4\pi a))e^{-y}\dd y$ on $[0,Y]$.
For the moment omit $g$. The orthonormal polynomials for
$y^\alpha e^{-y}\dd y$ on $[0,\infty)$ are
\[
  p_m^{(\alpha)}(y)
      =\frac{(-1)^m L_m^\alpha(y)}
                    {\sqrt{\Gamma(m+\alpha+1)/m!}},
  \qquad \alpha=0,1.
\]
Their positive leading coefficients fix the sign convention below.
The generating function and orthogonality used here are the
standard Laguerre identities \cite[Secs.~18.3, 18.9, 18.12]{NISTDLMF}.

For $z>4$ define $r=r(z)\in(0,1)$ by
$z=2+r+r^{-1}$ and $\Psi(z)=1+r-\log r$.
The coefficient integral from
$\sum_mL_m^\alpha(y)t^m=(1-t)^{-\alpha-1}e^{-yt/(1-t)}$
has a saddle at $t=-r$. On $t=-re^{i\theta}$ the unique maximum
of its real exponent is at $\theta=0$, with value $1+r$ and
negative second derivative $(1-r)/(1+r)$. Thus
\begin{equation}
  L_n^\alpha(nz)
    =\frac{(-1)^ne^{n\Psi(z)}}{\sqrt{2\pi n}}
       (1+r)^{-\alpha-1/2}(1-r)^{-1/2}
       [1+O(n^{-1})].
  \label{eq:entropy-laguerre-saddle}
\end{equation}
All constants are uniform on compact subsets of $z>4$.

The half-line measure can be used despite the finite interval.
The same coefficient contour bounds
$|L_m^\alpha(y)|\le C r^{-m}e^{yr/(1+r)}$ for $m\le n$ and
$y\ge nz$. Writing
$\mathcal K_n^{(\alpha)}(y,y')=\sum_{m=0}^np_m^{(\alpha)}(y)
 p_m^{(\alpha)}(y')$, integration gives
\begin{equation}
  \int_{nz}^\infty\mathcal K_n^{(\alpha)}(y,y)
                          y^\alpha e^{-y}\dd y
       \le n^C e^{-n\mathcal J(z)},
  \quad \mathcal J(z)=z-2\Psi(z)>0.
  \label{eq:entropy-laguerre-projection-tail}
\end{equation}
Indeed, $\mathcal J(4)=0$ and
$\mathcal J'(z)=(1-r)/(1+r)>0$.
The left side bounds the operator norm of the discarded positive
Gram form, so truncation changes capacities only relatively
exponentially little. A bounded multiplier preserves this bound.

The recurrence for the $p_m^{(\alpha)}$ has diagonal
$2m+\alpha+1$ and off-diagonal $\sqrt{m(m+\alpha)}$.
Choose any $q\in(r(z),1)$ with $z>2+q+q^{-1}$.
The ratio recurrence then bounds
$p_{m-1}^{(\alpha)}(nz)/p_m^{(\alpha)}(nz)\le q$ for $m\le n$
and large $n$. Each fixed number of ratios down from degree $n$
tends to $r(z)$ by \eqref{eq:entropy-laguerre-saddle}.
The normalized reversed evaluation vector therefore converges in
$\ell^2$ to $(1,r,r^2,\ldots)$, with a geometric bound on its tail.
Consequently $\mathcal K_n^{(\alpha)}(nz,nz)$ is asymptotic to
$p_n^{(\alpha)}(nz)^2/(1-r^2)$.

For $k=2n$ the capacity without $g$ is
$C_a/\mathcal K_n^{(0)}(Y,Y)$, hence
\[
  \lambda_{2n}^{\exp}
    =\frac{n}{2a}(1-r^2)^2
         e^{4\pi aD-2n\Psi(Y/n)}[1+o(1)].
\]
For $k=2n+1$ it is
$C_a/[Y\mathcal K_n^{(1)}(Y,Y)]$.
The additional factor relative to the even expression at the same
$n$ is $(n+1)(1+r)^2/Y=r[1+o(1)]$.
Equivalently replace $n$ by $k/2$ in the even expression.
There is no remaining parity-dependent constant.
Since $z\Psi'(z)=1+r$, differentiation at fixed $Y$ gives
$\partial_k[-k\Psi(2Y/k)]=\log r$.
Putting $k=aD-\{aD\}$ yields
\begin{equation}
  \log\lambda_k^{\exp}
     =aH-\{aD\}\log r_{\mathrm{gap}}
        +\log\!\left[\frac D4(1-r_{\mathrm{gap}}^2)^2\right]+o(1),
  \label{eq:entropy-unweighted-capacity}
\end{equation}
where
\begin{equation}
  \begin{gathered}
    z=8\pi(1-d/D),\qquad
    r_{\mathrm{gap}}=\frac{2}{z-2+\sqrt{z(z-4)}},\\
    H=D(4\pi-1-r_{\mathrm{gap}}+\log r_{\mathrm{gap}}),\qquad
    4\pi\sqrt\kappa<H<4\pi\sqrt\vartheta.
  \end{gathered}
  \label{eq:entropy-kappa-rate-constants}
\end{equation}
To verify the stated bounds, $R_{\mathrm{init}}\ge16R_0>80$
implies $4<z<8\pi$ and $r_{\mathrm{gap}}>1/(8\pi)$.
Thus $D[4\pi-2-\log(8\pi)]<H<4\pi D$; the lower bound exceeds
$4\pi d$ because $D/d>\sqrt{80}$.
This also proves the bounds on $\epsilon_H$ in
\eqref{eq:entropy-kappa-nonbtz-endpoint}.

\paragraph{The scalar capacity: the multiplier.}
A two-sided comparison with $g$ would not determine its bounded
contribution to the logarithm. We compute the required inverse-form
limit. Extend $g(D-y/(4\pi a))$ by the constant $g(d)$ for $y>Y$.
The extension has a common strictly positive lower bound and a
finite upper bound. In the Laguerre basis, reverse indices about
$n$ and scale the Jacobi matrix by $n$. Regard the reversed operator
as acting on $\ell^2(\Z)$ by adjoining a decoupled positive operator
on the missing indices $i>n$. On finitely supported vectors its
coefficients tend to those of the bilateral operator
$2I+S+S^*$, where $S$ shifts the index by one. This implies strong
resolvent convergence: apply the resolvent identity first to the
dense set $(2I+S+S^*-z)u$, with $u$ finitely supported and
$\Im z\ne0$, then use the uniform resolvent bound.

All the Jacobi operators are nonnegative, so only $t\ge0$ enters
their functional calculus. In the scaled variable $t=y/n$, the multipliers
$g(\max\{D-nt/(4\pi a),d\})$ converge uniformly to
$g(\max\{D-Dt/(8\pi),d\})$. Subtracting $g(d)$ leaves continuous
functions with common compact support on this half-line; extend them
continuously with compact support to $\R$. Functional calculus, approximating
these functions by resolvent polynomials, therefore gives strong
convergence of their multiplication operators in the reversed
basis. Compression to indices $0,1,\ldots$ after this step gives
the Toeplitz operator $T(\mathfrak g)$ with symbol
\[
  \mathfrak g(\theta)
       =g\!\left(D-\frac{D}{8\pi}(2+2\cos\theta)\right).
\]
Pad each finite Gram matrix $A_n$ by a fixed positive scalar on
the remaining indices. They have a common positive lower bound,
so strong convergence also holds for the inverses, by
$A_n^{-1}-T^{-1}=A_n^{-1}(T-A_n)T^{-1}$.
Together with the strong convergence of the evaluation vectors,
this determines the multiplier correction to the kernel.

For clarity, its scalar value can be calculated directly. Define
\[
  \mathcal O(\zeta)=\exp\!\left[
     \frac1{4\pi}\int_{-\pi}^{\pi}
       \frac{e^{i\theta}+\zeta}{e^{i\theta}-\zeta}
                \log\mathfrak g(\theta)\dd\theta\right],
       \qquad |\zeta|<1.
\]
On the unit circle $|\mathcal O|^2=\mathfrak g$.
Multiplication by $\mathcal O$ identifies the analytic-function
norm weighted by $\mathfrak g$ with the ordinary Hardy norm.
Its evaluation kernel at $r$ consequently has squared norm
$[(1-r^2)|\mathcal O(r)|^2]^{-1}$.
The capacity is therefore multiplied by $|\mathcal O(r)|^2$.
The Poisson kernel identity
\[
  \frac{1-r^2}{1-2r\cos\theta+r^2}
      =\frac{\sqrt{z(z-4)}}{z-2-2\cos\theta}
\]
and $t=2+2\cos\theta$ give the constants
\begin{equation}
  \begin{aligned}
    g(x)&=\frac{(1-e^{-2\pi x})^2}{x},\qquad
       s_*=D\left(1-\frac1{2\pi}\right),\\
    \mathcal G&=\frac{\sqrt{z(z-4)}}\pi
       \int_0^4\frac{\log g(D-Dt/(8\pi))}
                         {(z-t)\sqrt{t(4-t)}}\dd t,\\
    C_{\mathrm{sc}}&=\log\!\left[\frac D4(1-r_{\mathrm{gap}}^2)^2\right]+\mathcal G,\\
    \chi&=\frac{2\pi\sqrt{(s_*-d)(D-d)}}d
             \left(\frac1{\sqrt{s_*D}}-\frac1D\right)>0.
  \end{aligned}
  \label{eq:entropy-kappa-gap-constants}
\end{equation}
Here $\mathcal G=\log|\mathcal O(r_{\mathrm{gap}})|^2$.
The evaluation is at positive $r_{\mathrm{gap}}$ because the
Laguerre polynomials were given positive leading coefficients.

The relative form error in
\eqref{eq:entropy-first-level-form-stability}, the $O(a^{-1})$
reference-density error, and the discarded tail in
\eqref{eq:entropy-laguerre-projection-tail} preserve this inverse
limit. The upper endpoint may be replaced by
$D_V=\sqrt{V_0/a}$ while holding the integer degree fixed:
$k-aD_V=-\{aD\}+O(V_0-U)$. Thus its exponentially small displacement
does not alter the floor term, even near a degree crossing.
The lower displacement from $h_0$ is also exponentially small.
The saddle and evaluation-vector estimates are uniform on these
shrinking neighborhoods. Finally
\eqref{eq:entropy-first-level-capacity} transfers the capacity to
the integer multiplicity. We have proved
$\log m_{a,0}=aH-\{aD\}\log r_{\mathrm{gap}}+C_{\mathrm{sc}}+o(1)$,
which is the scale $M_a$ in \eqref{eq:entropy-kappa-packet-scale}.
Strong inverse convergence supplies $o(1)$ here, without a
quantitative $O(a^{-1})$ rate.

\paragraph{Fixed-spin response.}
Let $j=|J|$ be fixed. In row $J$ use
$x=\sqrt{\Ehat/a-j/a}$ and translate to $v=x+j/(2aD)$.
The upper endpoint is $D+O(a^{-2})$ and the lower endpoint is
$d+(j/(2a))(D^{-1}-d^{-1})+O(a^{-2})$.
The reference exponent becomes
\[
  2\pi a\bigl(x+\sqrt{x^2+2j/a}\bigr)
      =4\pi av+2\pi j(v^{-1}-D^{-1})+O(a^{-1}),
\]
while the amplitude tends to $g(v)$ with no additional
order-one Jacobian. The preceding multiplier calculation applies
to $g(v)e^{2\pi j(v^{-1}-D^{-1})}$.
Writing $\ell_* =\sqrt{(s_*-d)(D-d)}$, its Poisson average in
this variable uses the probability measure
\[
  \dd\varpi_d(v)
    =\frac{\ell_*\dd v}
        {\pi(v-d)\sqrt{(v-s_*)(D-v)}},\qquad s_*<v<D.
\]
The elementary arcsine integral gives
\[
  \int v^{-1}\dd\varpi_d(v)
      =\frac1d\left(1-\frac{\ell_*}{\sqrt{s_*D}}\right),
  \qquad \partial_dH=\frac{8\pi r_{\mathrm{gap}}}{1+r_{\mathrm{gap}}}.
\]
Thus the coefficient of $j$ in the capacity ratio is
\[
  \frac{\partial_dH}{2}(D^{-1}-d^{-1})
      +2\pi\left(\int v^{-1}\dd\varpi_d-D^{-1}\right)
       =-\chi.
\]
The simplification uses
$\ell_*=(D-d)(1-r_{\mathrm{gap}})/(1+r_{\mathrm{gap}})$.
This proves, with the integer transfer and endpoint displacement,
\begin{equation}
  \frac{m_{a,J}}{M_a}\longrightarrow e^{-\chi|J|},
  \qquad \xi_{a,J}\longrightarrow0^+.
  \label{eq:entropy-kappa-endpoint-atoms}
\end{equation}
The degree is common to all rows, so its floor phase cancels in
this fixed-spin ratio. The initial data need not be invariant under
$J\mapsto-J$: their effect on the forms is exponentially small,
while the leading vacuum reference is even in $J$.

\paragraph{Other atoms in a fixed gap neighborhood.}
For a fixed row, let $P_J$ be the nonnegative extremal polynomial
for the endpoint capacity, normalized by $P_J(h_J)=1$.
Exact moments give the identity
\begin{equation}
  \sum_{s_i\ne h_J}P_J(s_i)=\lambda_{a,J}-m_{a,J}.
  \label{eq:entropy-nonendpoint-moment-identity}
\end{equation}
A positive lower bound on $P_J$ is needed to turn this weighted
sum into an atom count. It follows from the off-diagonal kernel,
not merely from equivalence of quadratic-form norms.
For a fixed physical offset $x=\Ehat-b\in[0,W]$, the Laguerre
coordinate has displacement $y(x)-Y=-2\pi x/d+o(1)$.
The normalized reversed evaluation vectors $w_n(y(x))$ converge
uniformly to the same $(1,r_{\mathrm{gap}},r_{\mathrm{gap}}^2,\ldots)$
as at $Y$, by the ratio bound used above. If $A_n$ is the multiplier
Gram matrix, the endpoint-normalized kernel polynomial factors as
\[
  \frac{\mathcal K_{n,a}(y(x),Y)}{\mathcal K_{n,a}(Y,Y)}
    =\frac{p_n(y(x))}{p_n(Y)}
       \frac{w_n(y(x))^TA_n^{-1}w_n(Y)}
            {w_n(Y)^TA_n^{-1}w_n(Y)}.
\]
Uniform upper and lower bounds on $A_n$ make the second factor
tend uniformly to one. The saddle gives the first factor's limit
$\exp[-2\pi x r_{\mathrm{gap}}/(d(1+r_{\mathrm{gap}}))]$.
In the odd case the additional factor $y(x)/Y$ tends to one.
Relative form perturbations change these kernel ratios by $o(1)$,
using the same bounded evaluation norms. Consequently
\begin{equation}
  P_J(b+x)\longrightarrow
     \exp\!\left[-\frac{4\pi r_{\mathrm{gap}}}{d(1+r_{\mathrm{gap}})}x\right]
       \quad\text{uniformly on }[0,W],
  \label{eq:entropy-gap-extremal-polynomial}
\end{equation}
where $P_J$ is now evaluated through the physical row coordinate.
Equations~\eqref{eq:entropy-first-level-capacity} and
\eqref{eq:entropy-nonendpoint-moment-identity} show that all other
atoms in this row and band have total weight $o(m_{a,J})$.
Protection ensures that later repairs do not change them or add
atoms to this initial band.

\paragraph{Uniform control of the spin sum.}
Fixed-spin limits alone do not permit summing all rows. For
$\gamma=|J|/a\le\kappa$, the dimensionless energy at normalized
coordinate $s$ is
\[
  X_\gamma(s)=(1-s)\kappa+s\vartheta
       -s(1-s)\bigl(\sqrt{\vartheta-\gamma}
                          -\sqrt{\kappa-\gamma}\bigr)^2.
\]
It decreases with $\gamma$. Its action
$\Phi_\gamma(s)=2\pi(\sqrt{X_\gamma(s)+\gamma}
                         +\sqrt{X_\gamma(s)-\gamma})$
is strictly smaller than $\Phi_0(s)$ for every $\gamma>0$.
At zero spin,
\[
  \left.\partial_\gamma\Phi_\gamma(s)\right|_{\gamma=0}
    =-\frac{2\pi(D-d)^2}{dD}
             \frac{s(1-s)}{d+(D-d)s}.
\]
Choose $0<s_0<(s_*-d)/(D-d)$.
For some $q_0>0$ and small $0<\gamma_0<\kappa/2$, this gives
\[
  \Phi_\gamma(s)\le4\pi[d+(D-d)s]-q_0\gamma(1-s),
       \quad s\ge s_0,\quad 0\le\gamma\le\gamma_0.
\]
Use the nonnegative endpoint trial polynomial for the comparison
exponential with slope $B_\gamma=4\pi(D-d)+q_0\gamma$.
Its capacity has action
$4\pi D-D\Psi(2B_\gamma/D)$, whose derivative in $\gamma$ is
$-2q_0\Psi'(2B_\gamma/D)<0$. Its prefactors are bounded uniformly
on this compact slope interval.

The part $s<s_0$ also needs control. In the comparison variable
$y=aB_\gamma(1-s)$, its lower cutoff obeys $y/n>4+\eta_0$.
Cauchy--Schwarz for the projection kernel and
\eqref{eq:entropy-laguerre-projection-tail} bound the fraction of
the trial norm in this part by $a^C e^{-\eta_1a}$.
The actual exponential exceeds the comparison there by at most
$e^{q_0\gamma a}$. Shrink $\gamma_0$ until
$q_0\gamma_0<\eta_1/2$; this omitted norm is still exponentially
small. The bounded reference prefactors and
\eqref{eq:entropy-first-level-form-stability} now give trial
polynomials $P_{a,J}$, normalized to one at the left endpoint, with
\begin{equation}
  \Lambda_{a,J}(P_{a,J})
      \le C M_a e^{-\chi_1|J|},
      \qquad |J|\le\gamma_0a,\qquad \chi_1>0.
  \label{eq:entropy-small-spin-trial-bound}
\end{equation}
Their values on a fixed physical band $[b,b+W]$ have a uniform
positive lower bound by the same exterior kernel ratios.
Thus \eqref{eq:entropy-small-spin-trial-bound} bounds all atoms
there, not only the endpoint multiplicity, by
$C_WM_a e^{-\chi_1|J|}$.

For $\gamma_0\le\gamma\le\kappa$, compactness and the strict
action inequality give $\Phi_\gamma\le\Phi_0-\eta_2$ on the
whole interval. The scalar trial polynomial therefore has row
norm at most $M_a a^C e^{-\eta_2a}$.
This remains true at a spin opening: in the square-root coordinate
the integrable threshold measure has at most a polynomial cost,
and the signed correction is bounded as a form.
For $\kappa<\gamma\le\kappa+W/a$ the row starts at zero
square-root coordinate and
$X_\gamma(s)=\gamma+(\vartheta-\gamma)s^2$.
This map joins continuously at $\gamma=\kappa$, preserving a
fixed action gap. More distant rows cannot contribute to the band.

At such an opening, $[b,b+W]$ occupies $s=O_W(a^{-1/2})$.
The scalar trial's zeros stay a fixed distance from $s=0$:
the Christoffel--Darboux formula writes its polynomial factor as
a Laguerre combination with one exterior root $Y$; the remaining
roots interlace the Laguerre roots
\cite[Secs.~18.2(v)--(vi)]{NISTDLMF} and are at most $4n+O(1)$.
The latter bound follows also from row sums of the Laguerre Jacobi
matrix. Since $Y/n>4$ by a fixed margin, the zeros in $s$ are
bounded below by some $s_{\min}>0$ independent of $a$.
The product over $O(a)$ zeros bounds the trial from below by
$e^{-C_W\sqrt a}$ on the physical band. Its odd-degree factor
$1-s$ obeys the same bound. Summing the $O(a)$ macroscopic rows
therefore gives at most $M_a a^C e^{-\eta_2a/2+C_W\sqrt a}=o(M_a)$.

Now truncate to finitely many spins, use
\eqref{eq:entropy-kappa-endpoint-atoms} and
\eqref{eq:entropy-gap-extremal-polynomial}, and remove the spin
truncation using the geometric majorant and macroscopic bound.
The primary measure at offsets $x=\Delta-\Delta_g$ divided by
$M_a$ converges against every compact continuous test function to
$\coth(\chi/2)\delta_0$. Only finitely many descendant levels
contribute on a fixed compact offset interval. Their convolution
therefore gives \eqref{eq:entropy-kappa-gap-count}.
The prescribed first level contributes exactly
$d_{\mathrm{first}}\mathcal B_\phi(u)$, independently of its spins
because all total spins are counted. It and the vacuum descendants
are $o(M_a)$. For the bump with $w>1/2$ and $u\ge0$, a nonnegative
integer lies in its positive support, so $\mathcal B_\phi(u)>0$;
taking logarithms proves \eqref{eq:entropy-kappa-gap-entropy}.
The available remainder is $o(M_a)$ and does not identify the
bounded first-level contribution as a controlled next term.

\paragraph{A cumulative bound and the Gaussian near-gap limit.}
The preceding trial polynomials give more than a fixed-band bound.
Their zeros in the normalized coordinate satisfy $s_i\ge s_{\min}>0$
uniformly over the small-spin comparison slopes and the scalar
trial. The product over at most $O(a)$ zeros gives
$P(s)\ge e^{-Cas}$ for $0\le s\le s_{\min}/2$.
For $|J|\le\gamma_0a$ and a physical band $[b,b+X]$,
$s\le CX/a$. Combining this with
\eqref{eq:entropy-small-spin-trial-bound} and exact positive node
moments gives
\[
  Q_{a,J}([b,b+X])\le C M_a e^{-\chi_1|J|+CX},
       \qquad 0\le X\le\epsilon_0a.
\]
Here $\epsilon_0>0$ is small and fixed. For the remaining rows,
the weaker coordinate bound $s\le C\sqrt{X/a}$ gives a loss
$e^{C\sqrt{aX}}$, while their norms have the strict loss
$e^{-\eta_2a}$. Choose
$\epsilon_0<(R_0-1)\kappa$ and $\epsilon_0<\vartheta-\kappa$
so that all relevant rows are initial rows, and decrease it until
$C\sqrt{\epsilon_0}<\eta_2/2$.
The action gap extends continuously through openings up to
$\gamma=\kappa+\epsilon_0$. The sum of these rows is bounded by
$M_a a^C e^{-\eta_2a/2}$. Summing the small-spin geometric series
and adding the at most $M_*$ prescribed primaries proves the
primary cumulative bound $C M_a e^{CX}$ in this band.

Let $\mathsf M_a(X)$ count all nonvacuum states whose dimensions
lie in $[\Delta_g,\Delta_g+X]$. Convolution with descendants costs
only $\sum_Np_2(N)e^{-LN}=G(L)$ after increasing $L>0$.
For $X\ge\epsilon_0a$, the global bound
\eqref{eq:entropy-kappa-global-envelope} is at most $Ce^{LX}$:
use $a\le X/\epsilon_0$ to absorb its square roots and polynomial
factors. Since $M_a\ge1$ eventually, both ranges give
\begin{equation}
  \mathsf M_a(X)\le C M_a e^{LX},\qquad X\ge0,
  \label{eq:entropy-kappa-gap-tail}
\end{equation}
with $C,L$ independent of large $a$, $X$ and the bounded first-level
data. This controls every actual nonvacuum atom, including ones
away from the endpoint levels.

For a fixed Gaussian the normalized contribution above offset $R$
is bounded by
\[
  C\sum_{m\ge\lfloor R\rfloor}e^{L(m+1)}
       \sup_{x\in[m,m+1]}\phi(x-u),
\]
which tends to zero uniformly in large $a$ as $R\to\infty$.
Thus the compact-offset limit can be integrated against the full
Gaussian. Vacuum descendants contribute $e^{O(\sqrt a)}$ after
Gaussian weighting: use
$\sqrt N\le\sqrt{\Delta_g}+\sqrt{|N-\Delta_g|}$ and the
coefficient bound above. They remain negligible compared with
$M_a$. This proves the fixed-Gaussian version of
\eqref{eq:entropy-kappa-gap-count} and its entropy, with the same
$o(1)$ precision.

\paragraph{Descendants in the intermediate region.}
Set $X_a=a(\epsilon-\kappa)$ for fixed $\epsilon>\kappa$.
Count only descendants of the endpoint levels in rows $|J|\le b$.
Their inward energy shifts tend uniformly to zero: the initial
cell weights have a uniform exponential lower bound, so the
formula for $h_J$ gives $\sup_{|J|\le b}\xi_{a,J}\le a^Ce^{-\eta a}$.
The spin bounds just proved imply
$\sum_{|J|\le b}m_{a,J}=M_a[\coth(\chi/2)+o(1)]$.
For a compact kernel, apply
\eqref{eq:entropy-kappa-partition-coefficients} to
$N=X_a+O_\phi(1)$ and use the uniform shift bound. This gives
\begin{equation}
  \begin{aligned}
    N_\phi^{\mathrm{gap}}(a\epsilon-1/12)
      ={}&\frac{M_a\coth(\chi/2)}{4\,3^{3/4}}
          X_a^{-5/4}e^{2\pi\sqrt{X_a/3}}\\
       &\times\bigl[P_\phi(X_a)+o(1)\bigr].
  \end{aligned}
  \label{eq:entropy-kappa-descendant-lower-bound}
\end{equation}
For a fixed Gaussian, the coefficient bound and quadratic decay
justify the same expansion, as in the vacuum calculation above.
This is a contribution to the positive full spectrum, so
$N_\phi\ge N_\phi^{\mathrm{gap}}$.
When $P_\phi(X_a)$ has a uniform positive lower bound, comparison
with the continuous BTZ reference gives
\begin{equation}
  \begin{aligned}
    \log\frac{N_\phi(a\epsilon-1/12)}
                  {N_\phi^{\mathrm{BTZ}}(a\epsilon-1/12)}
      \ge{}&a(H-4\pi\sqrt\epsilon)
          +\frac{2\pi}{\sqrt3}\sqrt{a(\epsilon-\kappa)}\\
       &-\frac34\log a-C.
  \end{aligned}
  \label{eq:entropy-kappa-nonbtz-bound}
\end{equation}
Indeed, Appendix~\ref{app:entropy-reference-expansion}, uniformly
at $\lambda_a=(a\epsilon-1/12)/(12a+1)$, gives reference logarithm
$4\pi a\sqrt\epsilon-\tfrac12\log a+O(1)$.
The descendant prefactor supplies $-\tfrac54\log a$;
subtracting produces $-\tfrac34\log a$.
The floor phase and the exact cylinder-energy shift affect only
bounded terms. The chosen bump with $w>1/2$, and every fixed
Gaussian, satisfy the sampling condition. Dividing by $a$ proves
Proposition~\ref{prop:entropy-kappa-nonbtz}. At
$\epsilon=\epsilon_H$ the positive $\sqrt a$ term still forces
the ratio to diverge. No matching upper bound for the full
intermediate profile follows from this subfamily of states.

\paragraph{Transfer above the initial scale.}
Fix a compact range with $12\lambda\ge\vartheta+\eta$, where
$\eta>0$, and set $Y_0=\lambda c+1/12$ exactly.
The front and short-cell rules in \eqref{eq:entropy-tail-contract}
still hold, but now $T$ and $U$ grow linearly with $a$.
The initial weight must therefore be estimated afresh.
At $W=U+1$, \eqref{eq:global-initial-variation} bounds it by a
polynomial times the sum of
$e^{4\pi\sqrt{a(U+1)}}$ and
$e^{a(4\pi\sqrt\kappa+C_{\mathrm{ref}}\kappa)}$.
The admissibility inequalities make the second exponent smaller
than $4\pi aD$. Summing the $O(a)$ initial rows and adding the
bounded prescribed weight yields
$M_{\mathrm{init}}\le C_{\kappa,M_*,J_*}a^Ce^{4\pi aD}$.
For a compact kernel at $E=\lambda c$, all their descendants thus
contribute at most
\begin{equation}
  C_{\kappa,M_*,J_*}\operatorname{poly}(a)
     \exp\!\left(4\pi a\sqrt\vartheta+O(\sqrt a)\right).
  \label{eq:entropy-kappa-initial-descendants}
\end{equation}
Here \eqref{eq:entropy-descendant-cumulative-bound} bounds all
allowed levels by $e^{O(\sqrt a)}$. The target action
$S_a(Y_0)$ exceeds $4\pi aD$ by a fixed positive multiple of $a$,
so these descendants are negligible, including those which reach
the observation window. The vacuum module is smaller still.

Choose fixed $0<\eta'<\eta/4$ small enough that for
$N\le\eta'a$, the parent centers $Y_N=Y_0-N$ remain above
$U$ by an extensive margin. On every central row
$|J|\le Y_N/2$, the reference dominates $H_a$ exponentially:
the smallest reference exponent in that cone has coefficient
$2\pi(\sqrt{3/2}+\sqrt{1/2})>7$.
Thus the current cell measure is positive there. Equations
\eqref{eq:entropy-cell-derivatives} and
\eqref{eq:entropy-uniform-polynomial-error} apply unchanged to
the smooth test function, with constants independent of these
cells. Exact moments bound the error by $C_PM_na^{-P}$.
Summing cell weights with the reference Gaussian spin bound gives
$C_Pa^{-P-1/2}e^{S_a(Y_N)}$.
The excluded spins have an extensive action loss, with their
threshold measures integrated using
\eqref{eq:entropy-integrated-spin-bound}.

Concavity then bounds the sum of all retained errors by
\[
  C_Pa^{-P-1/2}e^{S_a(Y_0)}
                  \sum_{N\ge0}p_2(N)e^{-\beta_0N}
     =C_Pa^{-P-1/2}e^{S_a(Y_0)}G(\beta_0).
\]
For $N>\eta'a$, tail parents lie at least $\eta'a-O_\phi(1)$
below the target. The complete-cell envelope bounds their primary
action by $S_a(Y_0-\eta'a+O_\phi(1))$, with polynomial losses.
The full descendant factor is $e^{O(\sqrt a)}$, leaving an
extensive deficit. Initial parents have already been bounded by
\eqref{eq:entropy-kappa-initial-descendants}. The same split
controls the reference tail. A positive subinterval of the kernel
and $|J|\le\sqrt a$ give reference lower bound
$C^{-1}a^{-1/2}e^{S_a(Y_0)}$.
This proves every fixed relative inverse-power accuracy for
transfer to the integer-spin reference.

For its spin integral, the explicit derivative bound
\eqref{eq:entropy-poisson-derivative-bound} is uniform on this
parent-energy compact. Choose more than $2P$ derivatives in
\eqref{eq:entropy-poisson-error} and sum descendants with the
same majorant. The exact continuous-spin transform is still
\eqref{eq:entropy-reference-btz}, with physical $c=12a+1$.
Appendix~\ref{app:entropy-reference-expansion} therefore proves
Proposition~\ref{prop:entropy-kappa-btz}, including its identical
entropy coefficients at every fixed finite order.

For a fixed Gaussian, fix the requested order $P$ and cut off
at offsets $R_a=A_P\sqrt{\log a}$, with a smooth transition up
to $2R_a$. Its derivative norms of any fixed order remain bounded.
The number of relevant cells is $O(R_a)$ per row and their total
unweighted mass grows by at most $e^{C R_a}$ relative to the
central scale. Increasing the fixed polynomial approximation
order absorbs this factor. In the remaining band
$R_a<|s|<\epsilon a$, choose $\epsilon$ small enough to retain
the extensive margin above $U$. Unit-bin complete-cell and
descendant estimates cost at most $a^Ce^{C|s|}$ relative to the
central scale. Taking $A_P$ large makes the Gaussian-weighted
sum $O(a^{-P})$. For $|s|\ge\epsilon a$,
\eqref{eq:entropy-kappa-global-envelope}, with the analogous
vacuum and reference bounds, has logarithm at most
$C(a+\sqrt{a s_+}+\sqrt{s_+})+C\log(1+a+s_+)$.
Quadratic Gaussian decay dominates this uniformly, including the
initial weight $e^{O(a)}$ and arbitrarily large positive offsets.
The far contribution is smaller than every fixed inverse power.
This completes the fixed-Gaussian high-energy proof.

All these limits keep the kernel, $\kappa$ and the construction
parameters fixed. The proofs give no quantitative inverse-$a$
near-gap expansion, full intermediate profile, or uniform result
as the high-energy margin closes or $\kappa\to0$.

\bibliographystyle{modular-spectrum}
\bibliography{references}
\end{document}